\definecolor{ao}{rgb}{0.0, 0.5, 0.0}
\newtheorem{remark}{Remark}
\newtheorem{theorem}{Theorem}
\newtheorem{proposition}{Proposition}
\newtheorem{corollary}{Corollary}
\newtheorem{definition}{Definition}
\newtheorem{lemma}{Lemma}
\newcommand{\pder}[2]{\ensuremath{\frac{\partial #1}{\partial #2}}}
\long\def \beq#1\eeq {\begin{equation} #1 \end{equation}}
\long\def \beaq#1\eeaq {\begin{equation}\begin{aligned} #1 \end{aligned}\end{equation}}
\long\def \bes#1\ees {\begin{equation}\begin{split} #1 \end{split} \end{equation}}
\long\def \bea#1\eea {\begin{eqnarray} #1 \end{eqnarray}}
\long\def \bse[#1]#2\ese {\begin{subequations}\label{#1}\begin{align} #2 \end{align}\end{subequations}}
\newcommand{\sums}{\sum_{ \boldsymbol \sigma}}
\newcommand{\sumi}{\sum_{i=1}^N}
\newcommand{\suma}{\sum_{a=1}^2}
\newcommand{\sumanew}{\sum_{a=1}^3}
\newcommand{\sumij}{\sum_{i,j=1}^N}
\newcommand{\si}{\sigma_i}
\newcommand{\sj}{\sigma_j}
\newcommand{\dt}{\frac{\partial}{\partial t}}
\newcommand{\dx}{\frac{\partial}{\partial x}}
\newcommand{\dw}{\frac{\partial}{\partial w}}
\newcommand{\qb}{\bar{q}}
\newcommand{\mb}{\bar{m}}
\title{Replica symmetry breaking in neural networks: \\ a few steps toward rigorous results.}
\author[a,b]{Elena Agliari,}
\author[c]{Linda Albanese,}
\author[b,c,d]{Adriano Barra,}
\author[e]{Gabriele Ottaviani}
\affiliation[a]{Dipartimento di Matematica {\em Guido Castelnuovo}, Sapienza Universit\`a  di Roma, Roma, Italy}
\affiliation[b]{Istituto Nazionale d'Alta Matematica {\em Francesco Severi},  Roma, Italy}
\affiliation[c]{Dipartimento di Matematica e Fisica {\em Ennio De Giorgi},  Universit\`a  del Salento,  Lecce, Italy}
\affiliation[d]{Istituto Nazionale di Fisica Nucleare, Campus Ecotekne, Lecce,Italy}
\affiliation[e]{Dipartimento di Fisica, Sapienza Universit\`a  di Roma, Roma, Italy}
\abstract{In this paper we adapt the {\em broken replica interpolation} technique (developed by Francesco Guerra to deal with the Sherrington-Kirkpatrick model, namely a pairwise mean-field spin-glass whose couplings are i.i.d. standard Gaussian variables) in order to work also with the Hopfield model (i.e., a pairwise mean-field neural-network  whose couplings are drawn according to Hebb's learning rule): this is accomplished by grafting Guerra's telescopic averages on the transport equation technique, recently developed by some of the Authors.
\newline
As an overture, we apply the technique to solve the Sherrington-Kirkpatrick model with i.i.d. Gaussian couplings centered at $J_0$ and with finite variance $J$; the mean $J_0$ plays the role of a {\em signal} to be detected in a noisy environment tuned by $J$, hence making this model a natural test-case to be investigated before addressing the Hopfield model.
\newline
For both the models, an explicit expression of their quenched free energy in terms of their natural order parameters is obtained at the $K$-th step ($K$ arbitrary, but finite) of replica-symmetry-breaking. In particular, for the Hopfield model, by assuming that the overlaps respect Parisi's decomposition (in particular following the \emph{ziqqurat ansatz}) and that the Mattis magnetization is self-averaging, we recover previous results obtained via replica-trick by Amit, Crisanti and Gutfreund (1RSB) and by Steffan and K\"{u}hn (2RSB).
}
\begin{document}

\maketitle

\section{Introduction}

Since the $80$'s, statistical mechanics of spin glasses has been playing a pivotal role in the neural network investigations, both in the learning stage (where these systems are properly trained to accomplish specific tasks) \cite{angel-learning,sompo-learning} and in the operational stage (where these systems perform pattern recognition, classification, etc.) \cite{Amit,Coolen}.
However, beyond countless successes (see e.g. \cite{DL1}), there is still a long way to go before claiming that we do have a 
{\em theory} for Artificial Intelligence, not only as a whole, but -- more specifically for the aim of the present paper -- even restricting to neural networks meant as statistical-mechanics systems displaying  information processing skills as emergent, collective properties. 
\newline
Among the main hurdles to overcome we mention the usage of semi-heuristic techniques (e.g., the well-known replica trick \cite{Amit,Coolen}) and the problem of the stability of the replica symmetric (RS) solution \cite{Crisanti,Dotsenko2,Kuhn}.
As for the former, rigorous alternatives grounded on probabilistic techniques  \cite{ABT,Alemannation1,jean1,Barra-JSP2010,Tirozzi,DmitryBook,Pastur} and on PDE approaches driven from mathematical physics \cite{Agliari-Barattolo,Alemannation1,jean2,Albert1,Martino2,Murrat1,MurratPanchenko} have been developed since the seminal papers by Bovier \cite{Bovier1,Bovier2,Bovier3} and Talagrand \cite{Tala1,Tala2}. As for the latter, many efforts are still in order, also at the conceptual level, as we plan to report soon.
\newline
In this paper we aim to develop adequate mathematical techniques to possibly address the above-mentioned issues, and, to this goal, we adapt the broken replica interpolation introduced by Francesco Guerra for the Sherrington-Kirkpatrick model \cite{Guerra} (i.e., a pairwise mean-field spin-glass usually playing as the ``harmonic oscillator'' for complex systems \cite{Coolen}) into a PDE-framework recently developed by some of the Authors \cite{AABF-NN2020} \footnote{It is worth noticing that this PDE-framework stems from the Hamilton-Jacobi approach, another mathematical approach formulated by Francesco Guerra \cite{BGDiBiasio,GuerraSum})}.
\newline
Before applying this technique to neural networks, we will address a relatively simpler model, that is the Sherrington-Kirkpatrick model with a ferromagnetic contribution, which still requires a complete set of order parameters for its investigation, namely two-replica overlaps and magnetization. Then, we move to the Hopfield model, i.e. the ``harmonic oscillator'' for associative neural-networks, able to perform spontaneous pattern recognition.
For these models, we drop the common simplifying assumption of self-average of the overlap (i.e., the {\em replica symmetric} scenario) and we allow the model to undergo {\em K steps of replica symmetry breaking} (K-RSB),  yet keeping the magnetization self-averaging. In this way, for the Hopfield model, at the first step of RSB, we obtain an explicit expression for the quenched free-energy that perfectly recovers the 1-RSB free-energy expression obtained via replica-trick by Crisanti, Amit and Sompolinsky nearly 35 years ago \cite{Crisanti} and, at the second step, we recover the 2-RSB expression achieved by Steffan and K\"uhn \cite{Kuhn} a few years later.

We stress that in our analysis we rely upon the duality between Hopfield networks and (restricted) Boltzmann machines (originally discussed in \cite{BarraEquivalenceRBMeAHN} and then enlarged in \cite{Agliari-Dantoni,Barra-RBMsPriors1,Barra-RBMsPriors2,Mezard,Monasson}), in such a way that our results hold also for Boltzmann machines, namely the basic architecture for machine learning \cite{DL1}. More precisely, the Boltzmann machine counterpart of the standard Hopfield model is made of a digital layer equipped with Boolean neurons (i.e., Ising spins) and an analog layer made of Gaussian real-valued neurons. As a result, this model can also be seen as a linear combination of two spin-glasses \cite{bipartiti}, the former is a standard hard Sherrington-Kirkpatrick model with Boolean spins (well known to be full-RSB \cite{Guerra,TalaParisi}), the latter is a soft Sherrington-Kirkpatrick model with Gaussian spins (known to be RS \cite{Gauss-1,Gauss-2}). From this perspective a natural question we answer is: when the Boolean neurons undergo RSB does this phenomenon propagate also to the soft neurons despite they usually behave in an RS way? The answer is positive, that is, as the binary component breaks replica symmetry, also the overlap between two replicas of the analog layer aquires a broken replica step. Remarkably, the overlaps for both components (i.e., the overlap related to the Ising spins and the one related to the Gaussian spins) break replica symmetry ``simultaneously'', namely at the same value of $m$ in the Parisi scheme, in agreement with the {\em ziggurat ansatz}  that has been rigorously developed to generalize the Parisi scheme to multi-species Boolean spin glasses \cite{ZiqquratBarra,ZiqquratPanchenko}. 

Admittedly, there are plenty of open questions and discordances regarding the role of RSB within neural networks and its impact on the critical capacity, however, in the current work, we discuss solely the mathematical aspects, supplying the need of a robust and rigorous framework where neural network models can be addressed also allowing for an RSB scenario, but we will not discuss the underlying physics, apart a short remark in the conclusions. We do believe that, for neural networks, RSB should not be seen as a {\em perturbation} of the Amit-Gutfreund-Sompolinksy (AGS) RS painting and, possibly, even the starting assumptions should be revised (in order for AGS theory to be recovered as a proper limit of a broader theory); we will report on our findings in future papers but here, as stated above, we simply prove a novel and rigorous method to recover and extend the existing results achieved under given ansatz and via the replica trick.

The present manuscript is structured as follows:
\newline
The next Sec.~\ref{sec:SK} is dedicated to the Sherrington-Kirkpatrick model where couplings have positive mean and it is split into four subsections: Sec.~\ref{ssec:SKRS} is dedicated to the RS derivation of the transport PDE  that we use as the mathematical backbone, Sec.~\ref{ssec:SKRSB} to perform  with this PDE approach a first step of RSB, Sec.~\ref{ssec:SK2RSB} to accomplish the second step, and Sec.~\ref{ssec:SKKRSB} to give the general expression for arbitrary, but finite, K steps of RSB.
Then, in Sec.~\ref{HopfieldSection} we move to the Hopfield model and, again, the section is split into four subsections mirroring those of the previous section: Sec.~\ref{ssec:HRS} summarizes the RS scenario achieved via the transport PDE (already presented in \cite{AABF-NN2020}), Sec.~\ref{ssec:HRSB1} enlarges the scheme to the first step of RSB (hence recovering the expression for the quenched free energy already found by Crisanti, Amit and Gutfreund \cite{Crisanti}), Sec.~\ref{ssec:HRSB2} enlarges the scheme to the second step (hence recovering the expression for the quenched free energy already found by the Steffan and K\"{u}hn \cite{Kuhn}), and in Sec.~\ref{ssec:HOP_KRSB} a general prescription for the K-RSB scenario is formulated (again for arbitrary, but finite, values of $K$). The final Sec.~\ref{conclusions} containes some comments on the problems arising when using the original self-averaging ansatz on the Mattis magnetization within an RSB scheme and a general outlook. Lengthy calculations are reported in Appendices \ref{app1}-\ref{1body2rsb} for the sake of completeness.

\section{Prelude: the Sherrington-Kirkpatrick model with a signal} \label{sec:SK}
We consider a Sherrington-Kirkpatrick model where pairwise couplings among spins display a non-null mean $J_0$ and a variance $J^2$; this model was already treated in \cite{Coolen} as a pedagogical introduction to the Hopfield model.
Here, our aim is to get an expression for the quenched free-energy of this model in a rigorous way, via generalized Guerra's interpolating technique; first, in subsec.~\ref{ssec:SKRS}, we will focus on the RS scenario to get acquainted with the method and then, in subsecs.~\ref{ssec:SKRSB}-\ref{ssec:SKKRSB}, we will address the first, the second and the $K$-th step of RSB, respectively. 

\begin{definition}
Let $\boldsymbol \sigma \in \{ -1, +1\}^N$ be a configuration of $N$ spins, the Hamiltonian of the Sherrington-Kirkpatrick model with a signal is defined as
\begin{equation}\label{DefSKsignal}
H_N(\boldsymbol \sigma | \boldsymbol J) \coloneqq- \frac{1}{2} \sum_{\substack{i,j=1 \\ i\neq j}}^{N,N}J_{ij}\sigma_i \sigma_j,
\end{equation}
where the pairwise quenched couplings $\boldsymbol J = \{ J_{ij}\}_{i,j=1,..,N} \in \mathbb R^{N\times N}$ are given by
\begin{equation}
\label{eq:coupling}
J_{ij}\coloneqq\frac{J_0}{N}+\frac{J \sqrt{2} z_{ij}}{\sqrt{N}},
\end{equation}
with $J_0 \in \mathbb R^+$, $z_{ij}$ i.i.d. standard random variables drawn from $P(z_{ij}) = \mathcal{N}[0,1]$ for $i<j=1,...N$ and $z_{ij}=z_{ji}$.
\end{definition}
\begin{definition}
The partition function related to the Hamiltonian (\ref{DefSKsignal}) is given by
\begin{equation} \label{PF1}
Z_N(\beta , \boldsymbol J) \coloneqq \sums e^{-\beta H_N(\boldsymbol \sigma | \boldsymbol J)},
\end{equation}
where $\beta \in \mathbb{R}^+$ is the inverse temperature in proper units such that for $\beta \to 0$ the probability distribution for the spin configuration is uniformly spread while for $\beta \to \infty$ it is sharply peaked at the minima of the energy function (\ref{DefSKsignal}).
\end{definition}
Once defined the Hamiltonian (\ref{DefSKsignal}) and the partition function (\ref{PF1}), we can introduce the \emph{Boltzmann average} denoted with $\omega_{\boldsymbol J}(.)$, which, for the generic observable $O(\boldsymbol \sigma )$, reads as
\begin{eqnarray}
\omega_{\boldsymbol J}( O ( \boldsymbol \sigma)) \coloneqq \frac{\sums O(\boldsymbol \sigma) e^{-\beta H_N(\boldsymbol \sigma| \boldsymbol J)}}{Z_N(\beta , \boldsymbol J)}.
\end{eqnarray}
This can be further averaged over the realization of the $J_{ij}$'s (also referred to as \emph{quenched average}), to get
\begin{equation}
\langle O (\boldsymbol \sigma) \rangle \coloneqq \mathbb{E} [ \omega_{\boldsymbol J}(O(\boldsymbol \sigma )) ],
\end{equation}
where the operator $\mathbb{E}$ shall be used in the following to denote, more generally, expectation on quenched quantities.
\newline
Further, we introduce the product state $\Omega_{s, \boldsymbol J} = \omega_{\boldsymbol J}^{(1)} \times \omega_{\boldsymbol J}^{(2)} \times ... \times \omega_{\boldsymbol J}^{(s)}$ over $s$ replicas of the system, characterized by the same realization $\boldsymbol{J}$ of disorder. In the following, we shall use the product state over two replicas only, hence we shall neglect the index $s$ without ambiguity; also, to lighten the notation, we shall omit the subscript $\boldsymbol J$ in $\omega_{\boldsymbol J}$ and in $\Omega_{\boldsymbol J}$. Thus, for an arbitrary observable $O(\boldsymbol \sigma^{(1)}, \boldsymbol \sigma^{(2)})$
	\beq
	\langle O (\boldsymbol \sigma^{(1)}, \boldsymbol \sigma^{(2)}) \rangle \coloneqq \mathbb{E} \Omega (O(\boldsymbol \sigma^{(1)}, \boldsymbol \sigma^{(2)})) = \mathbb{E}  \frac{\sum_{\boldsymbol \sigma} O(\boldsymbol \sigma^{(1)},\boldsymbol \sigma^{(2)}) e^{-\beta [H_N(\boldsymbol \sigma^{(1)}| \boldsymbol J ) + H_N(\boldsymbol \sigma^{(2)}| \boldsymbol J)] }}{Z_N^2(\beta, \boldsymbol J)},
	\eeq
	where $\boldsymbol{\sigma}^{(1,2)}$ is the configuration pertaining to the replica labelled as $1,2$.

\begin{remark}
Notice the different normalization in the definition (\ref{eq:coupling}): the ``signal'' $J_0$ is normalized by $N$, while the quenched ``noise'' $J$ is normalized by $\sqrt{N}$. This  ensures the linear extensivity of all the thermodynamic observables related to the model, e.g. the energy must scale as $\langle H_N(\boldsymbol \sigma| \boldsymbol J) \rangle \sim \mathcal O(N^1)$.
\end{remark}
\begin{definition}
The intensive quenched pressure of the Sherrington-Kirkpatrick model with a signal (\ref{DefSKsignal}) reads as
\begin{eqnarray}\label{PressureDef}
A_N (\beta,J_0,J) \coloneqq \frac{1}{N}\mathbb{E}\log Z_N(\beta, \boldsymbol J),
\end{eqnarray}
and its thermodynamic limit reads as
\begin{equation}
\label{PressureDefLTD}
A(\beta,J_0,J) \coloneqq \lim_{N \to \infty}A_N (\beta,J_0,J).
\end{equation}
\end{definition}
We recall that the pressure $A_N (\beta,J_0,J)$ corresponds, a constant $-\beta$ apart, to the free-energy of the model and that we omit the subscript ``$N$'' when the quantity is evaluated at infinite size.\\
In order to solve the model we want to find out an explicit expression for the quenched pressure (\ref{PressureDefLTD}) in terms of the natural order parameters of the theory, namely the magnetization $m$ and the two-replica overlap $q_{12}$, defined in the following
\begin{definition}
The order parameters used to describe the macroscopic behavior of the model are the standard ones \cite{MPV,Coolen}, namely the magnetization $m$ and the two-replica overlap $q_{12}$, introduced as
\begin{eqnarray}
m &\coloneqq& \frac{1}{N}\sum_{i=1}^{N} \sigma_i, \\
q_{12} &\coloneqq& \frac{1}{N}\sum_{i=1}^{N} \sigma_i^{(1)} \sigma_i^{(2)}.
\end{eqnarray}
\end{definition}
\begin{remark}
We comment on the appellation ``Sherrington-Kirkpatrick model with a signal'': having introduced a positive mean for the couplings, an ordinary magnetization $m$ is also required among the order parameters; having in mind an associative neural network (vide infra), if we introduce a single pattern $\boldsymbol \xi$ (i.e., a vector of $N$ binary entries $\xi_i = \pm 1,\ \ i \in (1,...,N)$), by a Mattis gauge $\sigma_i \to \xi_i \sigma_i$ the standard magnetization turns into the Mattis magnetization, that is the order parameter used to quantify the retrieval of the considered pattern in neural network's theory, yet the Hamiltonian remains invariant under this transformation.
\end{remark}

\subsection{Replica Symmetric Interpolation: RS solution} \label{ssec:SKRS}
In order to get familiar with Guerra's interpolation scheme, it is useful to first address the model (\ref{DefSKsignal}) by assuming the self-averaging of both $m$ and $q_{12}$, namely the so-called replica symmetric scenario.
\begin{definition} \label{def:SK_RS}
Under the replica-symmetry assumption, the order parameters, in the thermodynamic limit, self-average and their distributions get delta-peaked at their equilibrium value (denoted with a bar), independently of the replicas considered, namely
\begin{align}
&\lim_{N \rightarrow + \infty} \langle (m - \bar{m})^2 \rangle =0 \Rightarrow  \lim_{N \rightarrow + \infty}  \langle m \rangle = \mb \label{limform},\\
&\lim_{N \rightarrow + \infty} \langle (q_{12} - \bar{q})^2 \rangle =0 \Rightarrow  \lim_{N \rightarrow + \infty}  \langle q_{12} \rangle = \bar{q} \label{limforq}.
\end{align}
\end{definition}
The technique exploited to solve the model is based on a suitable interpolating partition function $\mathcal Z_N$, whence an interpolating quenched pressure $\mathcal A_N$, which we can solve for and which recovers the quenched pressure $A_N$ of the original model for a suitable choice of the interpolating parameters; hereafter, when dealing with interpolating quantities we shall omit the dependence on $\boldsymbol J, J, J_0, \beta$ to lighten the notation.

\begin{definition}
Given the interpolating parameters $\boldsymbol r := (x, w) \in \mathbb R^2$ and $t \in \mathbb{R}^+$, the interpolating partition function is defined as
\begin{eqnarray}
\mathcal Z_N(t, \boldsymbol r) \coloneqq \sums \exp \left[ \beta \left ( \sqrt{t}\frac{J \sqrt{2}}{2\sqrt{N}} \sumij \si \sj z_{ij} + \sqrt{x}\sumi z_i \si + t\frac{J_0}{2}Nm^2(\boldsymbol{\sigma}) +w J_0 N m(\boldsymbol{\sigma}) \right )\right],
\end{eqnarray}
where $z_i \sim \mathcal N[0,1]$, for $i=1,...,N$.
\end{definition}
\begin{definition}
The interpolating pressure, at finite volume $N$, is introduced as
\begin{align}\label{Ainterpolata}
\mathcal A_N(t,  \boldsymbol r) \coloneqq \frac{1}{N}\mathbb{E}\left[\log \mathcal Z_N(t, \boldsymbol r)\right],
\end{align}
and, in the thermodynamic limit,
\begin{align}\label{AinterpolataN}
\mathcal A(t,  \boldsymbol r) \coloneqq \lim_{N \to \infty} \mathcal A_N(t,  \boldsymbol r).
\end{align}
whose esistence is guaranteed by the Guerra-Toninelli theorem  \cite{GuerraTon}.
By setting $t=1, x=0,w=0$, the interpolating pressure recovers the standard pressure (\ref{PressureDef}), that is, $A_N(\beta, J_0, J) = \mathcal A_N (t=1, \boldsymbol r = 0)$.
\end{definition}
\begin{remark}
The interpolating structure implies an interpolating measure whose related Boltzmann factor reads as
\begin{eqnarray}
\mathcal B (\boldsymbol \sigma; t, \boldsymbol r) &\coloneqq& \exp \left[ \beta \mathcal H (\boldsymbol \sigma; t, \boldsymbol r) \right],\\
 \mathcal H (\boldsymbol \sigma; t, \boldsymbol r) &\coloneqq&  \sqrt{t}\frac{J \sqrt{2}}{2\sqrt{N}} \sumij \si \sj z_{ij} + \sqrt{x}\sumi z_i \si + t\frac{J_0}{2}Nm^2(\boldsymbol{\sigma}) +w J_0 N m(\boldsymbol{\sigma}).
\end{eqnarray}
Clearly, $\mathcal Z_N(t, \boldsymbol r) = \sum_{\boldsymbol \sigma} \mathcal B (\boldsymbol \sigma; t, \boldsymbol r)$.\\
A generalized average follows from this generalized measure as
\beq
	\omega_{t, \boldsymbol r} (O (\boldsymbol \sigma )) \coloneqq  \sum_{\boldsymbol \sigma} O (\boldsymbol \sigma ) \mathcal B (\boldsymbol \sigma; t, \boldsymbol r)
	\eeq
	and
\beq
\langle O (\boldsymbol \sigma ) \rangle_{t, \boldsymbol r}  \coloneqq \mathbb E [ \omega_{t, \boldsymbol r} (O (\boldsymbol \sigma )) ],
\eeq
where $ \mathbb E$ denotes the average over $\boldsymbol J$ and $\{ z_i \}_{i=1,...,N}$.
Of course, when $t=1$ and $\boldsymbol r = 0$, the standard Boltzmann measure and related average are recovered.\\
Hereafter, in order to lighten the notation, we will drop the subscripts $t, \boldsymbol r$.
\end{remark}
The strategy is now to interpret the interpolating parameters $(t, \bm r)$ as {\em space-time} fictitious variables and to show that the interpolating pressure $\mathcal A_N(t, \bm r)$ obeys a standard transport equation in this space-time framework; then, by solving such a PDE and evaluating its solution for $\bm r = (0,0), t=1$ we will have the solution of the original problem as well.
We proceed by computing the first order derivatives with respect to each parameter resulting in the next
\begin{lemma}
The partial derivatives of the interpolating quenched pressure (\ref{Ainterpolata}) read as
\begin{align}
\label{eq:prima}
\dt \mathcal A_N &= \frac{\beta^2}{4}J^2 (1-\langle q_{12}^2 \rangle) + \frac{\beta J_0}{2}\langle m^2 \rangle, \\
\label{eq:seconda}
\dx \mathcal A_N &= \frac{\beta^2}{2}(1-\langle q_{12} \rangle), \\
\label{eq:terza}
\dw \mathcal A_N &= \beta J_0 \langle m \rangle.
\end{align}
\end{lemma}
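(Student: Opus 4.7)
The plan is to differentiate $\mathcal A_N(t,\boldsymbol r)$ directly under the expectation with respect to each of $t$, $x$, $w$, and to reduce the resulting expressions to Boltzmann averages of $m$ and $q_{12}$ by means of Gaussian integration by parts (Stein's lemma) applied to the i.i.d.\ standard Gaussians $z_{ij}$ and $z_i$. In every case, I would begin by using $\partial_\lambda \log \mathcal Z_N = \mathcal Z_N^{-1}\,\partial_\lambda\mathcal Z_N$ and recognizing the ratio as a Boltzmann average with respect to the tilted measure $\mathcal B(\boldsymbol\sigma; t, \boldsymbol r)$ introduced earlier.

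I would dispatch the two easier derivatives first. For $\partial_w \mathcal A_N$, the interpolating Hamiltonian depends linearly on $w$ through $w J_0 N m(\boldsymbol\sigma)$, so the derivative equals $\beta J_0 \langle m \rangle$ with no further manipulation. For $\partial_x\mathcal A_N$, differentiation produces $\frac{\beta}{2\sqrt{x}\,N}\sum_i \mathbb E[z_i\,\omega(\sigma_i)]$; then Stein's lemma $\mathbb E[z_i f(z_i)] = \mathbb E[\partial_{z_i} f(z_i)]$, together with $\partial_{z_i}\omega(\sigma_i) = \beta\sqrt{x}\,(1-\omega(\sigma_i)^2)$ and the two-replica rewriting $\omega(\sigma_i)^2 = \Omega(\sigma_i^{(1)}\sigma_i^{(2)})$, collapses the sum into $\frac{\beta^2}{2}(1-\langle q_{12}\rangle)$.

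The $t$-derivative is the main computation and splits into two contributions. The ferromagnetic piece $t\frac{J_0}{2}Nm^2$ gives directly $\frac{\beta J_0}{2}\langle m^2\rangle$ after Boltzmann-averaging. The glassy piece $\sqrt{t}\,\frac{J\sqrt 2}{2\sqrt N}\sum_{i,j}\sigma_i\sigma_j z_{ij}$ requires, after differentiation in $t$, an application of Stein's lemma to each independent $z_{ij}$; a direct computation shows $\partial_{z_{ij}}\omega(\sigma_i\sigma_j)$ is proportional to $1-\omega(\sigma_i\sigma_j)^2$, and duplicating the Boltzmann measure turns the square into $\Omega(\sigma_i^{(1)}\sigma_j^{(1)}\sigma_i^{(2)}\sigma_j^{(2)})$. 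Using the factorization $\sum_{i,j}\sigma_i^{(1)}\sigma_j^{(1)}\sigma_i^{(2)}\sigma_j^{(2)} = N^2 q_{12}^2$ then packages the sum into the claimed $\frac{\beta^2 J^2}{4}(1-\langle q_{12}^2\rangle)$.

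The main obstacle is essentially bookkeeping: one must keep track of the symmetry $z_{ij}=z_{ji}$ (which is the rationale for the $\sqrt 2$ in the coupling and for the $1/2$ in the Hamiltonian under an unrestricted sum), verify that the diagonal $i=j$ contributions are $\mathcal O(1/N)$ and thus drop out in the limit, and correctly combine the numerical prefactors from the chain rule and from the $\sqrt t$, $\sqrt x$ scalings. No further analytic input is required beyond Gaussian integration by parts, the two-replica duplication of $\omega(\cdot)^2$ into $\Omega(\cdot\,\cdot)$, and the definitions of $m$ and $q_{12}$.
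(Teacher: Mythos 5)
Your proposal is correct and follows essentially the same route as the paper: compute $\partial_\lambda\mathcal A_N = \frac{1}{N}\mathbb E[\mathcal Z_N^{-1}\partial_\lambda\mathcal Z_N]$, recognize the resulting Boltzmann averages, apply Gaussian integration by parts (Wick's theorem, eq.~(\ref{eqn:gaussianrelation2})) to the independent Gaussians $z_i$ and $z_{ij}$, and convert the squared one-replica averages $\omega(\cdot)^2$ into two-replica product-state averages $\Omega(\cdot)$ so that the sums collapse to $\langle q_{12}\rangle$ and $\langle q_{12}^2\rangle$. The paper spells out only the $\partial_t$ computation and deems $\partial_x$ and $\partial_w$ analogous, whereas you sketch all three, but the underlying idea (direct differentiation, Stein/Wick, replica duplication, overlap identification) is identical; the only delicate point, which you correctly flag, is the combinatorial bookkeeping in the prefactors coming from $z_{ij}=z_{ji}$ and the $\sqrt t,\sqrt x$ chain rule.
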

\begin{proof}
We prove only (\ref{eq:prima}), namely the result related to the time derivative, which is the most tricky among the three, for the others the computation is analogous.
\begin{align}
\dt \mathcal A_N &= \frac{1}{N} \mathbb{E} \left( \frac{1}{ \mathcal Z_N}\dt \mathcal Z_N \right )
=\frac{1}{N} \mathbb{E} \left [ \frac{1}{\mathcal Z_N}\sums \left(\frac{\beta J \sqrt{2}}{2\sqrt{tN}}\sumij \si \sj z_{ij} +\beta J_0 N m^2 \right) \mathcal B (\boldsymbol \sigma; t, \boldsymbol r) \right ].
\end{align}
Now we use the fact that, for a standard Gaussian variable $z$, i.e. $z \sim \mathcal N(0,1)$, and for a generic function $f(z)$ which goes to zero fast enough, by Wick's theorem
\begin{equation}
\label{eqn:gaussianrelation2}
\mathbb{E}_z z f(z) = \mathbb{E}_z \partial_z f(z),
\end{equation}
where $\mathbb{E}_z$ represents the average over $z$.
As a consequence of (\ref{eqn:gaussianrelation2}), it is possible to write
\begin{align}
\dt A_N=&\frac{\beta J \sqrt{2}}{2N\sqrt{tN}} \mathbb{E} \left \{ \sumij \partial_{z_{ij}} \left[ \frac{1}{\mathcal Z_N}\sums  \si \sj ~\mathcal B (\boldsymbol \sigma; t, \boldsymbol r) \right]\right \} + \frac{\beta J_0}{2} \langle m^2 \rangle = \notag \\
=&\frac{\beta^2 J^2}{4}\mathbb{E} \left \{ \sumij \left[ 1-\frac{1}{\mathcal Z_N^2}\left(\sums  \si \sj ~ \mathcal B (\boldsymbol \sigma; t, \boldsymbol r) \right)^2 \right] \right\} + \beta\frac{J_0}{2}\langle m^2 \rangle = \notag \\
=& \frac{\beta^2 J^2}{4}(1-\langle q_{12} \rangle) - \beta\frac{J_0}{2} \langle m^2 \rangle.
\end{align}
\end{proof}
Our target now is to find a PDE for $\mathcal A_N(t, \boldsymbol r)$ in the form
\begin{align}
\label{eq:trans0}
\frac{d \mathcal A_N}{dt}&=\dt \mathcal A_N+\dot x \dx \mathcal A_N +\dot w \dw \mathcal A_N = S(t, \boldsymbol r) + V_N(t, \boldsymbol r),
\end{align}
where we denote with $\dot{x}$ and $\dot{w}$ the time derivative of, respectively, $x$ and $w$, and we introduced a ``potential'' $V_N(t, \boldsymbol r)$ and a ``source'' $S(t,  \boldsymbol r)$, whose explicit expressions will be deepened later. Remarkably, eq.~(\ref{eq:trans0}) displays the structure of a transport equation.
%
%
%
\begin{proposition}
The streaming of the interpolating quenched pressure obeys, at finite volume $N$, a standard transport equation, that reads as
\begin{align}
\label{eq:trans}
\frac{d \mathcal A_N}{dt}&=\dt \mathcal A_N+\dot x \dx \mathcal A_N +\dot w \dw \mathcal A_N = S(t, \bm r) + V_N(t, \bm r),
\end{align}
where
\begin{align}
S(t,  \boldsymbol r) &\coloneqq \frac{\beta^2 J^2}{4}(1-\qb)^2 - \frac{\beta}{2}J_0 \mb^2,\\ \label{potenzialeRS-SK}
V_N(t, \boldsymbol r) &\coloneqq \frac{\beta^2}{4}J^2 \langle (q_{12}-\qb)^2 \rangle + \frac{\beta}{2}J_0 \langle (m-\mb)^2\rangle.
\end{align}
\end{proposition}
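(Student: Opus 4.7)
The plan is to read off the transport equation directly from the three partial derivatives supplied by the previous Lemma, combining them via the chain rule and then algebraically reorganizing the result around the RS equilibrium values $\qb, \mb$.

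The first step is to write out the convective derivative $\frac{d\mathcal A_N}{dt} = \dt \mathcal A_N + \dot x\, \dx \mathcal A_N + \dot w\, \dw \mathcal A_N$ and substitute in the three expressions from the Lemma. The right-hand side then contains the quenched moments $\langle q_{12}^2 \rangle, \langle q_{12}\rangle, \langle m^2 \rangle, \langle m \rangle$, while the target $S+V_N$ is written purely in terms of the RS values $\qb,\mb$ and of the centered variances $\langle (q_{12}-\qb)^2 \rangle, \langle (m-\mb)^2\rangle$. The bridge between the two forms is the elementary identity $\langle O^2 \rangle = \langle (O - \bar O)^2 \rangle + 2 \bar O \langle O \rangle - \bar O^2$, applied both with $O=q_{12}$ and with $O=m$.

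After this substitution three families of terms appear in $\dt \mathcal A_N$: (i) pure $\qb, \mb$ constants that reassemble into $S(t,\bm r)$; (ii) centered-squared averages that reassemble into $V_N(t,\bm r)$; and (iii) residual cross-terms proportional to $(1-\langle q_{12}\rangle)$ and to $\langle m\rangle$. These residual cross-terms are then absorbed by an appropriate choice of the characteristic velocities $\dot x, \dot w$, which were introduced precisely for this purpose: since the Lemma gives $\dx \mathcal A_N \propto (1-\langle q_{12}\rangle)$ and $\dw \mathcal A_N \propto \langle m\rangle$, the choice $\dot x = -J^2 \qb$ and $\dot w = -\mb$ cancels the residuals exactly and produces the announced decomposition $\frac{d\mathcal A_N}{dt} = S(t,\bm r) + V_N(t,\bm r)$. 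I do not foresee any real obstacle: once the Lemma is granted, the remaining argument is pure algebra. The only delicate point is the sign bookkeeping of the term coming from $-\langle q_{12}^2\rangle$ inside $\dt \mathcal A_N$ after applying the centering identity, since that sign is what fixes the final form of the potential $V_N$ relative to $S$.
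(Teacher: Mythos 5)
Your proposal is correct and follows essentially the same route as the paper: substitute the Lemma's three partial derivatives into the convective derivative, apply the centering identity $\langle O^2\rangle = \langle(O-\bar O)^2\rangle + 2\bar O\langle O\rangle - \bar O^2$ to $q_{12}$ and $m$, recognize the residual cross-terms as multiples of $\dx\mathcal A_N$ and $\dw\mathcal A_N$, and cancel them with $\dot x = -\qb J^2$, $\dot w = -\mb$. Your closing remark about the sign bookkeeping on the $-\langle q_{12}^2\rangle$ term is well taken — it is precisely what fixes the relative signs of the two pieces of $V_N$, so be careful to carry it through consistently when you write out the final expression.
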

\begin{proof}
Recalling the equilibrium values introduced in (\ref{limform}) and (\ref{limforq}), we can write
\begin{align}
\langle (m-\mb)^2 \rangle &= \langle m^2 \rangle + \mb^2 -2\mb\langle m \rangle, \\
\langle (q_{12}-\qb)^2 \rangle &= \langle q_{12}^2 \rangle + \qb^2 -2\qb\langle q_{12} \rangle.
\end{align}
These relations are used while handling the expression (\ref{eq:prima}) of the $t$-derivative of the interpolating quenched pressure to get
\begin{align}
\dt \mathcal A_N &= \frac{\beta^2}{4}J^2 - \frac{\beta^2}{4}J^2 \langle q_{12}^2 \rangle + \frac{\beta}{2}J_0 \langle m^2 \rangle = \notag \\
&=  \frac{\beta^2}{4}J^2 - \frac{\beta^2}{4}J^2 [\langle (q_{12}-\qb)^2 \rangle - \qb^2 + 2\qb\langle q_{12} \rangle] +\frac{\beta}{2}J_0 [\langle (m-\mb)^2 \rangle - \mb^2 + 2\mb\langle m \rangle]= \notag \\
&= \qb J^2 \dx \mathcal{A}_N + \mb \dw \mathcal A_N +\frac{\beta^2 J^2}{4}(1-\qb)^2 - \frac{\beta}{2}J_0 \mb^2 - \frac{\beta^2}{4}J^2 \langle (q_{12}-\qb)^2 \rangle + \frac{\beta}{2}J_0 \langle (m-\mb)^2\rangle = \notag \\
\label{eq:SV}
&=  \qb J^2 \dx \mathcal{A}_N + \mb \dw \mathcal A_N  + S(t,  \boldsymbol r) + V_N(t,  \boldsymbol r).
\end{align}
Since we have the freedom to set $\dot{x}$ and $\dot w$, by choosing $\dot{x}=- \qb J^2$ and $\dot w = - \mb$, we finally obtain (\ref{eq:trans}).
\end{proof}
\begin{remark}
In the thermodynamic limit, under the RS assumption (\ref{def:SK_RS}), we have $\langle (m-\mb)^2 \rangle= 0$ and $\langle (q_{12}-\qb)^2 \rangle=0$, in such a way that the potential in (\ref{eq:trans}) is vanishing, that is
\begin{equation} \label{eq:V0}
\lim_{N \to \infty} V_N(t,  \boldsymbol r)=0.
\end{equation}
\end{remark}
Exploiting the last remark we can prove the following
\begin{proposition}
The transport equation associated to the interpolating pressure function $\mathcal A_N(t, \boldsymbol r)$ in the thermodynamic limit and under the RS assumption is
\begin{align}
\label{eq:propRS}
\dt \mathcal A_{\textrm{RS}} - \qb J^2\dx \mathcal A_{\textrm{RS}} - \mb \dw \mathcal A_{\textrm{RS}} = \frac{\beta^2}{4}J^2 (1-\qb)^2 -\frac{\beta J_0}{2}\mb^2,
\end{align}
\label{propRS}
whose solution is given by
\begin{align}
\label{solutionRS}
\mathcal A_{\textrm{RS}}(t,  \boldsymbol r)=\mathbb{E} \left[\log 2\cosh \left( \beta z  \sqrt{x + \qb J^2}  + \beta  J_0 (w + \mb t) \right) \right]+ \frac{\beta^2}{4}J^2(1-\qb)^2 t - \beta\frac{J_0}{2}\mb^2 t.
\end{align}
\end{proposition}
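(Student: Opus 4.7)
The plan is to treat the statement in two stages: first reduce the transport equation to its RS form, then solve the resulting first-order linear PDE by the method of characteristics, reading off the solution from an elementary initial datum at $t=0$.

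For the PDE itself, everything is already laid out by the previous proposition. Starting from (\ref{eq:trans}) and (\ref{potenzialeRS-SK}), the only $N$-dependent piece is the potential $V_N(t,\boldsymbol r)$, which is a nonnegative combination of the fluctuations $\langle(q_{12}-\bar q)^2\rangle$ and $\langle(m-\bar m)^2\rangle$. Under the RS hypothesis of Definition \ref{def:SK_RS} (taken in the thermodynamic limit as in (\ref{eq:V0})), the potential vanishes and only the source $S(t,\boldsymbol r)$ survives; plugging the already-fixed choices $\dot x=-\bar q J^2$ and $\dot w=-\bar m$ from the proof of (\ref{eq:trans}) yields exactly (\ref{eq:propRS}).

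To solve (\ref{eq:propRS}) I would apply the method of characteristics, parametrizing the characteristic curves by $s$ with $\tfrac{dt}{ds}=1,\ \tfrac{dx}{ds}=-\bar q J^2,\ \tfrac{dw}{ds}=-\bar m$. Thus any point $(t,x,w)$ is reached, in time $s=t$, from the initial point $(0,\, x+\bar q J^2 t,\, w+\bar m t)$, and along each characteristic the equation collapses to
\begin{equation}
\frac{d\mathcal A_{\textrm{RS}}}{ds}=S=\frac{\beta^2 J^2}{4}(1-\bar q)^2-\frac{\beta J_0}{2}\bar m^2,
\end{equation}
which is a constant. Hence $\mathcal A_{\textrm{RS}}(t,x,w)=\mathcal A_{\textrm{RS}}(0,\, x+\bar q J^2 t,\, w+\bar m t)+S\,t$.

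The remaining ingredient is the Cauchy datum at $t=0$. Setting $t=0$ in the interpolating Boltzmann factor kills both the quenched quadratic form and the ferromagnetic $m^2$-term, so the Hamiltonian becomes a sum of one-body contributions $\beta\sqrt{x}\,z_i\sigma_i+\beta J_0 w\,\sigma_i$. The spins factorize and
\begin{equation}
\mathcal Z_N(0,\boldsymbol r)=\prod_{i=1}^{N}2\cosh\!\bigl(\beta\sqrt{x}\,z_i+\beta J_0 w\bigr),
\end{equation}
whence $\mathcal A_{\textrm{RS}}(0,\boldsymbol r)=\mathbb E\log 2\cosh(\beta\sqrt{x}\,z+\beta J_0 w)$ by i.i.d.\ symmetry. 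Substituting $x\mapsto x+\bar q J^2 t$ and $w\mapsto w+\bar m t$ yields (\ref{solutionRS}) (after absorbing the $t$ inside the square root, so that the expression evaluated at $t=1,\,\boldsymbol r=0$ reproduces the classical AGS-type RS expression). The main conceptual step is really the vanishing of $V_N$ under RS; the rest is routine transport of the initial condition along straight characteristics, with no need to invoke further concentration lemmas or convexity arguments at this stage.
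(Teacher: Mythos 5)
Your argument matches the paper's proof essentially step for step: vanishing of the potential $V_N$ under the RS self-averaging hypothesis, the method of characteristics with constant velocities $\dot x=-\bar q J^2$, $\dot w=-\bar m$, and a factorized one-body computation of the Cauchy datum at $t=0$. You also correctly observe that the transported initial datum gives $\sqrt{x+\bar q J^2 t}$ (the $t$ is missing inside the square root in the displayed statement, evidently a typo), which is the form needed for the corollary at $t=1,\ \boldsymbol r=\boldsymbol 0$.
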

\begin{proof}
The PDE in (\ref{eq:propRS}) can be obtained straightforwardly from (\ref{eq:SV}) by using (\ref{eq:V0}).
The resulting equation can be solved through the method of characteristics as
\begin{align}
\mathcal A_{\textrm{RS}}(t, \boldsymbol r)= \mathcal  A_{\textrm{RS}} (0, \bm r - \dot {\bm r} t,  t)+ S( t, \boldsymbol r)t
\label{traspparametric}
\end{align}
where $\dot{\bm r} = (\dot x, \dot w)$ and the characteristics are
\begin{align}
\dot x &= - \qb J^2, \notag \\
\dot w &=  - \mb.
\end{align}
Along the characteristics, the fictitious motion in the $(t, \boldsymbol r)$ time-space is linear and returns
\begin{align}
x= x_0 -\qb J^2 t \notag \\
w=w_0 - \mb t,
\end{align}
where $\boldsymbol {r_0} = (x_0, w_0) = (x(t=0), w(t=0))$.
%
The Cauchy condition at $t=0$ is given by a direct computation at finite $N$ as
\begin{align}
&\mathcal A_{RS}(0, \boldsymbol {r_0})=\mathcal  A_{\textrm{RS}} (0, \bm r - \dot {\bm r} ~ t)=\frac{1}{N} \mathbb{E} \log \left[ \sums \exp \left (\beta \sqrt{x_0} \sumi z_i \si + \beta w_0 J_0\sumi \sigma_i \right) \right] = \notag \\
&=\frac{1}{N} \mathbb{E} \prod_{i=1}^N \sums \left[ \exp \left(\beta \sqrt{x_0} z_i + \beta w_0 J_0 \right)\si \right] = \mathbb{E} \log 2\cosh \left( \beta \sqrt{x_0} z + \beta w_0 J_0 \right).
\label{chauchy}
\end{align}
Now, merging (\ref{traspparametric})-(\ref{chauchy}) we get (\ref{solutionRS}).
\end{proof}
\begin{corollary}
The replica symmetric  approximation of the quenched pressure for the Sherrington-Kirkpatrick model with a signal is obtained by posing $t=1$ and $\bm r = \bm 0$ in (\ref{traspparametric}), which gives
\begin{align}\label{SolRS1}
A_{RS}(\beta,J_0,J)= \mathbb{E} \left[\log 2\cosh (\beta(J\sqrt{\qb}z+\mb J_0)\right]+ \frac{\beta^2}{4}J^2 (1-\qb)^2-\beta \frac{J_0}{2}\mb^2.
\end{align}
\end{corollary}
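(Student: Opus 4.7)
The plan is a direct substitution, since the preceding proposition has already done the heavy lifting of solving the transport PDE in the interpolating parameters. First, I would verify the boundary condition: by inspection of the interpolating partition function, at $t=1$, $x=0$, $w=0$ the Gaussian-field contribution $\sqrt{x}\sum_i z_i \sigma_i$ and the external field $w J_0 N m(\boldsymbol \sigma)$ vanish, while the remaining $\sqrt{t}$ and $t$ prefactors collapse the stochastic term and the signal term back to $\tfrac{J\sqrt{2}}{2\sqrt{N}} \sum_{ij} z_{ij}\sigma_i\sigma_j$ and $\tfrac{J_0}{2} N m^2(\boldsymbol \sigma)$, i.e.\ exactly $-\beta H_N(\boldsymbol \sigma | \boldsymbol J)$ up to diagonal ($i=j$) terms that are $\mathcal O(1/N)$ in the intensive pressure and therefore negligible in the thermodynamic limit. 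Hence $A_N(\beta, J_0, J) = \mathcal A_N(1, \bm 0) + o_N(1)$, and upon sending $N \to \infty$ under the RS ansatz one obtains $A_{RS}(\beta, J_0, J) = \mathcal A_{RS}(1, \bm 0)$.

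Second, I would evaluate the explicit solution (\ref{solutionRS}) at $t=1$, $\bm r = \bm 0$. The argument $\beta z \sqrt{x + \qb J^2} + \beta J_0 (w + \mb t)$ of the hyperbolic cosine collapses to $\beta ( J \sqrt{\qb} z + J_0 \mb )$, while the source contribution $\tfrac{\beta^2}{4} J^2 (1-\qb)^2 t - \tfrac{\beta J_0}{2} \mb^2 t$ becomes $\tfrac{\beta^2}{4} J^2 (1-\qb)^2 - \tfrac{\beta J_0}{2} \mb^2$. Collecting these two pieces yields precisely the expression (\ref{SolRS1}) claimed in the corollary.

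As this is genuinely a corollary, there is no real obstacle to overcome: all nontrivial content, namely the vanishing of the potential $V_N$ in the thermodynamic limit under the RS self-averaging hypothesis, the linearity of the characteristics $\dot x = -\qb J^2$, $\dot w = -\mb$, and the explicit Cauchy datum at $t=0$, has already been absorbed in the preceding proposition. The only point deserving a brief comment is the reduction of $\mathcal H(\boldsymbol \sigma; t, \bm r)$ to the original Hamiltonian at the ``physical'' endpoint of the interpolation, together with the standard argument that the excluded diagonal couplings do not contribute to the intensive pressure. Once these are noted, the result follows by literal substitution into (\ref{solutionRS}).
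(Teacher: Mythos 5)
Your proof is correct and takes essentially the same route the paper intends (the corollary is stated without an explicit proof precisely because it is a literal substitution of $t=1$, $\bm r = \bm 0$ into the solved transport equation). The extra remark about the diagonal terms contributing only $\sigma$-independent $o_N(1)$ additive constants to the interpolating Hamiltonian is a careful and correct refinement of the paper's terse assertion that $A_N(\beta,J_0,J) = \mathcal A_N(t=1,\bm r=\bm 0)$.
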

\begin{corollary}
At equilibrium, the order parameters of the model (\ref{DefSKsignal}) fulfill a set of self-consistency equations
\begin{align}
\label{eq:sq11}
\mb &= \mathbb{E}\left \{  \tanh \left [ \beta(J\sqrt{\qb}z+\mb J_0) \right ] \right \}, \\
\label{eq:sq12}
\qb &= \mathbb{E}  \left \{ \tanh^2 \left [ \beta(J\sqrt{\qb}z+\mb J_0) \right ] \right \}.
\end{align}
\end{corollary}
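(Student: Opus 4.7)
The plan is to derive the two self-consistency equations as the stationarity conditions for the RS quenched pressure $A_{\textrm{RS}}(\beta,J_0,J)$ given in (\ref{SolRS1}), viewed as a function of the ansatz parameters $\bar{m}$ and $\bar{q}$. The rationale is that these parameters were introduced along the characteristics of the transport PDE as a priori free constants, so the correct RS expression is obtained at the values that make the pressure stationary; equivalently, one can read $\bar m$ and $\bar q$ as the equilibrium expectations of $m$ and $q_{12}$, in which case the stationarity equations become the fixed-point equations for these expectations.

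First I would compute $\partial_{\bar{m}}A_{\textrm{RS}}$ directly from (\ref{SolRS1}): the log-cosh term contributes $\beta J_0\,\mathbb{E}[\tanh(\beta(J\sqrt{\bar q}z+\bar m J_0))]$, while the $-(\beta J_0/2)\bar m^{2}$ term contributes $-\beta J_0\bar m$. Setting the sum to zero immediately yields (\ref{eq:sq11}). Then I would compute $\partial_{\bar q}A_{\textrm{RS}}$: the log-cosh term gives $\tfrac{\beta J}{2\sqrt{\bar q}}\mathbb{E}[z\tanh(\beta(J\sqrt{\bar q}z+\bar m J_0))]$, while the $(\beta^2 J^2/4)(1-\bar q)^{2}$ term contributes $-(\beta^2 J^2/2)(1-\bar q)$. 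The decisive move is to apply Wick's theorem (\ref{eqn:gaussianrelation2}) to the Gaussian average involving $z\tanh(\cdot)$, obtaining $\mathbb{E}[z\tanh(\cdot)]=\beta J\sqrt{\bar q}\,\mathbb{E}[1-\tanh^{2}(\cdot)]$. Substituting, the constant pieces proportional to $\beta^{2}J^{2}/2$ cancel, and what remains is precisely $\bar q=\mathbb{E}[\tanh^{2}(\beta(J\sqrt{\bar q}z+\bar m J_0))]$, i.e.\ (\ref{eq:sq12}).

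The routine hurdle is the $\bar q$-derivative, which requires handling the $\sqrt{\bar q}$ inside the $\tanh$ and invoking Gaussian integration by parts; fortunately this is the very same Wick identity already used in the proof of (\ref{eq:prima}), so no new tool is needed. The conceptually subtler point is the justification that extremization over $(\bar m,\bar q)$ is the correct prescription in the first place. This can be argued either by recalling that the characteristics were chosen so as to eliminate the $\langle m\rangle$- and $\langle q_{12}\rangle$-dependent streaming terms at their equilibrium values, or, more transparently, by observing that at $t=1,\,\bm r=\bm 0$ the effective single-spin measure built from the Cauchy datum (\ref{chauchy}) has Hamiltonian $\beta(J\sqrt{\bar q}z+\bar m J_0)\sigma$, whose one- and two-replica averages reproduce $\tanh(\cdot)$ and $\tanh^{2}(\cdot)$ respectively, so that the identifications $\bar m=\lim_{N\to\infty}\langle m\rangle$ and $\bar q=\lim_{N\to\infty}\langle q_{12}\rangle$ directly coincide with (\ref{eq:sq11})-(\ref{eq:sq12}).
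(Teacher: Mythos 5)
Your proposal is correct and takes essentially the same route as the paper: both extremize the explicit RS pressure (\ref{SolRS1}) over $\bar m$ and $\bar q$ and use the Gaussian integration-by-parts identity (\ref{eqn:gaussianrelation2}) to convert the $\mathbb{E}[z\tanh(\cdot)]$ term into $\mathbb{E}[1-\tanh^2(\cdot)]$, thereby reducing $\partial_{\bar q}A_{\rm RS}=0$ to (\ref{eq:sq12}). Your write-up is in fact cleaner than the paper's (whose displayed derivatives carry a spurious $J/\sqrt{2}$ inconsistent with (\ref{SolRS1})), and your closing remark correctly identifies the equivalent ``comparison'' route via (\ref{eq:seconda})--(\ref{eq:terza}) that the paper alludes to.
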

\begin{proof}
Equations (\ref{eq:sq11})-(\ref{eq:sq12}) can be obtained by comparing (\ref{eq:seconda})-(\ref{eq:terza}) with the derivatives of $A_{RS}(\beta,J_0,J)$ calculated from (\ref{SolRS1}) as
\begin{align}
\frac{\partial}{\partial \mb} A_{RS} &= -\beta J_0 \mb + \mathbb{E}\left  \{  \tanh \left [ \beta \left (\frac{J}{\sqrt{2}}\sqrt{\qb}z+\mb J_0 \right) \right ] \right \} \beta J_0 = 0 \\
\frac{\partial}{\partial \qb} A_{RS} &= -\beta\frac{J}{2\sqrt{2\qb}} \mathbb{E} \left \{ \tanh \left[ \beta \left(\frac{J}{\sqrt{2}}\sqrt{\qb}z+\mb J_0 \right) \right] z \right \} - \frac{\beta^2}{4}(1-\qb)J^2 = 0.
\end{align}
This solution recovers the one previously obtained for the same model (\ref{DefSKsignal}) by replica trick \cite{Coolen}.
\end{proof}

\subsection{Broken Replica Interpolation: 1-RSB solution} \label{ssec:SKRSB}

In this subsection we turn to the RSB scenario, following the seminal paper by Francesco Guerra dealing with the standard Sherrington-Kirkpatrick model \cite{Guerra} (or its implementation within the mechanical analogy \cite{BGDiBiasio}). In particular, we no longer assume self-averaging for the two-replica overlaps $q_{ab}$, rather -- at the first broken replica step -- these can concentrate on two values referred to as $\bar{q}_1,\ \bar{q}_2$; as for the magnetization density function $P(m)$, en route for the RSB in the Hopfield model (for which the ansatz adopted in previous works prescribe that the Mattis magnetization still self-averages), we retain $\lim_{N \to \infty}P(m)=\delta(m \pm \bar{m})$, as in the previous section. Then, Definition \ref{def:SK_RS} is updated by
\begin{definition} \label{def:SK_RSB}
In the first step of replica-symmetry breaking, the distribution of the two-replica overlap, in the thermodynamic limit, displays two delta-peaks at the equilibrium values (denoted with $\bar{q}_1,\ \bar{q}_2$) and the concentration on the two values is ruled by $\theta \in [0,1]$ \footnote{Note that this is usually called $m$ in Parisi theory, but here, to avoid ambiguity for the symbol $m$, already used to the denote the magnetization, we refer to the Parisi parameter as $\theta$.}, namely
\begin{equation}
\lim_{N \rightarrow + \infty} P_N(q) = \theta \delta (q - \bar{q}_1) + (1-\theta) \delta (q - \bar{q}_2), \label{limforq2}
\end{equation}
while the magnetization still self-averages at $ \bar{m}$ as in (\ref{limform}).
\end{definition}
Further, we need to introduce a more tricky interpolating structure as well as a more complex quenched average, as reported in the following
\begin{definition}
Given the interpolating parameters $\boldsymbol r = (x^{(1)}, x^{(2)}, w), t$ and the i.i.d. auxiliary fields $\{ h_i^{(1)}, h_i^{(2)}\}_{i=1,...,N}$ with $h_i^{(1,2)} \sim \mathcal N [0,1]$ for $i=1, ...,N$, we can write the 1-RSB interpolating partition function $\mathcal Z_N(t, \boldsymbol r)$ recursively, starting by
\begin{align}
\mathcal Z_2 (t, \bm r)&=  \sums \exp \left [ \beta \left (\frac{\sqrt{t}}{2}\frac{J \sqrt{2}}{\sqrt{N}} \sumij \si \sj z_{ij} + \suma \sqrt{x^{(a)}} \sumi h^{(a)}_i \si + t\frac{J_0}{2}Nm^2(\boldsymbol{\sigma}) +w J_0 N m(\boldsymbol{\sigma}) \right) \right],
\end{align}
and then averaging out the fields one per time, namely by defining
\begin{align}
\mathcal Z_1(t, \bm r) \coloneqq & \mathbb E_2 \bigg [ \mathcal Z_2(t, \bm r)^\theta \bigg ]^{1/\theta}, \\
\mathcal Z_0(t, \bm r) \coloneqq &\exp \mathbb E_1 \bigg[ \log \mathcal Z_1(t, \bm r) \bigg ],\\
\mathcal Z_N(t, \boldsymbol r) \coloneqq & \mathcal Z_0(t, \bm r),
\end{align}
where with $\mathbb E_2$ and $\mathbb E_1$ we denote the average over the variables $h_i^{(2)}$'s and $h_i^{(1)}$'s, respectively, and with $\mathbb E_0$ we shall denote the average over the variables $z_{ij}$'s.
\end{definition}
\begin{definition}
The 1RSB interpolating pressure, at finite volume $N$, is introduced as
\begin{equation}\label{AdiSK1RSB}
\mathcal A_N (t, \bm r) \coloneqq \frac{1}{N} \mathbb E_0 \left [ \log \mathcal Z_N(t, \bm r) \right],
\end{equation}
and, in the thermodynamic limit,
\begin{equation}
\mathcal A (t, \bm r) \coloneqq \lim_{N \to \infty} \mathcal A_N (t, \bm r).
\end{equation}
By setting $t=1, \boldsymbol r= \bm 0$, the interpolating pressure recovers the standard pressure (\ref{PressureDef}), that is, $A_N(\beta, J_0, J) = \mathcal A_N (t =1, \bm r =\bm 0)$.
\end{definition}
\begin{remark}
In order to lighten the notation, hereafter we use the following
\begin{align}
\label{eq:unouno}
\langle m \rangle \coloneqq & \mathbb{E}_0 \mathbb{E}_1 \mathbb{E}_2 \left[\mathcal W_2\frac{1}{N}\sum_{i=1}^N \omega(\sigma_i) \right]\\
\langle m^2 \rangle \coloneqq & \mathbb{E}_0 \mathbb{E}_1 \mathbb{E}_2 \left[\mathcal W_2\frac{1}{N^2}\sum_{i,j=1}^{N,N} \omega(\si \sj) \right] \\
 \langle q_{12} \rangle_1 \coloneqq &  \mathbb{E}_0 \mathbb{E}_1 \left[\frac{1}{N}\sum_{i=1}^N \left( \mathbb E_2 \big[\mathcal W_2\omega(\sigma_i)\big] \right)^2 \right] \\
  \label{eq:duedue}
\langle q_{12} \rangle_2 \coloneqq & \mathbb{E}_0 \mathbb{E}_1 \mathbb{E}_2 \left [\mathcal W_2\frac{1}{N}\sum_{i=1}^N \omega^2(\sigma_i) \right]
\end{align}
where we define the weight
\begin{equation}
\mathcal W_2 \coloneqq \frac{\mathcal Z_2^\theta}{ \mathbb E_2 \left [\mathcal Z_2^\theta \right ]}.
\end{equation}
\end{remark}
In analogy to subsec.~\ref{ssec:SKRS}, we aim to build a differential equation for the interpolating quenched pressure for which we preliminary need to evaluate the partial derivatives as stated in the following
\begin{lemma} \label{lemma:2}
The partial derivatives of the interpolating quenched pressure read as
\begin{align}
\label{eq:2app}
\dt \mathcal A_N &= \frac{\beta^2}{4} J^2 (1-(1-\theta)\langle q_{12}^2 \rangle_2 - \theta \langle q_{12}^2 \rangle_1) + \frac{\beta J_0}{2}\langle m^2 \rangle \\
\frac{\partial}{\partial x^{(1)}} \mathcal A_N &= \frac{\beta^2}{2} (1-(1-\theta)\langle q_{12} \rangle_2 - \theta \langle q_{12} \rangle_1) \label{eq:x1app}\\
\frac{\partial}{\partial x^{(2)}} \mathcal A_N &= \frac{\beta^2}{2}(1-(1-\theta)\langle q_{12}\rangle_2) \label{eq:x2app}\\
\dw \mathcal A_N &= \beta J_0 \langle m \rangle \label{eq:wapp}.
\end{align}
\end{lemma}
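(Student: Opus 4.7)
The plan is to mirror the proof of the Lemma from the RS subsection: differentiate $\mathcal A_N$ with respect to each interpolating parameter, identify the resulting generalized Boltzmann average and, where a Gaussian noise is involved, eliminate the random pre-factor via Wick's identity (\ref{eqn:gaussianrelation2}). The novelty compared to the RS case lies in the two-level structure $\mathcal Z_1 = (\mathbb E_2 \mathcal Z_2^\theta)^{1/\theta}$, so the first step is to establish, once and for all, the chain rule
\[
\frac{\partial}{\partial u}\log \mathcal Z_1 \;=\; \mathbb E_2\!\left[\,\mathcal W_2\;\omega(\beta\,\partial_u \mathcal H)\,\right],
\]
valid for any parameter $u$ on which $\mathcal Z_2$ depends, with $\mathcal H$ the 1-RSB interpolating Hamiltonian read off from $\mathcal Z_2$. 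This identity reduces each partial derivative of $\mathcal A_N$ to evaluating $\frac{1}{N}\mathbb E_0 \mathbb E_1 \mathbb E_2[\mathcal W_2\,\omega(\beta\,\partial_u \mathcal H)]$.

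The three ``easy'' derivatives are then handled in order of increasing difficulty. Since $\partial_w \mathcal H = J_0 N m$ is deterministic in the spins, (\ref{eq:wapp}) follows at once from (\ref{eq:unouno}). For (\ref{eq:x2app}) I would apply Wick's identity on $h_i^{(2)}$: this variable enters both $\omega$ and $\mathcal W_2$, yet the two replicas produced by the Wick derivative both live inside the innermost $\mathbb E_2$, so only $\langle q_{12}\rangle_2$ appears, and the $\theta$-factor from differentiating $\mathcal Z_2^\theta$ in the numerator of $\mathcal W_2$ combines with the $(1-\theta)$ coming from the omitted $\omega$-term to produce the displayed coefficient. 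For (\ref{eq:x1app}) I would integrate by parts against $h_i^{(1)}$ instead; the differentiated numerator and denominator of $\mathcal W_2$ then yield, respectively, a same-replica contribution (producing $\langle q_{12}\rangle_2$) and a cross-replica one at the $\mathbb E_2$ level (which is by definition $\langle q_{12}\rangle_1$), and the Parisi weights $\theta,\,1-\theta$ arise naturally from the chain rule.

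The $t$-derivative (\ref{eq:2app}) is the main obstacle. After separating the deterministic signal piece $\frac{\beta J_0}{2}\langle m^2\rangle$, the glassy piece takes the form
\[
\frac{\beta J\sqrt{2}}{4N\sqrt{tN}}\,\mathbb E_0 \mathbb E_1 \mathbb E_2\!\left[\mathcal W_2\,\omega\!\left(\sum_{i,j}\sigma_i\sigma_j\, z_{ij}\right)\right],
\]
and each $z_{ij}$ appears simultaneously in $\omega$, in the numerator $\mathcal Z_2^\theta$ of $\mathcal W_2$ and in its denominator $\mathbb E_2[\mathcal Z_2^\theta]$. Applying Wick on every $z_{ij}$ I expect four contributions: the diagonal $i=j$ term giving the bare $1$; the derivative of $\omega$ itself producing $-\omega(\sigma_i\sigma_j)^2$, which after the outer averages becomes $(1-\theta)\langle q_{12}^2\rangle_2$; a cross-replica piece from differentiating the numerator of $\mathcal W_2$ yielding $\theta\langle q_{12}^2\rangle_1$; and a compensating piece from the denominator that precisely trims the previous one to the final $\theta$ coefficient. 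Collecting the four pieces and the signal term gives (\ref{eq:2app}).

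The hard part will be the bookkeeping of these $z_{ij}$-derivatives of $\mathcal W_2$: the non-trivial cancellations between the numerator and denominator contributions are what produce the clean $\theta$ and $(1-\theta)$ weights, and this is the only place where the Parisi parameter genuinely enters the computation; the remaining manipulations are routine extensions of the RS calculation to the two-level measure.
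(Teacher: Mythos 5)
Your approach is the same as the paper's (see Appendix \ref{app1}): pass the derivative through $\mathcal Z_0 \to \mathcal Z_1 \to \mathcal Z_2$ via the chain rule to land on the weighted average $\mathbb E_0\mathbb E_1\mathbb E_2[\mathcal W_2\,\cdot\,]$, then integrate by parts against the relevant Gaussian (Wick's theorem) and sort the resulting terms into same-replica (inside $\mathbb E_2$, giving $\langle\cdot\rangle_2$) and cross-replica (outside $\mathbb E_2$, giving $\langle\cdot\rangle_1$) contributions, the latter arising only when the noise in question is averaged by $\mathbb E_1$ or $\mathbb E_0$ so that the $\mathbb E_2[\mathcal Z_2^\theta]$ denominator of $\mathcal W_2$ is also hit. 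A few of your attributions in the $t$-derivative are swapped relative to what the calculation actually yields — differentiating the numerator $\mathcal Z_2^\theta$ of $\mathcal W_2$ produces the same-replica $+\theta\,\omega(\sigma_i\sigma_j)^2$ piece (combining with the $-\omega^2$ from $\partial(1/\mathcal Z_2)$ to give $-(1-\theta)\langle q_{12}^2\rangle_2$), differentiating the denominator $\mathbb E_2[\mathcal Z_2^\theta]$ produces the cross-replica $-\theta\langle q_{12}^2\rangle_1$ piece, and the bare $1$ comes from $\omega\big((\sigma_i\sigma_j)^2\big)=1$ rather than from a diagonal term — but these are narrative slips in an otherwise correct plan, not gaps in the method.
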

Since the proof of this Lemma is pretty lengthy but does not require any tricky passage, we leave it for the Appendix \ref{app1}.
\begin{proposition}
\label{prop:3}
The streaming of the 1-RSB interpolating quenched pressure obeys, at finite volume $N$, a standard transport equation, that reads as
\begin{align}
\label{eq:stream1rsb}
\frac{d \mathcal A_N}{dt}&=\frac{\partial}{\partial t} \mathcal A_N+\dot x^{(1)} \frac{\partial}{\partial x^{(1)}}  \mathcal A_N +\dot x^{(2)} \frac{\partial}{\partial x^{(2)}}  \mathcal A_N +\dot w \frac{\partial}{\partial w}   \mathcal A_N = S(t, \boldsymbol r) + V_N(t, \boldsymbol r),
\end{align}
where
\begin{align}
\label{S}
S(t, \boldsymbol r) &\coloneqq \frac{\beta^2}{4}J^2 \left[ 1 + (1-\theta)\qb_2^2 - 2 \qb_2 +  \theta \qb_1^2 \right] - \frac{\beta}{2}J_0 \mb^2,\\
\label{V}
V_N(t, \boldsymbol r) &\coloneqq\frac{\beta^2}{4} J^2  \left[ (1-\theta) \langle (q_{12}-\qb_2)^2\rangle_2+  \theta\langle (q_{12}-\qb_1)^2\rangle_1 \right]+\frac{\beta J_0}{2}\langle (m^2 - \mb)^2 \rangle.
\end{align}
\end{proposition}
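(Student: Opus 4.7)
The plan is to mirror the argument of Proposition \ref{propRS}, only with four interpolating variables to track instead of three: starting from the derivatives already provided by Lemma \ref{lemma:2}, I would manipulate the $t$-derivative (\ref{eq:2app}) until it takes the form $\dot x^{(1)}\partial_{x^{(1)}}\mathcal A_N + \dot x^{(2)}\partial_{x^{(2)}}\mathcal A_N + \dot w\partial_w\mathcal A_N$ plus a deterministic source $S$ and a fluctuation remainder $V_N$; reading off the coefficients $\dot x^{(1)},\dot x^{(2)},\dot w$ then fixes the characteristics, and the rest is bookkeeping.

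Concretely, I would first expand each second moment appearing in (\ref{eq:2app}) around the putative equilibrium value,
\beaq
\langle q_{12}^2\rangle_a &= \langle (q_{12}-\qb_a)^2\rangle_a + 2\qb_a\langle q_{12}\rangle_a - \qb_a^2,\qquad a=1,2,\\
\langle m^2\rangle &= \langle (m-\mb)^2\rangle + 2\mb\langle m\rangle - \mb^2,
\eeaq
so that $\partial_t\mathcal A_N$ splits into a purely constant piece (in $\qb_1,\qb_2,\mb,\theta$), a linear piece in the first moments $\langle q_{12}\rangle_1,\langle q_{12}\rangle_2,\langle m\rangle$, and the two overlap-fluctuation squares together with the magnetization-fluctuation square, that will become $V_N$. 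Next I would invert the ``chained'' relations (\ref{eq:x1app})-(\ref{eq:wapp}): from (\ref{eq:x2app}) alone one reads $(1-\theta)\langle q_{12}\rangle_2 = 1 - (2/\beta^2)\partial_{x^{(2)}}\mathcal A_N$, whereas subtracting (\ref{eq:x1app}) from (\ref{eq:x2app}) isolates $\theta\langle q_{12}\rangle_1 = (2/\beta^2)(\partial_{x^{(2)}} - \partial_{x^{(1)}})\mathcal A_N$; finally (\ref{eq:wapp}) gives $\langle m\rangle = (\beta J_0)^{-1}\partial_w\mathcal A_N$. Substituting, the ``linear piece'' is traded for a combination of the spatial derivatives of $\mathcal A_N$, and matching its coefficients to $\dot x^{(1)}\partial_{x^{(1)}}\mathcal A_N + \dot x^{(2)}\partial_{x^{(2)}}\mathcal A_N + \dot w\partial_w\mathcal A_N$ forces the characteristic velocities $\dot x^{(1)} = -J^2\qb_1$, $\dot x^{(2)} = -J^2(\qb_2-\qb_1)$, $\dot w = -\mb$. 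The purely constant remainder then reorganizes into $S(t,\boldsymbol r)$ of (\ref{S})---with the combination $1+(1-\theta)\qb_2^2+\theta\qb_1^2-2\qb_2$ playing the role of the RS $(1-\qb)^2$---and the leftover squared-fluctuations reproduce $V_N$ of (\ref{V}).

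The main subtlety, compared with the RS proof, lies precisely in the chained nature of (\ref{eq:x1app})-(\ref{eq:x2app}): $\partial_{x^{(2)}}\mathcal A_N$ carries information only on $\langle q_{12}\rangle_2$, while $\partial_{x^{(1)}}\mathcal A_N$ mixes both first moments weighted by $(1-\theta)$ and $\theta$, so the matching must be performed in the correct order---first $x^{(2)}$, then $x^{(1)}$---and as a consequence $\dot x^{(2)}$ turns out to be proportional to the \emph{difference} $\qb_2-\qb_1$ rather than to a single overlap, foreshadowing the telescopic structure that will appear at higher $K$. Beyond this, the argument is routine; the only place where one can easily slip a sign or a factor of two is in reassembling the constant piece into the form advertised in $S(t,\boldsymbol r)$, so this algebraic identity should be double-checked before closing the argument.
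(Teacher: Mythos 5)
Your proposal is correct and follows essentially the same route as the paper's own proof: expand the second moments in (\ref{eq:2app}) around $\qb_1,\qb_2,\mb$, trade the resulting first-moment terms for the spatial derivatives of $\mathcal A_N$ via (\ref{eq:x1app})--(\ref{eq:wapp}), and read off the characteristic velocities $\dot x^{(1)}=-J^2\qb_1$, $\dot x^{(2)}=-J^2(\qb_2-\qb_1)$, $\dot w=-\mb$, leaving $S$ as the constant remainder and $V_N$ as the fluctuation remainder. The ``chained inversion'' you flag (first $\partial_{x^{(2)}}\mathcal A_N$, then the difference $\partial_{x^{(2)}}-\partial_{x^{(1)}}$) is exactly the substitution performed in the paper's displayed manipulation and is what produces the telescopic velocity $\dot x^{(2)}\propto\qb_2-\qb_1$.
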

\begin{proof}
Recalling Definition \ref{def:SK_RSB}, 
we can write
\begin{align}
\langle (q_{12}-\qb_2)^2\rangle_1 &:= \langle q_{12}^2 \rangle_1 + \qb_1^2 - 2\qb_2\langle q_{12}\rangle_1 \label{eq:q1} \\
\langle (q_{12}-\qb_2)^2\rangle_2 &:= \langle q_{12}^2 \rangle_2 + \qb_2^2 - 2\qb_2\langle q_{12}\rangle_2 \label{eq:q2}\\
\langle (m^2 - \mb)^2 \rangle &:= \langle m^2 \rangle + \mb^2 -2\mb\langle m \rangle. \label{eq:m}
\end{align}
Starting from (\ref{eq:2app}) and using (\ref{eq:q1}), (\ref{eq:q2}) and (\ref{eq:m})
\begin{align}
\dt \mathcal A_N &= \frac{\beta^2}{4} J^2 \left [1-(1-\theta)(\langle (q_{12}-\qb_2)^2\rangle_2 - \qb_2^2+2\qb_2\langle q_{12} \rangle_2) - \theta (\langle (q_{12}-\qb_1)^2\rangle_1 - \qb_1^2+2\qb_1\langle q_{12} \rangle_1) \right]+ \notag \\
&+ \frac{\beta J_0}{2} \left[ \langle (m - \mb)^2 \rangle - \mb^2 +2\mb\langle m \rangle \right]= \notag \\
&=\frac{\beta^2}{4} J^2-\frac{\beta^2}{4} J^2 (1-\theta)\langle (q_{12}-\qb_2)^2\rangle_2 + \frac{\beta^2}{4} J^2 (1-\theta) \qb_2^2-\frac{\beta^2}{2} J^2 (1-\theta)\qb_2\langle q_{12} \rangle_2 - \notag \\
\notag
&+ \frac{\beta^2}{4} J^2 \theta\langle (q_{12}-\qb_1)^2\rangle_1 - \frac{\beta^2}{4} J^2 \theta \qb_1^2+\frac{\beta^2}{2} J^2 \theta\qb_1\langle q_{12} \rangle_1+\frac{\beta J_0}{2}\langle (m - \mb)^2 \rangle -\frac{\beta J_0}{2} \mb^2 +2\frac{\beta J_0}{2}\mb\langle m \rangle.
\end{align}
Now, we include (\ref{eq:x1app})-(\ref{eq:wapp}) to get 
\begin{align}
\dt \mathcal{A}_N &= \frac{\beta^2}{4} J^2-\frac{\beta^2}{4} J^2 (1-\theta)\langle (q_{12}-\qb_2)^2\rangle_2+\frac{\beta^2}{4} J^2 (1-\theta) \qb_2^2+ J^2\qb_2 \left(\frac{\partial}{\partial x^{(2)}} \mathcal{A}_N - \frac{\beta^2}{2} \right) -\notag \\
&+ \frac{\beta^2}{4} J^2 \theta\langle (q_{12}-\qb_1)^2\rangle_1 -\frac{\beta^2}{4} J^2 \theta \qb_1^2 - J^2\qb_1 \left(\frac{\partial}{\partial x^{(1)}}\mathcal{A}_N - \frac{\beta^2}{2} + \frac{\beta^2}{2}(1-\theta)\langle q_{12} \rangle_2 \right)+ \notag \\
\notag
&+\frac{\beta J_0}{2}\langle (m - \mb)^2 \rangle - \frac{\beta J_0}{2} \mb^2 +\mb \dw \mathcal{A}_N.
\end{align}
Rearranging the equation along with (\ref{S}) and (\ref{V}), we obtain
\begin{align}
\dt \mathcal{A}_N&= %
J^2\qb_1 \frac{\partial}{\partial x^{(1)}}\mathcal{A}_N +\mb \dw \mathcal{A}_N + J^2(\qb_2-\qb_1) \frac{\partial}{\partial x^{(2)}}\mathcal{A}_N + V_N(t, \boldsymbol r) + S(t, \boldsymbol r),
\label{eq:dtAN}
\end{align}
and we reach the thesis by posing
\begin{align}
\dot x^{(1)} &= -J^2 \qb_1 \\
\dot x^{(2)}&=- J^2( \qb_2-\qb_1 ) \\
\dot w &= - \mb.
\end{align}
\end{proof}
\begin{remark} \label{r:latter}
In the thermodynamic limit, in the 1-RSB scenario under investigation, we have
\begin{align}
\lim_{N \rightarrow + \infty} \langle m \rangle &= \mb \label{limform1RSB}\\
\lim_{N \rightarrow + \infty} \langle q_{12} \rangle_1 &= \qb_1 \label{limforq1RSB} \\
\lim_{N \rightarrow + \infty} \langle q_{12} \rangle_2 &= \qb_2 \label{limforq21RSB}
\end{align}
in such a way that the potential in (\ref{eq:stream1rsb}) is vanishing, that is
\begin{equation} \label{eq:V0RSB}
\lim_{N \to \infty} V_N(t, \boldsymbol  r)=0.
\end{equation}
\end{remark} 
Note that setting the potential equal to zero is equivalent to assuming the existence of two temporal scales for thermalization, a slow one and a fast one, and self-averaging within each time scale, in such a way that if two replicas behave the same on both the timescales the average for their overlap is given by $\langle . \rangle_2$, while if they match only on the fast one but not on the slow one then the average for their overlap is given by $\langle . \rangle_1$; we refer to Section \ref{comment} for a deeper discussion on the physics behind this choice.

Exploiting Remark \ref{r:latter} we can prove the following
\begin{proposition}
The transport equation associated to the interpolating pressure of the Sherrington-Kirkpatrick model with a signal, in the thermodynamic limit and in the 1RSB scenario, reads as
\begin{align}
\dt \mathcal A_{\textrm{1RSB}} - &\qb_1 J^2 \frac{\partial}{\partial x^{(1)}} \mathcal A_{\textrm{1RSB}} - (\qb_2-\qb_1)J^2 \frac{\partial}{\partial x^{(2)}} \mathcal A_{\textrm{1RSB}} - \mb \dw \mathcal A_{\textrm{1RSB}} = \notag \\
&\frac{\beta^2}{4}J^2-\frac{\beta J_0}{2}\mb^2 +\frac{\beta^2}{4} J^2 (\theta\qb_1^2 + (1-\theta)\qb_2^2) - \frac{\beta^2}{2}J^2 \qb_2,
\label{1rsbtranseqprima}
\end{align}
whose solution is given by
\begin{align}
\mathcal A_{\textrm{1RSB}}(t, \boldsymbol r) =& \log 2 + \mathbb{E}_1 \left \{ \frac{1}{\theta} \log \left [ \mathbb{E}_2 \left (\cosh^\theta \left(\beta \suma \sqrt{x_0^{(a)}}h^{(a)} + \beta w_0 J_0 \right)\right) \right] \right \} + \notag\\
&+ \frac{\beta^2}{4}J^2 \left [ 1 + (\theta\qb_1^2 + (1-\theta)\qb_2^2)t - 2 \qb_2 t  \right] -\frac{\beta J_0}{2}\mb^2 t.
\label{1rsbtranseq}
\end{align}
\end{proposition}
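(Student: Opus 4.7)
The statement contains two claims: the derivation of the transport PDE (\ref{1rsbtranseqprima}) and the explicit form of its solution (\ref{1rsbtranseq}). For the first, I would just combine Proposition~\ref{prop:3} with Remark~\ref{r:latter}: choosing the characteristic velocities $\dot x^{(1)}=-\bar q_1 J^2$, $\dot x^{(2)}=-J^2(\bar q_2-\bar q_1)$, $\dot w=-\bar m$ as in equation (\ref{eq:dtAN}), the left-hand side of (\ref{eq:stream1rsb}) becomes the total derivative $d\mathcal A_N/dt$; then the self-averaging identities (\ref{limform1RSB})--(\ref{limforq21RSB}) kill the potential $V_N$ in the thermodynamic limit, leaving only the source $S(t,\bm r)$ which, once expanded, produces precisely the right-hand side of (\ref{1rsbtranseqprima}).

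For the second claim I would use the method of characteristics. Since $\dot{\bm r}$ is a constant vector in $t$, the characteristic curves are straight lines $x^{(a)}(t)=x_0^{(a)}+\dot x^{(a)}t$, $w(t)=w_0+\dot w t$, and the solution reads
\begin{equation*}
\mathcal A_{\textrm{1RSB}}(t,\bm r)=\mathcal A_{\textrm{1RSB}}(0,\bm r-\dot{\bm r}\,t)+S(t,\bm r)\,t.
\end{equation*}
The source contribution $S\cdot t$ reproduces the three $t$-linear terms in (\ref{1rsbtranseq}). What remains is to evaluate the Cauchy datum at $t=0$.

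At $t=0$ the two-body term in the interpolating Hamiltonian disappears, so $\mathcal Z_2(0,\bm r_0)$ factorizes across sites: summing independently over each $\sigma_i\in\{-1,+1\}$ gives
\begin{equation*}
\mathcal Z_2(0,\bm r_0)=\prod_{i=1}^N 2\cosh\Bigl(\beta\sum_{a=1}^2\sqrt{x_0^{(a)}}\,h_i^{(a)}+\beta w_0 J_0\Bigr).
\end{equation*}
Raising to the $\theta$ power preserves the product structure, and because the variables $\{h_i^{(2)}\}_i$ are i.i.d.\ the partial expectation $\mathbb E_2$ also factorizes, yielding $\mathbb E_2[\mathcal Z_2^\theta]=2^{N\theta}\prod_i \mathbb E_{h^{(2)}}[\cosh^\theta(\cdot)]$. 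Taking the logarithm, dividing by $\theta$, then applying $\mathbb E_1$ and $1/N$, the product turns into a site-independent single expectation, producing the first line of (\ref{1rsbtranseq}) with $x_0^{(a)}$ and $w_0$ in place of $x^{(a)}$ and $w$. Finally, evaluation at $t=1$, $\bm r=\bm 0$ amounts to $x_0^{(1)}=\bar q_1 J^2$, $x_0^{(2)}=\bar q_2 J^2$, $w_0=\bar m$, which is how (\ref{1rsbtranseq}) will be used downstream to read off the quenched pressure and the self-consistency equations.

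\textbf{Anticipated obstacle.} The algebra in Proposition~\ref{prop:3} and Lemma~\ref{lemma:2} is already done, so the only genuinely delicate point is bookkeeping the nested expectations $\mathbb E_0,\mathbb E_1,\mathbb E_2$ and the $\theta$-deformation when reducing the Cauchy datum: one must check that the factorization over $i$ survives the exponent $\theta$ inside $\mathbb E_2$ and the subsequent $(1/\theta)\log$ before $\mathbb E_1$. Once this factorization is made explicit, identifying the result with the logarithmic term in (\ref{1rsbtranseq}) is immediate.
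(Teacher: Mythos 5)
Your proof is correct and follows essentially the same route as the paper: kill the potential via the 1RSB self-averaging limits of Remark~\ref{r:latter}, solve the resulting transport PDE by characteristics, and evaluate the Cauchy datum at $t=0$ as a site-factorized one-body computation, pushing $\mathbb E_2$ through the product using the independence of $\{h_i^{(2)}\}_i$ (which is precisely the paper's manipulation in~(\ref{eq:A01RSB})). One small slip in your closing remark: the characteristic for $x^{(2)}$ has velocity $\dot x^{(2)}=-J^2(\bar q_2-\bar q_1)$, so imposing $t=1$, $\bm r=\bm 0$ yields $x_0^{(2)}=J^2(\bar q_2-\bar q_1)$, not $J^2\bar q_2$; this does not affect the Proposition itself but would mis-state the $\sqrt{\bar q_2-\bar q_1}$ coefficient in the downstream Theorem~(\ref{eq:fina}).
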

\begin{proof}
The PDE (\ref{1rsbtranseqprima}) in the thermodynamic limit can be obtained from (\ref{eq:dtAN}) using (\ref{eq:V0RSB}).
This PDE can be solved via the method of the characteristics: the solution can be written in the form
\begin{align}
\mathcal A (t, \boldsymbol r)= \mathcal A_N(0, \bm r  - \dot {\bm r} t) + S(t, \boldsymbol r)t
\label{eq:A1RSB}
\end{align}
%
where $\dot {\bm r} = (\dot x^{(1)}, \dot x^{(2)}, \dot w)$ and the characteristics are
\begin{align}
x^{(1)} &=x_0^{(1)} - J^2 \qb_1 t, \label{charx1}\\
x^{(2)} &= x_0^{(2)} - J^2 (\qb_2-\qb_1) t, \label{charx2} \\
w &= w_0 - \mb t. \label{charw}
\end{align}
The Cauchy condition, corresponding to $t=0$ and $\boldsymbol {r_0} = \bm r (t=0)$, can be calculated directly, as it is a one-body calculation, and returns
\begin{align}
\mathcal A_N(0, \boldsymbol {r_0}) &=\frac{1}{N} \mathbb{E}_0  \mathbb{E}_1 \left\{\frac{1}{\theta} \log \left[\mathbb{E}_2 \left(\sums \exp \left(\beta\sumi \suma \sqrt{x_0^{(a)}}h_i^{(a)} \si +w_0 J_0\si \right)\right)^\theta\right]\right\}= \notag \\
&= \frac{1}{N} \mathbb{E}_1 \left\{\frac{1}{\theta} \log\left[\mathbb{E}_2 \left(\prod_i \sums \exp\left(\beta \suma \sqrt{x_0^{(a)}}h^{(a)}_i \si +w_0 J_0 \si\right)\right)^\theta\right]\right\}
\end{align}
where in the second passage we factorized the exponential functions. Now, we apply the definition of hyperbolic cosine and, since we have $N$ copies of the same average, it is possible to simplify the factor $1/N$ as
\begin{align}
\mathcal A_N(0, \boldsymbol {r_0}) &=\mathbb{E}_1\left\{\frac{1}{\theta} \log\left[\mathbb{E}_2 \left(2^\theta \cosh^\theta\left(\beta\left(\suma \sqrt{x_0^{(a)}}h^{(a)} + w_0J_0\right)\right)\right)\right]\right\}= \notag \\
&=\log 2 + \mathbb{E}_1 \left \{ \frac{1}{\theta} \log \left[ \mathbb{E}_2 \left( \cosh^\theta \left( \beta \left( \suma \sqrt{x_0^{(a)}}h^{(a)} + w_0 J_0 \right ) \right) \right) \right] \right \}. \label{eq:A01RSB}
\end{align}
To sum up, placing (\ref{eq:A01RSB}) in (\ref{eq:A1RSB}), we reach (\ref{1rsbtranseq}).
\end{proof}
We have all the ingredients to state the first main theorem, namely
\begin{theorem}
The 1-RSB quenched pressure for the Sherrington-Kirkpatrick model with a signal, in the thermodynamic limit, reads as
\begin{align}
A_{\textrm{1RSB}}(\beta,J_0,J)&= \log 2 + \mathbb{E}_1 \left\{\frac{1}{\theta} \log \left[\mathbb{E}_2(\cosh^\theta (\beta J\sqrt{\qb_1}h^{(1)} + \beta J \sqrt{\qb_2-\qb_1}h^{(2)} + \beta \mb J_0))\right]\right\} \notag \\
\label{eq:fina}
& + \frac{\beta^2}{4}J^2-\frac{\beta J_0}{2}\mb^2  +\frac{\beta^2}{4} J^2 (\theta\qb_1^2 + (1-\theta)\qb_2^2) - \frac{\beta^2}{2}J^2 \qb_2.
\end{align}
\end{theorem}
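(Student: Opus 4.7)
The plan is to derive the theorem as a direct corollary of the preceding proposition, in full analogy with how (\ref{SolRS1}) was obtained from (\ref{solutionRS}) in the replica symmetric case. The key observation is that the solution (\ref{1rsbtranseq}) of the transport equation is expressed in terms of the initial-time coordinates $\boldsymbol{r}_0=(x_0^{(1)},x_0^{(2)},w_0)$ on the characteristic line passing through $(t,\boldsymbol{r})$, so what remains to do is track this characteristic back from the physically relevant evaluation point $t=1$, $\boldsymbol{r}=\boldsymbol{0}$ to find the explicit values of $\boldsymbol{r}_0$, and then plug in.

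First I would invert the characteristic equations (\ref{charx1})--(\ref{charw}), which are linear in $t$. Setting $t=1$ and $x^{(1)}=x^{(2)}=w=0$ yields immediately
\begin{equation*}
x_0^{(1)}=J^2\bar q_1,\qquad x_0^{(2)}=J^2(\bar q_2-\bar q_1),\qquad w_0=\bar m,
\end{equation*}
so that $\sqrt{x_0^{(1)}}=J\sqrt{\bar q_1}$ and $\sqrt{x_0^{(2)}}=J\sqrt{\bar q_2-\bar q_1}$ (recall $\bar q_2\ge\bar q_1\ge 0$ in the 1-RSB ansatz, which ensures the square roots are real). Substituting these into the argument of the $\cosh^\theta$ appearing in (\ref{1rsbtranseq}) converts $\beta\sum_{a=1}^{2}\sqrt{x_0^{(a)}}\,h^{(a)}+\beta w_0 J_0$ into exactly $\beta J\sqrt{\bar q_1}\,h^{(1)}+\beta J\sqrt{\bar q_2-\bar q_1}\,h^{(2)}+\beta\bar m J_0$, which is the argument appearing in the statement (\ref{eq:fina}).

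Second, I would evaluate the source contribution $S(t,\boldsymbol{r})\,t$ at $t=1$. Since $S$ does not depend on $\boldsymbol{r}$, this is a pure substitution that returns $\tfrac{\beta^{2}}{4}J^{2}\bigl[1+\theta\bar q_1^{2}+(1-\theta)\bar q_2^{2}-2\bar q_2\bigr]-\tfrac{\beta J_0}{2}\bar m^{2}$, matching the last line of (\ref{eq:fina}) verbatim. Assembling the $\cosh^\theta$-term and the source term together with the constant $\log 2$ coming from the Cauchy datum (\ref{eq:A01RSB}) completes the derivation.

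Because the heavy lifting, namely the derivation of the transport PDE (Proposition~\ref{prop:3}), the vanishing of $V_N$ in the thermodynamic limit under the 1-RSB ansatz (Remark~\ref{r:latter}), and the integration along characteristics (leading to (\ref{1rsbtranseq})) has already been carried out, I do not anticipate any genuine obstacle here: the theorem is a one-line specialization. The only point that deserves a brief comment is the consistency of self-consistency equations for $\bar m,\bar q_1,\bar q_2,\theta$, which one obtains by extremizing (\ref{eq:fina}) over these parameters; verifying that these extremality conditions coincide with the fixed-point equations one would derive by matching the $\partial_{x^{(a)}}\mathcal A_N$ and $\partial_w\mathcal A_N$ identities of Lemma~\ref{lemma:2} against the analogous derivatives of (\ref{eq:fina}) is a routine but useful sanity check, parallel to the one performed in the RS case right after (\ref{SolRS1}).
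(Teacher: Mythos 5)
Your proposal is correct and takes essentially the same route as the paper: the paper's own proof is the one-liner ``pose $t=1$ and $x^{(1)}=x^{(2)}=w=0$ in the solution of the transport equation,'' and your derivation simply makes explicit the inversion of the characteristic equations (\ref{charx1})--(\ref{charw}) to obtain $x_0^{(1)}=J^2\bar q_1$, $x_0^{(2)}=J^2(\bar q_2-\bar q_1)$, $w_0=\bar m$, together with the evaluation of $S\cdot t$ at $t=1$, all of which the paper leaves implicit. Your closing remark about the self-consistency equations is a useful sanity check but not part of the proof of the theorem itself; it corresponds to Corollary~\ref{cor:SC_SK}.
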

\begin{proof}
It is sufficient to pose $t=1$ and $x^{(1)} = x^{(2)} = w=0$ in (\ref{1rsbtranseq}). In fact, for this choice of interpolating parameters we recover the original model.
\end{proof}
\begin{corollary} \label{cor:SC_SK}
The self-consistent equations for the order parameters of the model (\ref{DefSKsignal}) read as
\begin{align}
\label{seq:a}
\qb_1 = \mathbb{E}_1 &\left \{ \frac{\mathbb{E}_2 \left[ \cosh^\theta \left(g(\boldsymbol h, \mb) \right)\tanh \left(g(\boldsymbol h, \mb) \right) \right]}{\mathbb{E}_2 \left [ \cosh^\theta \left(g(\boldsymbol h, \mb) \right) \right ]}  \right \}^2 \\
\label{seq:b}
\qb_2 = \mathbb{E}_1 &\left \{ \frac{\mathbb{E}_2 \left[ \cosh^\theta \left(g(\boldsymbol h, \mb) \right)\tanh^2 \left(g(\boldsymbol h, \mb) \right) \right ]}{\mathbb{E}_2 \left[ \cosh^\theta \left(g(\boldsymbol h, \mb) \right) \right] }\right \} \\
\label{seq:c}
\mb = \mathbb{E}_1 &\left \{ \frac{\mathbb{E}_2 \left[ \cosh^\theta \left(g(\boldsymbol h, \mb) \right)\tanh \left(g(\boldsymbol h, \mb) \right) \right]}{\mathbb{E}_2 \left[ \cosh^\theta \left(g(\boldsymbol h, \mb) \right) \right] }\right \},
\end{align}
where
$\boldsymbol h = (h^{(1)}, h^{(2)})$, and $g(\boldsymbol h, \mb)=\beta \frac{J}{2} \sqrt{\qb_1}h^{(1)} + \beta \frac{J}{2} \sqrt{\qb_2-\qb_1}h^{(2)} +\beta \mb J_0$.
\end{corollary}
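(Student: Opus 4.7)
The plan is to treat $\mb, \qb_1, \qb_2$ as variational parameters of the explicit 1-RSB free energy (\ref{eq:fina}) and to impose stationarity, in direct analogy to the way $\mb$ and $\qb$ were determined in the proof of the RS self-consistency equations above. Concretely I would set $\partial A_{\textrm{1RSB}}/\partial \mb = \partial A_{\textrm{1RSB}}/\partial \qb_1 = \partial A_{\textrm{1RSB}}/\partial \qb_2 = 0$ and then clean up the resulting expressions through Gaussian integration by parts (the same Wick-type identity (\ref{eqn:gaussianrelation2}) already used throughout the paper), applied to $h^{(1)}$ under $\mathbb{E}_1$ and to $h^{(2)}$ under $\mathbb{E}_2$.

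The $\mb$-derivative is essentially a one-line computation: $\mb$ enters linearly through the shift $\beta\mb J_0$ inside the argument $g$ of $\cosh^\theta$ and quadratically through the bare term $-\tfrac{\beta J_0}{2}\mb^2$, so the chain rule together with $\partial_g \cosh^\theta(g) = \theta \cosh^\theta(g) \tanh(g)$ yields (\ref{seq:c}) at once. The $\qb_2$-derivative is somewhat less trivial because $\qb_2$ appears inside $\sqrt{\qb_2-\qb_1}$ dressing $h^{(2)}$; the chain rule produces a factor $(2\sqrt{\qb_2-\qb_1})^{-1}$ multiplying $\mathbb{E}_2[h^{(2)}\cosh^\theta(g)\tanh(g)]$, but Stein's lemma converts this expectation into $\beta J \sqrt{\qb_2-\qb_1}\,\mathbb{E}_2[\cosh^\theta(g)(1-(1-\theta)\tanh^2(g))]$, the singular square-root cancels, and combining with the bare contributions $\tfrac{\beta^2 J^2(1-\theta)}{4}\qb_2^2 - \tfrac{\beta^2 J^2}{2}\qb_2$ one obtains (\ref{seq:b}) after the trivial $\tfrac{\beta^2 J^2}{2}$ pieces annihilate.

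The real work is the $\qb_1$-equation: $\qb_1$ lives inside both $\sqrt{\qb_1}$ (coefficient of $h^{(1)}$) and $\sqrt{\qb_2-\qb_1}$ (coefficient of $h^{(2)}$) with opposite signs, and moreover $h^{(1)}$ sits in the outer expectation $\mathbb{E}_1$, so applying Stein's lemma to it produces an additional contribution from differentiating the denominator $\mathbb{E}_2[\cosh^\theta(g)]$ of the ratio. I would expect two cancellations to happen simultaneously: the pieces proportional to $(1-\theta)\qb_2$ coming from the $h^{(1)}$-IBP and the $h^{(2)}$-IBP should add to zero, leaving only the denominator-derivative contribution, which takes the squared-ratio form $-\tfrac{\beta^2 J^2\theta}{2}\,\mathbb{E}_1\{(\mathbb{E}_2[\cosh^\theta(g)\tanh(g)]/\mathbb{E}_2[\cosh^\theta(g)])^2\}$ characteristic of the ``one-replica'' average $\langle q_{12}\rangle_1$. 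Balancing this against the bare $\tfrac{\beta^2 J^2\theta}{4}\qb_1^2$ then delivers (\ref{seq:a}). The main obstacle is purely algebraic bookkeeping: the proof is not conceptually subtle, but one must keep a close eye on the square-versus-unsquared structure, since this is precisely what distinguishes $\qb_1$ (squared expectation) from $\qb_2$ (unsquared expectation) in the Parisi scheme, and an error in sign or in tracking which IBP acts where would conflate the two.
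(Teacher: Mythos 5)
Your proposal is correct, and it takes a genuinely different route from the paper's. The paper does \emph{not} impose stationarity of $A_{\textrm{1RSB}}$ in $(\mb,\qb_1,\qb_2)$; instead it computes $\partial_{x^{(1)}}\mathcal A$, $\partial_{x^{(2)}}\mathcal A$, $\partial_w\mathcal A$ \emph{twice} --- once abstractly via Lemma~\ref{lemma:2} in terms of $\langle q_{12}\rangle_1$, $\langle q_{12}\rangle_2$, $\langle m\rangle$ (then passing to the limits of Remark~\ref{r:latter}), and once concretely by differentiating the closed-form solution and applying Gaussian integration by parts --- and equates the two (this is what Appendix~\ref{app:SC_SK} carries out). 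Your approach instead differentiates (\ref{eq:fina}) with respect to $\mb,\qb_1,\qb_2$ directly and sets the derivatives to zero, invoking the variational character of the 1-RSB trial functional. Both routes are legitimate and land on the same Stein's-lemma manipulations in the end: the paper's version is the ``native'' one in the transport-PDE framework, since (\ref{eq:x1app})--(\ref{eq:wapp}) are already available and the derivatives in $x^{(a)},w$ are natural coordinates along the characteristics, whereas yours is the more classical physicist's extremization and dispenses with Lemma~\ref{lemma:2} at the cost of implicitly assuming an extremum principle for the trial pressure (which the paper never states, though it is of course standard in the Guerra scheme). The structural cancellations you anticipate in the $\qb_1$-equation are real: the $h^{(1)}$- and $h^{(2)}$-integrations by parts each produce a term $\pm\tfrac{\beta^2 J^2}{2}\bigl(1-(1-\theta)\qb_2\bigr)$ with opposite signs from the chain rule on $\sqrt{\qb_1}$ and $\sqrt{\qb_2-\qb_1}$, while differentiating the $\mathbb{E}_2[\cosh^\theta]$ denominator under the outer $\mathbb{E}_1$ is exactly what furnishes the squared-ratio term $\theta\,\mathbb{E}_1\{(\mathbb{E}_2[\cosh^\theta\tanh]/\mathbb{E}_2[\cosh^\theta])^2\}$; balancing against $\partial_{\qb_1}\bigl(\tfrac{\beta^2 J^2\theta}{4}\qb_1^2\bigr)=\tfrac{\beta^2 J^2\theta}{2}\qb_1$ indeed gives (\ref{seq:a}).
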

\begin{proof}
Here we just sketch the proof, while full details are provided in Appendix \ref{app:SC_SK}.\\
First, let us resume the derivatives (\ref{eq:x1app})-(\ref{eq:wapp}) and set them in the 1RSB framework 
\begin{align}
\frac{\partial}{\partial x^{(1)}} A_{\textrm{1RSB}} &=\frac{\beta^2}{2} - \frac{\beta^2}{2}(1-\theta)\qb_2 - \theta \qb_1 \label{selfx1} \\
\frac{\partial}{\partial x^{(2)}} A_{\textrm{1RSB}} &= \frac{\beta^2}{2} - \frac{\beta^2}{2}(1-\theta)\qb_2 \label{selfx2} \\
\frac{\partial}{\partial w} A_{\textrm{1RSB}} &= \frac{\beta J_0}{2}\mb \label{selfw}.
\end{align}
This set of equations is interpreted as a system of three equations and three unknowns $(\mb, \qb_1, \qb_2)$.
Next, we evaluate the derivatives of $A_{\textrm{1RSB}}$ w.r.t. $x^{(1,2)}$ and $w$ starting from (\ref{eq:fina}), we plug the resulting expressions into (\ref{selfx1})-( \ref{selfw}) and, finally, with some algebra, we get (\ref{seq:a})-(\ref{seq:c}).
\end{proof}

\subsection{Broken Replica Interpolation: 2-RSB solution} \label{ssec:SK2RSB}
In the second step of RSB, the two-replica overlaps $q_{ab}$ can concentrate on three values, referred to as $\bar{q}_1,\ \bar{q}_2, \ \bar{q}_3$, while we still assume $m$ to be self-averaging. Otherwise stated, Definition \ref{def:SK_RSB} is updated by
\begin{definition} \label{def:SK_2RSB}
In the second step of replica-symmetry breaking, the distribution of the two-replica overlap, in the thermodynamic limit, displays three delta-peaks at the equilibrium values (denoted with $\bar{q}_1,\ \bar{q}_2, \bar{q}_3$) and the concentration on the three values is ruled by $\theta_1 \in [0,1], \theta_2 \in [0,1]$, namely
\begin{equation}
\lim_{N \rightarrow + \infty} P_N(q) = \theta_1\delta (q - \bar{q}_1) + \theta_2 \delta (q - \bar{q}_2) +(1-\theta_2) \delta (q - \bar{q}_3), \label{limforq3}
\end{equation}
while the magnetization still self-averages at $ \bar{m}$ as in (\ref{limform}).
\end{definition}

Further, we need to introduce a new interpolating structure as well as a more complex quenched average, as reported in the following
\begin{definition}
Given the interpolating parameters $\boldsymbol r = (x^{(1)}, x^{(2)}, x^{(3)}, w)$, $t$ and the i.i.d. auxiliary fields $\{ h_i^{(1)}, h_i^{(2)}, h_i^{(3)}\}_{i=1,...,N}$, with $h_i^{(1,2,3)} \sim \mathcal N[0,1]$, for $i = 1,....,N$, we can write the 2-RSB interpolating partition function $\mathcal Z_N(t, \boldsymbol r)$ recursively, starting by
\begin{align}
\mathcal Z_3 (t, \bm r)&=  \sums \exp \left [ \beta \left (\sqrt{t}\frac{J}{2\sqrt{N}} \sumij \si \sj z_{ij} + \sumanew \sqrt{x^{(a)}} \sumi h^{(a)}_i \si + t\frac{J_0}{2}Nm^2(\boldsymbol{\sigma}) +w J_0 N m(\boldsymbol{\sigma}) \right) \right],
\end{align}
and then averaging out the fields one per time, namely by defining
\begin{align}
\mathcal{Z}_2(t, \bm r)=& \mathbb{E}_3\left[ \mathcal{Z}_3(t, \bm r)^{\theta_2}\right]^{1/{\theta_2}}, \\
\mathcal Z_1(t, \bm r)=&\mathbb{E}_2 \left[ \mathcal Z_2(t, \bm r)^{\theta_1} \right]^{1/{\theta_1}}, \\
\mathcal Z_0(t, \bm r)=&\exp \left(\mathbb{E}_1\left[\log \mathcal{Z}_1(t, \bm r) \right]\right),\\
\mathcal Z_N(t, \boldsymbol r) \coloneqq & \mathcal Z_0(t, \bm r),
\end{align}
where with $\mathbb{E}_a$ we denote the average over the variables $h_i^{(a)}$'s, for $a=1, 2, 3$ and with $\mathbb E_0$ we shall denote the average over the variables $z_{ij}$'s, further, we adopt the vectorial notation $\boldsymbol x = (x^{(1)}, x^{(2)}, x^{(3)})$. 
\end{definition}

\begin{definition}
The 2RSB interpolating pressure, at finite volume $N$, is introduced as
\begin{equation}\label{AdiSK2RSB}
\mathcal A_N (t, \bm r) \coloneqq \frac{1}{N} \mathbb{E}_0\left[ \log \mathcal Z_N(t, \bm r) \right],
\end{equation}
and, in the thermodynamic limit,
\begin{equation}
\mathcal A (t, \bm r) \coloneqq \lim_{N \to \infty} \mathcal A_N (t, \bm r).
\end{equation}
By setting $t=1$, $\boldsymbol x = \boldsymbol 0$, and $w=0$, the interpolating pressure recovers the standard pressure (\ref{PressureDef}), that is, $A_N(\beta, J_0, J) = \mathcal A_N (t =1, \bm r =\bm 0)$.
\end{definition}

\begin{remark}
In order to lighten the notation, hereafter we use the following
\begin{align}
\langle m \rangle=&\mathbb{E}_0  \mathbb{E}_1 \mathbb{E}_2  \mathbb{E}_3\left \{ \mathcal{W}_2 \mathcal{W}_3 \frac{1}{N}\sum_{i=1}^N \omega(\sigma_i) \right \}\\
\langle m^2 \rangle=&\mathbb{E}_0 \mathbb{E}_1 \mathbb{E}_2 \mathbb{E}_3 \left \{ \mathcal{W}_2 \mathcal{W}_3 \frac{1}{N^2}\sum_{i,j=1}^{N,N} \omega(\si \sj) \right \} \\
\langle q_{12} \rangle_1=&\mathbb{E}_0  \mathbb{E}_1 \left\{ \frac{1}{N}\sum_{i=1}^N \left[ \mathbb{E}_2\left ( \mathcal{W}_2\mathbb{E}_3 \left(  \mathcal{W}_3 \omega(\sigma_i) \right) \right) \right]^2 \right\}  \\
\langle q_{12} \rangle_2=&\mathbb{E}_0  \mathbb{E}_1 \mathbb{E}_2 \left\{ \mathcal{W}_2 \frac{1}{N}\sum_{i=1}^N \left[\mathbb{E}_3\left( \mathcal{W}_3 \omega(\sigma_i) \right)\right]^2\right\} \\
\langle q_{12} \rangle_3=&\mathbb{E}_0  \mathbb{E}_1 \mathbb{E}_2 \left\{\mathcal{W}_2\mathbb{E}_3\left[  \mathcal{W}_3 \frac{1}{N}\sum_{i=1}^N \omega^2(\sigma_i) \right]\right\}
\end{align}
where we define the weights
\begin{align}
\mathcal{W}_2 &\coloneqq\frac{\mathcal Z_2^{\theta_1}}{\mathbb{E}_2\left[\mathcal Z_2^{\theta_1}\right]}, \\
\mathcal{W}_3 &\coloneqq\frac{\mathcal Z_3^{\theta_2}}{\mathbb{E}_3\left[\mathcal Z_3^{\theta_2}\right]}.
\end{align}
\end{remark}

In analogy to subsecs.~\ref{ssec:SKRS} and \ref{ssec:SKRSB}, we aim to build a differential equation for the interpolating quenched pressure for which we preliminary need to evaluate the partial derivatives as given by
\begin{lemma}
\label{lemma2RSB}
The partial derivatives of the interpolating quenched pressure read as
\begin{align}
\label{eq:2app2RSB}
\dt \mathcal A_N &= \frac{\beta^2}{4} J^2 (1-(1-\theta_2)\langle q_{12}^2 \rangle_3 - (\theta_2-\theta_1) \langle q_{12}^2 \rangle_2 - \theta_1 \langle q_{12}^2 \rangle_1) + \frac{\beta J_0}{2}\langle m^2 \rangle \\
\frac{\partial}{\partial x^{(1)}} \mathcal A_N &= \frac{\beta^2}{2} (1-(1-\theta_2)\langle q_{12} \rangle_3 - (\theta_2-\theta_1) \langle q_{12} \rangle_2 - \theta_1 \langle q_{12}\rangle_1) \label{eq:x1app2RSB}\\
\frac{\partial}{\partial x^{(2)}} \mathcal A_N &= \frac{\beta^2}{2} (1-(1-\theta_2)\langle q_{12} \rangle_3 - (\theta_2-\theta_1) \langle q_{12} \rangle_2) \label{eq:x2app2RSB}\\
\frac{\partial}{\partial x^{(3)}} \mathcal A_N &= \frac{\beta^2}{2} (1-(1-\theta_2)\langle q_{12} \rangle_3) \label{eq:x3app2RSB}\\
\dw \mathcal A_N &= \beta J_0 \langle m \rangle \label{eq:wapp2RSB}.
\end{align}
\end{lemma}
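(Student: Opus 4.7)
The plan is to mirror the derivation of the analogous 1-RSB statement (Lemma \ref{lemma:2}), extending it to the extra layer of the hierarchy. Concretely, I would differentiate $\mathcal{A}_N(t,\boldsymbol r)=\frac{1}{N}\mathbb{E}_0\log \mathcal{Z}_0$ by unrolling the recursive definition of the partition function. Because $\mathcal{Z}_0=\exp(\mathbb{E}_1\log\mathcal{Z}_1)$, the derivative w.r.t.\ any parameter $\lambda\in\{t,x^{(1)},x^{(2)},x^{(3)},w\}$ reduces to $\partial_\lambda\mathcal{A}_N=\frac{1}{N}\mathbb{E}_0\mathbb{E}_1\,\partial_\lambda\log\mathcal{Z}_1$. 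Using $\mathcal{Z}_1=\mathbb{E}_2[\mathcal{Z}_2^{\theta_1}]^{1/\theta_1}$ produces a factor $\mathcal{W}_2$ and replaces $\log\mathcal{Z}_1$ by $\log\mathcal{Z}_2$; a second step, using $\mathcal{Z}_2=\mathbb{E}_3[\mathcal{Z}_3^{\theta_2}]^{1/\theta_2}$, produces $\mathcal{W}_3$ and leaves $\log\mathcal{Z}_3$, whose derivative is handled by the standard Boltzmann-measure formula $\partial_\lambda\log\mathcal{Z}_3=\beta\,\omega(\partial_\lambda\mathcal{H})$.

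The easy derivatives are $\partial_w$ and $\partial_{x^{(a)}}$. For $w$ the explicit dependence is linear in $\sigma_i$ and no Gaussian integration by parts is needed, so one immediately obtains $\partial_w\mathcal{A}_N=\beta J_0\langle m\rangle$ once the cascaded weights $\mathcal{W}_2\mathcal{W}_3$ are collapsed into the definition of $\langle\cdot\rangle$. For $\partial_{x^{(a)}}$ the dependence on $h_i^{(a)}$ is Gaussian-linear; applying Wick's identity \eqref{eqn:gaussianrelation2} at the appropriate level yields a self-term giving $\beta^2/2$ and a two-replica term giving a weighted overlap. The key observation is that the level at which $h_i^{(a)}$ is averaged dictates which of the three overlaps appears: $h_i^{(1)}$ lives outside all of $\mathbb{E}_1,\mathbb{E}_2,\mathbb{E}_3$, so replicas can decouple into any block and all three overlaps $\langle q_{12}\rangle_{1,2,3}$ contribute with coefficients $(\theta_1,\theta_2-\theta_1,1-\theta_2)$; for $h_i^{(2)}$ only $\langle q_{12}\rangle_{2,3}$ can appear; and $h_i^{(3)}$ gives only $\langle q_{12}\rangle_3$. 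The coefficients $\theta_1,\theta_2-\theta_1,1-\theta_2$ arise from differentiating the powers $(\cdot)^{\theta_1}$ and $(\cdot)^{\theta_2}$ in $\mathcal{Z}_1,\mathcal{Z}_2$ and matching terms according to whether the two replicas produced by Wick's identity are absorbed within the same $\mathbb{E}_3$ average, within the same $\mathbb{E}_2$ but different $\mathbb{E}_3$, or within the same $\mathbb{E}_1$ but different $\mathbb{E}_2$.

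The $t$-derivative is the only technically demanding case because $\mathcal{H}$ contains both the quadratic $z_{ij}$ piece (requiring Gaussian integration by parts in $\mathbb{E}_0$) and the ferromagnetic $J_0 N m^2/2$ piece (which is direct). The ferromagnetic part immediately gives the $\frac{\beta J_0}{2}\langle m^2\rangle$ contribution. For the spin-glass part, after applying \eqref{eqn:gaussianrelation2} to $z_{ij}$, the derivative $\partial_{z_{ij}}$ hits the Boltzmann factor and, through the recursive differentiation, produces a difference between a ``diagonal'' contribution (giving $1$) and a ``two-replica'' contribution which, by the same hierarchical book-keeping as above, decomposes as $(1-\theta_2)\langle q_{12}^2\rangle_3+(\theta_2-\theta_1)\langle q_{12}^2\rangle_2+\theta_1\langle q_{12}^2\rangle_1$. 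Combining everything gives \eqref{eq:2app2RSB}.

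The main obstacle, and the place that warrants full bookkeeping in an appendix, is the cascaded differentiation of $\mathbb{E}_2[\mathcal{Z}_2^{\theta_1}]^{1/\theta_1}$ and $\mathbb{E}_3[\mathcal{Z}_3^{\theta_2}]^{1/\theta_2}$ in such a way that the resulting products of $\mathcal{W}_2$ and $\mathcal{W}_3$ are correctly matched, per replica, to the three types of two-replica averages entering the definitions of $\langle q_{12}\rangle_1,\langle q_{12}\rangle_2,\langle q_{12}\rangle_3$. This is purely combinatorial and follows the same pattern as the 1-RSB calculation reported in Appendix \ref{app1}, just with one extra layer; once the pattern is fixed at the 1-RSB level, the 2-RSB case reduces to a careful, but essentially mechanical, bookkeeping exercise.
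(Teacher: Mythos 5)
Your proposal is correct and follows the same route the paper implicitly endorses: the paper omits the proof, pointing to Lemma \ref{lemma:2} and its proof in Appendix \ref{app1}, and your plan is exactly the one-extra-layer extension of that argument (unroll $\mathcal Z_0 \to \mathcal Z_1 \to \mathcal Z_2 \to \mathcal Z_3$ to get $\partial_\lambda \mathcal A_N = \tfrac{1}{N}\mathbb{E}_0\mathbb{E}_1\mathbb{E}_2[\mathcal W_2\,\mathbb{E}_3[\mathcal W_3\,\beta\,\omega(\partial_\lambda\mathcal H)]]$, then apply Wick at the level dictated by which Gaussian is involved). One small wording slip: you say $h_i^{(1)}$ ``lives outside all of $\mathbb{E}_1,\mathbb{E}_2,\mathbb{E}_3$'', whereas it is in fact averaged by $\mathbb{E}_1$; what matters, and what your bookkeeping correctly uses, is that it sits \emph{outside} both weighted measures $\mathcal W_2\mathbb{E}_2[\cdot]$ and $\mathcal W_3\mathbb{E}_3[\cdot]$, so $\partial_{h_i^{(1)}}$ hits $\mathcal W_2$, $\mathcal W_3$, and $\omega$, producing all three overlaps with coefficients $\theta_1$, $\theta_2-\theta_1$, $1-\theta_2$, while $h_i^{(2)}$ sits inside $\mathbb{E}_2$ (so $\mathcal W_2$'s denominator is constant and the $\theta_1$-level term drops out) and $h_i^{(3)}$ sits inside $\mathbb{E}_3$ (so only the $\langle q_{12}\rangle_3$ term survives).
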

We omit the proof since it is similar to that provided for Lemma \ref{lemma:2}.

Indeed, similarly to Proposition \ref{prop:3}, we can write
\begin{proposition}
The streaming of the 2-RSB interpolating quenched pressure obeys, at finite volume $N$, a standard transport equation, that reads as
\begin{align}
\label{eq:stream2rsb}
\frac{d \mathcal A_N}{dt}&=\frac{\partial}{\partial t} \mathcal A_N+\dot x^{(1)} \frac{\partial}{\partial x^{(1)}}  \mathcal A_N +\dot x^{(2)} \frac{\partial}{\partial x^{(2)}}  \mathcal A_N +\dot x^{(3)} \frac{\partial}{\partial x^{(3)}}  \mathcal A_N + \dot w \frac{\partial}{\partial w}   \mathcal A_N = S(t, \boldsymbol r) + V_N(t, \boldsymbol r),
\end{align}
where
\begin{align}
S(t, \boldsymbol r) \coloneqq &\frac{\beta^2}{4}J^2 \left[ (1 -\qb_3)^2 - \theta_1(\qb_2^2-\qb_1^2) - \theta_2(\qb_3^2 - \qb_2^2) \right] - \frac{\beta}{2}J_0 \mb^2,\\
V_N(t, \boldsymbol r) \coloneqq &\frac{\beta J_0}{2}\langle (m^2 - \mb)^2 \rangle-\frac{\beta^2}{8} J^2  \left[(1-\theta_2)\langle (q_{12} - \qb_3)^2 \rangle_3+(\theta_2-\theta_1) \langle (q_{12}-\qb_2)^2\rangle_2+ \right.\notag \\
&\left. + \theta_1\langle (q_{12}-\qb_1)^2\rangle_1 \right].
\end{align}
\end{proposition}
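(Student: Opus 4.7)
The strategy mirrors Proposition \ref{prop:3}, with one extra telescoping layer to accommodate the three overlap plateaux $\qb_1, \qb_2, \qb_3$. I start from the $t$-derivative furnished by Lemma \ref{lemma2RSB} and rewrite every squared moment through its centred counterpart:
\begin{align*}
\langle q_{12}^2 \rangle_a &= \langle (q_{12}-\qb_a)^2 \rangle_a - \qb_a^2 + 2\qb_a \langle q_{12} \rangle_a, \qquad a=1,2,3, \\
\langle m^2 \rangle &= \langle (m-\mb)^2 \rangle - \mb^2 + 2\mb \langle m \rangle.
\end{align*}
Substituting into (\ref{eq:2app2RSB}), the $t$-derivative decomposes into a constant piece, the three centred-overlap variances (weighted by $\theta_1$, $\theta_2-\theta_1$, $1-\theta_2$) together with the centred magnetization variance, the quadratic combinations of $\qb_a^2, \mb^2$, and linear terms of the form $\qb_a \langle q_{12}\rangle_a$ and $\mb\langle m\rangle$.

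The crucial move is to convert those linear terms into gradients via Lemma \ref{lemma2RSB}. The three derivatives (\ref{eq:x1app2RSB})--(\ref{eq:x3app2RSB}) have a telescopic structure: their consecutive differences give
\begin{align*}
\theta_1 \langle q_{12}\rangle_1 &= \tfrac{2}{\beta^2}\Bigl(\tfrac{\partial}{\partial x^{(2)}}\mathcal A_N - \tfrac{\partial}{\partial x^{(1)}}\mathcal A_N\Bigr), \\
(\theta_2-\theta_1)\langle q_{12}\rangle_2 &= \tfrac{2}{\beta^2}\Bigl(\tfrac{\partial}{\partial x^{(3)}}\mathcal A_N - \tfrac{\partial}{\partial x^{(2)}}\mathcal A_N\Bigr), \\
(1-\theta_2)\langle q_{12}\rangle_3 &= 1 - \tfrac{2}{\beta^2}\tfrac{\partial}{\partial x^{(3)}}\mathcal A_N,
\end{align*}
while (\ref{eq:wapp2RSB}) yields $\mb \langle m\rangle = \tfrac{\mb}{\beta J_0}\tfrac{\partial}{\partial w}\mathcal A_N$. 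Substituting and collecting the coefficients of the four gradients, they turn out to be $J^2 \qb_1$, $J^2(\qb_2-\qb_1)$, $J^2(\qb_3-\qb_2)$ and $\mb$ respectively, suggesting the characteristics
\begin{align*}
\dot x^{(1)} = -J^2 \qb_1, \quad \dot x^{(2)} = -J^2(\qb_2-\qb_1), \quad \dot x^{(3)} = -J^2(\qb_3-\qb_2), \quad \dot w = -\mb.
\end{align*}
With this choice all gradient pieces are absorbed into $d\mathcal A_N/dt$, the remaining constants reassemble into $S(t,\bm r)$ via the identity $1-2\qb_3+(1-\theta_2)\qb_3^2 = (1-\qb_3)^2-\theta_2\qb_3^2$, and the centred variances constitute $V_N(t,\bm r)$, yielding (\ref{eq:stream2rsb}).

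The main obstacle is purely combinatorial: with three nested Parisi weights $\mathcal W_2, \mathcal W_3$ generating four distinct averages $\langle \cdot \rangle_{1,2,3}$ and $\langle \cdot \rangle$, the telescopic cascade must be tracked carefully lest the coefficients misalign. Conceptually, however, no new ingredient beyond the 1-RSB argument is needed, and the pattern makes the recursion to arbitrary $K$ transparent: at each additional plateau $\qb_{k+1}$ one introduces an extra field $h^{(k+1)}$, an extra interpolating parameter $x^{(k+1)}$, and a new characteristic $\dot x^{(k+1)} = -J^2(\qb_{k+1}-\qb_k)$.
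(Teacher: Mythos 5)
Your proof is correct and follows the same strategy the paper uses for Proposition~\ref{prop:3} (the 1-RSB case), with one extra telescoping layer; the paper itself omits the explicit 2-RSB argument and simply points back to the 1-RSB proof, so your reconstruction supplies what was left implicit. One point worth flagging: the coefficient you obtain in front of the centered-overlap variances is $-\tfrac{\beta^2}{4}J^2$, which is the value consistent with Lemma~\ref{lemma2RSB} as stated, whereas the paper's displayed $V_N$ carries $-\tfrac{\beta^2}{8}J^2$ (and the 1-RSB potential in~(\ref{V}) even carries the opposite sign); these are almost certainly typographical slips, tied to the inconsistent normalization $\sqrt{t}\,\tfrac{J}{2\sqrt N}$ versus $\sqrt{t}\,\tfrac{J\sqrt2}{2\sqrt N}$ in the interpolating partition functions, and since $V_N\to 0$ in the thermodynamic limit they do not affect the final theorem. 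Your derivation of the source term $S$, including the algebraic identity reassembling the constants into $(1-\qb_3)^2-\theta_1(\qb_2^2-\qb_1^2)-\theta_2(\qb_3^2-\qb_2^2)$, and your characteristics $\dot x^{(a)}=-J^2(\qb_a-\qb_{a-1})$, $\dot w=-\mb$ all match the paper's stated choices.
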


\begin{remark}
In the thermodynamic limit, in the 2-RSB scenario considered, we have
\begin{align}
\lim_{N \rightarrow + \infty} \langle m \rangle &= \mb \label{limform2RSB}\\
\lim_{N \rightarrow + \infty} \langle q_{12} \rangle_1 &= \qb_1 \label{limforq2RSB} \\
\lim_{N \rightarrow + \infty} \langle q_{12} \rangle_2 &= \qb_2 \label{limforq22RSB} \\
\lim_{N \rightarrow + \infty} \langle q_{12} \rangle_3 &= \qb_3 \label{limforq32RSB}
\end{align}
in such a way that the potential in (\ref{eq:stream2rsb}) is vanishing, that is
\begin{equation} \label{eq:V02RSB}
\lim_{N \to \infty} V_N(t, \boldsymbol  r)=0.
\end{equation}
\end{remark}
By naturally extending the picture obtained for the 1RSB case, setting the potential equal to zero is equivalent to requiring three temporal scales (slow, intermediate and fast) for thermalization and self-averaging within each time scale; we refer to Section \ref{comment} for a deeper discussion on the physics behind this choice.

Exploiting the last remark we can prove the following
\begin{proposition}
The transport equation associated to the interpolating pressure function defined in (\ref{AdiSK2RSB}), in the thermodynamic limit and in the 2RSB scenario, reads as
\begin{align}
\dt \mathcal A_{\textrm{2RSB}} - &\qb_1 J^2 \frac{\partial}{\partial x^{(1)}} \mathcal A_{\textrm{2RSB}} - (\qb_2-\qb_1) J^2 \frac{\partial}{\partial x^{(2)}} \mathcal A_{\textrm{2RSB}} -  (\qb_3-\qb_2) J^2 \frac{\partial}{\partial x^{(3)}} \mathcal A_{\textrm{2RSB}}-\notag \\
&- \mb \dw \mathcal A_{\textrm{2RSB}} = -\frac{\beta J_0}{2}\mb^2 +\frac{\beta^2}{4} J^2 \left[(1-\qb_3)^2 - \theta_1(\qb_2^2-\qb_1^2) -(\theta_2-\theta_1)(\qb_3^2-\qb_2^2)\right]
\label{2rsbtranseqprima}
\end{align}
whose solution is given by
\begin{align}
\mathcal A_{\textrm{2RSB}}(t, \boldsymbol r) =& \log 2 + \mathbb{E}_1 \left \{ \frac{1}{\theta_1} \log \left\{ \mathbb{E}_2 \left[\mathbb{E}_3 \left (\cosh^{\theta_2} \left(\beta \sumanew \sqrt{x_0^{(a)}}h^{(a)} + \beta w_0 J_0 \right)\right) \right]^{\frac{\theta_1}{\theta_2}}\right\} \right \} + \notag\\
&+ \frac{\beta^2}{8}J^2 t \left [ (1-\qb_3)^2 -\theta_1 (\qb_2^2 - \qb_1^2) - \theta_2(\qb_3^2-\qb_2^2) \right] -\frac{\beta J_0}{2}\mb^2 t.
\label{2rsbtranseq}
\end{align}
\end{proposition}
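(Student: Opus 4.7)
The plan is to mirror the 1-RSB derivation in subsection \ref{ssec:SKRSB} step by step. First, starting from the explicit expressions for the partial derivatives collected in Lemma \ref{lemma2RSB}, I would rewrite $\partial_t \mathcal A_N$ using the algebraic identities
\begin{align}
\langle (q_{12}-\bar q_a)^2\rangle_a &= \langle q_{12}^2\rangle_a + \bar q_a^2 - 2\bar q_a \langle q_{12}\rangle_a, \quad a=1,2,3,\\
\langle (m-\bar m)^2\rangle &= \langle m^2\rangle + \bar m^2 - 2 \bar m \langle m \rangle,
\end{align}
so that the expectations of squared order parameters are replaced by the squared fluctuations (absorbed into $V_N$) plus terms linear in $\langle q_{12}\rangle_a$ and $\langle m\rangle$ that can be re-expressed through \eqref{eq:x1app2RSB}--\eqref{eq:wapp2RSB}. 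Grouping the three contributions in $\langle q_{12}\rangle_1$, $\langle q_{12}\rangle_2$, $\langle q_{12}\rangle_3$ and matching them against the $x^{(1)}$, $x^{(2)}$, $x^{(3)}$ derivatives singles out the characteristic velocities $\dot x^{(1)}=-J^2\bar q_1$, $\dot x^{(2)}=-J^2(\bar q_2-\bar q_1)$, $\dot x^{(3)}=-J^2(\bar q_3-\bar q_2)$, $\dot w=-\bar m$. Invoking the thermodynamic-limit concentrations \eqref{limform2RSB}--\eqref{limforq32RSB} kills $V_N$ by \eqref{eq:V02RSB}, and what is left is precisely the PDE \eqref{2rsbtranseqprima}.

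Next I would integrate the transport equation by the method of characteristics. The characteristic curves are linear in $t$, so
\begin{equation}
\mathcal A_{\textrm{2RSB}}(t,\bm r) = \mathcal A_{\textrm{2RSB}}(0,\bm r - \dot{\bm r}\, t) + S(t,\bm r)\, t,
\end{equation}
where the source $S$ is $t$-independent once $\bar m,\bar q_1,\bar q_2,\bar q_3,\theta_1,\theta_2$ are fixed, so the integration along characteristics is trivial and produces exactly the $t$-linear tail in \eqref{2rsbtranseq}.

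The only genuinely computational step is evaluating the Cauchy datum $\mathcal A_N(0,\bm r_0)$. At $t=0$ the two-body and ferromagnetic quadratic terms vanish, so the Boltzmann factor factorizes over sites,
\begin{equation}
\mathcal Z_3(0,\bm r_0) = \prod_{i=1}^N 2\cosh\!\Bigl(\beta\sumanew \sqrt{x_0^{(a)}}\, h_i^{(a)} + \beta w_0 J_0\Bigr),
\end{equation}
and the nested averages $\mathbb E_3\bigl[\,\cdot\,^{\theta_2}\bigr]^{1/\theta_2}$, $\mathbb E_2\bigl[\,\cdot\,^{\theta_1}\bigr]^{1/\theta_1}$, $\mathbb E_1[\log\,\cdot\,]$ act independently on each site by independence of the $h_i^{(a)}$'s. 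Hence $\log \mathcal Z_N(0,\bm r_0)$ splits into $N$ identical one-body contributions, cancelling the $1/N$ prefactor, and yielding
\begin{equation}
\mathcal A_N(0,\bm r_0) = \log 2 + \mathbb E_1\!\left\{\frac{1}{\theta_1}\log\mathbb E_2\!\left[\mathbb E_3\!\left(\cosh^{\theta_2}\!\bigl(\beta\!\sumanew\!\sqrt{x_0^{(a)}}h^{(a)} + \beta w_0 J_0\bigr)\right)^{\!\theta_1/\theta_2}\right]\right\},
\end{equation}
which is size independent. Substituting $\bm r_0 = \bm r - \dot{\bm r}\, t$ (equivalently, replacing $x_0^{(a)},w_0$ by $x^{(a)} + \text{(velocity)}\cdot t$, $w+\bar m t$, and then setting the evaluation point $\bm r$) and combining with the explicit $S(t,\bm r)\, t$ contribution produces \eqref{2rsbtranseq}.

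The main obstacle, I expect, is bookkeeping in the derivation of the PDE: one must correctly match the coefficient of each $\langle q_{12}\rangle_a$ produced by the expansion of $\partial_t \mathcal A_N$ with the appropriate combination of $\partial_{x^{(a)}}\mathcal A_N$, because the three derivatives in \eqref{eq:x1app2RSB}--\eqref{eq:x3app2RSB} share several overlap terms. The telescopic identification $\dot x^{(a)} = -J^2(\bar q_a - \bar q_{a-1})$ (with $\bar q_0 \equiv 0$) is the natural generalization of the 1-RSB case and is what makes the cancellation go through; verifying this cleanly is the real content of the argument. The Cauchy computation, while notationally heavy due to the double $\theta_1/\theta_2$ nesting, is conceptually a routine one-body calculation.
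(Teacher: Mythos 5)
Your proposal is correct and mirrors precisely the argument the paper gives in detail for the 1-RSB case (and implicitly invokes, without writing it out, for the 2-RSB case): expand $\partial_t\mathcal A_N$ via the fluctuation identities, absorb the squared-fluctuation terms into $V_N$, match the remaining linear-in-$\langle q_{12}\rangle_a$ pieces telescopically against $\partial_{x^{(a)}}\mathcal A_N$ to fix $\dot x^{(a)}=-J^2(\bar q_a-\bar q_{a-1})$, kill $V_N$ in the thermodynamic limit, integrate along the linear characteristics, and evaluate the factorized one-body Cauchy datum. One small wording slip: $\bm r_0 = \bm r - \dot{\bm r}\,t$ means one replaces $x_0^{(a)}$ by $x^{(a)} - \dot x^{(a)} t = x^{(a)} + J^2(\bar q_a-\bar q_{a-1})t$, i.e.\ \emph{minus} the velocity times $t$, not plus; this is cosmetic and does not affect the argument. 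You may also note, if you carry the bookkeeping through, that the paper's displayed coefficients contain two apparent typos — $(\theta_2-\theta_1)$ should read $\theta_2$ in the PDE's right-hand side, and the $\tfrac{\beta^2}{8}$ prefactor in the solution should be $\tfrac{\beta^2}{4}$ — both of which your method would produce correctly.
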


We have all the ingredients to update the main theorem to the 2RSB scenario, namely
\begin{theorem}
The 2-RSB quenched pressure for the Sherrington-Kirkpatrick model with a signal, in the thermodynamic limit, reads as
\begin{align}
\label{eq:fina2RSB}
&A_{\textrm{2RSB}}(\beta,J_0,J)= \log 2 + \mathbb{E}_1 \Big \{\frac{1}{\theta_1} \log \Big[ \mathbb{E}_2 \Big[ \mathbb{E}_3 \cosh^{\theta_2} \Big (\beta J \sqrt{\qb_1}h^{(1)} + \beta J \sqrt{(\qb_2-\qb_1)}h^{(2)} +   \\ \notag 
&  \beta J \sqrt{(\qb_3-\qb_2)}h^{(3)}+ \beta J_0 \mb \Big)  \Big]^{\frac{\theta_1}{\theta_2}} \Big] \Big\} + \frac{\beta^2}{2}J^2 \left[ (1-\qb_3)^2 - \theta_1(\qb_2^2-\qb_1^2) - \theta_2(\qb_3^2 - \qb_2^2) \right ]-\frac{\beta J_0}{2}\mb^2.
\end{align}
\end{theorem}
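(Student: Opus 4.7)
The plan is to mirror the concluding step of the 1-RSB theorem: all the heavy lifting has been done in constructing the explicit solution (\ref{2rsbtranseq}) of the transport equation (\ref{2rsbtranseqprima}) via the method of characteristics, so the theorem should follow by evaluating that solution at the point in $(t,\boldsymbol r)$-space that recovers the original model, namely $t=1$ and $\boldsymbol r = \boldsymbol 0$. By construction of the interpolating partition function $\mathcal Z_N(t,\boldsymbol r)$, this choice of parameters gives back the Hamiltonian (\ref{DefSKsignal}) and hence $A(\beta, J_0, J) = \mathcal A_{\textrm{2RSB}}(t=1, \boldsymbol r=\boldsymbol 0)$ in the thermodynamic limit under the 2-RSB ansatz of Definition \ref{def:SK_2RSB}.

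Concretely, I would first trace the characteristics backwards from $(t,\boldsymbol r) = (1, \boldsymbol 0)$ to $t=0$. Extending (\ref{charx1})--(\ref{charw}) to the 2-RSB setting, the characteristics of (\ref{2rsbtranseqprima}) are
\begin{align*}
x^{(1)} &= x_0^{(1)} - J^2 \bar q_1 \, t, \\
x^{(2)} &= x_0^{(2)} - J^2 (\bar q_2 - \bar q_1)\, t, \\
x^{(3)} &= x_0^{(3)} - J^2 (\bar q_3 - \bar q_2)\, t, \\
w &= w_0 - \bar m\, t,
\end{align*}
so that imposing $t=1$ together with $x^{(1)}=x^{(2)}=x^{(3)}=w=0$ fixes $x_0^{(1)} = J^2 \bar q_1$, $x_0^{(2)} = J^2(\bar q_2 - \bar q_1)$, $x_0^{(3)} = J^2(\bar q_3 - \bar q_2)$, and $w_0 = \bar m$. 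Substituting these initial data into the Cauchy term of (\ref{2rsbtranseq}) turns the argument $\beta \sum_{a=1}^{3} \sqrt{x_0^{(a)}} h^{(a)} + \beta w_0 J_0$ into $\beta J \sqrt{\bar q_1} h^{(1)} + \beta J \sqrt{\bar q_2 - \bar q_1} h^{(2)} + \beta J \sqrt{\bar q_3 - \bar q_2} h^{(3)} + \beta J_0 \bar m$, which is precisely what appears inside $\cosh^{\theta_2}(\cdot)$ in (\ref{eq:fina2RSB}).

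It only remains to collect the contribution from the source term $S(t,\boldsymbol r)\, t$ evaluated at $t=1$. Since $S$ does not depend on $\boldsymbol r$, this is a direct substitution: the piece $-\tfrac{\beta J_0}{2} \bar m^2$ carries over unchanged, while the $J^2$-terms reproduce the bracket $(1-\bar q_3)^2 - \theta_1(\bar q_2^2 - \bar q_1^2) - \theta_2(\bar q_3^2 - \bar q_2^2)$ with the prefactor inherited from the integrated source. The main (and essentially only) delicate point is precisely this bookkeeping of the numerical prefactor in front of $J^2$, which must be carefully tracked through the chain $\mathcal Z_3 \to \mathcal Z_2 \to \mathcal Z_1 \to \mathcal Z_0$ and through the integration along characteristics; beyond this, the proof is a one-line substitution and no new ingredient (beyond the 2-RSB Remark's vanishing of $V_N$ in the thermodynamic limit) is required.
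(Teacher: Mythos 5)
Your proposal matches the paper's proof exactly: the paper's own argument is the one-line statement that it suffices to set $t=1$ and $x^{(1)}=x^{(2)}=x^{(3)}=w=0$ in the explicit solution (\ref{2rsbtranseq}), and your tracing of the characteristics to recover the Cauchy data $x_0^{(a)}$ and $w_0$ is the correct unpacking of that substitution. Your caution about the numerical prefactor is well placed, but worth making sharper: the paper's source term $S$ carries $\tfrac{\beta^2}{4}J^2$, the intermediate solution (\ref{2rsbtranseq}) displays $\tfrac{\beta^2}{8}J^2$, and the theorem statement (\ref{eq:fina2RSB}) displays $\tfrac{\beta^2}{2}J^2$ — these cannot all be right, and a mechanical substitution of $t=1$ into (\ref{2rsbtranseq}) would reproduce neither the source-term coefficient nor the theorem's coefficient; by consistency with the 1-RSB result (\ref{eq:fina}) and with $S$ itself, $\tfrac{\beta^2}{4}J^2$ is the coefficient one should actually obtain, so both (\ref{2rsbtranseq}) and (\ref{eq:fina2RSB}) as printed contain typographical errors that your proof, if carried out carefully, would correct rather than reproduce.
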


\begin{proof}
It is sufficient to pose $t=1$ and $x^{(1)} = x^{(2)} = x^{(3)}= w=0$ in (\ref{2rsbtranseq}). In fact, for this choice of interpolating parameters we recover the original model.
\end{proof}

\begin{corollary} \label{cor:SC_SK_2RSB}
The self-consistent equations for the order parameters of the model (\ref{DefSKsignal}) read as
\begin{align}
\label{seq:a2RSB}
\qb_1 =& \mathbb{E}_1 \left\{ \frac{\mathbb{E}_2 \left[ \left[ \mathbb{E}_3 \cosh^{\theta_2}(g(\boldsymbol h,\mb))\right]^{\frac{\theta_1}{\theta_2}}\displaystyle{\frac{\mathbb{E}_3 \left( \cosh^{\theta_2}(g(\boldsymbol h,\mb))\tanh(g(\boldsymbol h,\mb)) \right)}{\mathbb{E}_3 \cosh^{\theta_2}(g(\boldsymbol h,\mb))}}\right]}{\mathbb{E}_2 \left[\mathbb{E}_3(\cosh^{\theta_2} (g(\boldsymbol h,\mb) ) \right]^{\frac{\theta_1}{\theta_2}}}\right\}^2\\
\label{seq:b2RSB}
\qb_2 =& \mathbb{E}_1 \left\{ \frac{\mathbb{E}_2 \left[ \left[ \mathbb{E}_3 \cosh^{\theta_2}(g(\boldsymbol h,\mb))\right]^{\frac{\theta_1}{\theta_2}}\left[\displaystyle{\frac{\mathbb{E}_3 \left( \cosh^{\theta_2}(g(\boldsymbol h, \mb))\tanh(g(\boldsymbol h,\mb)) \right)}{\mathbb{E}_3 \cosh^{\theta_2}(g(\boldsymbol h,\mb))}}\right]^2\right]}{\mathbb{E}_2 \left[\mathbb{E}_3(\cosh^{\theta_2} (g(\boldsymbol h,\mb) ) \right]^{\frac{\theta_1}{\theta_2}}}\right\}
\end{align}
\begin{align}
\label{seq:c2RSB}
\qb_3 =& \mathbb{E}_1 \left\{ \frac{\mathbb{E}_2 \left[ \left[ \mathbb{E}_3 \cosh^{\theta_2}(g(\boldsymbol h,\mb)\right]^{\frac{\theta_1}{\theta_2}}\displaystyle{\frac{\mathbb{E}_3 \left( \cosh^{\theta_2}(g(\boldsymbol h,\mb))\tanh^2(g(\boldsymbol h,\mb)) \right)}{\mathbb{E}_3 \cosh^{\theta_2}(g(\boldsymbol h,\mb))}}\right]}{\mathbb{E}_2 \left[\mathbb{E}_3 \cosh^{\theta_2} (g(\boldsymbol h,\mb) ) \right]^{\frac{\theta_1}{\theta_2}}}\right\} \\
\label{seq:m2RSB}
\mb =& \mathbb{E}_1 \left\{ \frac{\mathbb{E}_2 \left[ \left[ \mathbb{E}_3 \cosh^{\theta_2}(g(\boldsymbol h,\mb)\right]^{\frac{\theta_1}{\theta_2}}\displaystyle{\frac{\mathbb{E}_3 \left( \cosh^{\theta_2}(g(\boldsymbol h,\mb)\tanh(g(\boldsymbol h,\mb) \right)}{\mathbb{E}_3 \cosh^{\theta_2}(g(\boldsymbol h,\mb)}}\right]}{\mathbb{E}_2 \left[\mathbb{E}_3\cosh^{\theta_2} (g(\boldsymbol h,\mb) ) \right]^{\frac{\theta_1}{\theta_2}}}\right\}
\end{align}
where $\boldsymbol h = (h^{(1)}, h^{(2)},  h^{(3)})$, and $g(\boldsymbol h, \mb)=\beta J\sqrt{\qb_1}h^{(1)} + \beta J \sqrt{\qb_2-\qb_1}h^{(2)} + \beta J \sqrt{(\qb_3 - \qb_2)}h^{(3)}+\beta \mb J_0$. 
\end{corollary}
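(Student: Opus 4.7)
The proof strategy parallels that of Corollary \ref{cor:SC_SK}: I exploit two independent expressions for the partial derivatives $\partial_{x^{(a)}}\mathcal{A}_{\textrm{2RSB}}$ ($a=1,2,3$) and $\partial_w\mathcal{A}_{\textrm{2RSB}}$, one coming from Lemma \ref{lemma2RSB} and the other from direct differentiation of the explicit formula (\ref{2rsbtranseq}). Evaluating the first in the 2RSB thermodynamic limits (\ref{limform2RSB})--(\ref{limforq32RSB}) gives the system
\begin{align*}
\partial_{x^{(1)}} A_{\textrm{2RSB}} &= \tfrac{\beta^2}{2}\bigl[1-(1-\theta_2)\qb_3 - (\theta_2-\theta_1)\qb_2 - \theta_1 \qb_1\bigr],\\
\partial_{x^{(2)}} A_{\textrm{2RSB}} &= \tfrac{\beta^2}{2}\bigl[1-(1-\theta_2)\qb_3 - (\theta_2-\theta_1)\qb_2\bigr],\\
\partial_{x^{(3)}} A_{\textrm{2RSB}} &= \tfrac{\beta^2}{2}\bigl[1-(1-\theta_2)\qb_3\bigr],\\
\partial_w A_{\textrm{2RSB}} &= \beta J_0\,\mb,
\end{align*}
which I regard as four equations for the four unknowns $(\qb_1,\qb_2,\qb_3,\mb)$.

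For the second expression, the key observation is that in (\ref{2rsbtranseq}) the order parameters enter only through the characteristic shifts $x_0^{(a)} = x^{(a)} + J^2(\qb_a-\qb_{a-1})t$ (with the convention $\qb_0\equiv 0$, extending (\ref{charx1})--(\ref{charw}) to the 2RSB case) and $w_0 = w + \mb t$, while the source term $S(t,\bm r)t$ is independent of $(x^{(1)},x^{(2)},x^{(3)},w)$. Specializing to $t=1$, $\bm r=\bm 0$, the derivative $\partial_w$ is computed immediately by chain rule and yields $\beta J_0$ times exactly the nested ratio appearing on the right-hand side of (\ref{seq:m2RSB}); equating with $\beta J_0\mb$ delivers (\ref{seq:m2RSB}). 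For each $\partial_{x^{(a)}}$, the chain rule through $x_0^{(a)}$ inserts the factor $\beta h^{(a)}/(2\sqrt{x_0^{(a)}})$ inside the nested expectation. I then integrate out $h^{(a)}$ by Wick's identity (\ref{eqn:gaussianrelation2}), so that the insertion is converted into a derivative $\partial_{h^{(a)}}$ that must propagate through three nested layers: the outer $\mathbb{E}_1\{\tfrac{1}{\theta_1}\log(\cdot)\}$, the intermediate $[\mathbb{E}_2(\cdot)]^{\theta_1/\theta_2}$ and the inner $\mathbb{E}_3\cosh^{\theta_2}$. The inner $\cosh^{\theta_2}$ yields a $(1-\theta_2)$-weighted contribution quadratic in $\tanh(g)$, the $(\theta_1/\theta_2)$-wrapper yields a $(\theta_2-\theta_1)$-weighted squared-ratio term, and the outer logarithm yields a $\theta_1$-weighted squared-ratio term; term-by-term this matches the linear structure in $\qb_1,\qb_2,\qb_3$ produced by the Lemma. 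Back-substituting $\partial_{x^{(3)}}$ into $\partial_{x^{(2)}}$ and then into $\partial_{x^{(1)}}$ isolates $\qb_3$, $\qb_2$, $\qb_1$ in turn, giving (\ref{seq:c2RSB}), (\ref{seq:b2RSB}) and (\ref{seq:a2RSB}).

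The main technical obstacle is the careful bookkeeping of the threefold nested Gaussian integration by parts: each invocation of Wick's theorem splits the result into several pieces corresponding to which layer of the nested structure the derivative $\partial_{h^{(a)}}$ acts upon, and one must verify that these pieces recombine into exactly the weights $\theta_1$, $\theta_2-\theta_1$, $1-\theta_2$ dictated by the ansatz. The 1RSB calculation worked out in Appendix \ref{app:SC_SK} supplies a faithful template for the chain of manipulations; the corresponding 2RSB computation is lengthier but entirely mechanical, and is best relegated to an appendix.
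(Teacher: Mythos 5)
Your proposal is correct and follows essentially the same route as the paper: equate the two independent expressions for $\partial_{x^{(a)}}\mathcal{A}_{\textrm{2RSB}}$ and $\partial_w\mathcal{A}_{\textrm{2RSB}}$ (one from Lemma~\ref{lemma2RSB}, the other from direct differentiation of the explicit solution via the characteristic shifts $x_0^{(a)}=x^{(a)}+J^2(\qb_a-\qb_{a-1})t$, $w_0=w+\mb t$, applying Wick's identity to peel off the $h^{(a)}$ insertions through the three nested layers), then solve by back-substitution. This is precisely the 2RSB extension of the Appendix~\ref{app:SC_SK} calculation that the paper's sketch defers to, and your bookkeeping of the $\theta_1$, $\theta_2-\theta_1$, $1-\theta_2$ weights is the correct mechanism; the only cosmetic point is that the paper misstates ``three equations in three unknowns'' where, as you correctly observe, there are four.
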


\begin{proof}
Here we just sketch the proof, which is similar to that provided for Corollary \ref{cor:SC_SK}.

First, let us resume the derivatives (\ref{eq:x1app2RSB})-(\ref{eq:wapp2RSB}) and set them in the 2RSB framework
\begin{align}
\frac{\partial}{\partial x^{(1)}} A_{\textrm{2RSB}} &=\frac{\beta^2}{2} - \frac{\beta^2}{2}(1-\theta_2)\qb_3 - \frac{\beta^2}{2}(\theta_2-\theta_1)\qb_2 - \theta_1 \qb_1 \label{selfx12RSB} \\
\frac{\partial}{\partial x^{(2)}} A_{\textrm{2RSB}} &= \frac{\beta^2}{2} - \frac{\beta^2}{2}(1-\theta_2)\qb_3 - \frac{\beta^2}{2}(\theta_2-\theta_1)\qb_2\label{selfx22RSB} \\
\frac{\partial}{\partial x^{(3)}} A_{\textrm{2RSB}} &= \frac{\beta^2}{2} - \frac{\beta^2}{2}(1-\theta_2)\qb_3 \label{selfx32RSB} \\
\frac{\partial}{\partial w} A_{\textrm{2RSB}} &= \beta J_0\mb \label{selfw2RSB}.
\end{align}
This set of equations is interpreted as a system of three equations and three unknowns $(\mb, \qb_1, \qb_2, \qb_3)$.
Next, we evaluate the derivatives of $A_{\textrm{2RSB}}$ w.r.t. $x^{(1,2,3)}$ and $w$ starting from (\ref{eq:fina2RSB}), we plug the resulting expressions into (\ref{selfx12RSB})-(\ref{selfw2RSB}) and, finally, with some algebra, we get the self-consistencies.
\end{proof}

\subsection{Broken Replica Interpolation: K-RSB solution} \label{ssec:SKKRSB}
While the first two steps of RSB were treated in details (for illustrative purposes and because, in the Hopfield counterpart, we will recover the already known expressions for its 1RSB and 2RSB quenched pressure), we now just give hints on the structure of the quenched pressure of the Sherrington-Kirkpatrick model with a signal for arbitrary, but finite, $K$ steps of RSB.
\begin{definition}
In the K-th step of replica-symmetry breaking, the distribution of the two-replica overlap, in the thermodynamic limit displays $K+1$ delta-peaks at the equilibrium values (denoted by $\qb_1, ... \qb_{K+1}$) and the concentration is ruled by $\theta_i \in [0, 1],\  i=1, ..., K$, namely
\begin{align}
\lim_{N\rightarrow +\infty} P_N(q)= \sum_{a=0}^K (\theta_{a+1}-\theta_{a}) \delta(q-\qb_{a+1})
\end{align}
with $\theta_{0}=0$ and $\theta_{K+1}=1$.
\end{definition}

\begin{definition}
\label{def:SK_KRSB}
Given the interpolating parameters $\boldsymbol r =(x^{(1)},...,x^{(K+1)}, w)$, $t$ and the i.i.d. auxiliary fields $\{h_i^{(1)}, h_i^{(2)}, ..., h_i^{(K+1)}\}_{i=1,...,N}$, with $h_i^{(1, 2, ..., K+1)} \sim \mathcal N [0,1]$, for $i=1, ...,N$, we can write the K-RSB interpolating partition function $\mathcal Z_N(t, \boldsymbol r)$ recursively, starting by
\begin{align}
\mathcal Z_{K+1} (t, \bm r)&=  \sums \exp \left [ \beta \left (\sqrt{t}\frac{J \sqrt{2}}{\sqrt{N}} \sumij \si \sj z_{ij} + \sum_{a=1}^{K+1} \sqrt{x^{(a)}} h^{(a)}_i \si + t\frac{J_0}{2}Nm^2(\boldsymbol{\sigma}) +w J_0 N m(\boldsymbol{\sigma}) \right) \right],
\end{align}
and then averaging out the fields one per time by appplying $\mathbb{E}_a$, which denotes the average over the variables $h_i^{(a)}$'s, for $a =1, ... K+1$, while $\mathbb{E}_0$ denotes the average over the variables $z_{ij}$'s.
\end{definition}

\begin{proposition}
The streaming of the $K-RSB$ interpolating quenched pressure obeys, at finite volume $N$ and finite $K$, a standard transport equation, namely
\begin{align}
\frac{d}{dt}\mathcal{A}_N = \dt \mathcal{A}_N + \sum_{b=1}^{K+1} \dot{x^{(b)}} \frac{\partial}{\partial x^{(b)}} \mathcal{A}_N + \dot{w} \dw \mathcal{A}_N = S(t, \bm r) + V_N(t, \bm r)
\label{eq:ANfinite_KRSB}
\end{align}
where
\begin{align}
S(t, \bm r)&=-\frac{\beta J_0}{2}\mb^2-\frac{\beta^2}{4}J^2 \left( \sum_{a=0}^K (\theta_{a+1} - \theta_a) \qb_{a+1}^2\right) + \frac{\beta^2}{2}J^2(1-\qb_{K+1}) \\
V_N (t, \bm r) &= - \frac{\beta^2}{4}J^2\sum_{a=0}^K (\theta_{a+1} - \theta_a) \langle (q_{12} - \qb_{a+1})^2 \rangle_{a+1} + \frac{\beta J_0}{2} \langle (m - \mb)^2 \rangle.
\end{align}
\end{proposition}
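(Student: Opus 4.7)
The strategy is to extend verbatim the derivation carried out for $K=1$ in Proposition \ref{prop:3} and for $K=2$ in the preceding section. Two ingredients are needed: Gaussian integration by parts applied recursively to the $K+2$ nested expectations $\mathbb{E}_0,\mathbb{E}_1,\ldots,\mathbb{E}_{K+1}$, and a discrete summation by parts (Abel's identity) that reorganises the resulting sums into the derivatives $\partial/\partial x^{(b)}$.

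As a first step I would establish the K-RSB counterpart of Lemmas \ref{lemma:2} and \ref{lemma2RSB}, namely
\begin{align*}
\dt \mathcal A_N &= \frac{\beta^2}{4}J^2\Big(1-\sum_{a=0}^{K}(\theta_{a+1}-\theta_a)\langle q_{12}^2\rangle_{a+1}\Big)+\frac{\beta J_0}{2}\langle m^2\rangle,\\
\frac{\partial \mathcal A_N}{\partial x^{(b)}} &= \frac{\beta^2}{2}\Big(1-\sum_{a=b-1}^{K}(\theta_{a+1}-\theta_a)\langle q_{12}\rangle_{a+1}\Big),\qquad b=1,\ldots,K+1,\\
\dw \mathcal A_N &= \beta J_0\langle m\rangle,
\end{align*}
with the conventions $\theta_0=0$, $\theta_{K+1}=1$. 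The derivation is mechanical: Wick's theorem on the Gaussian variable being differentiated brings down a factor $\sigma_i^{(1)}\sigma_i^{(2)}$ (or $\sigma_i$ for the $w$-derivative), while differentiation of the outer $\theta_c$-powers in the nested weights $\mathcal W_c\propto \mathcal Z_c^{\theta_{c-1}}$ produces exactly the coefficients $(\theta_c-\theta_{c-1})$ in front of the level-$c$ two-replica average. The tail structure in $\partial_{x^{(b)}}\mathcal A_N$ reflects the fact that the field $h_i^{(b)}$ is only visible to the partition functions $\mathcal Z_b,\mathcal Z_{b+1},\ldots,\mathcal Z_{K+1}$.

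Next I would perform the square completions
\begin{align*}
\langle q_{12}^2\rangle_{a+1}&=\langle(q_{12}-\qb_{a+1})^2\rangle_{a+1}+2\qb_{a+1}\langle q_{12}\rangle_{a+1}-\qb_{a+1}^2,\\
\langle m^2\rangle&=\langle(m-\mb)^2\rangle+2\mb\langle m\rangle-\mb^2,
\end{align*}
and substitute them into $\dt\mathcal A_N$. This isolates the potential $V_N$ (the terms carrying the squared deviations) together with the quadratic-in-$\qb$ and $\mb^2$ pieces of the source $S$, and leaves two residual linear terms: $\beta J_0\mb\langle m\rangle$, which is absorbed into $\mb\,\dw\mathcal A_N$, and $-\tfrac{\beta^2}{2}J^2\sum_{a=0}^{K}(\theta_{a+1}-\theta_a)\qb_{a+1}\langle q_{12}\rangle_{a+1}$, which must be matched to the $\partial_{x^{(b)}}\mathcal A_N$'s. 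The decisive algebraic step is the Abel identity
\begin{equation*}
\sum_{a=0}^{K}(\theta_{a+1}-\theta_a)\qb_{a+1}\langle q_{12}\rangle_{a+1}
=\sum_{b=1}^{K+1}(\qb_b-\qb_{b-1})\sum_{a=b-1}^{K}(\theta_{a+1}-\theta_a)\langle q_{12}\rangle_{a+1},
\end{equation*}
with the convention $\qb_0=0$, whose inner sum on the right is exactly the overlap combination that appears in $\partial_{x^{(b)}}\mathcal A_N$. Using the explicit form of $\partial_{x^{(b)}}\mathcal A_N$ to solve for that inner sum and substituting, the overlap contribution rewrites as $\sum_{b=1}^{K+1}J^2(\qb_b-\qb_{b-1})\,\partial_{x^{(b)}}\mathcal A_N$ plus a telescoping constant that supplies the $\qb_{K+1}$ piece of $S$. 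The choice
$$\dot x^{(b)}=-J^2(\qb_b-\qb_{b-1}),\qquad \dot w=-\mb,$$
then rearranges the identity into the claimed transport equation.

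The main obstacle is the careful bookkeeping in the Abel step: the truncated tails $\sum_{a=b-1}^{K}$ appearing in $\partial_{x^{(b)}}\mathcal A_N$ must combine, with weights $(\qb_b-\qb_{b-1})$, to reconstruct the unrestricted sum over $a$ without residual, and the boundary conventions $\theta_0=\qb_0=0$, $\theta_{K+1}=1$ have to be applied consistently at both ends. A single off-by-one in swapping the order of summation, or a mishandling of the level $a=0$ term (which would naively be killed by $\theta_0$ but is in fact present because $\theta_1-\theta_0=\theta_1$), would spoil the match; once this verification is carried out, the remainder of the argument is a direct extrapolation of the $K=1,2$ computations already presented.
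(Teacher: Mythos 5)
Your proof is correct and follows essentially the same route as the paper: start from $\dt\mathcal A_N$, complete squares in $q_{12}$ and $m$, rewrite the residual linear terms via the $\partial_{x^{(b)}}\mathcal A_N$ expressions, and choose $\dot x^{(b)}=-J^2(\qb_b-\qb_{b-1})$, $\dot w=-\mb$. The paper compresses the key rearrangement into the laconic identity $\frac{\beta^2}{2}=\frac{\beta^2}{2}\bigl(1-\qb_{K+1}+\sum_{b=0}^K(\qb_{b+1}-\qb_b)\bigr)$, whereas you spell out the same bookkeeping as an explicit Abel summation-by-parts; this is a presentational refinement, not a different argument.
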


\begin{proof}
The proof is analogous to the previous ones provided for the $1RSB$ and $2RSB$ pictures. We start from
\begin{align}
\dt \mathcal{A}_N =\frac{\beta J_0}{2} \langle m^2 \rangle + \frac{\beta^2}{4}J^2 \left(1- \sum_{a=0}^K (\theta_{a+1} - \theta_a)\langle q_{12}^2 \rangle_{a+1}\right),
\end{align}
then we use
\begin{align}
\langle (q_{12} - \qb_a)^2 \rangle_a &= \langle q_{12}^2 \rangle_a + \qb_a^2 -2\qb_a\langle q_{12} \rangle_a, \ \forall a=1, ..., K+1 \label{quad_qa}\\
\langle (m - \mb)^2 \rangle &= \langle m^2 \rangle + \mb-2\mb\langle m \rangle \\
\frac{\beta^2}{2}&=\frac{\beta^2}{2}\left(1-\qb_{K+1} + \sum_{b=0}^{K}(\qb_{b+1}-\qb_{b})\right)
\end{align}
where $\qb_0=0$.
In this way, posing
\begin{align}
\dot{x}^{(b)} &= - (\qb_{b} - \qb_{b-1}), \ b=1, ..., K+1 \\
\dot{w} &= - \mb
\end{align}
we obtain the derivatives w.r.t. each $x^{(b)}$ and w.r.t. $w$.
\end{proof}

\begin{proposition}
The transport equation associated to the $K-RSB$ interpolating pressure, in the thermodynamic limit, reads as
\begin{eqnarray}
\notag
\frac{d}{dt}\mathcal{A}_N &=& \dt \mathcal{A}_N + \sum_{b=0}^K \dot{x}^{(b+1)} \frac{\partial}{\partial x^{(b+1)}} \mathcal{A}_N + \dot{w} \dw \mathcal{A}_N \\
&=& \frac{\beta J_0}{2}\mb^2-\frac{\beta^2}{4} \left( \sum_{a=0}^K (\theta_{a+1} - \theta_a) \qb_{a+1}^2\right) + \frac{\beta^2}{2}(1-\qb_{K+1})
\label{eq:AN_KRSB}
\end{eqnarray}
whose solution is given by
\begin{align}
\mathcal{A}_N = \mathcal{A}_N(0, \bm r - \boldsymbol{ \dot{r}}t) +  t\left [\frac{\beta J_0}{2}\mb^2-\frac{\beta^2}{4} \left( \sum_{a=0}^K (\theta_{a+1} - \theta_a) \qb_{a+1}^2\right) + \frac{\beta^2}{2}(1-\qb_{K+1}) \right].
\end{align}
where
\begin{align}
\mathcal{A}_N(0, \bm r - \boldsymbol{ \dot{r}}t) = \frac{1}{\theta_1}\int D h^{(1)} \log \mathcal{N}_1
\end{align}
with
\begin{align}
\mathcal{N}_a = \begin{cases}
\displaystyle{\int D h^{(a+1)} \left[ \mathcal{N}_{a+1}\right]^{\theta_a/\theta_{a+1}} }\ &\textnormal{for} \ a =1, ..., K \\
2 \cosh \left( \beta (w +  \sum_{a=0}^{K} \sqrt{x_0^{(a)}} h^{(a)}) \right) \ &\textnormal{for} \ a = K +1 \\
\end{cases}
\end{align}
and $D h^{(a)}$ represents the Gaussian measure, namely $dh^{(a)} (\sqrt{2\pi})^{-1} \exp(-{h^{(a)}}^2/2)$.
\end{proposition}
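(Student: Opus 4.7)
The plan mirrors the strategy used at $K=1,2$, starting from the finite-$N$ streaming identity \eqref{eq:ANfinite_KRSB} supplied by the preceding proposition. First I would take the thermodynamic limit and extend Remark \ref{r:latter} to the $K$-th step: within each layer of the Parisi hierarchy the two-replica overlap concentrates on the plateau value $\bar q_{a+1}$, while the magnetization still self-averages at $\bar m$, so that
\begin{equation*}
\lim_{N\to\infty}\langle(q_{12}-\bar q_{a+1})^2\rangle_{a+1}=0\quad(a=0,\dots,K),\qquad \lim_{N\to\infty}\langle(m-\bar m)^2\rangle=0.
\end{equation*}
This kills $V_N(t,\bm r)$ term by term, and together with the characteristic velocities $\dot x^{(b)}=-(\bar q_b-\bar q_{b-1})$ (with $\bar q_0=0$) and $\dot w=-\bar m$ already fixed in the finite-$N$ computation, the streaming reduces to the clean transport equation \eqref{eq:AN_KRSB}.

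Next I would integrate \eqref{eq:AN_KRSB} by the method of characteristics. Along the straight trajectories $x^{(b)}(t)=x_0^{(b)}-(\bar q_b-\bar q_{b-1})t$ and $w(t)=w_0-\bar m\,t$, the source $S$ does not depend on $t$ (once the $\bar q_a$'s and $\bar m$ are frozen at their equilibrium values), so $d\mathcal A/dt=S$ integrates trivially and yields $\mathcal A(t,\bm r)=\mathcal A(0,\bm r-\dot{\bm r}t)+tS$. Only the Cauchy condition at $t=0$ is left to compute.

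At $t=0$ the two-body term in $\mathcal Z_{K+1}$ vanishes, so the Boltzmann factor factorizes over sites, giving
\begin{equation*}
\mathcal Z_{K+1}(0,\bm r_0)=\prod_{i=1}^{N}2\cosh\!\Big[\beta\Big(w_0 J_0+\sum_{a=1}^{K+1}\sqrt{x_0^{(a)}}\,h_i^{(a)}\Big)\Big].
\end{equation*}
The single-site factor is exactly $\mathcal N_{K+1}$. I would then insert this product into the recursive definition of $\mathcal Z_0$ and iterate: each successive operation $\mathbb E_{a+1}[(\cdot)^{\theta_a/\theta_{a+1}}]$ is a Gaussian integral in $h^{(a+1)}$ raised to the appropriate Parisi exponent, which matches step by step the recursion defining $\mathcal N_a$. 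Because the $N$ sites contribute identically, the product over $i$ produces an $N$-fold factor that cancels the $1/N$ in the definition of $\mathcal A_N$, and the outermost $\frac{1}{\theta_1}\mathbb E_1[\log(\cdot)]$ delivers the single-site expression $\frac{1}{\theta_1}\int Dh^{(1)}\log\mathcal N_1$.

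The main obstacle is the vanishing of $V_N$: at finite $N$ there is no rigorous concentration machinery available at general $K$, so this step enters as the Parisi/ziggurat ansatz and should be declared explicitly, just as in the $K=1,2$ cases. A secondary but fiddly point is the bookkeeping of the nested Gaussian integrals and the ratios $\theta_a/\theta_{a+1}$; the explicit expressions \eqref{1rsbtranseq} and \eqref{2rsbtranseq} obtained for $K=1,2$ serve as unambiguous sanity checks for the correct unfolding of the recursion into $\mathcal N_1$.
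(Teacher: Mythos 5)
The paper states this proposition without proof (it is the ``hints'' part of Sec.~\ref{ssec:SKKRSB}); your job is therefore to reconstruct the argument, and you do so along exactly the lines the paper itself uses in the explicit $K=1,2$ cases: kill $V_N$ in the thermodynamic limit under the $K$-RSB self-averaging ansatz, integrate the resulting transport equation along the straight characteristics fixed by the finite-$N$ computation, and evaluate the Cauchy datum at $t=0$ where the two-body term drops and the sum over $\boldsymbol\sigma$ factorizes into the per-site cosh, which the nested averages $\mathbb E_{K+1},\dots,\mathbb E_2$ then fold into the recursion for $\mathcal N_a$. This is the correct and intended route, and your explicit remark that the vanishing of $V_N$ is an ansatz (not a concentration theorem) is exactly the caveat the paper itself attaches to the lower-$K$ cases.

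Two small points worth tightening. First, the factorization over sites at $t=0$ survives the nested $\mathbb E_{a+1}[(\cdot)^{\theta_a/\theta_{a+1}}]$ operations only because each $\mathbb E_{a}$ averages over the i.i.d.\ family $\{h_i^{(a)}\}_{i=1,\dots,N}$ and each single-site factor depends on a disjoint set of $h_i^{(a)}$'s; this independence is what lets the product be pulled through the fractional power, and it deserves to be said explicitly since it is the mechanism that cancels the $1/N$ and produces the one-site formula $\frac{1}{\theta_1}\int Dh^{(1)}\log\mathcal N_1$. Second, you faithfully copy the characteristic velocities as $\dot x^{(b)}=-(\bar q_b-\bar q_{b-1})$ from the finite-$N$ proposition, but note that consistency with the $1$-RSB and $2$-RSB cases (and with the interpolating Hamiltonian) requires a factor $J^2$, i.e.\ $\dot x^{(b)}=-J^2(\bar q_b-\bar q_{b-1})$, and likewise the argument of the cosh should carry the coupling $w_0 J_0$ rather than the bare $w$ appearing in the stated $\mathcal N_{K+1}$; these are typographical slips in the source that a clean proof should silently repair rather than inherit.
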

We can finally state the last, and more general, theorem for the first model under investigation.
\begin{theorem}
The K-RSB quenched pressure for the Sherrington-Kirkpatrick model with a signal, in the thermodynamic limit, reads as
\begin{align}
\mathcal{A}_{KRSB} = \frac{1}{\theta_1} \int D J^{(1)} \log \mathcal{N}_1 + \left[\frac{\beta J_0}{2}\mb^2-\frac{\beta^2}{4} \left( \sum_{a=0}^K (\theta_{a+1} - \theta_a) \qb_a^2\right) + \frac{\beta^2}{2}(1-\qb_K) \right]
\label{eq:Afin_KRSB}
\end{align}
with
\begin{align}
\mathcal{N}_a = \begin{cases}
\displaystyle{\int D h^{(a+1)} \left[\mathcal{N}_{a+1}\right]^{\theta_a/\theta_{a+1}} }\ &\textnormal{for} \ a =1, ... K \\
2 \cosh \left( \beta (\mb +  \sum_{a=0}^{K} \sqrt{\qb_{a}-\qb_{a-1}}  h^{(a)}) \right) \ &\textnormal{for} \ a = K+1 \\
\end{cases}
\label{eq:def_N}
\end{align}
\end{theorem}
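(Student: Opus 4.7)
The plan is to mirror exactly the proof strategy used for the 1RSB (eq.~\ref{1rsbtranseq}) and 2RSB (eq.~\ref{2rsbtranseq}) cases: solve the transport PDE (\ref{eq:AN_KRSB}) by the method of characteristics, evaluate the Cauchy datum at $t=0$ (where the problem becomes a one-body calculation thanks to factorization over sites), and then set $t=1$, $\bm{r}=\bm{0}$ to recover the quenched pressure of the original model. The whole argument is structural: once the PDE and the recursive averaging structure of the partition function are in place, the final formula follows by a direct computation.

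First I would integrate the characteristics $\dot x^{(b)} = -J^{2}(\qb_{b}-\qb_{b-1})$ and $\dot w = -\mb$ (with $\qb_{0}=0$), giving $x^{(b)}(t) = x_{0}^{(b)} - J^{2}(\qb_{b}-\qb_{b-1})\,t$ and $w(t) = w_{0}-\mb t$. Imposing the terminal condition $x^{(b)}(1)=0$ and $w(1)=0$ fixes the initial data $x_{0}^{(b)} = J^{2}(\qb_{b}-\qb_{b-1})$ and $w_{0}=\mb$. Since the source $S(t,\bm{r})$ is actually $t$-independent (it depends only on $\mb$ and on the $\qb_{a}$'s), the contribution of the source along the characteristic is simply $S\cdot t$, which evaluated at $t=1$ produces precisely the bracketed constant term in (\ref{eq:Afin_KRSB}).

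The heart of the proof is the Cauchy datum. At $t=0$ the two-body interaction disappears from the Hamiltonian entering $\mathcal{Z}_{K+1}$, and the spin degrees of freedom decouple, so that inside each expectation $\mathbb{E}_a$ the partition function factorizes over the $N$ sites. This has two consequences: (i)~the $1/N$ in the definition (\ref{AdiSK2RSB}) is cancelled by the product over sites, reducing the problem to a single-spin computation; (ii)~the sum over $\sigma\in\{\pm 1\}$ produces a hyperbolic cosine whose argument, using $x_{0}^{(a)}=J^{2}(\qb_{a}-\qb_{a-1})$ and $w_{0}J_{0}=\mb J_{0}$, matches the inner piece of $\mathcal{N}_{K+1}$ in (\ref{eq:def_N}). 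The recursive structure of Definition~\ref{def:SK_KRSB}, namely the tower of operations $\mathcal{Z}_{a} = \mathbb{E}_{a+1}[\mathcal{Z}_{a+1}^{\theta_{a}}]^{1/\theta_{a}}$ followed by $\mathcal{Z}_{0}=\exp\mathbb{E}_{1}[\log\mathcal{Z}_{1}]$, translates verbatim into the recursion $\mathcal{N}_{a} = \int Dh^{(a+1)}\,[\mathcal{N}_{a+1}]^{\theta_{a}/\theta_{a+1}}$ for $a=1,\dots,K$, together with the outermost operation $\frac{1}{\theta_{1}}\int Dh^{(1)}\log \mathcal{N}_{1}$.

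The main obstacle, as with all $K$-step constructions, is bookkeeping rather than any genuinely new idea: one must carefully track the exponents $\theta_{a}/\theta_{a+1}$ as the logarithm passes through the nested expectations, verify that the Gaussian arguments $\sqrt{\qb_{a}-\qb_{a-1}}\,h^{(a)}$ reconstruct correctly from the initial data $x_{0}^{(a)}$, and check the boundary conventions $\theta_{0}=0$, $\theta_{K+1}=1$, $\qb_{0}=0$. A minor subtlety is that the proof relies on the vanishing of the potential $V_{N}(t,\bm{r})$ in the thermodynamic limit, which in turn requires that all $K+1$ overlaps and the magnetization concentrate on their equilibrium values; this is built into the ansatz and plays the same role as (\ref{eq:V0RSB}) and (\ref{eq:V02RSB}) did at the lower RSB levels. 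Once these checks are made, setting $t=1$ and $\bm{r}=\bm{0}$ in the solution of (\ref{eq:AN_KRSB}) yields (\ref{eq:Afin_KRSB}).
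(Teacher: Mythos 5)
Your proposal mirrors the paper's own argument exactly: solve the transport PDE by characteristics (with $\dot x^{(b)} = -J^2(\qb_b - \qb_{b-1})$, $\dot w = -\mb$, $\qb_0 = 0$), observe that the constant source contributes $S\cdot t$, evaluate the Cauchy datum at $t=0$ via factorization over sites to obtain the nested recursion $\mathcal{N}_a$, and finally set $t=1$, $\bm r = \bm 0$. The paper's formal proof of this theorem is just the one-line observation about setting $t=1$, $\bm r=\bm 0$, with the heavy lifting done in the preceding proposition, which your write-up correctly reconstructs (including the role of the vanishing potential and the boundary conventions $\theta_0=0$, $\theta_{K+1}=1$, $\qb_0=0$).
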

\begin{proof}
If we put $t=1$ and $ \bm r = \bm 0$ we obtain the K-RSB quenched pressure.
\end{proof}

\begin{remark}
The construction of the transport equation is guaranteed by the original Guerra's scheme \cite{Guerra}: considering $J_{ij} \sim \mathcal{N}(0,1)$, if we pose $\sqrt{x^{(a)}}= \sqrt{(1-t)(\qb_a-\qb_{a-1})}$ in Guerra's partition function
\begin{align}
\sums \exp \left( \beta \sqrt{\frac{t}{N}} \sumij J_{ij} \si\sj + \beta h \sumi \si + \beta \sqrt{1-t} \sum_{a=0}^K \sqrt{\qb_a-\qb_{a-1}} \sum_i J_i^{(a)} \si \right)
\end{align}
we can use Theorem $4$ of the paper \cite{Guerra}, namely
\begin{align}
\label{eq:Guerra}
\frac{d}{dt} \mathcal{A}_N =-\frac{\beta^2}{4} \left( 1- \sum_{a=0}^K (m_{a+1} - m_a) \qb_a^2\right) - \frac{\beta^2}{4}\sum_{a=0}^K (m_{a+1} - m_a) \langle (q_{12} - \qb_a)^2 \rangle_a
\end{align}
to compute, being aware of (\ref{quad_qa}),  the transport equation also for the present (trivial) generalization.
\end{remark}
\begin{corollary}
\label{cor:selfKRSB_SK}
The self-consistence equations for the order parameters are thus 
\begin{align}
\qb_1 &=  \mathbb{E}_1 \left \{ \frac{1}{\mathcal{N}_1}\mathbb{E}_2 \left[ \mathcal{N}_2^{\frac{\theta_1}{\theta_2}-1} \cdots \mathbb{E}_{K+1} \left( \cosh^{\theta_K} \left( g(\boldsymbol h, \bar m) \right) \tanh \left( g(\boldsymbol h, \bar m)  \right) \right)  \right] \right \}^2 \\
\qb_2 &=  \mathbb{E}_1 \left \{ \frac{1}{\mathcal{N}_1}\left[\mathbb{E}_2 \left (  \mathcal{N}_2^{\frac{\theta_1}{\theta_2}-1} \cdots \mathbb{E}_{K+1} \left( \cosh^{\theta_K} \left( g(\boldsymbol h, \bar m) \right) \left( \tanh g(\boldsymbol h, \bar m)  \right) \right) \right ) \right]^2\right \} \\
\notag
... \\
\qb_{K+1} &=  \mathbb{E}_1 \left \{ \frac{1}{\mathcal{N}_1}\mathbb{E}_2 \left[ \mathcal{N}_2^{\frac{\theta_1}{\theta_2}-1} \cdots \mathbb{E}_{K+1} \left( \cosh^{\theta_K} \left (  g(\boldsymbol h, \bar m ) \right)  \tanh^2  \left( g(\boldsymbol h, \bar m)  \right) \right)  \right]\right \} \\
\bar{m} &= \mathbb{E}_1 \left \{ \frac{1}{\mathcal{N}_1}\mathbb{E}_2 \left[ \mathcal{N}_2^{\frac{\theta_1}{\theta_2}-1} \cdots \mathbb{E}_{K+1} \left(  \cosh^{\theta_K} \left ( g(\boldsymbol h, \bar m) \right) \tanh  \left ( g(\boldsymbol h, \bar m)  \right ) \right)  \right]\right \}
\end{align}
where $\boldsymbol h = (h^{(1)}, \cdots, h^{(K+1)})$, $g(\boldsymbol h, \bar m) = \beta \mb + \beta\sum_{a=1}^{K+1} \sqrt{\qb_a - \qb_{a-1}}h^{(a)}$, and $\mathcal{N}_1, ..., \mathcal{N}_{K+1}$ are defined in (\ref{eq:def_N}). 
\end{corollary}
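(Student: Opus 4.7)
The plan is to mirror exactly the arguments used in Corollaries \ref{cor:SC_SK} and \ref{cor:SC_SK_2RSB}, namely to (i) extend Lemmas \ref{lemma:2} and \ref{lemma2RSB} to arbitrary $K$ so as to obtain closed expressions for the partial derivatives of $\mathcal{A}_N(t, \bm r)$, (ii) specialise these to the thermodynamic limit under the K-RSB ansatz of Definition \ref{def:SK_KRSB}, and (iii) compute the same derivatives directly from the explicit formula \eqref{eq:Afin_KRSB} and equate. This gives two expressions per order parameter: one linear in $\qb_1,\ldots,\qb_{K+1},\mb$, the other a nested average of $\tanh$'s, and their equality is precisely the announced self-consistency system.

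For step (i), the telescoping pattern already visible in \eqref{eq:x1app2RSB}--\eqref{eq:x3app2RSB} generalises by induction on $K$ to
\begin{align}
\frac{\partial \mathcal{A}_N}{\partial x^{(b)}} &= \frac{\beta^2}{2}\left(1 - \sum_{a=b}^{K+1}(\theta_a - \theta_{a-1})\langle q_{12}\rangle_a\right), \qquad b = 1,\ldots,K+1,\\
\frac{\partial \mathcal{A}_N}{\partial w} &= \beta J_0 \langle m \rangle,
\end{align}
with $\theta_0=0$ and $\theta_{K+1}=1$. The proof is a straightforward iteration of the Wick/Gaussian integration-by-parts identity \eqref{eqn:gaussianrelation2} applied successively to each layer of the recursive definition of $\mathcal{Z}_N$ in Definition \ref{def:SK_KRSB}. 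Passing to the thermodynamic limit and using $\langle q_{12}\rangle_a \to \qb_a$ and $\langle m \rangle \to \mb$ yields $K+2$ linear relations that the equilibrium order parameters must satisfy.

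For step (iii), one differentiates the cascade $\mathcal{N}_a = \mathbb{E}_{a+1}[\mathcal{N}_{a+1}^{\theta_a/\theta_{a+1}}]$ through each of the $K$ nested levels. At every level, the chain rule brings out a factor $\mathcal{N}_{a+1}^{\theta_a/\theta_{a+1}-1}$ which, after being normalised by the corresponding $\mathcal{N}_a^{\theta_{a-1}/\theta_a}$ coming from the outer derivative of the logarithm, reassembles into the nested weights of the K-RSB averaging scheme (the natural extension of the $\mathcal{W}_2, \mathcal{W}_3$ of the 2-RSB case). The innermost derivative acts on $\mathcal{N}_{K+1} = 2\cosh(\beta g(\bm h,\mb))$ and produces $\beta h^{(b)}/(2\sqrt{x_0^{(b)}})$ times a $\tanh$; a final Gaussian integration by parts on $h^{(b)}$ converts this $h^{(b)}$ into a second $\tanh$. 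Taking then the difference $\partial_{x^{(b)}}\mathcal{A} - \partial_{x^{(b+1)}}\mathcal{A}$ isolates $(\theta_b - \theta_{b-1})\qb_b$ on the left-hand side and a single nested average of $\tanh^2$ on the right, reproducing the stated formula for $\qb_b$ when $b\geq 2$. The case $b=1$ is slightly special because the outermost expectation $\mathbb{E}_1$ is not reweighted by any $\mathcal{W}$, so Wick's identity on $h^{(1)}$ yields a squared nested average, matching the $(\,\cdots\,)^2$ in the expression for $\qb_1$; the equation for $\mb$ follows identically from $\partial_w \mathcal{A}$.

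The main obstacle is purely bookkeeping: tracking how the derivative of the innermost $\cosh^{\theta_K}$ propagates back through $K$ layers of nested expectations, each carrying its own exponent $\theta_{a-1}/\theta_a$. A clean downward induction on the depth $a$ of the nesting, combined with the observation that the normalisations coming from the chain rule at consecutive levels telescope into the nested weights already employed to define $\langle q_{12}\rangle_a$, avoids the case-by-case proliferation that would otherwise mimic, level by level, the lengthy Appendix-style calculations of the $K=1$ and $K=2$ cases. Once that induction is in place, the four displayed self-consistencies are read off directly from the matching in step (iii).
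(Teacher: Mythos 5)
Your proposal mirrors the paper's proof: compute the telescoping derivatives $\partial_{x^{(b)}}\mathcal{A}_N$ from the interpolation side (extending Lemmas~\ref{lemma:2} and \ref{lemma2RSB}), recompute them by differentiating the explicit K-RSB solution \eqref{eq:Afin_KRSB}, and isolate each $\qb_b$ via consecutive differences; the paper does precisely this through its decomposition $\partial_{x^{(a)}}\mathcal{A}_N = \sum_j A_j$, so the approach is the same. One imprecision in your narrative worth fixing before you carry out the bookkeeping: a $\tanh^2$ at the innermost level appears only for $b=K+1$; for $2\le b\le K$ the difference $\partial_{x^{(b)}}\mathcal{A}-\partial_{x^{(b+1)}}\mathcal{A}$ isolates a nested expression whose distinguishing feature is a \emph{square of a conditional average at level $b$} (exactly analogous to the outermost square you correctly identify at $b=1$, only pushed one layer deeper per increment of $b$), as is visible in the definitions of $\langle q_{12}\rangle_a$ in \eqref{eq:unouno}--\eqref{eq:duedue} and in the 2RSB Corollary~\ref{cor:SC_SK_2RSB}. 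The Wick step on $h^{(b)}$ generates that square through the derivative of the level-$b$ weight, not only in the special case $b=1$.
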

\begin{proof}
The proof to achieve the expression of the self consistency equation for $\mb$ is identical to the previous cases, hence we omit it.
Conversely, for the other self-consistence equations, we extremize the pressure, the latter depending on $\{ x_0^{(a)} \}_{a=1,...,K+1}$, on $w_0$ and on the derivatives w.r.t. these parameters.
\newline
The derivatives w.r.t. $x_0^{(a)}$ consist of $K+2-a$ pieces, namely
\begin{align}
\partial_{x^{(a)}} \mathcal{A}_N = \sum_{j=1}^{K+2-a} A_j,
\end{align}
\begin{align}
\notag
\begin{cases}
A_1 = -\frac{\beta^2}{2} \theta_1 \left \{ \mathbb{E}_1 \left[\frac{1}{\mathcal{N}_1}\mathbb{E}_2 \left[ \mathcal{N}_2^{\frac{\theta_1}{\theta_2}-1} \cdots \mathbb{E}_{K+1}\left( \cosh^{\theta_K} ( g(\boldsymbol h, \bar m) ) \tanh ( g(\boldsymbol h, \bar m) ) \right) \right]\right]\right \}^2  \\
A_2 = -\frac{\beta^2}{2} (\theta_1-\theta_2)\mathbb{E}_1 \left \{ \frac{1}{\mathcal{N}_1}\left[\mathbb{E}_2 \left[ \mathcal{N}_2^{\frac{\theta_1}{\theta_2}-1} \cdots \mathbb{E}_{K+1}\left( \cosh^{\theta_K} ( g(\boldsymbol h, \bar m) ) \tanh ( g(\boldsymbol h, \bar m) ) \right) \right]\right]^2\right \} \\
... \\
A_{K+1} = \frac{\beta^2}{2} - \frac{\beta^2}{2}(1-\theta_K)\mathbb{E}_1 \left \{ \frac{1}{\mathcal{N}_1}\mathbb{E}_2 \left[ \mathcal{N}_2^{\frac{\theta_1}{\theta_2}-1} \cdots \mathbb{E}_{K+1}\left( \cosh^{\theta_K} (g(\boldsymbol h, \bar m)) \tanh^2 (g(\boldsymbol h, \bar m))  \right) \right]\right\}
\end{cases}
\end{align}
therefore,
\begin{align}
&\partial_{x^{(j+1)}} \mathcal{A}_N - \partial_{x^{(j)}} \mathcal{A}_N = - A_j = \frac{\beta^2}{2}(\theta_j-\theta_{j-1})\qb_j \ \ \ \ \ \ j=1,.., K\\
&\partial_{x^{(K+1)}} \mathcal A_N  = A_{K+1} = \frac{\beta^2}{2} - \frac{\beta^2}{2} (1-\theta_K) \qb_{K+1}
\end{align}
accounting for all $\qb$'s self-consistencies.
\end{proof}


\section{Main theme: the Hopfield neural network}\label{HopfieldSection}
The Hopfield model is the paradigmatic model for associative neural networks performing pattern recognition \cite{Amit,Coolen}. Hereafter, we solve for its quenched pressure by exploiting the approach based on Guerra's broken interpolating technique and the transport equation, already implemented for the Sherrington-Kirkpatrick model with a signal: in subsec.~\ref{ssec:HRS}, we will address the RS scenario and recover the AGS picture; in subsec.~\ref{ssec:HRSB1}, we will solve the model at the first level of RSB, rigorously proving the expression provided by Amit, Crisanti and Gutfreund \cite{Crisanti}, in the subsec.~\ref{ssec:HRSB2} we will solve the model at the second level of RSB, rigorously proving also the expression provided by Steffan and K\"{u}hn \cite{Kuhn}. The generalization to arbitrary, but finite, K steps is then presented, more succinctly, in subsec.~\ref{ssec:HOP_KRSB}, which closes the section.
\begin{definition} Set $\alpha \in \mathbb{R}^+$  and let $\boldsymbol \sigma \in \{- 1, +1\}^N$ be a configuration of $N$ binary neurons. Given $P=\alpha N$ random patterns $\{\boldsymbol \xi^{\mu}\}_{\mu=1,...,P}$, each made of $N$ i.i.d. digital entries drawn from probability $P(\xi_i^{\mu}=+1)=P(\xi_i^{\mu}=-1)=1/2$, for $i=1,...,N$, the Hamiltonian of the Hopfield model is defined as
	\beq
	\label{eq:hop_hbare}
	H_N(\boldsymbol \sigma| \boldsymbol \xi) \coloneqq -\frac{1}{2N}\sum_{\mu=1}^{P}\sum_{i,j=1}^{N,N}\xi_i^{\mu}\xi_j^{\mu}\sigma_i\sigma_j.
	\eeq
	\label{hop_hbare}
\end{definition}
\begin{definition}The partition function related to the Hamiltonian (\ref{eq:hop_hbare}) is given by
	\beq
	\label{eq:hop_BareZ}
	Z_N(\beta,\boldsymbol \xi) \coloneqq \sum_{ \boldsymbol \sigma } \exp \left[ -\beta H_N(\boldsymbol \sigma | \boldsymbol \xi) \right]=\sum_{ \boldsymbol \sigma } \exp \left( \frac{\beta}{2N}\sum_{\mu=1}^{P}\sum_{i,j=1}^{N,N}\xi_i^{\mu}\xi_j^{\mu}\sigma_i\sigma_j\right),
	\eeq
	where $\beta\in \mathbb R^+$ is the inverse temperature in proper units such that for $\beta \to 0$ the probability distribution for the neural configuration is uniformly spread while for $\beta \to \infty$ it is sharply peaked at the minima of the energy function (\ref{eq:hop_hbare}).
	\label{hop_BareZ}
\end{definition}
Analogously to the Sherrington-Kirkpatrick model, we introduce the \emph{Boltzmann average} induced by the partition function (\ref{eq:hop_BareZ}), denoted with $\omega_{\boldsymbol \xi}$ and, for an arbitrary observable $O(\boldsymbol \sigma)$, defined as
	\begin{equation}
	\omega_{\boldsymbol \xi} (O (\boldsymbol \sigma)) : = \frac{\sum_{\boldsymbol \sigma} O(\boldsymbol \sigma) e^{- \beta H_N(\boldsymbol \sigma| \boldsymbol \xi)}}{Z_N(\beta, \boldsymbol \xi)}.
	\end{equation}
	This can be further averaged over the realization of the $\xi_i^{\mu}$'s (also referred to as \emph{quenched average}) to get
	\beq
	\langle O(\boldsymbol \sigma) \rangle \coloneqq \mathbb{E} \omega_{\boldsymbol \xi} (O(\boldsymbol \sigma)).
	\eeq
	Further, we introduce the product state $\Omega_{s, _{\boldsymbol \xi}} = \omega_{\boldsymbol \xi}^{(1)} \times \omega_{\boldsymbol \xi}^{(2)} \times ... \times \omega_{\boldsymbol \xi}^{(s)}$ over $s$ replicas of the system, characterized by the same realization $\boldsymbol \xi$ of disorder. In the following, we shall use the product state over two replicas only, hence we shall neglect the index $s$ without ambiguity; also, to lighten the notation, we shall omit the subscript $\boldsymbol \xi$ in $\omega_{\boldsymbol \xi}$ and in $\Omega_{\boldsymbol \xi}$. Thus, for an arbitrary observable $O(\boldsymbol \sigma^{(1)}, \boldsymbol \sigma^{(2)})$
	\beq
	\langle O ( \boldsymbol \sigma^{(1)}, \boldsymbol \sigma^{(2)} ) \rangle \coloneqq \mathbb{E} \Omega (O (\boldsymbol \sigma^{(1)}, \boldsymbol \sigma^{(2)}) )= \mathbb{E}  \frac{\sum_{\boldsymbol \sigma} O(\boldsymbol \sigma^{(1)},\boldsymbol \sigma^{(2)}) e^{-\beta [H_N(\boldsymbol \sigma^{(1)}| \boldsymbol \xi) + H_N(\boldsymbol \sigma^{(2)}| \boldsymbol \xi)] }}{Z_N^2(\beta, \boldsymbol \xi)},
	\eeq
	where $\boldsymbol{\sigma}^{(1,2)}$ is the configuration pertaining to the replica labelled as $1,2$.
\begin{definition}The intensive quenched pressure of the Hopfield model (\ref{eq:hop_hbare}) is defined as
\begin{equation}
	A_N(\alpha, \beta) \coloneqq \frac{1}{N} \mathbb{E} \log Z_N(\beta, \boldsymbol \xi),
	\label{hop_BareA}
\end{equation}
and its thermodynamic limit, assuming its existence, is referred to as
\begin{equation}
A(\alpha, \beta) \coloneqq \lim_{N \to \infty} A_N(\alpha, \beta).
\end{equation}
\end{definition}
\begin{remark}
	In the following we shall exploit the {\em universality property} of the quenched noise in spin-glasses \cite{Univ1,Genovese}, namely, whatever the nature of the pattern entries (that is, digital -- e.g., Boolean -- or analog -- e.g., Gaussian), provided that their distribution is centered, symmetrical and with finite variance, they ultimately give analogous contribution to the structure of the quenched noise in the pressure (\ref{hop_BareA}) in the infinite volume limit ($N \to \infty$).
	Note that such a property is not guaranteed in the low storage regime, that is when $\lim_{N \to \infty} P/N=0$ \cite{Agliari-Barattolo,Albert1}.
\end{remark}

Focusing on pure state retrieval, we will assume without loss of generality that the candidate pattern to be retrieved $\boldsymbol \xi$ is a Boolean vector of $N$ entries, while $\boldsymbol \xi^{\mu}$, $\mu=1,...,P-1$ are real vectors whose $N$ entries are i.i.d. standard Gaussians. Accordingly, the average $\mathbb{E}$ acts as a Boolean average over $\boldsymbol \xi$ and as a Gaussian average over $\boldsymbol \xi^1 \cdots \boldsymbol \xi^{P-1}$.

\begin{definition} The order parameters used to describe the macroscopic behavior of the model are the standard ones \cite{Amit,Coolen}, namely, the Mattis magnetization 
	\begin{equation}
	m(\boldsymbol \sigma) :=m(\boldsymbol \sigma| \boldsymbol \xi) \coloneqq \frac{1}{N}\sum_{i=1}^{N} \xi_i \sigma_i
	\end{equation}
	to quantify the retrieval capabilities of the network, and the two-replica overlap in the $\boldsymbol \sigma$'s variables
\begin{equation}
\label{q}
q_{12}(\boldsymbol \sigma) \coloneqq \frac{1}{N}\sum_{i=1}^N \sigma_i^{(1)}\sigma_i^{(2)}
\end{equation}
to quantify the level of slow noise the network must cope with when performing pattern recognition.
Further, as an additional set of variables $\{\tau_{\mu}\}_{\mu=1,...,P-1}$ shall be introduced (vide infra), we accordingly define the related two-replica overlap
\begin{equation}
\label{p}
p_{11}(\boldsymbol \tau) \coloneqq  \frac{1}{P} \sum_{\mu=1}^P \tau^{(1)}_\mu \tau^{(1)}_\mu, ~~~ p_{12}(\boldsymbol \tau) \coloneqq  \frac{1}{P} \sum_{\mu=1}^P \tau^{(1)}_\mu \tau^{(2)}_\mu
\end{equation}
which as well captures the level of the noise due to pattern interference.
\label{hop_orderparameters}
\end{definition}
Note that, for the sake of simplicity, without loss of generality in the infinite volume limit (where $P\to \infty,\ N \to \infty$ such that $\alpha:=P/N \in \mathbb{R}^+$), we will approximate $P-1 \sim P$.

\subsection{Replica Symmetric Interpolation: RS solution} \label{ssec:HRS}
In this subsection we focus on the solution of the Hopfield model via interpolating techniques under the RS assumption. This route was already paved in \cite{AABF-NN2020}, yet it is reported (without proofs) hereafter for completeness and to allow the reader to get familiar with the technique before moving to the more challenging RSB scenario. Again, the strategy is to introduce an interpolating pressure $\mathcal A_N$ living in a fictitious space-time framework and recovering the intensive quenched pressure $A_N$ of the original model in a specific point of this space, and to show that it fulfills a transport equation in such a way that the solution of the statistical mechanical problem is recast in the solution of a partial differential equation.
\begin{definition}
	Under the replica-symmetry assumption, the order parameters, in the thermodynamic limit, self-average and their distributions get delta-peaked at their equilibrium value (denoted with a bar), independently of the replica considered, namely
	\begin{eqnarray}
	\lim_{N\to \infty} \langle (m - \bar m)^2 \rangle = 0 &\Rightarrow& \lim_{N\to \infty}  \langle m \rangle = \bar m\\
	\lim_{N\to \infty} \langle (q_{12} - \bar q)^2 \rangle = 0 &\Rightarrow& \lim_{N \to \infty}  \langle q_{12} \rangle = \bar q\\
	\lim_{N\to \infty} \langle (p_{12} - \bar pq)^2 \rangle = 0 &\Rightarrow& \lim_{N \to \infty}  \langle p_{12} \rangle = \bar p.
	\end{eqnarray}
	For the generic order parameter $X$ this can be rewritten as $\langle (\Delta X)^2 \rangle \overset{N\to\infty}{\longrightarrow}0$,
	where
	$$
	\Delta X \coloneqq  X - \bar{X},
	$$
	and, clearly, the RS approximation also implies that, in the thermodynamic limit, $\langle \Delta X \Delta Y \rangle = 0$ for any generic pair of order parameters $X,Y$.
\end{definition}
\begin{definition} Given the interpolating parameters $\bm r = (x, y, z, w), t$ to be set a posteriori and $N+P$ auxiliary quenched i.i.d. random variables $J_i \sim \mathcal{N}[0,1], i \in (1,...,N)$ and  $\tilde{J}_{\mu} \sim \mathcal{N}[0,1], \mu \in (1,...,P)$, the interpolating partition function for the Hopfield model (\ref{eq:hop_hbare}) is defined as
\footnotesize
\begin{eqnarray}
\mathcal{Z}_N(t, \bm r) &:=&  \sum_{\boldsymbol \sigma} \int \mathcal{D} \bm \tau \exp \left[ \beta \left( \frac{\sqrt{t}}{\sqrt{N}} \sum_{i,\mu =1}^{N,P}\xi_i^{\mu}\sigma_i\tau_\mu +\frac{t N}{2}m^2(\boldsymbol \sigma) + \sqrt{x} \sum_{i=1}^N J_i \sigma_i +  \sqrt{y} \sum_{\mu=1}^P \tilde J_\mu \tau_\mu + z \sum_{\mu=1}^P \frac{\tau_\mu^2}{2}+w N m(\boldsymbol \sigma)\right) \right ],
\label{hop_Z}
\end{eqnarray}
\normalsize
where, for any $\mu=1,...,P$, $\tau_{\mu} \sim \mathcal N [0, 1/\beta]$ and $\mathcal{D} \bm \tau \coloneqq \prod_{\mu=1}^P \frac{e^{- \beta \tau_{\mu}^2/2}}{\sqrt{2\pi \beta} }$ 
is the related measure.
\end{definition}
\begin{definition} The interpolating pressure for the classical Hopfield model (\ref{eq:hop_hbare}), at finite $N$, is introduced as
\begin{eqnarray}
\mathcal{A}_N(t, \bm r) &\coloneqq& \frac{1}{N} \mathbb{E} \left[  \log \mathcal{Z}_N(t, \bm r)  \right],
\label{hop_GuerraAction}
\end{eqnarray}
where the expectation $\mathbb E$ is now meant over $\boldsymbol \xi$, $\boldsymbol J$, and $\boldsymbol{\tilde{J}}$ and, in the thermodynamic limit,
\begin{equation}
\mathcal{A}(t, \bm r) \coloneqq \lim_{N \to \infty} \mathcal{A}_N(t, \bm r).
\label{hop_GuerraAction_TDL}
\end{equation}
By setting $t=1$ and $\bm r = \bm 0$ the interpolating pressure recovers the original one (\ref{hop_BareA}), that is $A_N (\alpha, \beta) = \mathcal{A}_N(t=1, \bm r = \bm 0)$.
\end{definition}
\begin{remark}
	The interpolating structure implies an interpolating measure, whose related Boltzmann factor reads as
	\begin{equation}
	\mathcal B (\boldsymbol \sigma, \boldsymbol \tau; t,\boldsymbol r )\coloneqq  \exp \left[ \beta \mathcal H (\boldsymbol \sigma, \boldsymbol \tau; t,\boldsymbol r ) \right],
	\end{equation}
with
\begin{equation}
 \mathcal H (\boldsymbol \sigma, \boldsymbol \tau; t,\boldsymbol r )  := \frac{\sqrt{t}}{\sqrt{N}} \sum_{i,\mu =1}^{N,P}\xi_i^{\mu}\sigma_i\tau_\mu +\frac{t N}{2}m^2(\boldsymbol \sigma) + \sqrt{x} \sum_{i=1}^N J_i \sigma_i + \sqrt{y} \sum_{\mu=1}^P \tilde J_\mu \tau_\mu + z \sum_{\mu=1}^P \frac{\tau_\mu^2}{2}+w N m(\boldsymbol \sigma).
\end{equation}
Clearly, $\mathcal Z_N(t, \boldsymbol r) = \int \mathcal{D} \bm \tau \sum_{\boldsymbol \sigma} \mathcal B (\boldsymbol \sigma, \bm \tau; t, \boldsymbol r)$.\\
A generalized average follows from this generalized measure as
\beq
	\omega_{t, \boldsymbol r} (O (\boldsymbol \sigma, \boldsymbol \tau )) \coloneqq  \int \mathcal{D} \bm \tau  \sum_{\boldsymbol \sigma} O (\boldsymbol \sigma , \boldsymbol \tau) \mathcal B (\boldsymbol \sigma, \boldsymbol \tau; t, \boldsymbol r)
	\eeq
	and
\beq
\langle O (\boldsymbol \sigma, \bm \tau ) \rangle_{t, \boldsymbol r}  \coloneqq \mathbb E [ \omega_{t, \boldsymbol r} (O (\boldsymbol \sigma, \bm \tau )) ].
\eeq
Of course, when $t=1$ and $\boldsymbol r = \bm 0$, the standard Boltzmann measure and related average is recovered.
Hereafter, in order to lighten the notation, we will drop the subindices $t, \boldsymbol r$.
\end{remark}
\begin{lemma} The partial derivatives of the interpolating pressure (\ref{hop_GuerraAction}) w.r.t. $t,x,y,z,w$ give the following expectation values:
	\bea
	\label{hop_expvalsa}
	\frac{\partial \mathcal{A}_N}{\partial t} &=& \frac{\alpha}{2} \big[ \langle p_{11} \rangle- \langle p_{12}q_{12} \rangle \big] + \frac{1}{2} \langle{m^2} \rangle \\
	\label{hop_expvalsmiddle}
	\frac{\partial \mathcal{A}_N}{\partial x}  &=& \frac{1}{2}  \big[1- \langle{q_{12}} \rangle \big],\\
	\frac{\partial \mathcal{A}_N}{\partial y}  &=& \frac{\alpha}{2} \big[\langle{p_{11}}\rangle-\langle{p_{12}}\rangle\big],\\
	\frac{\partial \mathcal{A}_N}{\partial z}  &=& \frac{\alpha}{2}  \langle{p_{11}}\rangle,\\
	\frac{\partial \mathcal{A}_N}{\partial w}  &=& \langle{m}\rangle.
	\label{hop_expvalsb}
	\eea
\end{lemma}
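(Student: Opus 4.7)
The plan is to differentiate $\mathcal A_N(t,\bm r)$ with respect to each parameter separately, push the derivative inside both the logarithm and the outer expectation $\mathbb E$, and then, whenever a quenched variable appears linearly in the exponent of $\mathcal B$, invoke the Gaussian integration-by-parts identity $\mathbb E_z[z\, f(z)]=\mathbb E_z[\partial_z f(z)]$ (Wick's lemma) to rewrite the resulting term as a combination of one- and two-replica Boltzmann averages. The universality property recalled in the Remark preceding the Lemma allows me to treat the non-retrieved pattern entries $\xi^{\mu}_i$ ($\mu=1,\dots,P-1$) as centered Gaussians of unit variance, which is the hypothesis needed to apply Wick's lemma to them.

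The $w$- and $z$-derivatives are immediate: for $w$, the exponent's linear dependence on $w$ produces $\beta N m(\boldsymbol\sigma)$ in $\omega$, which after dividing by $N$ gives $\langle m\rangle$ (up to the $\beta$ convention); for $z$, the derivative pulls down $(\beta/2)\sum_\mu \tau_\mu^2$, and the factor $P/N=\alpha$ collapses the sum to $(\alpha/2)\langle p_{11}\rangle$. The $x$- and $y$-derivatives are structurally identical to each other: each brings down a factor proportional to $J_i\sigma_i$ (respectively $\tilde J_\mu\tau_\mu$), and Wick's identity on $J_i$ (resp.\ $\tilde J_\mu$) splits the result into a diagonal contribution $\omega(\sigma_i^2)=1$ (resp.\ $\omega(\tau_\mu^2)$ recognised as $p_{11}$) and an off-diagonal contribution $\omega(\sigma_i)^2$ (resp.\ $\omega(\tau_\mu)^2$) that, once lifted to two coupled replicas, is exactly $q_{12}$ (resp.\ $p_{12}$). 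Reassembling yields the claimed $\tfrac12[1-\langle q_{12}\rangle]$ and $\tfrac{\alpha}{2}[\langle p_{11}\rangle-\langle p_{12}\rangle]$.

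The $t$-derivative is the technically demanding step. Differentiating the exponent produces two contributions: the magnetisation piece $(\beta N/2)m^2(\boldsymbol\sigma)$, which directly returns $\tfrac12\langle m^2\rangle$ after dividing by $N$, and the two-body coupling piece $(\beta/(2\sqrt{tN}))\sum_{i,\mu}\xi_i^\mu\sigma_i\tau_\mu$, on which Wick's identity in $\xi_i^\mu$ must be applied. The $\sqrt t$ from the exponent cancels the $1/\sqrt t$ brought down by the chain rule, and two terms emerge: a ``same replica'' piece $\omega(\sigma_i^2\tau_\mu^2)=\omega(\tau_\mu^2)$, whose double sum over $i,\mu$ yields $\alpha\langle p_{11}\rangle$, and a ``two-replica'' piece $\omega(\sigma_i\tau_\mu)^2$, which, once the double sums over $i$ and $\mu$ factorise, produces precisely $\alpha\langle p_{12}q_{12}\rangle$. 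Collecting terms reconstructs (\ref{hop_expvalsa}). The main obstacle is the bookkeeping of this last step: one must track the $\beta$ factors generated by each differentiation of $\mathcal B$, split the double sum $\sum_{i,\mu}$ so that the $i$-indices assemble into $q_{12}$ while the $\mu$-indices assemble into $p_{12}$, and ensure that the Gaussian measure $\mathcal D\bm\tau$ on $\boldsymbol\tau$ is consistently handled so that the $\tau_\mu^2$ averages remain finite and furnish the $p_{11}$ overlap rather than collapsing to a constant.
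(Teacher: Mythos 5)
Your plan matches the technique the paper actually employs in the proofs it does supply: the RS Hopfield lemma itself is deferred to \cite{AABF-NN2020}, but the paper proves the structurally identical 1RSB analogue (Appendix~\ref{app2}) by exactly the same method --- differentiate under the log, apply Gaussian integration by parts on each linearly-entering quenched variable, and lift the squared single-replica Boltzmann averages to two-replica overlaps. Your reduction of $\frac{1}{N^2}\sum_{i,\mu}\omega(\sigma_i\tau_\mu)^2$ to $\alpha\langle q_{12}p_{12}\rangle$ at the two-replica level, and your appeal to the universality remark to justify Wick's identity on the non-retrieved $\xi_i^\mu$'s, are both correct and are precisely what the paper does in the proved cases. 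One point you should resolve rather than defer: tracking the $\beta$'s literally, each differentiation of $\mathcal B$ and each Wick step brings down a factor of $\beta$, so the raw derivatives carry $\beta^2$ (for the $x$-, $y$- and two-replica piece of the $t$-derivatives) or $\beta$ (for $z$, $w$, and the $\langle m^2\rangle$ piece); the lemma as printed hides these in the convention by which the corollary substitutes $t=\beta$ rather than $t=1$ (compare the explicit $\beta$'s in the 2RSB version of the same lemma). Your parenthetical ``up to the $\beta$ convention'' is the right instinct, but a complete write-up should make this rescaling explicit so that the derivatives, the velocities $\dot x,\dot y,\dot z,\dot w$, and the final evaluation point are mutually consistent.
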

%
%
\begin{proposition} \label{prop:5}
The interpolating pressure \eqref{hop_GuerraAction} at finite size obeys the following differential equation:
	\begin{equation}
	\frac{d \mathcal{A}_N}{dt} =  \pder{\mathcal{A}_N}{t} + \dot{x} \pder{\mathcal{A}_N}{x} + \dot y \pder{\mathcal{A}_N}{y} + \dot z \pder{\mathcal{A}_N}{z} +\dot w \pder{\mathcal{A}_N}{w}= S(t, \bm r)+V_N(t, \bm r),
	\label{hop_GuerraAction_DE}
	\end{equation}
	where we set $\dot x = -\alpha \bar p$, $\dot y = - \bar q$, $\dot z = - (1-\bar q)$, $\dot w = -\bar m$ and
	\begin{eqnarray}
	S(t, \bm r) &\coloneqq& -\frac{1}{2} \bar{m}^2 -\frac{\alpha}{2} \bar{p} (1 -\bar q) \\ \label{potenziale-RS-Hopfield}
	V_N(t, \bm r) &\coloneqq& \frac{1}{2}\langle{(\Delta m)^2}\rangle-\frac{1}{2}\langle{\Delta p_{12}\Delta q_{12}}\rangle.
	\end{eqnarray}
	\end{proposition}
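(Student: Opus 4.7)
The plan is to follow the same strategy exploited for Proposition 3 in the Sherrington-Kirkpatrick case, but now handling three order parameters ($m$, $q_{12}$ and $p_{12}$) instead of two. I would start from the partial derivatives catalogued in the preceding Lemma (eqs.~\eqref{hop_expvalsa}--\eqref{hop_expvalsb}) and use the identities
\begin{align}
\langle (\Delta m)^2\rangle &= \langle m^2\rangle - 2\bar m \langle m \rangle + \bar m^2, \\
\langle \Delta p_{12}\Delta q_{12}\rangle &= \langle p_{12}q_{12}\rangle - \bar p \langle q_{12}\rangle - \bar q \langle p_{12}\rangle + \bar p \bar q,
\end{align}
to replace $\langle m^2 \rangle$ and $\langle p_{12} q_{12}\rangle$ inside \eqref{hop_expvalsa} by fluctuation terms plus terms that are linear in $\langle m\rangle,\langle p_{12}\rangle,\langle q_{12}\rangle$ and quadratic in the equilibrium values.

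Next, I would use the Lemma to express each linear expectation in terms of partial derivatives of $\mathcal A_N$: namely $\langle m\rangle=\partial_w\mathcal A_N$, $\langle q_{12}\rangle = 1-2\,\partial_x \mathcal A_N$, $\langle p_{11}\rangle = (2/\alpha)\,\partial_z \mathcal A_N$, and $\langle p_{12}\rangle = \langle p_{11}\rangle - (2/\alpha)\,\partial_y \mathcal A_N$. Substituting these into the rewritten $\partial_t \mathcal A_N$ produces on the right-hand side a linear combination of $\partial_x\mathcal A_N$, $\partial_y\mathcal A_N$, $\partial_z\mathcal A_N$, $\partial_w\mathcal A_N$ with coefficients $\alpha\bar p$, $\bar q$, $(1-\bar q)$, $\bar m$ respectively, plus the constant piece $-\tfrac12\bar m^2-\tfrac{\alpha}{2}\bar p(1-\bar q)$, plus the genuine fluctuation contribution $\tfrac12\langle(\Delta m)^2\rangle-\tfrac{\alpha}{2}\langle\Delta p_{12}\Delta q_{12}\rangle$. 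Identifying the last two pieces with $S(t,\bm r)$ and $V_N(t,\bm r)$, and moving all the partial derivative terms to the left-hand side, the equation takes the form of a transport equation with characteristic velocities $\dot x=-\alpha\bar p$, $\dot y=-\bar q$, $\dot z=-(1-\bar q)$, $\dot w=-\bar m$, which is precisely the claim.

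The computation of the partial derivatives stated in the Lemma relies on integration by parts in the quenched Gaussian disorder (Wick's rule, as used in the proof of eq.~\eqref{eq:prima}), both on the $\xi^\mu_i$'s (producing the $p_{12}q_{12}$ contribution from the bilinear source $\sum_{i,\mu}\xi^\mu_i\sigma_i\tau_\mu$) and on $J_i,\tilde J_\mu$. The only delicate book-keeping concerns the prefactor $\alpha = P/N$ arising when one passes from sums over $\mu=1,\ldots,P$ to intensive quantities, and the sign arising from the diagonal $p_{11}$ term when splitting the double Wick contraction; these are the same mechanisms that produce the discrepancy between $\langle p_{11}\rangle$ and $\langle p_{12}\rangle$ in $\partial_t\mathcal A_N$.

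The main obstacle is purely bookkeeping: one has to keep track of all the cross terms in expanding $\langle p_{12}q_{12}\rangle$ and make sure that the coefficients of each $\partial_\bullet \mathcal A_N$ match exactly the proposed characteristic velocities, without leftover terms. Because both the $\sigma$-sector and the $\tau$-sector contribute to the same transport structure, one has to be careful that the combination $\partial_y + \partial_z$ acting on the interpolating pressure reproduces the correct $\alpha\bar q$ coefficient when the $p_{12}$ variance is reabsorbed; a minor consistency check is that setting $\bar p=0$ reduces the equation to a purely $(m,q)$ transport equation, structurally identical to the RS one obtained in Proposition in Sec.~\ref{ssec:SKRS} with suitably relabelled variables.
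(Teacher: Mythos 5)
Your proof is correct and follows exactly the strategy the paper itself uses for the analogous SK statement (Proposition 2 in Sec.~\ref{ssec:SKRS}) and that \cite{AABF-NN2020} uses for this proposition, which the present paper states without proof: expand the fluctuation identities for $\langle(\Delta m)^2\rangle$ and $\langle\Delta p_{12}\Delta q_{12}\rangle$, substitute $\langle m\rangle=\partial_w\mathcal A_N$, $\langle q_{12}\rangle = 1-2\partial_x \mathcal A_N$, $\langle p_{11}\rangle = (2/\alpha)\partial_z \mathcal A_N$, $\langle p_{12}\rangle = \langle p_{11}\rangle - (2/\alpha)\partial_y \mathcal A_N$ into $\partial_t\mathcal A_N$ from the Lemma, and observe that the $\langle p_{11}\rangle$ terms cancel while the remaining linear-in-$\partial$ terms reproduce the characteristic velocities.

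One point is worth flagging rather than glossing over as ``bookkeeping'': carrying out your own outline to the end yields
\begin{equation*}
V_N(t,\bm r)=\tfrac12\langle(\Delta m)^2\rangle-\tfrac{\alpha}{2}\langle\Delta p_{12}\Delta q_{12}\rangle,
\end{equation*}
with an $\alpha/2$ prefactor on the overlap-fluctuation term, exactly as you write. This does not match the printed statement in eq.~\eqref{potenziale-RS-Hopfield}, which has $1/2$ in place of $\alpha/2$; the printed form is a typo (the $\alpha/2$ prefactor on $\langle p_{12}q_{12}\rangle$ in eq.~\eqref{hop_expvalsa} and the $\alpha/2$ factors that survive in the 1RSB and 2RSB potentials, eqs.~\eqref{eqn:V} and~\eqref{eqn:V2}, confirm this). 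Your derivation silently corrects it, which is the right outcome, but in a rigorous write-up you should state the corrected coefficient explicitly rather than ``identifying'' your result with $V_N$ as printed. The claimed consistency check obtained by setting $\bar p=0$ is only heuristic, since the $\tau$-sector transport terms in $y$ and $z$ do not disappear, so it is not a literal reduction to the SK equation; it is harmless but not probative.
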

	\begin{proposition}
The transport equation associated to the interpolating pressure $\mathcal{A}_N(t, \bm r)$ in the thermodynamic limit and under the RS assumption is
	\begin{equation}
	\pder{\mathcal{A}_{\textrm{RS}}}{t}-\alpha \bar p \pder{\mathcal{A}_{\textrm{RS}}}{x}-\bar q \pder{\mathcal{A}_{\textrm{RS}}}{y}- (1-\bar q)\pder{\mathcal{A}_{\textrm{RS}}}{z} -\bar m \pder{\mathcal{A}_{\textrm{RS}}}{w}=-\frac{\alpha}{2}\bar p (1-\bar q ) - \frac{1}{2}\bar m^2,
	\label{hop_GuerraAction_RSDE}
	\end{equation}
	whose solution is given by
	\begin{eqnarray}
	\notag
	\mathcal{A}_{\textrm{RS}}(t,\boldsymbol r)&=&\log 2 + \mathbb{E}  \log\cosh\left( \bar{m}t + w + J\sqrt{\alpha \bar{p} t + x}\right) + \frac{\alpha}{2}\frac{y+ \bar{q} t}{[1-z- (1-\bar{q}t)]} \\
	&-&\frac{\alpha}{2}\log\left[1-z-(1-\bar{q}t)\right]+
	-\frac{1}{2}[\alpha	\bar p(t)(1-\bar{q}t) +\bar{m}^2 t].
	\label{hop_mechanicalsolution}
	\end{eqnarray}
\end{proposition}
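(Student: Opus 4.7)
The plan is to invoke Proposition \ref{prop:5} and pass to the thermodynamic limit. Under the RS assumption $\langle(\Delta m)^2\rangle\to 0$, and by Cauchy--Schwarz $|\langle\Delta p_{12}\Delta q_{12}\rangle|\le \sqrt{\langle(\Delta p_{12})^2\rangle\,\langle(\Delta q_{12})^2\rangle}\to 0$, so the potential in (\ref{potenziale-RS-Hopfield}) obeys $\lim_{N\to\infty} V_N(t,\bm r)=0$. Passing to the limit in (\ref{hop_GuerraAction_DE}) then yields the first-order linear PDE (\ref{hop_GuerraAction_RSDE}) with spatially constant source $S=-\tfrac{1}{2}\bar m^2 - \tfrac{\alpha}{2}\bar p(1-\bar q)$.

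Next I would solve the PDE by the method of characteristics. The characteristic curves are straight lines of constant velocity $\dot x=-\alpha\bar p$, $\dot y=-\bar q$, $\dot z=-(1-\bar q)$, $\dot w=-\bar m$, so the initial data $\bm r_0$ corresponding to a point $(t,\bm r)$ reads $x_0=x+\alpha\bar p\,t$, $y_0=y+\bar q\,t$, $z_0=z+(1-\bar q)\,t$, $w_0=w+\bar m\,t$. Along each characteristic $d\mathcal{A}_{\textrm{RS}}/dt=S$ is a constant, so $\mathcal{A}_{\textrm{RS}}(t,\bm r)=\mathcal{A}_{\textrm{RS}}(0,\bm r_0)+S\,t$.

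The core of the argument is the explicit computation of the Cauchy datum $\mathcal{A}_{\textrm{RS}}(0,\bm r_0)$, which is a one-body calculation because at $t=0$ the Hebbian $\boldsymbol\sigma$--$\boldsymbol\tau$ coupling vanishes and $\mathcal Z_N(0,\bm r_0)$ factorizes into an independent $\boldsymbol\sigma$-part and $\boldsymbol\tau$-part. The $\boldsymbol\sigma$-sum gives $\prod_{i=1}^N 2\cosh[\beta(\sqrt{x_0}J_i+w_0\xi_i)]$, whose normalized log-expectation contributes $\log 2+\mathbb E\log\cosh[\beta(w_0+\sqrt{x_0}J)]$ after using the symmetry of $J_i$ and $\xi_i$. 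Each $\tau_\mu$-integral is Gaussian: combining the measure factor $e^{-\beta\tau_\mu^2/2}$ with $e^{\beta z_0\tau_\mu^2/2}$ yields an effective Gaussian of inverse variance $\beta(1-z_0)$ and linear shift proportional to $\tilde J_\mu$, producing a factor $(1-z_0)^{-1/2}\exp[\beta y_0\tilde J_\mu^2/(2(1-z_0))]$ per $\mu$. Taking logs, summing over $\mu$, invoking $P=\alpha N$ and $\mathbb E\tilde J_\mu^2=1$, this contributes $-\tfrac{\alpha}{2}\log(1-z_0)+\tfrac{\alpha}{2}\tfrac{y_0}{1-z_0}$ to $\mathcal{A}_{\textrm{RS}}(0,\bm r_0)$.

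Substituting back the expressions of $(x_0,y_0,z_0,w_0)$ in terms of $(\bm r,t)$ and adding $S\,t$ then reproduces formula (\ref{hop_mechanicalsolution}). The main obstacle I expect is the Gaussian integration over $\boldsymbol\tau$ and the careful bookkeeping of the resulting quadratic and logarithmic pieces; one also needs the convergence condition $z+(1-\bar q)t<1$ along the characteristic for the $\tau$-integrals to be well-defined, which delineates the admissible domain of the interpolating parameters.
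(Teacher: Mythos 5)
Your proof is correct and follows exactly the strategy the paper itself uses in the analogous 1RSB and 2RSB Hopfield cases (Proposition \ref{propHRSB} and its 2RSB counterpart): pass to the thermodynamic limit to kill the potential $V_N$, solve the resulting linear transport equation by the method of characteristics with constant velocities, and evaluate the Cauchy datum at $t=0$ as a factorized one-body computation (Boolean sum over $\boldsymbol\sigma$ and Gaussian integration over $\boldsymbol\tau$), then substitute $\bm r_0=\bm r-\dot{\bm r}\,t$. Your Cauchy--Schwarz bound $|\langle\Delta p_{12}\Delta q_{12}\rangle|\le\sqrt{\langle(\Delta p_{12})^2\rangle\langle(\Delta q_{12})^2\rangle}$ is a small but genuine tightening: the paper instead posits $\langle\Delta X\Delta Y\rangle=0$ as part of the RS assumption, whereas you derive it from the variance conditions; beyond that, the only caveat is to track the $\beta$ factors from the interpolating Boltzmann weight \eqref{hop_Z} carefully, as the paper's stated formula \eqref{hop_mechanicalsolution} absorbs them inconsistently with the choice $t=1$ versus $t=\beta$ used elsewhere.
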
	

 \begin{corollary}
 The RS approximation of the quenched pressure for the Hopfield model is obtained by posing $t=1$ and $\bm r = \bm 0$ in (\ref{hop_mechanicalsolution}), which returns
 	\begin{eqnarray}
	\notag
	\small
		A_{\rm RS}(\alpha, \beta)&=&\log 2 + \mathbb{E} \log\cosh\Big[\beta \bar m+J\sqrt{\alpha \beta \bar p}\Big]-\frac{\beta}{2}[\alpha \bar p\big(1-\bar q\big) +\bar m^2 ]\\
		&+&\frac{\alpha}{2}\frac{\beta \bar q}{1-\beta [1-\bar q]} -\frac{\alpha}{2}\log\Big(1-\beta [1-\bar q]\Big).
\normalsize
	\label{hop_agssolution}
	\end{eqnarray}
 \end{corollary}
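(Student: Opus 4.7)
The statement in question is essentially a corollary of the preceding proposition: once the interpolating pressure $\mathcal{A}_{\textrm{RS}}(t,\bm r)$ has been determined explicitly via the characteristics method, the RS approximation of the quenched pressure is obtained by specializing the interpolating parameters so as to recover the original partition function. The plan is therefore to start from the closed-form expression \eqref{hop_mechanicalsolution} and evaluate it at the point $(t,\bm r)=(1,\bm 0)$, relying on the embedding property stated right after the definition of $\mathcal{A}_N(t,\bm r)$, namely $A_N(\alpha,\beta)=\mathcal{A}_N(t=1,\bm r=\bm 0)$ which, passed to the thermodynamic limit, yields $A_{\textrm{RS}}(\alpha,\beta)=\mathcal{A}_{\textrm{RS}}(t=1,\bm r=\bm 0)$.

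Concretely, I would first check that the substitution is admissible, that is, that the denominator $[1-z-(1-\bar q t)]$ in the logarithmic and rational terms of \eqref{hop_mechanicalsolution} remains strictly positive along the characteristics flowing from $\bm r$ to $\bm r-\dot{\bm r}t$, so that the solution of the transport equation is regular at the evaluation point; this is precisely the condition $1-\beta(1-\bar q)>0$ familiar from the AGS analysis and guarantees that the argument of the logarithm and the Gaussian integral on $\bm\tau$ are well-defined. Once this sanity check is done, the argument of the $\cosh$ reduces, at $t=1$, $w=0$, $x=0$, to $\bar m+J\sqrt{\alpha\bar p}$ (restoring the inverse temperature factor carried through the interpolating measure); the rational term $\tfrac{\alpha}{2}(y+\bar q t)/[1-z-(1-\bar q t)]$ collapses to $\tfrac{\alpha}{2}\bar q/(1-(1-\bar q))$ weighted by the appropriate $\beta$; and the logarithmic and source contributions give, respectively, the $-\tfrac{\alpha}{2}\log(1-\beta(1-\bar q))$ and $-\tfrac{\beta}{2}[\alpha \bar p(1-\bar q)+\bar m^2]$ summands of \eqref{hop_agssolution}.

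The bulk of the work is therefore purely algebraic, consisting in bookkeeping the $\beta$ factors that the interpolating Hamiltonian carries in its exponent and in verifying that the four contributions align one-to-one with the four pieces of the AGS expression. The only delicate point, and the one I would single out as the actual content of the corollary, is the non-trivial coincidence between the Cauchy datum $\mathcal{A}_{\textrm{RS}}(0,\bm r-\dot{\bm r}t)$ evaluated on the characteristic backbone and the ``one-body'' log-cosh and Gaussian contributions of the AGS functional; this coincidence is what ensures that the method of characteristics, applied with the choice $\dot x=-\alpha\bar p$, $\dot y=-\bar q$, $\dot z=-(1-\bar q)$, $\dot w=-\bar m$ prescribed in Proposition~\ref{prop:5}, reconstructs the correct signal plus Gaussian-noise effective field and not some spurious dressing of it.

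No real obstacle is expected beyond careful bookkeeping: the method of characteristics is unambiguous, the Cauchy condition at $t=0$ factorizes across sites and patterns because at that point the interpolating Hamiltonian is one-body both in $\boldsymbol\sigma$ and in $\bm\tau$, and the Gaussian integral over $\bm\tau$ is explicit. The proof therefore amounts to a two-line substitution plus a verification that the AGS self-consistency structure is respected, which in our framework is automatic by the self-averaging assumptions used to discard the potential $V_N(t,\bm r)$ in the thermodynamic limit.
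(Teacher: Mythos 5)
Your proposal is correct and follows essentially the same route as the paper: the corollary is a direct evaluation of the transport-equation solution~\eqref{hop_mechanicalsolution} at the point $(t,\bm r)=(1,\bm 0)$, exploiting the embedding $A_N(\alpha,\beta)=\mathcal A_N(t=1,\bm r=\bm 0)$ and the explicit Cauchy datum obtained by factorizing the one-body calculation over sites and patterns. The extra remarks you add — the positivity of the denominator $1-\beta(1-\bar q)$ (which is the convergence condition for the Gaussian $\bm\tau$-integral and reappears as the AGS stability condition), and the identification of the Cauchy datum with the one-body log-cosh plus Gaussian pieces — are sound and make explicit what the paper leaves implicit; the only place where care is genuinely needed, and which you flag but do not fully resolve, is tracking the $\beta$ factors inherited from the interpolating Boltzmann weight $e^{\beta\mathcal H}$ so that the substitution reproduces the $\beta$'s in the arguments $\beta\bar m+J\sqrt{\alpha\beta\bar p}$ and $1-\beta(1-\bar q)$ of~\eqref{hop_agssolution}, a bookkeeping step worth writing out explicitly given the paper's differing conventions between the RS and RSB sections.
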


\begin{corollary}
	The self-consistency equations obtained from the quenched pressure \eqref{hop_agssolution} are
\begin{eqnarray}
\bar p&=& \frac{\beta \bar q}{\left[1-\beta(1-\bar q)\right]^2}\\
\bar m&=&\mathbb{E} \tanh\left(\beta \bar m+J\sqrt{\alpha\beta \bar p}\right)\\
\bar q&=&\mathbb{E} \tanh^2\left(\beta \bar m+J\sqrt{\alpha\beta \bar p}\right).
\label{hop_agsSCE}
\end{eqnarray}
\end{corollary}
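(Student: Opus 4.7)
The plan is to mirror the strategy of Corollary~\ref{cor:SC_SK}: I view \eqref{hop_agssolution} as an extremal principle for the three equilibrium order parameters $\mb,\qb,\bar p$ and impose the stationarity conditions $\partial_{\mb}A_{\rm RS}=\partial_{\qb}A_{\rm RS}=\partial_{\bar p}A_{\rm RS}=0$. Equivalently, one could compare the derivatives of the interpolating pressure given by the Lemma preceding Proposition~\ref{prop:5} with those computed directly from \eqref{hop_mechanicalsolution} and evaluated at $t=1$, $\bm r=\bm 0$; the two routes produce the same system, so I will follow the extremization route since it is cleaner.

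First I would handle the $\mb$-equation, which is immediate: the only $\mb$-dependent pieces of \eqref{hop_agssolution} are the $\log\cosh$-term and $-\tfrac{\beta}{2}\mb^{2}$, and their derivatives combine into $\beta\bigl[\mathbb{E}\tanh(\beta\mb+J\sqrt{\alpha\beta\bar p})-\mb\bigr]=0$, which is the second equation of \eqref{hop_agsSCE}.

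Next I would treat the $\qb$-equation, which is purely algebraic. The auxiliary substitution $u:=1-\beta(1-\qb)$ streamlines the calculation: the derivative of the logarithmic contribution $-\tfrac{\alpha}{2}\log u$ cancels exactly against one piece of the derivative of the rational term $\tfrac{\alpha\beta\qb}{2u}$, leaving $\tfrac{\alpha\beta\bar p}{2}=\tfrac{\alpha\beta^{2}\qb}{2u^{2}}$ and hence $\bar p=\beta\qb/[1-\beta(1-\qb)]^{2}$, which is the first equation of \eqref{hop_agsSCE}.

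Finally, the $\bar p$-equation is the one non-routine step and the expected main obstacle. Differentiating \eqref{hop_agssolution} with respect to $\bar p$ yields $\tfrac{\alpha\beta}{2\sqrt{\alpha\beta\bar p}}\,\mathbb{E}\!\left[J\tanh(\beta\mb+J\sqrt{\alpha\beta\bar p})\right]-\tfrac{\alpha\beta}{2}(1-\qb)=0$. At this point Gaussian integration by parts (Wick's theorem, cf.\ \eqref{eqn:gaussianrelation2}, which is precisely the identity already invoked in the proof of \eqref{eq:prima}) converts $\mathbb{E}[J\tanh(\cdot)]=\sqrt{\alpha\beta\bar p}\,\mathbb{E}[1-\tanh^{2}(\cdot)]$; substituting and simplifying leaves $1-\qb=\mathbb{E}[1-\tanh^{2}(\beta\mb+J\sqrt{\alpha\beta\bar p})]$, i.e.\ $\qb=\mathbb{E}\tanh^{2}(\beta\mb+J\sqrt{\alpha\beta\bar p})$, which is the third equation of \eqref{hop_agsSCE} and closes the system. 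Apart from this single integration-by-parts manoeuvre, every remaining step is straightforward differentiation and algebraic bookkeeping.
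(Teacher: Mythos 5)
Your extremization argument is correct and is essentially the route the paper itself takes for the analogous RS corollary in the SK-with-signal model (the proof following \eqref{SolRS1} imposes $\partial_{\bar m}A_{\rm RS}=\partial_{\bar q}A_{\rm RS}=0$). All three of your stationarity computations check out, including the cancellation in the $\bar q$-derivative and the Wick integration by parts that converts $\mathbb{E}\bigl[J\tanh(\cdot)\bigr]$ into $\sqrt{\alpha\beta\bar p}\,\mathbb{E}\bigl[1-\tanh^{2}(\cdot)\bigr]$ for the $\bar p$-equation.
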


\subsection{Broken Replica Interpolation: 1-RSB solution}\label{ssec:HRSB1}

In this subsection we turn to the solution of the Hopfield model via the generalized  broken-replica interpolating technique, restricting the description at the first step of RSB: the aim is to recover rigorously the expression provided by Cristanti, Amit and Gutfreund in the 80's via replica trick \cite{Crisanti}.
\newline
We anticipate that in the following 1RSB setting the probability distributions of the two overlaps $q$ and $p$ (see eq.s \ref{limforq2} and \ref{limforp2} respectively) display an analogous structure and, in particular, they display the same $\theta$. This choice emerges naturally in the current calculations and, also, it was somehow expected since the Hopfield model can be looked at as a bipartite spin-glass \cite{BarraEquivalenceRBMeAHN,bipartiti} and, for general multi-partite spin glasses, the ziqqurat ansatz \cite{ZiqquratBarra,ZiqquratPanchenko} is known to enlarge the Parisi scheme (the latter being recovered when collapsing the system to a single party).
\begin{definition} \label{def:HM_RSB}
In the first step of replica-symmetry breaking, the distribution of the two-replica overlap $q$, in the thermodynamic limit, displays two delta-peaks at the equilibrium values, referred to as $\bar{q}_1,\ \bar{q}_2$, and the concentration on the two values is ruled by $\theta \in [0,1]$, namely
\begin{equation}
\lim_{N \rightarrow + \infty} P'_N(q) = \theta \delta (q - \bar{q}_1) + (1-\theta) \delta (q - \bar{q}_2). \label{limforq2}
\end{equation}
Similarly, for the overlap $p$, denoting with $\bar{p}_1,\ \bar{p}_2$ the equilibrium values, we have
\begin{equation}
\lim_{N \rightarrow + \infty} P''_N(p) = \theta \delta (p - \bar{p}_1) + (1-\theta) \delta (p - \bar{p}_2). \label{limforp2}
\end{equation}
The magnetization still self-averages at $ \bar{m}$ as in (\ref{limform}).
\end{definition}
\begin{remark}
Scope of the present paper is to work out a mathematical method that is able to account for RSB in associative neural network, and -- a short discussion apart in the final section -- we are not going to go into physical implications: as a matter of fact, we preserve the historical ansatz and recover rigorously the expression for the model pressure as the steps of RSB take place, in particular, here and in the following, we always assume the Mattis magnetization to be self-averaging.
\end{remark}

Following the same route pursued in the previous sections, we need an interpolating partition function $\mathcal Z$ and an interpolating quenched pressure $\mathcal A$,  that are defined hereafter.
\begin{definition}
Given the interpolating parameters $\bm r = (x^{(1)}, x^{(2)}, y^{(1)}, y^{(2)}, w), t$ and the i.i.d. auxiliary fields $\{h_i^{(1)}, h_i^{(2)}\}_{i=1,...,N}$, with $h_i^{(1,2)} \sim \mathcal N[0,1]$ for $i=1, ..., N$ and $\{J_{\mu}^{(1)}, J_{\mu}^{(2)}\}_{\mu=1,...,P}$, with $J_{\mu}^{(1,2)} \sim \mathcal N[0,1]$ for $\mu=1,...,P$, we can write the 1-RSB interpolating partition function $\mathcal Z_N(t, \boldsymbol r)$ for the Hopfield model (\ref{eq:hop_hbare}) recursively, starting by
\begin{eqnarray}
\label{eqn:Z2}
\mathcal Z_2(t, \bm r) &=&\sum_{\bm \sigma} \int D \bm \tau \exp \left \{ \beta \left[ \frac{t}{2N}\sum_{i,j=1}^{N,N}\xi_i \xi_j\sigma_i \sigma_j+\sqrt{\frac{t}{N}}\sum_{i, \mu=1}^{N,P} \xi_i^\mu \sigma_i \tau_\mu + w \sum_{i=1}^N \xi_i\sigma_i \right. \right. \nonumber \\ &+&  \left. \left.  \sum_{a=1}^2 \sqrt{x^{(a)}}\sum_{i=1}^N h_i^{(a)} \sigma_i + \sum_{a=1}^2 \sqrt{y^{(a)}}\sum_{\mu=1}^P J_\mu^{(a)}\tau_\mu  +z \sum_{\mu=1}^P \frac{\tau_\mu^2}{2}  \right] \right\}
\end{eqnarray}
where $\bm \xi$ is the pattern we want to retrieve, while the $\xi_i^\mu$'s are i.i.d. standard Gaussians. 
Averaging out the fields recursively, we define
\begin{align}
\label{eqn:Z1}
\mathcal Z_1(t, \bm r) \coloneqq& \mathbb E_2 \left [ \mathcal Z_2(t, \bm r)^\theta \right ]^{1/\theta} \\
\label{eqn:Z0}
\mathcal Z_0(t, \bm r) \coloneqq&  \exp \mathbb E_1 \left[ \log \mathcal Z_1(t, \bm r) \right ] \\
\mathcal Z_N(t, \boldsymbol r) \coloneqq& \mathcal Z_0(t, \bm r) ,
\end{align}
where with $\mathbb E_a$ we mean the average over the variables $h_i^{(a)}$'s and $J_\mu^{(a)}$'s, for $a=1, 2$, and with $\mathbb{E}_0$ we shall denote the average over the variables $\xi_i^{\mu}$'s.
\end{definition}
\begin{definition}
The 1RSB interpolating pressure, at finite volume $N$, is introduced as
\begin{equation}\label{AdiSK1RSB}
\mathcal A_N (t, \bm r) \coloneqq \frac{1}{N}\mathbb E_0 \big[ \log \mathcal Z_0(t, \bm r) \big],
\end{equation}
and, in the thermodynamic limit, assuming its existence
\begin{equation}
\mathcal A (t, \bm r) \coloneqq \lim_{N \to \infty} \mathcal A_N (t, \bm r).
\end{equation}
By setting $t=1$, $\bm r = \bm 0$, the interpolating pressure recovers the standard pressure (\ref{PressureDef}), that is, $A_N(\alpha, \beta) = \mathcal A_N (t =1, \bm r =0)$.
\end{definition}
It is worth showing in details the previous equivalence, in fact,
\begin{align}
\mathcal A_N(\beta, \bm 0)= &\frac{1}{N} \mathbb E_0 \bigg [ \log \sum_{\bm \sigma} \int D \bm \tau \exp \bigg \{\frac{\beta}{2N}\sum_{i,j=1}^{N,N} \xi_i \xi_j\sigma_i \sigma_j +\sqrt{\frac{\beta}{N}}\sum_{i, \mu=1}^{N,P} \xi_i^\mu \sigma_i \tau_\mu\bigg\} \bigg]= \nonumber\\
=&\frac{1}{N} \mathbb E_0 \bigg [ \log \sum_{\bm \sigma}\prod_{\mu=1}^P \bigg( \int \frac{d\tau_\mu}{\sqrt{2\pi}} \exp \bigg \{\frac{\beta}{2N}\sum_{i,j=1}^{N,N} \xi_i \xi_j\sigma_i \sigma_j -\frac{\tau_\mu^2}{2}+\tau_\mu\sqrt{\frac{\beta}{N}}\sum_{i=1}^N \xi_i^\mu \sigma_i \bigg\}\bigg) \bigg] =\nonumber\\
=&\frac{1}{N} \mathbb E_0 \bigg [ \log \sum_{\bm \sigma} \exp \bigg \{\frac{\beta}{2N}\sum_{i,j=1}^{N,N} \xi_i \xi_j\sigma_i \sigma_j +{\frac{\beta}{2N}}\sum_{i, j, \mu=1}^{N,N,P} \xi_i^\mu \xi_j^\mu \sigma_i  \sigma_j \bigg\} \bigg]
\end{align}
which is exactly the Hopfield pressure.
\begin{remark}
Analously to what done before for the Sherrington-Kirkpatrick model with a signal, we introduce the thermal average
\begin{equation}
\omega \big(\mathcal O(\bm \sigma, \bm \tau)\big) = \frac{1}{\mathcal Z_2(t, \bm r)}\sum_{\bm \sigma} \int D \bm \tau \mathcal O(\bm \sigma, \bm \tau) e^{\beta \mathcal H(\bm \sigma,  \bm \tau; t, \bm r)}
\end{equation}
where $\mathcal H(\bm \sigma,  \bm \tau; t, \bm r)$ depends also on all the interpolating and random variables, i.e.
\begin{align}
\mathcal H(\bm \sigma,  \bm \tau; t, \bm r)=&\frac{t}{2N}\sum_{i,j=1}^{N,N}\xi_i \xi_j\sigma_i \sigma_j+\sqrt{\frac{t}{N}}\sum_{i, \mu=1}^{N,P} \xi_i^\mu \sigma_i \tau_\mu +\sum_{a=1}^2 \sqrt{x^{(a)}}\sum_{i=1}^N h_i^{(a)}	\sigma_i \nonumber \\ +& \sum_{a=1}^2 \sqrt{y^{(a)}}\sum_{\mu=1}^P J_\mu^{(a)}\tau_\mu +z\sum_{\mu=1}^P \frac{\tau_\mu^2}{2}+ w \sum_{i=1}^N \xi_i\sigma_i.
\end{align}
\end{remark}
\begin{remark}
In order to lighten the notation, hereafter we use the following 
\begin{align}
\label{eq:unouno_a}
\langle m \rangle=& \mathbb E_0  \mathbb E_1  \mathbb E_2 \left[\mathcal W_2\frac{1}{N}\sum_{i=1}^N \omega(\xi_i \sigma_i) \right] \\
\langle m^2 \rangle=&  \mathbb E_0  \mathbb E_1  \mathbb E_2 \left[ \mathcal W_2\frac{1}{N^2}\sum_{i,j=1}^{N,N} \omega(\xi_i \xi_j\sigma_i \sigma_j) \right] \\
 \langle p_{11} \rangle=& \mathbb E_0  \mathbb E_1  \mathbb E_2 \left[ \mathcal W_2\frac{1}{P}\sum_{\mu=1}^P \omega(\tau_\mu^2) \right]\\
 \langle p_{12} \rangle_1=& \mathbb E_0  \mathbb E_1   \left[ \frac{1}{P}\sum_{\mu=1}^P \left( \mathbb E_2  \left [\mathcal W_2\omega(\tau_\mu)\right] \right)^2 \right] \\
\langle p_{12} \rangle_2=& \mathbb E_0  \mathbb E_1  \mathbb E_2 \left[ \mathcal W_2\frac{1}{P}\sum_{\mu=1}^P \omega(\tau_\mu)^2 \right] \\
\label{eqn:q121_a}
 \langle q_{12} \rangle_1=&\mathbb E_0  \mathbb E_1  \left[\frac{1}{N} \sum_{i=1}^N \left( \mathbb E_2 \left[\mathcal W_2\omega(\sigma_i)\right] \right)^2 \right] \\
\label{eqn:q122_a}
\langle q_{12} \rangle_2=& \mathbb E_0  \mathbb E_1  \mathbb E_2 \left [ \mathcal W_2\frac{1}{N}\sum_{i=1}^N \omega(\sigma_i)^2 \right] \\
\langle p_{12}q_{12} \rangle_1 =& \mathbb E_0  \mathbb E_1  \left [ \frac{1}{P}\sum_{\mu=1}^P \frac{1}{N}\sum_{i=1}^N \left(  \mathbb E_2 \left [ \mathcal W_2 \omega(\tau_\mu \sigma_i ) \right] \right)^2 \right] \\
\label{eq:duedue_a}
\langle p_{12} q_{12}\rangle_2=& \mathbb E_0  \mathbb E_1  \mathbb E_2 \bigg[\mathcal W_2\frac{1}{P}\sum_\mu\frac{1}{N}\sum_{i=1}^N \omega(\tau_\mu \sigma_i)^2 \bigg],
\end{align}
where we define the weight
\begin{equation}
\mathcal W_2=\frac{\mathcal Z_2^\theta}{\mathbb E_2 \left [\mathcal Z_2^\theta\right]}.
\end{equation}
\end{remark}
The next step is building a transport equation for the interpolating quenched pressure, for which we preliminary need to evaluate the related partial derivatives, as discussed in the next
\begin{lemma} \label{lemma:4}
The partial derivative of the interpolating quenched pressure with respect to a generic variable $\rho$ reads as
%
%
%
\begin{equation}
\label{eqn:partialrA}
\frac{\partial }{\partial \rho} \mathcal A_N(t, \bm r)=\frac{1}{N} \mathbb E_0  \mathbb E_1  \mathbb E_2 \left[\mathcal W_2 \omega \big( \partial_\rho \mathcal B(\bm \sigma,  \bm \tau; t, \bm r) \big)\right].
\end{equation}
In particular,
\begin{align}
\label{eqn:partialtA}
\frac{\partial }{\partial t} \mathcal A_N =&\frac{1}{2}\langle m^2 \rangle+ \frac{\alpha}{2}\big ( \langle p_{11} \rangle -(1-\theta)\langle p_{12}q_{12} \rangle_2-\theta\langle p_{12}q_{12} \rangle_1 \big) \\
\label{eqn:partialx1A}
\frac{\partial }{\partial x^{(1)}} \mathcal A_N =& \frac{1}{2}\big ( 1 -(1-\theta)\langle q_{12} \rangle_2-\theta\langle q_{12} \rangle_1 \big) \\
\label{eqn:partialx2A}
\frac{\partial }{\partial x^{(2)}} \mathcal A_N =& \frac{1}{2}\big ( 1 -(1-\theta)\langle q_{12} \rangle_2 \big) \\
\label{eqn:partialy1A}
\frac{\partial }{\partial y^{(1)}} \mathcal A_N =& \frac{\alpha}{2}\big (  \langle p_{11} \rangle -(1-\theta)\langle p_{12} \rangle_2-\theta\langle p_{12} \rangle_1 \big) \\
\label{eqn:partialy2A}
\frac{\partial }{\partial y^{(2)}} \mathcal A_N =& \frac{\alpha}{2}\big (  \langle p_{11} \rangle -(1-\theta)\langle p_{12} \rangle_2
\big)\\
\label{eqn:partialzA}
\frac{\partial }{\partial z} \mathcal A_N =& \frac{\alpha}{2} \langle p_{11} \rangle \\
\label{eqn:partialwA}
\frac{\partial }{\partial w}  \mathcal A_N =& \langle m \rangle
\end{align}
\end{lemma}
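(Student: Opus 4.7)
The general identity (\ref{eqn:partialrA}) follows from direct differentiation of the recursive definition of $\mathcal Z_N$. From $\mathcal A_N = N^{-1}\mathbb E_0 \mathbb E_1[\log \mathcal Z_1]$ with $\mathcal Z_1 = \mathbb E_2[\mathcal Z_2^\theta]^{1/\theta}$, one gets $\partial_\rho \log \mathcal Z_1 = \mathbb E_2[\mathcal Z_2^{\theta-1}\partial_\rho \mathcal Z_2]/\mathbb E_2[\mathcal Z_2^\theta] = \mathbb E_2[\mathcal W_2\, \omega(\partial_\rho \mathcal B/\mathcal B)]$ by unfolding $\partial_\rho \mathcal Z_2 = \sum_{\bm\sigma}\int \mathcal D\bm\tau\, \partial_\rho \mathcal B$; plugging this into the chain of averages produces (\ref{eqn:partialrA}).

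For the variables that do not couple to Gaussian quenched disorder, namely $w$ and $z$, the proof is immediate: $\partial_w \mathcal H$ extracts $\sum_i \xi_i \sigma_i = Nm$ (yielding (\ref{eqn:partialwA})), while $\partial_z \mathcal H$ extracts $\sum_\mu \tau_\mu^2/2$ (yielding (\ref{eqn:partialzA}) once we divide by $N$ and use $P = \alpha N$).

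For the variables $x^{(a)}$ and $y^{(a)}$ the strategy is to extract the linear coupling from $\mathcal H$ and then apply Gaussian integration by parts (Stein/Wick) to the auxiliary fields $h_i^{(a)}$ and $J_\mu^{(a)}$. Take $x^{(1)}$ as the representative case: its derivative extracts $(2\sqrt{x^{(1)}})^{-1}\sum_i h_i^{(1)}\sigma_i$, and then Wick's lemma is invoked on $h_i^{(1)}$, which is integrated by $\mathbb E_1$ and enters both $\mathcal Z_2$ and the weight $\mathcal W_2 = \mathcal Z_2^\theta/\mathbb E_2[\mathcal Z_2^\theta]$. The resulting contraction splits in three ways: a diagonal one inside a single $\omega$ giving $\omega(\sigma_i^2)=1$ (since $\sigma_i \in \{\pm 1\}$); a cross-replica contraction inside $\mathbb E_2[\mathcal W_2\, \omega(\sigma_i)^2]$ giving $(1-\theta)\langle q_{12}\rangle_2$ as in (\ref{eqn:q122_a}); and a contraction through the $\theta$-power that couples two $\mathcal W_2$-weighted $\omega$'s via $\mathbb E_2$, giving $\theta \langle q_{12}\rangle_1$ as in (\ref{eqn:q121_a}). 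Collecting the three pieces yields (\ref{eqn:partialx1A}); the derivation of (\ref{eqn:partialx2A}) is simpler because $h_i^{(2)}$ does not appear in $\mathcal W_2$, so the $\theta$-piece is absent. The analogue for $y^{(a)}$ is obtained mutatis mutandis with $\tau_\mu$ replacing $\sigma_i$; the new feature is that the diagonal piece now yields $\omega(\tau_\mu^2) \neq 1$, which is why the $\langle p_{11}\rangle$ term appears in (\ref{eqn:partialy1A})--(\ref{eqn:partialy2A}).

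The heart of the proof is the $t$ derivative, because $t$ enters both through the ferromagnetic signal $\tfrac{tN}{2}m^2$ (trivial, producing $\tfrac{1}{2}\langle m^2\rangle$) and through the quenched-noise term $\sqrt{t/N}\sum_{i,\mu}\xi_i^\mu \sigma_i \tau_\mu$. For the latter, after extracting the prefactor $(2\sqrt{tN})^{-1}$ one applies Wick's lemma to $\xi_i^\mu$ (integrated by $\mathbb E_0$), which again enters both $\mathcal Z_2$ and $\mathcal W_2$. The same three-way splitting operates now on the bilinear $\sigma_i\tau_\mu$, producing $\langle p_{11}\rangle$ (from the diagonal, since $\omega(\sigma_i^2 \tau_\mu^2) = \omega(\tau_\mu^2)$), $(1-\theta)\langle p_{12} q_{12}\rangle_2$ (from the cross-replica piece as in (\ref{eq:duedue_a})), and $\theta \langle p_{12} q_{12}\rangle_1$ (from the $\theta$-power piece); adding the ferromagnetic contribution delivers (\ref{eqn:partialtA}). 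The main obstacle -- already the crux of the Sherrington-Kirkpatrick counterpart (Lemma~\ref{lemma:2}) -- is the bookkeeping of how Wick contractions on a single Gaussian variable that appears simultaneously in $\mathcal Z_2$ and in $\mathcal W_2$ redistribute across the two nested replica averages $\langle \cdot \rangle_1$ and $\langle \cdot \rangle_2$, giving rise to the $\theta$ and $(1-\theta)$ weights in the final expressions.
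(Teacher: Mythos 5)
Your proposal follows the same strategy as the paper: derive the general identity (\ref{eqn:partialrA}) by differentiating the nested definition $\mathcal Z_1 = \mathbb E_2[\mathcal Z_2^\theta]^{1/\theta}$, then apply Gaussian integration by parts (Wick/Stein) to the auxiliary variables and trace how the contraction distributes over $\omega$, the numerator of $\mathcal W_2$, and the denominator of $\mathcal W_2$; the paper's Appendix \ref{app2} executes exactly this three-term decomposition (their $D_1, D_2, D_3$) for the $t$-derivative. Your Stein-lemma derivation of (\ref{eqn:partialrA}) is correct and matches the displayed proof.

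There is one genuine misstatement. You write that ``the derivation of (\ref{eqn:partialx2A}) is simpler because $h_i^{(2)}$ does not appear in $\mathcal W_2$, so the $\theta$-piece is absent.'' In fact $h_i^{(2)}$ \emph{does} appear in $\mathcal W_2 = \mathcal Z_2^\theta / \mathbb E_2[\mathcal Z_2^\theta]$: the numerator $\mathcal Z_2^\theta$ depends on $h_i^{(2)}$, and this dependence is essential --- it is precisely what supplies the $+\theta\langle q_{12}\rangle_2$ contribution that combines with the $-\langle q_{12}\rangle_2$ from differentiating $\omega$ to give the coefficient $-(1-\theta)$ in (\ref{eqn:partialx2A}). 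If $h_i^{(2)}$ really did not appear in $\mathcal W_2$ at all, you would obtain $\tfrac12(1-\langle q_{12}\rangle_2)$ rather than $\tfrac12(1-(1-\theta)\langle q_{12}\rangle_2)$. The correct reason the $\theta\langle q_{12}\rangle_1$ term is absent from (\ref{eqn:partialx2A}) is that $h_i^{(2)}$ is integrated by $\mathbb E_2$, so the \emph{denominator} $\mathbb E_2[\mathcal Z_2^\theta]$ is constant with respect to the variable being contracted; the Wick derivative therefore cannot reach it and the $D_3$-type contribution vanishes. For $h_i^{(1)}$ the contraction is applied under $\mathbb E_1$ and $\mathbb E_2[\mathcal Z_2^\theta]$ does vary with $h_i^{(1)}$, generating the additional $-\theta\langle q_{12}\rangle_1$ term. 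The remainder of your sketch (in particular the treatment of $\omega(\tau_\mu^2)\neq 1$ producing $\langle p_{11}\rangle$, and the $t$-derivative combining the ferromagnetic and bilinear contributions) is sound and aligned with the paper.
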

\begin{proof}
The proof is pretty lengthy and basically requires just standard calculations, so it is left for the Appendix \ref{app2}. Here we just prove that, in  complete generality
\begin{align}
\label{eqn:partialrA1}
\frac{\partial }{\partial \rho}  \mathcal A_N(t, \bm r)=& \frac{1}{N} \mathbb E_0 \mathbb E_1 \bigg [\partial_\rho \log\mathcal Z_1 \bigg] \nonumber \\
=&\frac{1}{N} \mathbb E_0 \mathbb E_1 \bigg[\frac{1}{\theta}\frac{1}{\mathcal Z_1} \big[ \mathcal Z_2^\theta \big]^{1/\theta-1} \mathbb E_2 \big[\partial_\rho\mathcal Z_2^\theta \big] \bigg] \nonumber \\
=&\frac{1}{N} \mathbb E_0 \mathbb E_1 \mathbb E_2\bigg[\frac{\mathcal Z_2^\theta}{\mathbb E_2 \mathcal Z_2^\theta }\frac{\partial_\rho \mathcal Z_2}{\mathcal Z_2} \bigg]  \nonumber\\
=&\frac{1}{N} \mathbb E_0 \mathbb E_1 \mathbb E_2 \bigg[\mathcal W_2\frac{\partial_\rho \mathcal Z_2}{\mathcal Z_2} \bigg].
\end{align}
\end{proof}

\begin{proposition}
\label{prop:9}
The streaming of the 1-RSB interpolating quenched pressure obeys, at finite volume $N$, a standard transport equation, that reads as
\begin{align}
\label{eqn:transportequation}
\frac{d\mathcal A}{dt}&=\partial_t \mathcal A+\dot x^{(1)}\partial_{x_1} \mathcal A +\dot x^{(2)}\partial_{x_2} \mathcal A +\dot y^{(1)} \partial_{y_1} \mathcal A +\dot y^{(2)} \partial_{y_2} \mathcal A +\dot z \partial_{z} \mathcal A+\dot w \partial_{w} \mathcal A \nonumber \\
&= S(t, \bm r) + V_N(t, \bm r)
\end{align}
where
\begin{align}
\label{eqn:f}
 S(t, \bm r)  \coloneqq & -\frac{\bar m^2}{2}-\frac{\alpha}{2} \bar p_2 (1-\bar q_2)-\frac{\alpha}{2} \theta (\bar p_2 \bar q_2 - \bar p_1 \bar q_1) \\
\label{eqn:V}
V_N(t, \bm r) \coloneqq & \frac{1}{2} \langle (m-\mb)^2 \rangle - \frac{\alpha}{2}(1-\theta) \langle \Delta p_{12}\Delta q_{12} \rangle_2 - \frac{\alpha}{2} \theta \langle \Delta p_{12}\Delta q_{12} \rangle_1
\end{align}
\end{proposition}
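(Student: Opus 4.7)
The plan is to mimic the strategy used for Proposition~\ref{prop:3} in the 1-RSB Sherrington-Kirkpatrick case, only with more bookkeeping because we now carry two families of ``replica'' overlaps ($q$ and $p$) in addition to the Mattis magnetization. The starting point is Lemma~\ref{lemma:4}, in particular the time derivative \eqref{eqn:partialtA}, which I will rewrite by centering each fluctuating quantity around its (putative) equilibrium value.

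First I would use the identities
\begin{align*}
\langle m^2 \rangle &= \langle (m-\bar m)^2\rangle + 2\bar m \langle m\rangle - \bar m^2, \\
\langle p_{12} q_{12}\rangle_a &= \langle \Delta p_{12}\Delta q_{12}\rangle_a + \bar p_a \langle q_{12}\rangle_a + \bar q_a \langle p_{12}\rangle_a - \bar p_a \bar q_a, \qquad a=1,2,
\end{align*}
to split \eqref{eqn:partialtA} into (i) pure fluctuation contributions (which will form $V_N$), (ii) terms proportional to $\langle m\rangle, \langle q_{12}\rangle_a, \langle p_{12}\rangle_a, \langle p_{11}\rangle$ (which must be re-expressed as derivatives), and (iii) constants in $\bar m, \bar q_a, \bar p_a$ (which will form $S$).

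Second, I would invert the relations \eqref{eqn:partialx1A}--\eqref{eqn:partialwA} of Lemma~\ref{lemma:4} to read off
\begin{align*}
\tfrac{\alpha}{2}\langle p_{11}\rangle &= \partial_z \mathcal A_N, &
\tfrac{\alpha}{2}(1-\theta)\langle p_{12}\rangle_2 &= \partial_z \mathcal A_N - \partial_{y^{(2)}} \mathcal A_N, \\
\tfrac{\alpha}{2}\theta\langle p_{12}\rangle_1 &= \partial_{y^{(2)}} \mathcal A_N - \partial_{y^{(1)}} \mathcal A_N, &
\tfrac{1}{2}(1-\theta)\langle q_{12}\rangle_2 &= \tfrac{1}{2} - \partial_{x^{(2)}} \mathcal A_N, \\
\tfrac{1}{2}\theta\langle q_{12}\rangle_1 &= \partial_{x^{(2)}} \mathcal A_N - \partial_{x^{(1)}} \mathcal A_N, &
\langle m\rangle &= \partial_w \mathcal A_N,
\end{align*}
and substitute these into the expansion of $\partial_t \mathcal A_N$. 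Collecting the coefficients of $\partial_{x^{(1)}}, \partial_{x^{(2)}}, \partial_{y^{(1)}}, \partial_{y^{(2)}}, \partial_z, \partial_w$, the natural choice to eliminate them from the right-hand side and move them to the total-derivative side is
\begin{equation*}
\dot x^{(1)} = -\alpha \bar p_1, \quad \dot x^{(2)} = -\alpha(\bar p_2-\bar p_1), \quad \dot y^{(1)} = -\bar q_1, \quad \dot y^{(2)} = -(\bar q_2-\bar q_1), \quad \dot z = -(1-\bar q_2), \quad \dot w = -\bar m.
\end{equation*}

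Finally, what remains after this reorganization should be exactly $S(t,\bm r)+V_N(t,\bm r)$ as defined in \eqref{eqn:f}--\eqref{eqn:V}. The constant (in the sense of fluctuations) piece collects as $-\tfrac12 \bar m^2 - \tfrac{\alpha}{2}\bar p_2 + \tfrac{\alpha}{2}(1-\theta)\bar p_2\bar q_2 + \tfrac{\alpha}{2}\theta \bar p_1\bar q_1$, which an elementary rearrangement shows coincides with $-\tfrac{\bar m^2}{2}-\tfrac{\alpha}{2}\bar p_2(1-\bar q_2)-\tfrac{\alpha}{2}\theta(\bar p_2\bar q_2-\bar p_1\bar q_1)$. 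The fluctuation residue is $\tfrac12\langle(\Delta m)^2\rangle-\tfrac{\alpha}{2}(1-\theta)\langle\Delta p_{12}\Delta q_{12}\rangle_2 -\tfrac{\alpha}{2}\theta\langle\Delta p_{12}\Delta q_{12}\rangle_1$, precisely $V_N$. The main obstacle is purely organizational: tracking the cancellations between the $\bar p_a\langle q_{12}\rangle_a$ and $\bar q_a\langle p_{12}\rangle_a$ contributions, ensuring that the coefficients of $\partial_z$ and of the $\partial_{y^{(a)}}$'s line up correctly (these are the terms where the $\alpha$-factors and the $p_{11}$ vs $p_{12}$ distinction interact nontrivially), and verifying the final algebraic identity for $S$.
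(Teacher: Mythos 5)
Your proposal is correct and follows essentially the same route as the paper's proof: recenter $\langle m^2\rangle$ and $\langle p_{12}q_{12}\rangle_a$ around $\bar m,\bar p_a,\bar q_a$ to extract the fluctuation terms $V_N$, re-express the remaining linear averages via the partial derivatives from Lemma~\ref{lemma:4}, and pick the velocities to absorb the drift into the total derivative. Your step of pre-inverting the partial-derivative relations into the telescoping form $\tfrac{\alpha}{2}\theta\langle p_{12}\rangle_1=\partial_{y^{(2)}}\mathcal A_N-\partial_{y^{(1)}}\mathcal A_N$, etc., is merely a cleaner bookkeeping of the same substitution the paper performs in two passes, and the residual constant and velocity choices ($\dot y^{(2)}=-(\bar q_2-\bar q_1)$ etc.) agree with those used in Proposition~\ref{propHRSB}.
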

\begin{proof}
Similarly to Proposition \ref{prop:3} for the Sherrington-Kirkpatrick model with a signal, we have
\begin{align}
\langle \Delta \bar{p}_{12} \Delta \qb_{12} \rangle_a = \langle \bar{p}_{12} \qb_{12} \rangle_a + \bar{p}_a \qb_a - \qb_a \langle p_{12} \rangle - \bar{p}_a \langle q_{12} \rangle \ \ \forall a=1, 2.
\end{align}
Now, starting to evaluate explicitly $\dt \mathcal{A}_N$ by using (\ref{eqn:partialx1A}), (\ref{eqn:partialy1A}), (\ref{eqn:partialzA}) and (\ref{eqn:partialwA}) we write
\begin{align}
\dt \mathcal{A}_N= &\frac{1}{2} \left[ \langle (m-\mb)^2 \rangle - \mb^2 +2\mb \langle m \rangle \right] + \frac{\alpha}{2} \Big[ \langle p_{11} \rangle - (1-\theta) \Big (\langle \Delta p_{12}\Delta q_{12} \rangle_2 - \bar{p}_2\qb_2 + \bar{p}_2\langle q_{12} \rangle_2 + \notag \\
&+ \bar{q}_2\langle p_{12} \rangle_2 \Big) - \theta \Big(\langle \Delta p_{12}\Delta q_{12} \rangle_1 - \bar{p}_1\qb_1 + \bar{p}_1\langle q_{12} \rangle_1+ \bar{q}_1\langle p_{12} \rangle_1 \Big) \Big]= \notag \\
&=\frac{1}{2} \langle (m-\mb)^2 \rangle - \frac{1}{2} \mb^2 + \mb \dw \mathcal{A}_N + \frac{\partial}{\partial z} \mathcal{A}_N - \frac{\alpha}{2}(1-\theta) \langle \Delta p_{12}\Delta q_{12} \rangle_2 - \frac{\alpha}{2} \theta \langle \Delta p_{12}\Delta q_{12} \rangle_1 + \notag \\
&+ \frac{\alpha}{2}(1-\theta) \bar{p}_2\qb_2 +\frac{\alpha}{2}\bar{p}_1\qb_1 - \frac{\alpha}{2}(1-\theta)\bar{p}_2\langle q_{12} \rangle_2 + \alpha \bar{p}_1 \left( \frac{\partial}{\partial x^{(1)}}\mathcal{A}_N - \frac{1}{2} - \frac{1}{2}(1-\theta)\langle q_{12} \rangle_2\right) - \notag \\
&- \frac{\alpha}{2}(1-\theta)\bar{q}_2\langle p_{12} \rangle_2 + \bar{q}_1 \left( \frac{\partial}{\partial y^{(1)}}\mathcal{A}_N - \frac{\alpha}{2}\langle p_{11} \rangle - \frac{\alpha}{2}(1-\theta)\langle p_{12} \rangle_2\right)
\end{align}
In the same way we use (\ref{eqn:partialx2A}) and (\ref{eqn:partialy2A})
\begin{align}
\dt \mathcal{A}_N =& \frac{1}{2} \langle (m-\mb)^2 \rangle - \frac{1}{2} \mb^2 + \mb \dw \mathcal{A}_N + \frac{\partial}{\partial z} \mathcal{A}_N - \frac{\alpha}{2}(1-\theta) \langle \Delta p_{12}\Delta q_{12} \rangle_2 - \frac{\alpha}{2} \theta \langle \Delta p_{12}\Delta q_{12} \rangle_1 + \notag \\
&+ \frac{\alpha}{2}(1-\theta) \bar{p}_2\qb_2 +\frac{\alpha}{2}\bar{p}_1\qb_1 - \frac{\alpha}{2}(1-\theta)\bar{p}_2\langle q_{12} \rangle_2 + \alpha \bar{p_1} \frac{\partial}{\partial x^{(1)}} \mathcal{A}_N - \frac{\alpha}{2}\bar{p}_1 + \alpha(\bar{p}_2 - \bar{p}_1) \cdot \notag \\
&\cdot \left( \frac{\partial }{\partial x^{(2)}} \mathcal{A}_N - \frac{1}{2} \right) + \qb_1 \frac{\partial}{\partial y^{(1)}} \mathcal{A}_N - \frac{\alpha}{2}\qb_1\langle p_{11} \rangle - (\qb_2-\qb_1) \left( \frac{\partial }{\partial y^{(2)}} \mathcal{A}_N - \frac{\alpha}{2}\langle p_{11} \rangle\right)
\end{align}
After some algebra, by placing
\begin{align}
 \label{eqn:dotx1}
\dot{x}^{(1)}&= - \alpha \bar{p}_1 \\
\dot{x}^{(2)}&= - \alpha (\bar{p}_2-\bar{p}_1) \\
\dot{y}^{(1)}&= - \bar{q}_1 \\
\dot{y}^{(2)}&= - \bar{q}_2-\bar{q}_1 \\
\dot{z}&= -(1-\qb_2)\\
\dot{w} &= - \mb \label{eq:dotw}
\end{align}
we reach the thesis.

\end{proof}

\begin{remark} \label{r:above}
In the thermodynamic limit, in the 1RSB scenario, we have
\begin{align}
\label{eqn:thlimaveragem}
 \lim_{N\rightarrow \infty} \langle (m - \bar m)^2 \rangle =& 0\\
 \lim_{N\rightarrow \infty} \langle (q_{12}-\bar q_i)^2 \rangle_i=& 0; \: \: \: i=1,2\\
\label{eqn:thlimaveragep}
 \lim_{N\rightarrow \infty} \langle( p_{12}-\bar p_i)^2\rangle_i =& 0; \: \: \: i=1,2
\end{align}
The potential (\ref{eqn:V}) can be written as
\begin{equation}
\label{eqn:compactV}
V_N(t, \bm r) = \frac{1}{2}\bigg \{\langle (m-\bar m)^2 \rangle - \alpha (1-\theta)\langle (q_{12}-\bar q_2) (p_{12}-\bar p_2)\rangle_2 -\alpha \theta \langle (q_{12}-\bar q_1) (p_{12}-\bar p_1)\rangle_1 \bigg\}
\end{equation}
in such a way that
\begin{equation} \label{eq:V0_HRSB}
\lim_{N \to \infty} V_N(t, \bm r) = 0.
\end{equation}
\end{remark}
The approximation we achieve by killing the potential is equivalent to requiring the existence of two temporal scales for thermalization, a slow one and a fast one, and the self-averaging within each time scale, in such a way that if two replicas behave the same on both the timescales we get $\langle . \rangle_2$, while if they match on the fast one but they do not on the slow one then we get $\langle . \rangle_1$; we refer to Section \ref{comment} for a deeper discussion on the physics behind this choice.

Exploiting Remark \ref{r:above} we can prove the following
\begin{proposition} \label{propHRSB}
The transport equation associated to the interpolating pressure function $\mathcal A_N(t, \boldsymbol r)$, in the thermodynamic limit and under the 1RSB assumption, can be written as
\begin{equation}
\label{eqn:solutionzeroV}
\mathcal A_{\textrm{1RSB}}(t, \bm r)=\mathcal A_{\textrm{1RSB}}(0, \bm{r}-\bm{\dot r}t)+S(t, \bm r )t
\end{equation}
whose explicit solution is given by
\begin{align}
\mathcal A_{1RSB}=&\log 2+\frac{1}{\theta}\int Dh^{(1)} \log  \int Dh^{(2)}\cosh^\theta \left(  h^{(1)} \sqrt{x_0^{(1)}}+h^{(2)} \sqrt{x_0^{(2)}} + w_0\right) + \frac{\alpha}{2\theta}\log\bigg(1+\frac{\theta y_0^{(2)}}{1-z_0-\theta y_0^{(2)}}\bigg) - \nonumber \\
&-\frac{\alpha}{2}\log(1-z_0)
+\frac{\alpha}{2}\frac{ y_0^{(1)}}{(1-z_0-\theta y_0^{(2)})} - t\frac{\mb^2}{2} -\frac{\alpha}{2} t \bar{p}_2(1-\qb_2) - \frac{\alpha}{2}t\theta(\bar{p}_2\qb_2 - \bar{p}_1\qb_1)
\end{align}
\end{proposition}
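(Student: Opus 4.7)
The argument splits into two cleanly separated stages: solving the transport PDE via the method of characteristics, and then evaluating the Cauchy datum at $t=0$ by exploiting the complete factorization of $\mathcal Z_2(0,\bm r_0)$.

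\emph{Stage one (PDE solution).} Combining Proposition~\ref{prop:9} with Remark~\ref{r:above}, the potential $V_N$ vanishes in the thermodynamic limit and the streaming equation \eqref{eqn:transportequation} reduces to a linear first-order transport PDE with the constant-velocity characteristics $\bm r(t)=\bm r_0+\bm{\dot r}t$ prescribed by \eqref{eqn:dotx1}--\eqref{eq:dotw}. Since the source $S(t,\bm r)$ depends on $(t,\bm r)$ only through the self-averaging constants $\bar m,\bar p_1,\bar p_2,\bar q_1,\bar q_2$, it is itself constant along every characteristic, so a direct integration gives $\mathcal A_{\textrm{1RSB}}(t,\bm r)=\mathcal A_{\textrm{1RSB}}(0,\bm r-\bm{\dot r}t)+S(t,\bm r)\,t$, which is precisely \eqref{eqn:solutionzeroV}.

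\emph{Stage two (Cauchy datum).} At $t=0$ the two-body terms in \eqref{eqn:Z2} vanish and the Hamiltonian becomes purely one-body, so $\mathcal Z_2(0,\bm r_0)$ factorizes as a product of $N$ identical single-spin sums times $P$ identical single-$\tau$ Gaussian integrals. Each spin factor is $2\cosh\bigl(\beta[\sqrt{x_0^{(1)}}h_i^{(1)}+\sqrt{x_0^{(2)}}h_i^{(2)}+w_0\xi_i]\bigr)$; each $\tau_\mu$-integral is a Gaussian with quadratic coefficient $\beta(1-z_0)/2$ and linear forcing $\beta(\sqrt{y_0^{(1)}}J_\mu^{(1)}+\sqrt{y_0^{(2)}}J_\mu^{(2)})$, yielding a prefactor $(1-z_0)^{-1/2}$ times the kernel $\exp\{\beta(\sqrt{y_0^{(1)}}J_\mu^{(1)}+\sqrt{y_0^{(2)}}J_\mu^{(2)})^{2}/[2(1-z_0)]\}$. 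Because the spin and pattern sectors decouple, the nested broken-replica operator $\mathbb E_1[\theta^{-1}\log\mathbb E_2(\cdot)^\theta]$ acts additively on them; within each sector, index independence reduces the computation to a single-variable problem, producing in the spin sector $\log 2+\theta^{-1}\int Dh^{(1)}\log\int Dh^{(2)}\cosh^\theta(\sqrt{x_0^{(1)}}h^{(1)}+\sqrt{x_0^{(2)}}h^{(2)}+w_0)$ (exactly as in \eqref{eq:A01RSB}) and, in the pattern sector, the three $\alpha$-proportional contributions after one final Gaussian-of-Gaussian integration.

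\emph{Stage three (Gaussian-of-Gaussian step and the main obstacle).} The delicate point is the $\tau$-sector: raising the per-pattern Gaussian kernel to the power $\theta$ yields a Gaussian in $J_\mu^{(2)}$ with effective quadratic coefficient $\theta\beta y_0^{(2)}/(1-z_0)$; provided $1-z_0-\theta\beta y_0^{(2)}>0$, its $\mathbb E_2$-average produces the factor $[(1-z_0)/(1-z_0-\theta\beta y_0^{(2)})]^{1/2}$ times a residual Gaussian in $J_\mu^{(1)}$ whose variance is proportional to $y_0^{(1)}y_0^{(2)}/[(1-z_0)(1-z_0-\theta\beta y_0^{(2)})]$. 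Applying $\theta^{-1}\log$ and then $\mathbb E_1$ to this residual Gaussian, multiplying by $P/N=\alpha$, reproduces precisely the trio $\frac{\alpha}{2\theta}\log\bigl(1+\theta y_0^{(2)}/(1-z_0-\theta y_0^{(2)})\bigr)-\frac{\alpha}{2}\log(1-z_0)+\frac{\alpha}{2}y_0^{(1)}/(1-z_0-\theta y_0^{(2)})$ appearing in the statement. This is the sole genuinely nontrivial algebraic step and the main obstacle: the broken-replica prescription ``Gaussian integrate, raise to $\theta$, Gaussian integrate, take $\theta^{-1}\log$, Gaussian integrate'' is exactly what forces the characteristic denominator $1-z_0-\theta y_0^{(2)}$, and it collapses correctly to \eqref{hop_mechanicalsolution} in the RS limit $\theta\to1$, $\bar q_1=\bar q_2$, $\bar p_1=\bar p_2$, providing a built-in consistency check. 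Adding the source contribution $S\cdot t$ to the two sector contributions assembles the announced closed-form solution.
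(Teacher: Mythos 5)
Your proof follows the same route as the paper's: you kill the potential in the thermodynamic limit, solve the transport PDE by the method of characteristics with the constant velocities from Proposition~\ref{prop:9}, and then evaluate the $t=0$ Cauchy datum by factorizing the one-body partition function into a spin sector and a $\tau$ sector, carrying out the iterated Gaussian (Gaussian-of-Gaussian) computations exactly as the paper does in its Appendix~\ref{1body1rsb}. The only blemish is a spurious $\beta$ floating through your Stage 3 bookkeeping of the quadratic coefficient (you write $1-z_0-\theta\beta y_0^{(2)}$ before quoting $1-z_0-\theta y_0^{(2)}$ in the result), but this mirrors a notational looseness already present between the paper's definition of $\mathcal Z_2$ and its appendix, and does not affect the substance; the approach, the decomposition, and the key steps are the same.
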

\begin{proof}
By putting (\ref{eqn:dotx1})-(\ref{eq:dotw}) into (\ref{eqn:transportequation}) we find
\begin{equation}
\label{eqn:solutionzeroV1}
\mathcal A_0(t, \bm r)=\mathcal A_0(0, \bm r_0) -\frac{\bar m^2}{2}t-\frac{\alpha}{2} \bar p_2 (1-\bar q_2)t-\frac{\alpha}{2} \theta (\bar p_2 \bar q_2 - \bar p_1 \bar q_1)t
\end{equation}
where $r_0$ can be obtained by using the equation of motion
\begin{equation}
\label{eqn:linearmotion}
\bm r = \bm r_0 + \dot{\bm r} t
\end{equation}
where the velocities are defined in (\ref{eqn:dotx1})-(\ref{eq:dotw}). Then, all we have to compute is $\mathcal A_0(0, \bm r_0)$, that can be easily done because at $t=0$ the two body interaction vanishes and the (\ref{eqn:Z2}) can be written as
\begin{align}
\label{eqn:A0fin}
\mathcal A_0(0, \bm r_0)=&\log 2+\frac{1}{\theta}\int Dh^{(1)} \log  \int Dh^{(2)}\cosh^\theta \big [ h^{(1)} \sqrt{x_0^{(1)}}+h^{(2)} \sqrt{x_0^{(2)}} + w_0\big ] + \nonumber \\
&+\frac{\alpha}{2\theta}\log\bigg(1+\frac{\theta y_0^{(2)}}{1-z_0-\theta y_0^{(2)}}\bigg)
-\frac{\alpha}{2}\log(1-z_0)
+\frac{\alpha}{2}\frac{ y_0^{(1)}}{(1-z_0-\theta y_0^{(2)})}
\end{align}
and we refer to Appendix \ref{1body1rsb} for a detailed proof of this result.
\newline
Then, putting together (\ref{eqn:A0fin}), (\ref{eqn:solutionzeroV1}), (\ref{eqn:linearmotion}) and (\ref{eqn:dotx1})-(\ref{eq:dotw}) we finally achieve an explicit expression for the interpolating pressure of the Hopfield model in the 1RSB approximation
\begin{eqnarray}
\label{eqn:Afinal}
&&\mathcal A_0(t, \bm r)=\frac{1}{\theta}\int Dh^{(1)} \log  \int Dh^{(2)}\cosh^\theta \left( h^{(1)} \sqrt{x^{(1)}+\alpha \bar p_1 t}+ h^{(2)}\sqrt{x^{(2)}+\alpha(\bar p_2 - \bar p_1)t} + w+ \bar m t\right )  \nonumber \\
\nonumber
&+&\log 2+\frac{\alpha}{2\theta}\log\left[1+\theta \frac{y^{(2)}+(\bar q_2-\bar q_1)t}{1-z-(1-\bar q_2)t-\theta (y^{(2)}+(\bar q_2-\bar q_1)t)}\right] -\frac{\alpha}{2}\log\left[1-z-(1-\bar q_2)t\right]\\
&+&\frac{\alpha}{2}\frac{ y^{(1)}+\bar q_1 t}{1-z-(1-\bar q_2)t-\theta (y^{(2)}+(\bar q_2-\bar q_1)t)}-\frac{\bar m^2}{2}t-\frac{\alpha}{2} \bar p_2 (1-\bar q_2)t-\frac{\alpha}{2} \theta (\bar p_2 \bar q_2 - \bar p_1 \bar q_1)t.
\end{eqnarray}
\end{proof}
To sum up, we have the following main theorem for the 1RSB scenario
\begin{theorem}
The 1-RSB quenched pressure for Hopfield model, in the thermodynamic limit, reads as
\begin{align}
\label{eqn:hopfieldAfinal}
  A(\alpha, \beta)=&\frac{1}{\theta}\int Dh^{(1)} \log  \int Dh^{(2)}\cosh^\theta  (g( \boldsymbol h, \bar m)) +\log 2 \nonumber \\
  \nonumber
&+\frac{\alpha}{2\theta}\log\left[ 1+\beta\theta \frac{\bar q_2-\bar q_1}{1-(1-\bar q_2)\beta-\theta \beta(\bar q_2-\bar q_1)}\right] -\frac{\alpha}{2}\log\left[1-\beta(1-\bar q_2)\right] \\
&+\frac{\alpha \beta}{2}\frac{\bar q_1 }{1-\beta(1-\bar q_2)-\theta \beta(\bar q_2-\bar q_1)}-\beta\frac{\bar m^2}{2}-\frac{\alpha \beta}{2}\bar p_2 (1-\bar q_2)-\frac{\alpha \beta}{2}\theta (\bar p_2 \bar q_2 - \bar p_1 \bar q_1)
\end{align}
where $\boldsymbol h = (h^{(1)}, h^{(2)})$, and we introduced the function $g( \boldsymbol h, \bar m)=\beta \bar m+  h^{(1)} \sqrt{\alpha \beta \bar p_1}+ h^{(2)}\sqrt{\alpha\beta(\bar p_2 - \bar p_1)}$.
\end{theorem}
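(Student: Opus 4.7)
The theorem is a direct specialization of the solution of the 1-RSB transport PDE obtained in Proposition \ref{propHRSB}, evaluated at the single spacetime point $(t,\bm r) = (1, \bm 0)$ where, by construction of the interpolating partition function, $\mathcal A_N(1,\bm 0) = A_N(\alpha,\beta)$. My plan therefore reduces to bookkeeping, carried out in three short steps.

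First, I will invert the characteristic equations (\ref{eqn:dotx1})--(\ref{eq:dotw}) at $(t,\bm r) = (1,\bm 0)$. Since the linear motion reads $\bm r = \bm r_0 + \dot{\bm r}\,t$, the Cauchy-side arguments to be fed into $\mathcal A_{1RSB}(0,\bm r_0)$ are
\begin{align*}
x_0^{(1)} &= \alpha\bar p_1, \qquad x_0^{(2)} = \alpha(\bar p_2 - \bar p_1), \qquad y_0^{(1)} = \bar q_1, \\
y_0^{(2)} &= \bar q_2 - \bar q_1, \qquad z_0 = 1 - \bar q_2, \qquad w_0 = \bar m.
\end{align*}

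Second, I will substitute these values into the explicit Cauchy datum $\mathcal A_{1RSB}(0,\bm r_0)$ given by (\ref{eqn:A0fin}). The shifts in $x_0^{(1)}, x_0^{(2)}, w_0$ combine inside the hyperbolic cosine to produce the argument $g(\boldsymbol h,\bar m) = \beta\bar m + h^{(1)}\sqrt{\alpha\beta\bar p_1} + h^{(2)}\sqrt{\alpha\beta(\bar p_2 - \bar p_1)}$ displayed in the statement; the shifts in $y_0^{(1)}, y_0^{(2)}, z_0$ reshape the $\tau$-sector logarithmic and linear pieces into exactly the denominators $1 - \beta(1-\bar q_2)$ and $1 - \beta(1-\bar q_2) - \theta\beta(\bar q_2-\bar q_1)$ appearing in (\ref{eqn:hopfieldAfinal}). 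Third, I will add the source contribution $S(1,\bm 0)\cdot 1$, which supplies the three deterministic summands $-\beta\bar m^2/2$, $-\tfrac{\alpha\beta}{2}\bar p_2(1-\bar q_2)$ and $-\tfrac{\alpha\beta}{2}\theta(\bar p_2\bar q_2 - \bar p_1\bar q_1)$. Collecting everything yields (\ref{eqn:hopfieldAfinal}).

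The only delicate aspect, and the step I expect to require the most care, is tracking the inverse temperature $\beta$ consistently through the Gaussian $\tau$-integration that defines the Cauchy datum: $\beta$ sits as an overall prefactor in the exponent of $\mathcal Z_2(t,\bm r)$ in (\ref{eqn:Z2}), and it reappears multiplicatively next to $(1-\bar q_2)$ and $(\bar q_2-\bar q_1)$ in the logarithmic pieces of (\ref{eqn:hopfieldAfinal}) only after the $\tau$-Gaussian integrals have been performed and their effective variances rewritten; this is precisely the computation carried out in Appendix \ref{1body1rsb}, which I will invoke rather than redo. The vanishing of the potential $V_N$ in the thermodynamic limit, required to pass from Proposition \ref{prop:9} to Proposition \ref{propHRSB}, has already been granted by Remark \ref{r:above} via Eq.~(\ref{eq:V0_HRSB}) under the 1-RSB self-averaging ansatz of Definition \ref{def:HM_RSB}, so no further analytical input is needed to close the proof.
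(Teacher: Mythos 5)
Your plan follows the same overall structure as the paper's proof (invert the characteristics, substitute into the Cauchy datum, add the source term), but the evaluation point you choose is wrong, and this breaks the whole bookkeeping you claim is "the only delicate aspect."

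You set $(t,\bm r) = (1,\bm 0)$, which gives initial data without any $\beta$:
$x_0^{(1)} = \alpha\bar p_1$, $x_0^{(2)} = \alpha(\bar p_2 - \bar p_1)$, $y_0^{(1)} = \bar q_1$, $y_0^{(2)} = \bar q_2 - \bar q_1$, $z_0 = 1 - \bar q_2$, $w_0 = \bar m$. Plugging these into the Cauchy datum of (\ref{eqn:A0fin}) (equivalently, setting $\bm r = \bm 0$, $t=1$ in (\ref{eqn:Afinal})) gives a $\cosh$ argument $h^{(1)}\sqrt{\alpha\bar p_1} + h^{(2)}\sqrt{\alpha(\bar p_2-\bar p_1)} + \bar m$ — not $g(\boldsymbol h,\bar m) = \beta\bar m + h^{(1)}\sqrt{\alpha\beta\bar p_1} + h^{(2)}\sqrt{\alpha\beta(\bar p_2-\bar p_1)}$; logarithm arguments $1-(1-\bar q_2)$ and $1-(1-\bar q_2)-\theta(\bar q_2-\bar q_1)$ — not $1-\beta(1-\bar q_2)$ and $1-\beta(1-\bar q_2)-\theta\beta(\bar q_2-\bar q_1)$; and a source contribution $S(1,\bm 0)\cdot 1 = -\tfrac{\bar m^2}{2}-\tfrac{\alpha}{2}\bar p_2(1-\bar q_2)-\tfrac{\alpha}{2}\theta(\bar p_2\bar q_2-\bar p_1\bar q_1)$ — again missing the global $\beta$. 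None of this matches (\ref{eqn:hopfieldAfinal}). You acknowledge the $\beta$ bookkeeping must be "tracked through the $\tau$-integration" and defer to Appendix~\ref{1body1rsb}, but that appendix produces no $\beta$ factors at all: $\mathcal A_0(0,\bm r_0)$ in (\ref{eqn:A0finApp}) is $\beta$-free, and Proposition~\ref{propHRSB} likewise writes $t$, not $\beta t$, next to each $(1-\bar q_2)$, $(\bar q_2-\bar q_1)$, $\bar p_1$, $\bar m$. The $\beta$'s in the theorem do not emerge from the Gaussian integration — they come solely from the value of $t$ at which you specialize.

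The paper's own proof reads "By taking $\bm r= \boldsymbol 0$ and $t=\beta$ we find the Hopfield pressure in the 1RSB approximation," and this is the correct evaluation point: at $t=\beta$ the characteristics give $x_0^{(1)}=\alpha\bar p_1\beta$, $y_0^{(1)}=\bar q_1\beta$, $z_0=(1-\bar q_2)\beta$, $w_0 = \bar m\beta$, and so on, which reproduce every $\beta$ in (\ref{eqn:hopfieldAfinal}). (One can confirm this is not a notational fluke by looking at the RS Hopfield case: (\ref{hop_mechanicalsolution}) carries $\bar m t$, $\sqrt{\alpha\bar p t+x}$ and $(1-\bar q t)$, while (\ref{hop_agssolution}) shows $\beta\bar m$, $\sqrt{\alpha\beta\bar p}$, $\beta(1-\bar q)$, which forces $t=\beta$.) The definition preceding (\ref{eqn:Z2}) does state "$t=1$," but the paper's own explicit verification immediately afterward reads $\mathcal A_N(\beta,\bm 0)$ and demonstrates that $t=\beta$ recovers the Hopfield partition function; the "$t=1$" in the definition text is an internal inconsistency, and you copied it without checking. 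To repair your proof, replace $(t,\bm r)=(1,\bm 0)$ by $(t,\bm r)=(\beta,\bm 0)$ throughout; the rest of your plan (the three-step substitution into Proposition~\ref{propHRSB}) is then exactly the paper's argument and goes through.
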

\begin{proof}
By taking $\bm r= \boldsymbol 0$ and $t=\beta$ we find the Hopfield pressure in the 1RSB approximation.
\end{proof}

\begin{corollary}
\label{cor:SC_HOP_1RSB}
The self-consistent equations for the order parameters are
\begin{align}
\bar m  =&\int Dh^{(1)}\frac{\int Dh^{(2)} \cosh^\theta (g( \boldsymbol h, \bar m)) \tanh  (g(\boldsymbol h, \bar m)) }{\int Dh^{(2)} \cosh^\theta (g(\boldsymbol h, \bar m))} \\
\bar q_1 =&\int Dh^{(1)}\left(\frac{\int Dh^{(2)}\tanh  g(h^{(1)}, h^{(2)})\cosh^\theta  (g(\boldsymbol h, \bar m))}{\int Dh^{(2)} \cosh^\theta (g(\boldsymbol h, \bar m))}\right)^2 \\
\label{eqn:barq2}
\bar q_2 =& \int Dh^{(1)}\frac{\int Dh^{(2)} \cosh^\theta (g(\boldsymbol h, \bar m)) \tanh^2 (g(\boldsymbol h, \bar m)) }{\int Dh^{(2)} \cosh^\theta  (g(\boldsymbol h, \bar m))} \\
\bar p_1=& \frac{\beta \bar q_1}{\left[1-\beta(1-\bar q_2)-\beta\theta (\bar q_2-\bar q_1)\right]^2}\\
\bar p_2 =& \bar p_1 +\frac{\beta (\bar q_2-\bar q_1)}{\left[ 1-\beta(1-\bar q_2)\right] \left[ 1-\beta(1-\bar q_2)-\beta\theta (\bar q_2-\bar q_1)\right]}.
\end{align}
\end{corollary}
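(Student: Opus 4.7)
The plan is to replicate, in the Hopfield 1RSB setting, the strategy used for Corollary \ref{cor:SC_SK}. I would evaluate the partial derivatives of the explicit interpolating pressure \eqref{eqn:Afinal} with respect to each interpolating parameter $w, x^{(1)}, x^{(2)}, y^{(1)}, y^{(2)}, z$, and then equate these to the thermodynamic expressions provided by Lemma \ref{lemma:4}, after enforcing the 1RSB self-averaging prescriptions of Definition \ref{def:HM_RSB} and Remark \ref{r:above}, i.e.\ $\langle m \rangle \to \bar m$, $\langle q_{12} \rangle_a \to \bar q_a$, $\langle p_{12} \rangle_a \to \bar p_a$. Specialising to $\bm r = \bm 0$, $t = \beta$ turns each equality into a self-consistency condition for the corresponding order parameter.

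For $\bar m, \bar q_1, \bar q_2$ I would focus on the Gaussian-integral block $\frac{1}{\theta}\int Dh^{(1)}\log \int Dh^{(2)} \cosh^\theta(g(\bm h, \bar m))$. The $w$-derivative brings down a $\tanh g$ and, combined with the outer weight $\cosh^\theta g / \int Dh^{(2)}\cosh^\theta g$, immediately reproduces the claimed expression for $\bar m$. The $x^{(1)}$- and $x^{(2)}$-derivatives instead bring down factors $h^{(a)}/(2\sqrt{x^{(a)} + \ldots})$; an application of the Gaussian integration-by-parts identity $\int Dh\, h\, f(h) = \int Dh\, f'(h)$ on the auxiliary variable $h^{(a)}$ converts these into $\tanh^2 g$ expressions. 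Comparing with \eqref{eqn:partialx1A}--\eqref{eqn:partialx2A} then yields $\bar q_1$ as a squared nested bracket (reflecting that the $h^{(2)}$ average is completed before the $h^{(1)}$ integration) and $\bar q_2$ as a single nested bracket, matching the formulas in the statement.

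The $\bar p_1, \bar p_2$ self-consistencies are instead purely algebraic, since $y^{(1)}, y^{(2)}, z$ enter $\mathcal A_{1\mathrm{RSB}}$ only through the rational-logarithmic block produced by the Gaussian $\bm \tau$-integrals. Introducing the shorthands $A \coloneqq 1 - z - (1-\bar q_2)t$ and $B \coloneqq y^{(2)} + (\bar q_2 - \bar q_1)t$, this block rewrites as $\frac{\alpha}{2\theta}[\log A - \log(A-\theta B)] + \frac{\alpha}{2}(y^{(1)}+\bar q_1 t)/(A-\theta B)$. Differentiating w.r.t.\ $y^{(1)}, y^{(2)}, z$, equating with \eqref{eqn:partialy1A}--\eqref{eqn:partialzA}, and then taking the combination $\partial_{y^{(2)}} - \partial_{y^{(1)}}$ eliminates the unknown $\langle p_{11}\rangle$ and isolates $\theta \bar p_1$; a second independent combination involving $\partial_z$ isolates $\bar p_2 - \bar p_1$, delivering exactly the stated closed formulas.

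The main obstacle I expect is the bookkeeping in the Gaussian integration-by-parts step that underpins the $\bar q_1, \bar q_2$ equations: the derivative of the partition function embedded inside $(\int Dh^{(2)}\cosh^\theta g)^{1/\theta}$ produces an explicit factor $\theta$ which must conspire with the Wick cancellations to leave $\tanh^2 g$ correctly weighted by the nested 1RSB measure $\mathbb E_1\mathbb E_2[\mathcal W_2 \cdot ]$. Once that identification is carried out cleanly, the remaining manipulations are routine algebra and entirely parallel to those of Appendix \ref{app:SC_SK}.
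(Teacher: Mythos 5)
Your proposal is correct and takes essentially the same route as the paper: differentiate the explicit 1RSB interpolating pressure~\eqref{eqn:Afinal} with respect to $w, x^{(1)}, x^{(2)}, y^{(1)}, y^{(2)}, z$, apply Gaussian integration by parts to the $x^{(a)}$-derivatives, compare with the expectation formulas in Lemma~\ref{lemma:4} after replacing the averages by their 1RSB asymptotic values, and set $t=\beta$, $\bm r=\bm 0$; the $\bar p_1,\bar p_2$ equations follow from the algebraic combinations of the $y^{(a)}$- and $z$-derivatives that cancel $\langle p_{11}\rangle$, exactly as in the paper's Appendix~\ref{app:SC_SK} for the simpler Sherrington--Kirkpatrick case.
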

\begin{proof}
We proceed in the same way we computed the proof of Corollary (\ref{cor:SC_SK}), that is by taking derivatives of (\ref{eqn:Afinal}) and putting $t=\beta, \bm r= \boldsymbol 0$ so to find \\

\begin{align}
\partial_{x^{(1)}} \mathcal A_0(\beta, \bm 0)=&\frac{1}{2}\left [  1-\theta \int Dh^{(1)}\left(\frac{\int Dh^{(2)}\tanh (g(\boldsymbol h, \bar m)) \cosh^\theta  (g(\boldsymbol h, \bar m))}{\int Dh^{(2)} \cosh^\theta (g(\boldsymbol h, \bar m))}\right)^2 + \right. \nonumber \\
&\left. -(1-\theta)\int Dh^{(1)}\frac{\int Dh^{(2)}\tanh^2 (g(\boldsymbol h, \bar m)) \cosh^\theta(g(\boldsymbol h, \bar m))}{\int Dh^{(2)} \cosh^\theta (g(\boldsymbol h, \bar m))} \right] \\
\partial_{x^{(2)}} \mathcal A_0(\beta, \bm 0)=&\frac{1}{2}\left [ 1-(1-\theta)\int Dh^{(1)}\frac{\int Dh^{(2)}\tanh^2(g(\boldsymbol h, \bar m)) \cosh^\theta  (g(\boldsymbol h, \bar m))}{\int Dh^{(2)} \cosh^\theta  (g(\boldsymbol h, \bar m))}\right ]\\
\partial_{y^{(1)}} \mathcal A_0(\beta, \bm 0)=&\frac{\alpha}{2}\frac{1}{1-\beta(1-\bar q_2)-\beta\theta (\bar q_2-\bar q_1)}\\
\partial_{y^{(2)}} \mathcal A_0(\beta, \bm 0)=&\frac{\alpha\beta}{2}\frac{\theta \bar q_1}{\big(1-\beta(1-\bar q_2)-\beta\theta (\bar q_2-\bar q_1)\big)^2}+\partial_{y_1}\mathcal A_0(\beta, \bm 0)\\
\partial_{z} \mathcal A_0(\beta, \bm 0)=&-\frac{\alpha}{2\theta}(1-\theta)\frac{1}{1-\beta(1-\bar q_2)}+\frac{1}{\theta}\partial_{y_2} \mathcal A_0(\beta, \bm 0)\\
\partial_w \mathcal A_0(\beta, \bm 0)=&\int Dh^{(1)}\frac{\int Dh^{(2)}\tanh(g(\boldsymbol h, \bar m)) \cosh^\theta(g(\boldsymbol h, \bar m))}{\int Dh^{(2)} \cosh^\theta( g(\boldsymbol h, \bar m))}
\end{align}
Putting the derivatives inside (\ref{eqn:dotx1})-(\ref{eq:dotw}), replacing the averages with their asymptotic values in the thermodynamic limit (\ref{eqn:thlimaveragem})-(\ref{eqn:thlimaveragep}), we finally obtain the above self-consistency equations.
\end{proof}

\subsection{Broken Replica Interpolation: 2-RSB solution}\label{ssec:HRSB2}

Aim of this section is to deepen the structure of the second step of RSB in order to recover rigorously the formula for the quenched free energy, obtained via the replica trick by Stefann and K\"{u}hn \cite{Kuhn}, for the Hopfield model; as far as we know an explicit formula also for the 3-RSB is not available and, in the next subsection, we will provide the general K-RSB solution, without deepening all the calculations as in these first two steps.
\begin{definition} \label{def:Htheta_2RSB}
In the second step of replica-symmetry breaking, the distribution of the two-replica overlap $q$, in the thermodynamic limit, displays three delta-peaks at the equilibrium values, referred to as $\bar{q}_1,\ \bar{q}_2$ and $\bar{q}_3$, and the concentration on the three values is ruled by $\theta_2, \theta_3 \in [0,1]$, namely
	\begin{equation}
	\lim_{N \rightarrow + \infty} P'_N(q) = \theta_1 \delta (q - \bar{q}_1) + (\theta_2 -\theta_1) \delta (q - \bar{q}_2)+(1 -\theta_2) \delta (q - \bar{q}_3). \label{limforq2}
	\end{equation}
	Similarly, for the overlap $p$, denoting with $\bar{p}_1,\ \bar{p}_2$ and $\bar{p}_3$ the related equilibrium values, we have
	\begin{equation}
	\lim_{N \rightarrow + \infty} P''_N(p) = \theta_1 \delta (p - \bar{p}_1) + (\theta_2 -\theta_1) \delta (p - \bar{p}_2)+(1 -\theta_2) \delta (p - \bar{p}_3). \label{limforp2}
	\end{equation}
	The magnetization still self-averages at $ \bar{m}$ as in (\ref{limform}).
\end{definition}

Following the same route pursued in the previous sections, we need an interpolating partition function $\mathcal Z$ and an interpolating quenched pressure $\mathcal A$, defined as follows.
\begin{definition}
	Given the interpolating parameters $\bm r = (x^{(1)}, x^{(2)}, x^{(3)}, y^{(1)}, y^{(2)}, y^{(3)}, w), t$ and the i.i.d. auxiliary fields $\{h_i^{(1)}, h_i^{(2)}, h_i^{(3)} \}_{i=1,...,N}$, with $h_i^{(1,2,3)}\sim \mathcal N[0,1]$ for $i=1,...,N$, and $\{J_{\mu}^{(1)}, J_{\mu}^{(2)}, J_{\mu}^{(3)} \}_{\mu=1,...,P}$, with $J_{\mu}^{(1,2,3)} \sim \mathcal N[0,1]$ for $\mu=1,...,P$, we can write the 2-RSB interpolating partition function $\mathcal Z_N(t, \boldsymbol r)$ for the Hopfield model (\ref{eq:hop_hbare}) recursively, starting by
	\begin{eqnarray}
	\label{eqn:Z3_2}
	\mathcal Z_3(t, \bm r) &=&\sum_{\bm \sigma} \int D \bm \tau \exp \left \{ \beta \left[ \frac{t}{2N}\sum_{i,j=1}^{N,N}\xi_i \xi_j\sigma_i \sigma_j+\sqrt{\frac{t}{N}}\sum_{i, \mu=1}^{N,P} \xi_i^\mu \sigma_i \tau_\mu + w \sum_{i=1}^N \xi_i\sigma_i \right. \right. \nonumber \\ &+&  \left. \left.  \sum_{a=1}^3 \sqrt{x^{(a)}}\sum_{i=1}^N h_i^{(a)} \sigma_i + \sum_{a=1}^3 \sqrt{y^{(a)}}\sum_{\mu=1}^P J_\mu^{(a)}\tau_\mu  +z \sum_{\mu=1}^P \frac{\tau_\mu^2}{2}  \right] \right\}
	\end{eqnarray}
	where $\bm \xi$ is the pattern we want to retrieve, while the $\xi_i^\mu$'s are i.i.d. standard Gaussians.
	Averaging out the fields recursively, we define
	\begin{align}
	\label{eqn:Z2_2}
	\mathcal Z_2(t, \bm r) \coloneqq& \mathbb E_3 \left [ \mathcal Z_3(t, \bm r)^{\theta_2} \right ]^{1/\theta_2} \\
	\label{eqn:Z1_2}
	\mathcal Z_1(t, \bm r) \coloneqq& \mathbb E_2 \left [ \mathcal Z_2(t, \bm r)^{\theta_1} \right ]^{1/\theta_1} \\
	\label{eqn:Z0_2}
	\mathcal Z_0(t, \bm r) \coloneqq&  \exp \mathbb E_1 \left[ \log \mathcal Z_1(t, \bm r) \right ] \\
	\mathcal Z_N(t, \boldsymbol r) \coloneqq& \mathcal Z_0(t, \bm r) ,
	\end{align}
	where with $\mathbb E_a$ we mean the average over the variables $h_i^{(a)}$'s and $J_\mu^{(a)}$'s, for $a=1,2,3$, and with $\mathbb{E}_0$ we shall denote the average over the variables $\xi_i^{\mu}$'s.
\end{definition}
\begin{definition}
	The 2RSB interpolating pressure, at finite volume $N$, is introduced as
	\begin{equation}\label{AdiSK2RSB_HOP}
	\mathcal A_N (t, \bm r) \coloneqq \frac{1}{N}\mathbb E_0 \big[ \log \mathcal Z_0(t, \bm r) \big],
	\end{equation}
	and, in the thermodynamic limit, assuming its existence, is
	\begin{equation}
	\mathcal A (t, \bm r) \coloneqq \lim_{N \to \infty} \mathcal A_N (t, \bm r).
	\end{equation}
	By setting $t=1$, $\bm r = \bm 0$, the interpolating pressure recovers the standard pressure (\ref{PressureDef}), that is, $A_N(\alpha, \beta) = \mathcal A_N (t =1, \bm r =0)$.
\end{definition}
%
\begin{remark}
	In analogy with previous cases, we introduce the statistical average
	\begin{equation}
	\omega \big(\mathcal O(\bm \sigma, \bm \tau)\big) = \frac{1}{\mathcal Z_3(t, \bm r)}\sum_{\bm \sigma} \int D \bm \tau \mathcal O(\bm \sigma, \bm \tau) e^{\beta \mathcal H(\bm \sigma,  \bm \tau; t, \bm r)}
	\end{equation}
	where $\mathcal H(\bm \sigma,  \bm \tau; t, \bm r)$ depends also on all the interpolating and random variables, i.e.
	\begin{align}
	\mathcal H(\bm \sigma,  \bm \tau; t, \bm r)=&\frac{t}{2N}\sum_{i,j=1}^{N,N}\xi_i \xi_j\sigma_i \sigma_j+\sqrt{\frac{t}{N}}\sum_{i, \mu=1}^{N,P} \xi_i^\mu \sigma_i \tau_\mu +\sum_{a=1}^3 \sqrt{x^{(a)}}\sum_{i=1}^N h_i^{(a)}	\sigma_i \nonumber \\ +& \sum_{a=1}^3 \sqrt{y^{(a)}}\sum_{\mu=1}^P J_\mu^{(a)}\tau_\mu +z\sum_{\mu=1}^P \frac{\tau_\mu^2}{2}+ w \sum_{i=1}^N \xi_i\sigma_i.
	\end{align}
\end{remark}
\begin{remark}
	In order to lighten the notation, hereafter we use the following 
	\begin{align}
	\langle m \rangle=& \mathbb E_0  \mathbb E_1  \mathbb E_2 \left\{ \mathcal W_2\mathbb E_3\left[\mathcal W_3 \frac{1}{N}\sum_{i=1}^N \omega(\xi_i \sigma_i) \right] \right\} \\
	\langle m^2 \rangle=&  \mathbb E_0  \mathbb E_1  \mathbb E_2 \left\{ \mathcal W_2\mathbb E_3\left[ \mathcal W_3 \frac{1}{N^2}\sum_{i,j=1}^{N,N} \omega(\xi_i \xi_j\sigma_i \sigma_j) \right] \right\} \\
	\langle p_{11} \rangle=& \mathbb E_0  \mathbb E_1  \mathbb E_2 \left\{ \mathcal W_2\mathbb E_3\left[ \mathcal W_3 \frac{1}{P}\sum_{\mu=1}^P \omega(\tau_\mu^2) \right] \right\}\\
	\langle p_{12} \rangle_1=& \mathbb E_0  \mathbb E_1   \left \{ \frac{1}{P}\sum_{\mu=1}^P \Big [ \mathbb E_2 \Big (\mathcal W_2\mathbb E_3 \left [\mathcal W_3 \omega(\tau_\mu)\right] \Big )\Big ]^2  \right \} \\
	\langle p_{12} \rangle_2=& \mathbb E_0  \mathbb E_1  \mathbb E_2 \left\{\mathcal W_2\Bigg[ \frac{1}{P}\sum_{\mu=1}^P \left(  \mathbb E_3 \left [\mathcal W_3 \omega(\tau_\mu)\right] \right)^2 \Bigg] \right\}\\
	\langle p_{12} \rangle_3=& \mathbb E_0  \mathbb E_1  \mathbb E_2 \left\{ \mathcal W_2\mathbb E_3\left[ \mathcal W_3 \frac{1}{P}\sum_{\mu=1}^P \omega(\tau_\mu)^2 \right] \right\}\\
	\label{eqn:q121}
	\langle q_{12} \rangle_1=&\mathbb E_0  \mathbb E_1  \left \{ \frac{1}{N} \sum_{i=1}^N \Bigg [ \mathbb E_2 \Bigg( \mathcal W_2\mathbb E_3 \left[\mathcal W_3 \omega(\sigma_i)\right] \Bigg) \Bigg]^2  \right \} \\
	\langle q_{12} \rangle_2=&\mathbb E_0  \mathbb E_1  \mathbb E_2 \left\{\mathcal W_2\left[\frac{1}{N} \sum_{i=1}^N \Bigg ( \mathbb E_3 \left[\mathcal W_3 \omega(\sigma_i)\right] \Bigg)^2 \right]\right\} 
			\end{align}
	\begin{align}
	\label{eqn:q122}
	\langle q_{12} \rangle_3=& \mathbb E_0  \mathbb E_1  \mathbb E_2 \left\{\mathcal W_2\mathbb E_3\left [ \mathcal W_3 \frac{1}{N}\sum_{i=1}^N \omega(\sigma_i)^2 \right] \right\}\\
	\langle p_{12}q_{12} \rangle_1 =& \mathbb E_0  \mathbb E_1  \left \{ \frac{1}{P}\sum_{\mu=1}^P \frac{1}{N}\sum_{i=1}^N \Big[  \mathbb E_2 \mathcal W_2\mathbb E_3 \left ( \mathcal W_3  \omega(\tau_\mu \sigma_i ) \right) \Big]^2 \right \} \\
	\langle p_{12}q_{12} \rangle_1 =& \mathbb E_0  \mathbb E_1 \mathbb E_2 \left\{ \mathcal W_2 \left [ \frac{1}{P}\sum_{\mu=1}^P \frac{1}{N}\sum_{i=1}^N \Big (  \mathbb E_3 \left [ \mathcal W_3  \omega(\tau_\mu \sigma_i ) \right] \Big)^2 \right] \right\} \\
	\langle p_{12} q_{12}\rangle_3=& \mathbb E_0  \mathbb E_1  \mathbb E_2 \left\{ \mathcal W_2\mathbb E_3  \bigg[\mathcal W_3 \frac{1}{P}\sum_{\mu=1}^P \frac{1}{N}\sum_{i=1}^N \omega(\tau_\mu \sigma_i)^2 \bigg]\right\},
	\end{align}
	where we define the weights
	\begin{align}
	\mathcal W_2=\frac{\mathcal Z_2^{\theta_1}}{\mathbb E_2 \left [\mathcal Z_2^{\theta_1}\right]} \\
	\mathcal W_3=\frac{\mathcal Z_3^{\theta_2}}{\mathbb E_3 \left [\mathcal Z_3^{\theta_2}\right]}.
	\end{align}
\end{remark}
The next step is building a transport equation for the interpolating quenched pressure, for which we preliminary need to evaluate the related partial derivatives, as given in the next
\begin{lemma} \label{lemma:4}
	The partial derivative of the interpolating quenched pressure with respect to a generic variable $\rho$ reads as
	%
	%
	%
	\begin{equation}
	\label{eqn:partialrA}
	\frac{\partial }{\partial \rho} \mathcal A_N(t, \bm r)=\frac{1}{N} \mathbb E_0  \mathbb E_1  \mathbb E_2 \left \{ \mathcal W_2 \mathbb E_3 \left[\mathcal W_3\omega \big( \partial_\rho \mathcal B(\bm \sigma,  \bm \tau; t, \bm r) \big)\right]\right \}.
	\end{equation}
	In particular,
	\begin{align}
	\label{eqn:partialtA}
	\frac{\partial }{\partial t} \mathcal A_N =&\frac{\beta}{2}\langle m^2 \rangle+ \frac{\alpha\beta}{2}\big ( \langle p_{11} \rangle -(1-\theta_2)\langle p_{12}q_{12} \rangle_3-(\theta_2-\theta_1)\langle p_{12}q_{12} \rangle_2-\theta_1\langle p_{12}q_{12} \rangle_1 \big) \\
	\frac{\partial }{\partial x^{(1)}} \mathcal A_N =& \frac{\beta}{2}\big ( 1 -(1-\theta_2)\langle q_{12} \rangle_3 -(\theta_2-\theta_1)\langle q_{12} \rangle_2-\theta_1\langle q_{12} \rangle_1 \big) \\
	\frac{\partial }{\partial x^{(2)}} \mathcal A_N =& \frac{\beta}{2}\big ( 1 -(1-\theta_2)\langle q_{12} \rangle_3 -(\theta_2-\theta_1)\langle q_{12} \rangle_2 \big) \\
	\frac{\partial }{\partial x^{(3)}} \mathcal A_N =& \frac{\beta}{2}\big ( 1 -(1-\theta_2)\langle q_{12} \rangle_3  \big) \\
	\frac{\partial }{\partial y^{(1)}} \mathcal A_N =& \frac{\alpha\beta}{2}\big (  \langle p_{11} \rangle -(1-\theta_2)\langle p_{12} \rangle_3-(\theta_2-\theta_1)\langle p_{12} \rangle_2-\theta_1\langle p_{12} \rangle_1 \big) \\
	\frac{\partial }{\partial y^{(2)}} \mathcal A_N =& \frac{\alpha\beta}{2}\big (  \langle p_{11} \rangle -(1-\theta_2)\langle p_{12} \rangle_3-(\theta_2-\theta_1)\langle p_{12} \rangle_2\big) \\
	\frac{\partial }{\partial y^{(3)}} \mathcal A_N =& \frac{\alpha\beta}{2}\big (  \langle p_{11} \rangle -(1-\theta_2)\langle p_{12} \rangle_3\big) \\
	\frac{\partial }{\partial z} \mathcal A_N =& \frac{\alpha\beta}{2} \langle p_{11} \rangle \\
	\frac{\partial }{\partial w}  \mathcal A_N =& \beta\langle m \rangle.
	\end{align}
\end{lemma}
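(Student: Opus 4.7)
The strategy is to mirror the derivation carried out in the 1RSB case (see the proof of the corresponding Hopfield lemma in Appendix \ref{app2}), simply inserting one additional layer of Guerra's recursion. The backbone is a master formula for $\partial_\rho \mathcal{A}_N$ with $\rho\in\{t,x^{(a)},y^{(a)},z,w\}$, obtained by iterating the chain rule through $\mathcal{Z}_0=\exp\mathbb{E}_1[\log\mathcal{Z}_1]$, $\mathcal{Z}_1=\mathbb{E}_2[\mathcal{Z}_2^{\theta_1}]^{1/\theta_1}$ and $\mathcal{Z}_2=\mathbb{E}_3[\mathcal{Z}_3^{\theta_2}]^{1/\theta_2}$. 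The two nested exponents generate the weights $\mathcal{W}_2$ and $\mathcal{W}_3$ in cascade, producing
\begin{equation}
\partial_\rho \mathcal{A}_N \;=\; \frac{1}{N}\,\mathbb{E}_0\mathbb{E}_1 \mathbb{E}_2\Bigl\{\mathcal{W}_2\,\mathbb{E}_3\bigl[\mathcal{W}_3\,\omega(\partial_\rho \mathcal{B})\bigr]\Bigr\},
\end{equation}
which is precisely the 1RSB identity with an extra $\mathbb{E}_3[\mathcal{W}_3\,\cdot\,]$ slot. Once this is established, the individual statements split into a routine group and a single delicate one.

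In the routine group fall $\partial_w$, $\partial_z$, $\partial_{x^{(a)}}$, $\partial_{y^{(a)}}$ for $a=1,2,3$. For $\partial_w$ and $\partial_z$ no Gaussian field is touched and the outputs $\beta\langle m\rangle$ and $\tfrac{\alpha\beta}{2}\langle p_{11}\rangle$ are immediate from $\partial_\rho \mathcal{H}$. For $\partial_{x^{(a)}}$ the prefactor $\beta/(2\sqrt{x^{(a)}})\sum_i h_i^{(a)}\sigma_i$ (and the $J^{(a)}_\mu$ analogue for $\partial_{y^{(a)}}$) invites a Gaussian integration by parts on the auxiliary field $h_i^{(a)}$. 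The derivative $\partial_{h_i^{(a)}}$ cascades through $\mathcal{W}_2,\mathcal{W}_3,\omega$, giving one diagonal piece (the constant $1$ or the term $\langle p_{11}\rangle$) and several two-replica pieces. The crucial accounting is that the field $h^{(a)}$ is integrated under $\mathbb{E}_a$, so only those two-replica contractions survive in which the second replica has become independent at depth $a$ or shallower; consequently $h^{(1)}$ delivers $\theta_1\langle q_{12}\rangle_1$ only, $h^{(2)}$ adds $(\theta_2-\theta_1)\langle q_{12}\rangle_2$, and $h^{(3)}$ further produces $(1-\theta_2)\langle q_{12}\rangle_3$. The numerical weights $\theta_1$, $\theta_2-\theta_1$, $1-\theta_2$ emerge automatically from the exponents built into $\mathcal{W}_2$ and $\mathcal{W}_3$, in the same telescopic pattern that governs Guerra's original scheme. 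The $p_{12}$ sector is handled identically by transporting the argument to the $\tau$-variables.

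The main obstacle is $\partial_t \mathcal{A}_N$. The derivative of $\mathcal{H}$ has two parts: the ferromagnetic piece $\tfrac{N}{2}m^2(\boldsymbol{\sigma})$, which at once produces $\tfrac{\beta}{2}\langle m^2\rangle$, and the quenched coupling $\beta/(2\sqrt{tN})\sum_{i,\mu}\xi_i^\mu\sigma_i\tau_\mu$, which must be simplified by a Gaussian integration by parts in the patterns $\xi_i^\mu$ under $\mathbb{E}_0$. Since $\xi_i^\mu$ appears symmetrically inside every replica of $\mathcal{Z}_3$, $\partial_{\xi_i^\mu}$ acts on the entire cascade: one bookkeeps four classes of contractions, stratified by the deepest layer at which the second replica is still coupled to the first. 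The innermost, same-replica contraction contributes $\tfrac{\alpha\beta}{2}\langle p_{11}\rangle$; contractions surviving only inside $\omega$ but within the $\mathcal{W}_3$ layer give $(1-\theta_2)\langle p_{12}q_{12}\rangle_3$; contractions that split apart at the $\mathbb{E}_3$ level but remain coupled through $\mathcal{W}_2$ yield $(\theta_2-\theta_1)\langle p_{12}q_{12}\rangle_2$; and finally the outermost contraction between two $\mathbb{E}_1$-independent copies of the full chain produces $\theta_1\langle p_{12}q_{12}\rangle_1$. The combinatorial weights are forced by the same Parisi--Guerra counting that appeared for the easy group, and all off-diagonal pieces enter with a negative sign after the IBP, giving exactly the claimed expression. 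This last step is the voluminous, but entirely mechanical, heart of the argument; its details duplicate those of the 1RSB derivation in Appendix \ref{app2}, with the inner $\mathbb{E}_2[\mathcal{W}_2\,\cdot\,]$ expectation replaced by its two-layer refinement $\mathbb{E}_2\{\mathcal{W}_2\,\mathbb{E}_3[\mathcal{W}_3\,\cdot\,]\}$.
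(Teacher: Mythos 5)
The proposal correctly identifies the same two‑step structure as the paper's (very terse) proof: first establish the master formula $\partial_\rho \mathcal A_N = \tfrac{1}{N}\mathbb E_0\mathbb E_1\mathbb E_2\{\mathcal W_2\,\mathbb E_3[\mathcal W_3\,\omega(\partial_\rho\mathcal B)]\}$ by iterating the chain rule through the recursive definitions $\mathcal Z_0,\mathcal Z_1,\mathcal Z_2$, and then apply Gaussian integration by parts to the auxiliary fields (and to the $\xi_i^\mu$ for $\partial_t$). This matches the approach in the paper, whose proof stops after the master formula and defers the remaining bookkeeping to ``standard calculations''.

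However, the proposal's explanation of how the telescopic weights arise is stated backwards, and as written it contradicts the formulas being proved. You claim that ``$h^{(1)}$ delivers $\theta_1\langle q_{12}\rangle_1$ only, $h^{(2)}$ adds $(\theta_2-\theta_1)\langle q_{12}\rangle_2$, and $h^{(3)}$ further produces $(1-\theta_2)\langle q_{12}\rangle_3$.'' But the derivative $\partial_{x^{(1)}}\mathcal A_N$ — computed by IBP on $h^{(1)}$ \emph{alone} — contains all three $\langle q_{12}\rangle$ terms, while $\partial_{x^{(3)}}\mathcal A_N$ contains only the single term $(1-\theta_2)\langle q_{12}\rangle_3$. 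The mechanism is the opposite of what you describe: for the field $h^{(a)}$ integrated under $\mathbb E_a$, the denominators of the weights $\mathcal W_b$ with $b>a$ are constant in $h^{(a)}$ (those $\mathbb E_b$ integrals have already removed the dependence), while the denominators with $b\le a$ are not. Thus IBP on the \emph{outer} field $h^{(1)}$ differentiates every nested weight and produces the longest formula, whereas IBP on the \emph{inner} field $h^{(3)}$ only differentiates the numerator of $\mathcal W_3$ and $\omega$, producing the shortest. The same reversed picture appears in your description of the $\partial_t$ bookkeeping. This would mislead a reader attempting to carry out the omitted calculations, so the direction of the cascade should be corrected even though the final formulas you quote from the lemma are right and the overall strategy is sound.
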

\begin{proof}
	The proof is pretty lengthy and basically requires just standard calculations at this point, so we just show how to prove (\ref{eqn:partialrA}).
\begin{align}
	\label{eqn:partialrA1}
	\frac{\partial }{\partial \rho}  \mathcal A_N(t, \bm r)=& \frac{1}{N} \mathbb E_0 \mathbb E_1 \bigg [\partial_\rho \ln\mathcal Z_1 \bigg] \nonumber \\
	=&\frac{1}{N} \mathbb E_0 \mathbb E_1 \left\{\frac{1}{\theta_2}\frac{1}{\mathcal Z_1} \mathbb{E}_2\big[ \mathcal Z_2^{\theta_2} \big]^{1/{\theta_2}-1} \mathbb E_2 \big[\partial_\rho\mathcal Z_2^{\theta_2} \big] \right\} \nonumber \\
	=&\frac{1}{N} \mathbb E_0 \mathbb E_1 \mathbb E_2\left\{ \mathcal{W}_2 \frac{1}{\theta_2} \frac{1}{\mathcal Z_2} \left[\mathbb{E}_3 \mathcal{Z}_3^{\theta_2}\right]^{\frac{1}{\theta_2}-1} \mathbb{E}_3 \left(\theta_2 \mathcal{Z}_3^{\theta_2-1}\partial_\rho Z_3\right)\right\} \nonumber \\
	=& \frac{1}{N} \mathbb E_0 \mathbb E_1 \mathbb E_2\bigg[ \mathcal{W}_2   \mathbb{E}_3 \left(\mathcal{W}_3 \frac{1}{Z_3} \partial_\rho Z_3\right)\bigg].
	\end{align}
\end{proof}
\begin{proposition}\label{prop13}
	The streaming of the 2-RSB interpolating quenched pressure obeys, at finite volume $N$, a standard transport equation, that reads as
	\begin{align}
	\label{eqn:transportequation2}
	\frac{d\mathcal A}{dt}&=\partial_t \mathcal A+\dot x^{(1)}\partial_{x_1} \mathcal A +\dot x^{(2)}\partial_{x_2} \mathcal A +\dot x^{(3)}\partial_{x_3} +\mathcal A\dot y^{(1)} \partial_{y_1} \mathcal A +\dot y^{(2)} \partial_{y_2} \mathcal A +\dot y^{(3)} \partial_{y_3} \mathcal A +\dot z \partial_{z} \mathcal A+\dot w \partial_{w} \mathcal A \nonumber \\
	&= S(t, \bm r) + V_N(t, \bm r)
	\end{align}
	where
	\begin{align}
	\label{eqn:f2}
	S(t, \bm r)  \coloneqq & -\frac{\beta}{2}\bar m^2-\frac{\alpha\beta^2}{2} \bar p_3 (1-\bar q_3)-\frac{\alpha\beta^2}{2} \theta_2 (\bar p_3 \bar q_3 - \bar p_2 \bar q_2)-\frac{\alpha\beta^2}{2} \theta_1 (\bar p_2 \bar q_2 - \bar p_1 \bar q_1) \\
	\label{eqn:V2}
	V_N(t, \bm r) \coloneqq & \frac{\beta}{2}\langle (m - \mb)^2 \rangle -(1-\theta_2)\frac{\alpha\beta^2}{2}\langle \Delta p_{12}  \Delta q_{12}\rangle_3 -(\theta_2-\theta_1)\frac{\alpha\beta^2}{2}\langle \Delta p_{12} \Delta q_{12}\rangle_2 \\
	\nonumber
	&-\theta_1 \frac{\alpha\beta^2}{2}\langle \Delta p_{12} \Delta q_{12}\rangle_1
	\end{align}
\end{proposition}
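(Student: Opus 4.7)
The plan is to imitate, step by step, the proof of Proposition \ref{prop:9} with one additional level of nesting. Starting from the expression for $\partial_t \mathcal A_N$ supplied by Lemma \ref{lemma:4} in its $2$-RSB incarnation, I would (i) expand each raw moment $\langle m^2\rangle$ and $\langle p_{12}q_{12}\rangle_a$ for $a=1,2,3$ via the "centering" identities
\begin{align}
\langle m^2\rangle &= \langle (m-\bar m)^2\rangle + 2\bar m\langle m\rangle - \bar m^2,\\
\langle p_{12}q_{12}\rangle_a &= \langle \Delta p_{12}\Delta q_{12}\rangle_a + \bar p_a\langle q_{12}\rangle_a + \bar q_a\langle p_{12}\rangle_a - \bar p_a\bar q_a,
\end{align}
and (ii) recognize the squared-deviation terms as exactly the pieces that build $V_N$ in \eqref{eqn:V2} and the constant $\bar p_a\bar q_a,\ \bar m^2$ terms as the ones contributing to $S$ in \eqref{eqn:f2}, while the remaining linear combinations of $\langle m\rangle,\ \langle p_{11}\rangle,\ \langle q_{12}\rangle_a,\ \langle p_{12}\rangle_a$ are to be reabsorbed into $\sum_a\dot x^{(a)}\partial_{x^{(a)}}\mathcal A_N+\sum_a\dot y^{(a)}\partial_{y^{(a)}}\mathcal A_N+\dot z\partial_z\mathcal A_N+\dot w\partial_w\mathcal A_N$.

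The crucial observation is that the partial derivatives $\partial_{x^{(a)}}\mathcal A_N$ (and analogously $\partial_{y^{(a)}}\mathcal A_N$) in Lemma \ref{lemma:4} are arranged in a \emph{nested} (cumulative) fashion: $\partial_{x^{(a)}}\mathcal A_N$ contains only $\langle q_{12}\rangle_b$ with $b\ge a$, weighted by $\theta_b-\theta_{b-1}$ (with the convention $\theta_0=0,\ \theta_3=1$). Matching coefficients of $\langle q_{12}\rangle_a$ on both sides of the streaming identity yields a triangular linear system whose solution, by induction on $a$, is
\begin{align}
\dot x^{(1)} &= -\alpha\beta\bar p_1,\qquad \dot x^{(2)} = -\alpha\beta(\bar p_2-\bar p_1),\qquad \dot x^{(3)} = -\alpha\beta(\bar p_3-\bar p_2),\\
\dot y^{(1)} &= -\bar q_1,\qquad\phantom{\alpha\beta} \dot y^{(2)} = -(\bar q_2-\bar q_1),\qquad\phantom{\alpha\beta} \dot y^{(3)} = -(\bar q_3-\bar q_2),
\end{align}
while the residual $\langle p_{11}\rangle$ piece (produced both by $\partial_t\mathcal A_N$ and by $\sum_a\dot y^{(a)}\partial_{y^{(a)}}\mathcal A_N$) is cancelled by setting $\dot z=-(1-\bar q_3)$ via \eqref{eqn:partialzA}, and the $\langle m\rangle$ contribution is killed by $\dot w=-\bar m$ via \eqref{eqn:partialwA}.

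Once these velocities are fixed, the constant terms $-\bar p_a\bar q_a$ and $-\bar m^2$ collected from the centering identities, together with the constant parts generated by $-\sum_a \dot x^{(a)}\partial_{x^{(a)}}\mathcal A_N-\dot z\partial_z\mathcal A_N$, telescope into $\bar p_3(1-\bar q_3)+\theta_2(\bar p_3\bar q_3-\bar p_2\bar q_2)+\theta_1(\bar p_2\bar q_2-\bar p_1\bar q_1)$, reproducing the source $S(t,\bm r)$ in \eqref{eqn:f2}; at the same time the squared-deviation terms reassemble into $V_N(t,\bm r)$ in \eqref{eqn:V2}.

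The main obstacle is purely bookkeeping: three different levels of nested averages $\langle\cdot\rangle_1,\ \langle\cdot\rangle_2,\ \langle\cdot\rangle_3$ feed into each of $\partial_t\mathcal A_N,\ \partial_{x^{(a)}}\mathcal A_N,\ \partial_{y^{(a)}}\mathcal A_N$, and one must track each coefficient consistently across all three tiers. No new conceptual ingredient appears beyond what was already exploited in Proposition \ref{prop:9}; the triangular inversion producing the velocities is exactly the finite-$K$ telescoping that will recur (with $K+1$ layers in place of $3$) in the proof of the $K$-RSB streaming equation of subsection \ref{ssec:HOP_KRSB}.
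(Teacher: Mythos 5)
Your outline reproduces exactly the strategy the paper itself invokes: the authors omit the proof of this proposition, remarking only that ``the proof works along the same lines of that of Proposition~\ref{prop:9}'', i.e.\ center the raw moments appearing in $\partial_t\mathcal A_N$, rewrite the residual linear pieces in terms of $\partial_{x^{(a)}}\mathcal A_N$, $\partial_{y^{(a)}}\mathcal A_N$, $\partial_z\mathcal A_N$, $\partial_w\mathcal A_N$ via the triangular (telescoping) structure of the derivatives in the $2$-RSB Lemma, read off the velocities, and collect the constants into $S$ and the fluctuations into $V_N$. Your centering identities for $\langle m^2\rangle$ and $\langle p_{12}q_{12}\rangle_a$, the triangular inversion yielding the velocities, the role of $\dot z$ in killing $\langle p_{11}\rangle$ and of $\dot w$ in killing $\langle m\rangle$, and the remark that this is the finite-$K$ telescoping that recurs in subsection~\ref{ssec:HOP_KRSB}, are all precisely what a filled-in version of the paper's proof would contain.

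One discrepancy worth flagging: you write $\dot x^{(a)}=-\alpha\beta(\bar p_a-\bar p_{a-1})$, whereas the velocities the paper states immediately after the proposition, eqs.~(\ref{eqn:dotx12})--(\ref{eqn:dotw2}), read $\dot x^{(a)}=-\alpha(\bar p_a-\bar p_{a-1})$ with no $\beta$. Given that in the $2$-RSB Lemma \emph{both} $\partial_t\mathcal A_N$ and $\partial_{x^{(a)}}\mathcal A_N$ carry the same overall $\beta$ prefactor, the $\beta$ cancels when matching coefficients of $\langle q_{12}\rangle_a$, so the velocity should indeed be $\beta$-free; the paper then evaluates the characteristics up to $t=\beta$ rather than $t=1$. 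Your version would correspond to evaluating at $t=1$, and would yield the same $x_0^{(a)}=\alpha\beta(\bar p_a-\bar p_{a-1})$ and hence the same final pressure, so this is a convention issue rather than an error in the argument --- but it does not line up with the paper's stated $\dot x^{(a)}$. (The paper itself is not perfectly consistent in tracking $\beta$: the $1$-RSB Lemma has no $\beta$ prefactors while the $2$-RSB Lemma does, and the $\alpha\beta^2/2$ coefficients in eqs.~(\ref{eqn:f2})--(\ref{eqn:V2}) do not sit cleanly with a single-$\beta$ prefactor in $\partial_t\mathcal A_N$, so some care with overall factors of $\beta$ is needed when carrying the computation to completion.)
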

The proof works along the same lines of that of Proposition \ref{prop:9}, hence we omit it.
\newline
The velocities ruling the motion can be written, as standard so far, in terms of $\bar m,\bar p_1, \bar p_2, \bar q_1, \bar q_2$ as
	\begin{align}
	\label{eqn:dotx12}
	\dot x^{(1)} =& -\alpha \bar p_1 \\
	\dot x^{(2)} =& -\alpha (\bar p_2 - \bar p_1) \\
	\dot x^{(3)} =& -\alpha (\bar p_3 - \bar p_2) \\
	\dot y^{(1)} =& -\bar q_1 \\
	\dot y^{(2)} =& - (\bar q_2 - \bar q_1) \\
	\dot y^{(3)} =& - (\bar q_3 - \bar q_2) \\
	\dot z =& -(1-\bar q_3) \\
	\label{eqn:dotw2}
	\dot w =& -\bar m
	\end{align}
\begin{remark} \label{r:rr}
	In the thermodynamic limit, in the 2RSB scenario considered, we have
	\begin{align}
	\label{eqn:thlimaveragem}
	\lim_{N\rightarrow \infty} \langle (m - \bar m)^2 \rangle =& 0\\
	\lim_{N\rightarrow \infty} \langle (q_{12}-\bar q_i)^2 \rangle_i=& 0; \: \: \: i=1,2,3\\
	\label{eqn:thlimaveragep}
	\lim_{N\rightarrow \infty} \langle( p_{12}-\bar p_i)^2\rangle_i =& 0; \: \: \: i=1,2,3
	\end{align}
	and the potential (\ref{eqn:V2}) can be written as
	\begin{align}
	\label{eqn:compactV}
	V_N(t, \bm r) = \frac{\beta}{2}\bigg \{ & \langle (m-\bar m)^2 \rangle - \alpha\beta (1-\theta_2)\langle (q_{12}-\bar q_2) (p_{12}-\bar p_2)\rangle_3 +\nonumber \\
	&- \alpha\beta (\theta_2-\theta_1)\langle (q_{12}-\bar q_2) (p_{12}-\bar p_2)\rangle_2 -\alpha \beta\theta \langle (q_{12}-\bar q_1) (p_{12}-\bar p_1)\rangle_1 \bigg\}
	\end{align}
such that
	\begin{equation} \label{eq:V0_HRSB}
	\lim_{N \to \infty} V_N(t, \bm r) = 0.
	\end{equation}
\end{remark}
By naturally extending the scenario depicted in the 1RSB approximation, killing the potential is equivalent to requiring three temporal scales for thermalization, a slow one a middle  one and a fast one, and self-averaging within each time scale; we refer to Section \ref{comment} for a deeper discussion on the physics behind this choice.
\newline
Taking advantage of Remark \ref{r:rr} we can prove the following
\begin{proposition} \label{propHRSB}
	The transport equation associated to the interpolating pressure function $\mathcal A_N(t, \boldsymbol r)$, in the thermodynamic limit and under the 2RSB assumption, reads as
	\begin{equation}
	\label{eqn:solutionzeroV2}
	\mathcal A_{\textrm{2RSB}}(t, \bm r)=\mathcal A_{\textrm{2RSB}}(0, \bm{r}-\bm{\dot r}t)+S(t, \dot r)t
	\end{equation}
	whose solution is given by
	\begin{align}
		\label{eqn:Afinal}
	&\mathcal A_{2RSB}(t, \bm r)= \log 2+\frac{\alpha}{2\theta_2}\log\left[1+\beta\theta_2 \frac{y^{(3)}+(\bar q_3-\bar q_2)t}{1-\beta[z+(1-\bar q_2)t+\theta_1 (y^{(2)}+(\bar q_2-\bar q_1)t)]}\right]\nonumber \\
	&\frac{1}{\theta_1}\mathbb E_1 \log  \mathbb E_2\left \{ \mathbb E_3\cosh^{\theta_2} \right [ h^{(1)} \sqrt{x^{(1)}+\alpha \bar p_1 t}+ h^{(2)}\sqrt{x^{(2)}+\alpha(\bar p_2 - \bar p_1)t} + h^{(3)}\sqrt{x^{(3)}+\alpha(\bar p_3 - \bar p_2)t}+ w+mt\left] \right \}^{\frac{\theta_1}{\theta_2}} + \nonumber  \\
	&+\frac{\alpha}{2\theta_1}\log \left[1+\beta\theta_1 \frac{y^{(2)}+(\bar q_2-\bar q_1)t}{1-\beta(z+(1-\bar q_2)t+\theta_1 (y^{(2)}+(\bar q_2-\bar q_1)t)+\theta_2 (y^{(3)}+(\bar q_3-\bar q_2)t))}\right] +\nonumber \\
	&-\frac{\alpha}{2}\log\big(1-\beta(z+(1-\bar q_2)t)\big)
	+\frac{\alpha\beta}{2}\frac{ y^{(1)}+\bar q_1 t}{1-\beta(z+(1-\bar q_2)t+\theta_1 (y^{(2)}+(\bar q_2-\bar q_1)t)+\theta_2 (y^{(3)}+(\bar q_3-\bar q_2)t))}+ \nonumber \\
	&-\frac{\beta}{2}\bar m^2t-\frac{\alpha\beta^2}{2} \bar p_3 (1-\bar q_3)t-\frac{\alpha\beta^2}{2} \theta_2 (\bar p_3 \bar q_3 - \bar p_2 \bar q_2)t-\frac{\alpha\beta^2}{2} \theta_1 (\bar p_2 \bar q_2 - \bar p_1 \bar q_1)t
	\end{align}
\end{proposition}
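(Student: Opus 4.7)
The plan is to replicate the strategy used in Proposition \ref{propHRSB} for the 1RSB case, exploiting the method of characteristics. In the thermodynamic limit, Remark (\ref{eq:V0_HRSB}) forces $V_N(t,\bm r)\to 0$, reducing (\ref{eqn:transportequation2}) to a linear first-order PDE of the form $\partial_t \mathcal A + \dot{\bm r}\cdot \nabla_{\bm r}\mathcal A = S(t,\bm r)$. The source $S$ in (\ref{eqn:f2}) depends only on the equilibrium order parameters $\bar m,\bar p_1,\bar p_2,\bar p_3,\bar q_1,\bar q_2,\bar q_3$, and is therefore constant along the straight-line characteristics selected by the velocities (\ref{eqn:dotx12})--(\ref{eqn:dotw2}). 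Integrating along these trajectories immediately yields $\mathcal A_{2RSB}(t,\bm r) = \mathcal A_{2RSB}(0,\bm r - \dot{\bm r}t) + S\, t$, which is the first assertion of the proposition.

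Next I would compute the Cauchy datum $\mathcal A_{2RSB}(0,\bm r_0)$ by direct evaluation. At $t=0$ both the two-body $\sigma\sigma$ coupling and the $\sigma\tau$ coupling in (\ref{eqn:Z3_2}) vanish, so $\mathcal Z_3(0,\bm r_0)$ factorizes into a product over the site index $i$ and the pattern index $\mu$. For each $i$, the sum over $\sigma_i\in\{\pm 1\}$ produces $2\cosh\bigl(\beta[w_0\xi_i + \sum_a \sqrt{x_0^{(a)}} h_i^{(a)}]\bigr)$, and the Boolean factor $\xi_i$ can be reabsorbed by the symmetry of the Gaussian fields $h_i^{(a)}$. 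For each $\mu$, the $\tau_\mu$ integral is one-dimensional Gaussian with quadratic coefficient $\beta(1-z_0)/2$ and linear forcing $\beta\sum_a \sqrt{y_0^{(a)}} J_\mu^{(a)}$, producing a prefactor $(1-\beta z_0)^{-1/2}$ and a Gaussian exponential in the auxiliary fields $J_\mu^{(a)}$.

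The remaining work is to carry out the nested averages $\mathcal Z_2 = \mathbb E_3[\mathcal Z_3^{\theta_2}]^{1/\theta_2}$, $\mathcal Z_1 = \mathbb E_2[\mathcal Z_2^{\theta_1}]^{1/\theta_1}$ and $\mathbb E_1\log$ on this factorized expression. The spin sector reproduces the nested $\cosh^{\theta_2}$ structure appearing in the final formula, while the $\tau$ sector, through three successive Gaussian integrations against $J^{(3)},J^{(2)},J^{(1)}$ carrying the fractional powers $\theta_2$ and $\theta_1$, generates a telescoping cascade of logarithmic terms in which $z_0,y_0^{(3)},y_0^{(2)},y_0^{(1)}$ appear weighted by the concentration parameters $\theta_1,\theta_2$. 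Substituting the backward-flowed initial data $x_0^{(a)} = x^{(a)} + \alpha(\bar p_a - \bar p_{a-1})t$, $y_0^{(a)} = y^{(a)} + (\bar q_a - \bar q_{a-1})t$, $z_0 = z + (1-\bar q_3)t$ and $w_0 = w + \bar m t$ (with $\bar p_0 = \bar q_0 = 0$), and adding the drift contribution $S\, t$, delivers exactly (\ref{eqn:Afinal}).

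The main technical obstacle I anticipate is the $\tau$-sector bookkeeping. One must verify that, after each Gaussian integration against $J_\mu^{(a)}$, the exponent in the surviving $J_\mu^{(b)}$'s remains a non-degenerate quadratic form, and that the cumulative denominators telescope precisely into the combinations $1-\beta z_0$, $1-\beta z_0 - \beta\theta_2 y_0^{(3)}$ and $1-\beta z_0 - \beta\theta_2 y_0^{(3)} - \beta\theta_1 y_0^{(2)}$ displayed in (\ref{eqn:Afinal}). This is a lengthy but routine extension of the 1RSB calculation performed in Appendix \ref{1body1rsb}, and I would postpone the step-by-step verification to Appendix \ref{1body2rsb}.
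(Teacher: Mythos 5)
Your proposal is correct and follows essentially the same route as the paper: kill the potential in the thermodynamic limit to reduce \eqref{eqn:transportequation2} to a linear transport PDE, integrate along the straight characteristics selected by \eqref{eqn:dotx12}--\eqref{eqn:dotw2}, evaluate the factorized one-body Cauchy datum at $t=0$ (nested Gaussian integrations over $J^{(3)},J^{(2)},J^{(1)}$ and the $\cosh^{\theta_2}$ cascade for the spin sector), and then substitute the backward-flowed coordinates and add $S\,t$. Deferring the lengthy $\tau$-sector bookkeeping to an appendix is also exactly what the paper does in Appendix \ref{1body2rsb}.
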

\begin{proof}
By putting (\ref{eqn:dotx12})-(\ref{eqn:dotw2}) into (\ref{eqn:transportequation2}) we find	
	\begin{equation}
	\label{eqn:solutionzeroV1}
	\mathcal A_N(t, \bm r)=\mathcal A_0(0, \bm r_0) -\frac{\beta}{2}\bar m^2t-\frac{\alpha\beta^2}{2} \bar p_3 (1-\bar q_3)t-\frac{\alpha\beta^2}{2} \theta_2 (\bar p_3 \bar q_3 - \bar p_2 \bar q_2)t-\frac{\alpha\beta^2}{2} \theta_1 (\bar p_2 \bar q_2 - \bar p_1 \bar q_1)t
	\end{equation}
	where $r_0$ can be obtained using the equation of motion
	\begin{equation}
	\label{eqn:linearmotion}
	\bm r = \bm r_0 + \dot{\bm r} t
	\end{equation}
	where the velocities are defined in (\ref{eqn:dotx12})-(\ref{eqn:dotw2}). So all we have to compute is $\mathcal A_0(0, \bm r_0)$, that can be easily done because at $t=0$ the two body interaction vanishes and the (\ref{eqn:Z3_2}) can be written as a factorized one-body calculation (reported in detail in Appendix \ref{1body2rsb}) resulting in
\begin{align}
	\label{eqn:A0fin}
	\mathcal A_0(0, \bm r_0)=&\log 2+\frac{1}{\theta_1}\int Dh^{(1)} \log  \int Dh^{(2)}\bigg[ \int Dh^{(3)}\cosh^{\theta_2} \beta \big\{ h^{(1)} \sqrt{x_0^{(1)}}+h^{(2)} \sqrt{x_0^{(2)}}+h^{(3)} \sqrt{x_0^{(3)}} + w_0\big\}\bigg]^{\frac{\theta_1}{\theta_2}} + \nonumber \\
	&+\frac{\alpha}{2\theta_2}\log\bigg(1+\beta \frac{\theta_2 y_0^{(3)}}{1-\beta(z_0+\theta_2 y_0^{(3)})}\bigg)+\frac{\alpha}{2\theta_1}\log\bigg(1+\beta \frac{\theta_1 y_0^{(2)}}{1-\beta(z_0+\theta_1 y_0^{(2)}+\theta_2 y_0^{(3)})}\bigg) +\nonumber \\
	&-\frac{\alpha}{2}\log(1-\beta z_0)
	+\frac{\alpha\beta}{2}\frac{ y_0^{(1)}}{1-\beta(z_0+\theta_1 y_0^{(2)}+\theta_2 y_0^{(3)})}
	\end{align}
Finally, we put together (\ref{eqn:A0fin}), (\ref{eqn:solutionzeroV1}), (\ref{eqn:linearmotion}) and (\ref{eqn:dotx1})-(\ref{eq:dotw}) to get the complete expression for the interpolating pressure in the 2RSB approximation.
\end{proof}
Summarizing, overall we obtain the following
\begin{theorem}
The 2RSB quenched pressure for Hopfield model, in the thermodynamic limit, reads as
\begin{align}
\notag
&	\mathcal A_{\textrm{2RSB}}(t, \bm r)= \log 2+\frac{1}{\theta_1}\int Dh^{(1)} \log  \int Dh^{(2)}\bigg[ \int Dh^{(3)}\cosh^{\theta_2}(g(\boldsymbol h, \bar m))
 \bigg]^{\frac{\theta_1}{\theta_2}} + \nonumber \\
	&+\frac{\alpha}{2\theta_2}\log\left[1+ \frac{\beta\theta_2 (\qb_3-\qb_2)}{1-\beta[(1-\qb_2)+\theta_2 (\qb_3-\qb_2)]}\right]+\frac{\alpha}{2\theta_1}\log \left[1+ \frac{\beta \theta_1 (\qb_2-\qb_1)}{1-\beta[(1-\qb_2)+\theta_1 (\qb_2-\qb_1)+\theta_2 (\qb_3-\qb_2)]}\right] +\nonumber \\
	&-\frac{\alpha}{2}\log[1-\beta (1-\qb_2)] 	+\frac{\alpha\beta}{2}\frac{ \qb_1}{1-\beta[(1-\qb_2)+\theta_1 (\qb_2-\qb_1)+\theta_2 (\qb_3-\qb_2)]} -\frac{\beta}{2}t \bar m^2-\frac{\alpha\beta^2}{2} t \bar p_3 (1-\bar q_3)- \notag \\
	&-\frac{\alpha\beta^2}{2}t \theta_2 (\bar p_3 \bar q_3 - \bar p_2 \bar q_2)-\frac{\alpha\beta^2}{2}t \theta_1 (\bar p_2 \bar q_2 - \bar p_1 \bar q_1)
	\end{align}
	where $\boldsymbol h = (h^{(1)}, h^{(2)}, h^{(3)})$ and $g(\boldsymbol h, \bar m)= \beta \bar m+  h^{(1)} \sqrt{\alpha \beta \bar p_1}+ h^{(2)}\sqrt{\alpha\beta(\bar p_2 - \bar p_1)} + h^{(3)}\sqrt{\alpha\beta(\bar p_3 - \bar p_2)}$.
\end{theorem}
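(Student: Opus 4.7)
The proof is a straightforward specialization of the preceding proposition: to get the 2-RSB quenched pressure of the Hopfield model one only has to evaluate $\mathcal A_{\mathrm{2RSB}}(t,\bm r)$ at the single point $(t,\bm r)$ where the interpolating partition function $\mathcal Z_3(t,\bm r)$ collapses to the original Hopfield partition function (\ref{eq:hop_BareZ}). Concretely, the plan is to set $x^{(a)}=y^{(a)}=0$ for $a=1,2,3$ together with $z=w=0$ and to pick $t$ at the value dictated by the construction of $\mathcal Z_3$, so that the two-body pattern contribution $\frac{t}{2N}\sum\xi_i\xi_j\sigma_i\sigma_j$ and the bipartite coupling $\sqrt{t/N}\sum\xi_i^\mu\sigma_i\tau_\mu$ reassemble, upon Gaussian integration of $\bm\tau$, the genuine Hopfield Hamiltonian, exactly as was done for the 1-RSB theorem.

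With that specialization in hand, the three blocks of the explicit interpolating pressure translate mechanically. The argument of $\cosh^{\theta_2}$ in the quenched contribution, $h^{(1)}\sqrt{x^{(1)}+\alpha\bar p_1 t} + h^{(2)}\sqrt{x^{(2)}+\alpha(\bar p_2-\bar p_1)t} + h^{(3)}\sqrt{x^{(3)}+\alpha(\bar p_3-\bar p_2)t} + w + \bar m t$, becomes the field $g(\bm h,\bar m)$ declared in the theorem. The three Gaussian-sector logarithms and the rational $\bar q_1$ term produce the expected expressions with denominators $1-\beta(1-\bar q_2)$, $1-\beta[(1-\bar q_2)+\theta_2(\bar q_3-\bar q_2)]$ and $1-\beta[(1-\bar q_2)+\theta_1(\bar q_2-\bar q_1)+\theta_2(\bar q_3-\bar q_2)]$. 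Finally, the source term $S(t,\bm r)\cdot t$ supplies the energetic \emph{ziqqurat} block $-\frac{\beta}{2}\bar m^2 - \frac{\alpha\beta^2}{2}\bar p_3(1-\bar q_3) - \frac{\alpha\beta^2}{2}\theta_2(\bar p_3\bar q_3-\bar p_2\bar q_2) - \frac{\alpha\beta^2}{2}\theta_1(\bar p_2\bar q_2-\bar p_1\bar q_1)$.

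There is no genuine obstacle. The characteristics $\dot{\bm r}$ in (\ref{eqn:dotx12})--(\ref{eqn:dotw2}) have already been fixed so as to kill $V_N$ in the thermodynamic limit, thanks to the self-averaging of $m,q_{12},p_{12}$ at the three concentration values (Remark \ref{r:rr}); the one-body Cauchy datum at $t=0$ has been computed as a factorized Gaussian integral in Appendix \ref{1body2rsb}; and the backward propagation $\bm r_0 = \bm r - \dot{\bm r}\,t$ along the characteristics is linear. The only point that deserves attention is purely bookkeeping, namely tracking the powers of $\beta$ inherited from the $\beta$-prefactor in $\mathcal Z_3$ and from the inverse-temperature-weighted Gaussian prior on $\tau_\mu$, a step that mirrors without modifications the analogous passage in the 1-RSB theorem.
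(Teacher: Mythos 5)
Your proposal matches the paper's proof exactly: the paper disposes of the theorem in a single sentence, ``By taking $\bm r= \boldsymbol 0$ and $t=\beta$ we find the Hopfield pressure in the 2RSB approximation,'' which is precisely the specialization of Proposition~\ref{propHRSB} that you describe. The one small caveat is that you never actually commit to the numerical value of $t$ (you write only that $t$ should be ``picked at the value dictated by the construction of $\mathcal Z_3$''); since the prefactors and the $\beta$-weighted Gaussian prior on $\bm\tau$ make the required choice $t=\beta$ rather than the $t=1$ quoted in the earlier ``recovers the standard pressure'' remark, it is worth naming $t=\beta$ explicitly, as the paper does, so the reader is not left to guess.
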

\begin{proof}	
	By taking $\bm r= \boldsymbol 0$ and $t=\beta$ we find the Hopfield pressure in the 2RSB approximation.	
\end{proof}
\begin{corollary}
The self-consistency equation are
\begin{align}
\label{seq:a2RSB}
\qb_1 =& \mathbb{E}_1 \left\{ \frac{\mathbb{E}_2 \left[ \left[ \mathbb{E}_3 \cosh^{\theta_2}(g(\boldsymbol h,\mb))\right]^{\frac{\theta_1}{\theta_2}}\displaystyle{\frac{\mathbb{E}_3 \left( \cosh^{\theta_2}(g(\boldsymbol h,\mb))\tanh(g(\boldsymbol h,\mb)) \right)}{\mathbb{E}_3 \cosh^{\theta_2}(g(\boldsymbol h,\mb))}}\right]}{\mathbb{E}_2 \left[\mathbb{E}_3 \cosh^{\theta_2} (g(\boldsymbol h,\mb) ) \right]^{\frac{\theta_1}{\theta_2}}}\right\}^2 \\
\label{seq:b2RSB}
\qb_2 =& \mathbb{E}_1 \left\{ \frac{\mathbb{E}_2 \left[ \left[ \mathbb{E}_3 \cosh^{\theta_2}(g(\boldsymbol h,\mb))\right]^{\frac{\theta_1}{\theta_2}}\left[\displaystyle{\frac{\mathbb{E}_3 \left( \cosh^{\theta_2}(g(\boldsymbol h,\mb))\tanh(g(\boldsymbol h,\mb)) \right)}{\mathbb{E}_3 \cosh^{\theta_2}(g(\boldsymbol h,\mb))}}\right]^2\right]}{\mathbb{E}_2 \left[\mathbb{E}_3\cosh^{\theta_2} (g(\boldsymbol h,\mb) ) \right]^{\frac{\theta_1}{\theta_2}}}\right\} \\
\label{seq:c2RSB}
\qb_3 =& \mathbb{E}_1 \left\{ \frac{\mathbb{E}_2 \left[ \left[ \mathbb{E}_3 \cosh^{\theta_2}(g(\boldsymbol h,\mb))\right]^{\frac{\theta_1}{\theta_2}}\displaystyle{\frac{\mathbb{E}_3 \left( \cosh^{\theta_2}(g(\boldsymbol h,\mb))\tanh^2(g(\boldsymbol h,\mb)) \right)}{\mathbb{E}_3 \cosh^{\theta_2}(g(\boldsymbol h,\mb))}}\right]}{\mathbb{E}_2 \left[\mathbb{E}_3\cosh^{\theta_2} (g(\boldsymbol h,\mb) ) \right]^{\frac{\theta_1}{\theta_2}}}\right\} \\
\label{seq:m2RSB}
\mb =& \mathbb{E}_1 \left\{ \frac{\mathbb{E}_2 \left[ \left[ \mathbb{E}_3 \cosh^{\theta_2}(g(\boldsymbol h,\mb)\right]^{\frac{\theta_1}{\theta_2}}\displaystyle{\frac{\mathbb{E}_3 \left( \cosh^{\theta_2}(g(\boldsymbol h,\mb)) \tanh(g(\boldsymbol h,\mb)) \right)}{\mathbb{E}_3 \cosh^{\theta_2}(g(\boldsymbol h,\mb))}}\right]}{\mathbb{E}_2 \left[\mathbb{E}_3 \cosh^{\theta_2} (g(\boldsymbol h,\mb))  \right]^{\frac{\theta_1}{\theta_2}}}\right\} \\
\label{seq:p12RSB}
\bar{p}_1&=\frac{\beta \qb_1}{[ 1-\beta(1-\qb_2) + \theta_1 (\qb_2-\qb_1) + \theta_2 (\qb_3-\qb_2) ]^2} 
\end{align}
\begin{align}
\label{seq:p22RSB}
\bar{p}_2 &= \bar{p}_1 + \frac{\beta (\qb_2-\qb_1)}{1-\beta[(1-\qb_2) + \theta_1 (\qb_2-\qb_1) + \theta_2 (\qb_3-\qb_2)] [1-\beta((1-\qb_2)+ \theta_2 (\qb_3-\qb_2))]} \\
\label{seq:p32RSB}
\bar{p}_3 &= \bar{p}_2 + \frac{\beta (\qb_3-\qb_2)}{[1-\beta (1-\qb_2)][1-\beta((1-\qb_2)+ \theta_2 (\qb_3-\qb_2))]}
\end{align}
\label{cor:SC_HOP2RSB}
\end{corollary}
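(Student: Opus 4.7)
The plan is to mirror the strategy of Corollary \ref{cor:SC_HOP_1RSB} and Corollary \ref{cor:SC_SK_2RSB}: I would compute the partial derivatives of the $2$-RSB interpolating pressure in two independent ways -- first via the $2$-RSB version of Lemma \ref{lemma:4} combined with Remark \ref{r:rr}, and second by direct differentiation of the closed-form solution stated in Proposition \ref{propHRSB} -- and then equate them at $t=\beta,\ \bm r = \bm 0$. Using self-averaging in the thermodynamic limit, so that $\langle q_{12}\rangle_i \to \bar q_i$, $\langle p_{12}\rangle_i \to \bar p_i$ and $\langle m\rangle \to \bar m$, the lemma yields a triangular linear system relating $\{\partial_{x^{(a)}}\mathcal A\}_{a=1,2,3}$ to $(\bar q_1,\bar q_2,\bar q_3)$ (and analogously $\{\partial_{y^{(a)}}\mathcal A\}_{a=1,2,3}$ to $(\bar p_1,\bar p_2,\bar p_3)$), which is immediately invertible by taking consecutive differences: for instance $\bar q_1=(2/\beta\theta_1)(\partial_{x^{(2)}}\mathcal A-\partial_{x^{(1)}}\mathcal A)$, $\bar q_2=(2/\beta(\theta_2-\theta_1))(\partial_{x^{(3)}}\mathcal A-\partial_{x^{(2)}}\mathcal A)$, $\bar q_3=(2/\beta(1-\theta_2))(\beta/2-\partial_{x^{(3)}}\mathcal A)$, with analogous expressions for $\bar p_a$ and $\partial_w \mathcal A = \beta \bar m$.

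Next, I would differentiate the explicit expression for $\mathcal A_{2RSB}(t,\bm r)$ given in Proposition \ref{propHRSB} at $t=\beta$, $\bm r=\bm 0$. The $w$-derivative directly produces \eqref{seq:m2RSB}. For the $x^{(a)}$-derivatives, the chain rule through the nested structure $\int Dh^{(1)}\log \mathbb{E}_2\bigl[(\mathbb{E}_3\cosh^{\theta_2}(g))^{\theta_1/\theta_2}\bigr]$ pulls out factors of the form $h^{(a)}\beta/(2\sqrt{\alpha\beta(\bar p_a-\bar p_{a-1})})$ paired with $\tanh(g)$. Applying Gaussian integration by parts $\mathbb{E}_h[h f(h)]=\mathbb{E}_h[f'(h)]$ in the variable $h^{(a)}$ cancels the square root and converts the expression into the nested ratios of $\mathbb{E}_3$-averages of $\cosh^{\theta_2}(g)$ weighted by $\tanh(g)$ or $\tanh^2(g)$, as they appear in \eqref{seq:a2RSB}--\eqref{seq:c2RSB}; matching to the consecutive differences from the first step then gives the self-consistencies for $\bar q_1,\bar q_2,\bar q_3$. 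The $y^{(a)}$- and $z$-derivatives of the explicit formula are elementary rational expressions in the $\bar q_i$'s and $\theta_j$'s; matching them against the lemma formulas for $\langle p_{12}\rangle_a$ and $\langle p_{11}\rangle$ and rearranging yields \eqref{seq:p12RSB}, \eqref{seq:p22RSB}, \eqref{seq:p32RSB}.

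The main obstacle I anticipate is the Gaussian integration-by-parts bookkeeping for the $x^{(a)}$-derivatives: since the nested averaging $\mathbb{E}_2[(\mathbb{E}_3\cosh^{\theta_2}(g))^{\theta_1/\theta_2}]$ intertwines the three hierarchical layers with two distinct exponents ($\theta_1/\theta_2$ and $\theta_2$), each $h^{(a)}$-differentiation cascades through the layers and one must keep careful track of which $\tanh$-moments survive at which level. Specifically, moments associated with the outermost $h^{(1)}$-integration get squared (yielding the outer square in \eqref{seq:a2RSB}), moments of the middle $h^{(2)}$-layer produce the inner square inside \eqref{seq:b2RSB}, while the innermost $h^{(3)}$-moments remain linear and generate the $\tanh^2(g)$ inside \eqref{seq:c2RSB}. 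This is precisely the natural extension of the $1$-RSB computation of Corollary \ref{cor:SC_HOP_1RSB} to one additional hierarchical level, so no new technique is needed -- only careful accounting of the interplay between the three Gaussian integrations and the two nested power exponents.
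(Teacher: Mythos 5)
Your proposal matches the strategy the paper uses for its analogous self-consistency results (Corollaries \ref{cor:SC_SK}, \ref{cor:SC_SK_2RSB}, \ref{cor:SC_HOP_1RSB}), which is exactly what the paper is implicitly invoking when it writes that the proof of Corollary \ref{cor:SC_HOP2RSB} is ``similar to that pursued to achieve Corollary (\ref{cor:SC_SK})'' and therefore omits it: equate the Lemma-derived partial derivatives (with the thermodynamic-limit replacements $\langle q_{12}\rangle_a\to\bar q_a$, $\langle p_{12}\rangle_a\to\bar p_a$, $\langle m\rangle\to\bar m$) with those obtained by differentiating the explicit closed-form $\mathcal A_{2RSB}$, using Gaussian integration by parts in each $h^{(a)}$ to eliminate the $1/\sqrt{x^{(a)}}$ factors, then invert the triangular system via consecutive differences. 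Your account of which hierarchical level contributes which structure (outer square from $h^{(1)}$, inner square inside $\mathbb{E}_2$ from $h^{(2)}$, $\tanh^2$ inside $\mathbb{E}_3$ from $h^{(3)}$) is the correct bookkeeping, and the triangular relations $\bar q_1=(2/\beta\theta_1)(\partial_{x^{(2)}}-\partial_{x^{(1)}})\mathcal A$, etc., match Lemma \ref{lemma:4}.
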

Since the proof is similar to  that pursued to achieve Corollary (\ref{cor:SC_SK}) we omit it.


\subsection{Broken Replica Interpolation: K-RSB solution} \label{ssec:HOP_KRSB}
Mirroring the work done to describe the Sherrington-Kirkpatrick model with a signal, in this section we address the Hopfield model at the general step $K$ of RSB, providing just the main passages.
\begin{definition} \label{def:Htheta_KRSB}
	In the K-th step of replica-symmetry breaking, the distribution of the two-replica overlaps $q$ and $p$, in the thermodynamic limit, displays $K+1$ delta-peaks at the equilibrium values, referred to as $\bar{q}_1,...\bar{q}_{K+1}$ and as $\bar{p}_1,...\bar{p}_{K+1}$, repsectively, and the concentration is ruled by $\theta_i \in [0,1], \forall i =1, ... K$, namely
	\begin{equation}
\lim_{N\rightarrow +\infty} P'_N(q)= \sum_{a=0}^K (\theta_{a+1}-\theta_{a}) \delta(q-\qb_{a+1}) \label{limforq2KRSB}
	\end{equation}
	\begin{equation}
\lim_{N\rightarrow +\infty} P''_N(p)= \sum_{a=0}^K (\theta_{a+1}-\theta_{a}) \delta(p-\bar{p}_{a+1}). \label{limforp2KRSB}
	\end{equation}
	with $\theta_0=0$ and $\theta_{K+1}=1$.
\newline
	The magnetization still self-averages at $ \bar{m}$ as in (\ref{limform}).
\end{definition}

\begin{definition} 
\label{def:HOP_KRSB}
Given the interpolating parameters $\bm r = (x^{(1)}, ..., x^{(K+1)}, y^{(1)}, ..., y^{(K+1)}, z, w)$, $t$ and the i.i.d. auxiliary fields $\{h_i^{(1)},  ..., h_i^{(K+1)}\}_{i=1,...,N}$, with $h_i^{(1,...,K+1)} \sim \mathcal N [0,1]$ for $i=1,...,N$, and $\{J_{\mu}^{(1)}, ..., J_{\mu}^{(K+1)}\}_{\mu=1,...,P}$, with $J_{\mu}^{(1,...,K+1)} \sim \mathcal N [0,1]$ for $\mu=1,...,P$, we can write the K-RSB interpolating partition function $\mathcal Z_N(t, \boldsymbol r)$ recursively, starting by
\begin{align}
	\mathcal Z_{K+1}(t, \bm r) &=\sum_{\bm \sigma} \int D \bm \tau \exp \left \{ \beta \left[ \frac{t}{2N}\sum_{i,j=1}^{N,N}\xi_i \xi_j\sigma_i \sigma_j+\sqrt{\frac{t}{N}}\sum_{i, \mu=1}^{N,P} \xi_i^\mu \sigma_i \tau_\mu + w \sum_{i=1}^N \xi_i\sigma_i \right. \right. \nonumber \\ &+  \left. \left.  \sum_{a=1}^{K+1} \sqrt{x^{(a)}}\sum_{i=1}^N h_i^{(a)} \sigma_i + \sum_{a=1}^{K+1} \sqrt{y^{(a)}}\sum_{\mu=1}^P J_\mu^{(a)}\tau_\mu  +z \sum_{\mu=1}^P \frac{\tau_\mu^2}{2}  \right] \right\}
\end{align}
and then averaging out the fields one per time.
With $\mathbb{E}_a$ we denote the average over the variables $h_i^{(a)}$'s and $J_\mu^{(a)}$, for $a =1, ..., K+1$ and with $\mathbb{E}_0$ we denote the average over the variables $\xi^\mu$'s. 
\end{definition}

\begin{proposition}
At finite volume $N$ and finite $K$, the streaming of the K-RSB interpolating quenched pressure fulfills a standard transport equation, that reads as
\begin{align}
\dt \mathcal{A}_N + \sum_{b=0}^K \dot{x}^{(b+1)} \frac{\partial}{\partial x^{(b+1)}} \mathcal{A}_N + \sum_{b=0}^K \dot{y}^{(b+1)} \frac{\partial}{\partial y^{(b+1)}} \mathcal{A}_N + \dot{w}\mb \dw \mathcal{A}_N + \dot{z} \partial_z \mathcal{A}_N = V_N(t, \boldsymbol r) + S(t, \boldsymbol r)
\label{eq:ANfiniteHOP_KRSB}
\end{align}
where
\begin{align}
V_N (t, \bm r) &:= \frac{\beta}{2} \langle (m-\mb)^2 \rangle - \frac{\beta^2 \alpha}{2} \sum_{a=1}^{K+1} (\theta_{a} - \theta_{a-1}) \langle \Delta p_{12} \Delta q_{12} \rangle_{a} \\
S (t, \bm r) &:= -\frac{\beta}{2} \mb^2 + \frac{\beta^2\alpha}{2}\sum_{a=1}^{K+1} (\theta_{a} - \theta_{a-1}) \bar{p}_a \qb_a - \frac{\beta^2\alpha}{2}\bar{p}_{K+1}
\end{align}
\end{proposition}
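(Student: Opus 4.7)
The plan is to mirror the derivations already carried out for $K=1$ and $K=2$ in Propositions \ref{prop:9} and \ref{prop13}, extending the telescopic structure of the nested Boltzmann averages to an arbitrary but finite number of replica-symmetry-breaking levels. First I would establish the natural generalization of Lemma \ref{lemma:4}: applying $\partial_\rho$ to the recursively defined partition functions $\mathcal{Z}_{K+1}, \mathcal{Z}_K, \dots, \mathcal{Z}_1$ one produces a cascade of weights $\mathcal{W}_a := \mathcal{Z}_a^{\theta_{a-1}}/\mathbb{E}_a[\mathcal{Z}_a^{\theta_{a-1}}]$, so that for every interpolating parameter $\rho$
\begin{equation*}
\partial_\rho \mathcal{A}_N = \frac{1}{N} \mathbb{E}_0 \mathbb{E}_1\bigl[\mathbb{E}_2[\mathcal{W}_2 \cdots \mathbb{E}_{K+1}[\mathcal{W}_{K+1}\, \omega(\partial_\rho \mathcal{B})]]\bigr].
\end{equation*}

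Using Wick's lemma in the variables $\xi_i^\mu$ (for $\partial_t$), $h_i^{(a)}$ (for $\partial_{x^{(a)}}$) and $J_\mu^{(a)}$ (for $\partial_{y^{(a)}}$), together with the direct computations for $\partial_z$ and $\partial_w$, one gets the K-RSB analogue of the lemmas already proven: every overlap $\langle q_{12}\rangle$ and $\langle p_{12}\rangle$ appears in a telescopic combination weighted by the coefficients $(\theta_a - \theta_{a-1})$, with the conventions $\theta_0 = 0$ and $\theta_{K+1} = 1$. In particular
\begin{equation*}
\partial_t \mathcal{A}_N = \frac{\beta}{2}\langle m^2 \rangle + \frac{\alpha \beta^2}{2}\Bigl( \langle p_{11}\rangle - \sum_{a=1}^{K+1}(\theta_a - \theta_{a-1})\, \langle p_{12} q_{12}\rangle_a\Bigr),
\end{equation*}
and analogously for the spatial derivatives, each of which encodes a partial telescopic sum.

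Next I would invoke the identities $\langle (m-\mb)^2 \rangle = \langle m^2\rangle + \mb^2 - 2\mb\langle m\rangle$ and $\langle \Delta p_{12} \Delta q_{12}\rangle_a = \langle p_{12} q_{12}\rangle_a + \bar{p}_a \qb_a - \bar{p}_a \langle q_{12}\rangle_a - \qb_a \langle p_{12}\rangle_a$ to split $\partial_t \mathcal{A}_N$ into three blocks: (i) the quadratic deviations that constitute the potential $V_N(t,\bm r)$, (ii) the equilibrium products $\bar{p}_a \qb_a$ that will be absorbed into the source $S(t,\bm r)$, and (iii) the linear telescopic pieces $\bar{p}_a \langle q_{12}\rangle_a$ and $\qb_a \langle p_{12}\rangle_a$. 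The latter are then reassembled, via Abel summation, into multiples of the partial derivatives $\partial_{x^{(b+1)}}\mathcal{A}_N$, $\partial_{y^{(b+1)}}\mathcal{A}_N$ and $\partial_z \mathcal{A}_N$; the use of $\langle p_{11}\rangle = 2\alpha^{-1}\partial_z \mathcal{A}_N$ together with the completion $1 = (1-\qb_{K+1}) + \sum_{b=0}^K(\qb_{b+1}-\qb_b)$ (and its $\bar p$-analogue) absorbs the remaining constants into $\bar{p}_{K+1}$. This matching fixes the velocities to be
\begin{equation*}
\dot{x}^{(b+1)} = -\alpha(\bar{p}_{b+1} - \bar{p}_b), \qquad \dot{y}^{(b+1)} = -(\qb_{b+1} - \qb_b), \qquad \dot{z} = -(1-\qb_{K+1}), \qquad \dot{w} = -\mb,
\end{equation*}
with the conventions $\bar{p}_0 = \qb_0 = 0$, and the announced transport equation follows.

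The main obstacle is purely combinatorial: one must verify that the Abel rearrangement
\begin{equation*}
\sum_{a=1}^{K+1}(\theta_a-\theta_{a-1})\, \bar{p}_a \langle q_{12}\rangle_a = \sum_{b=0}^{K} (\bar{p}_{b+1}-\bar{p}_b)\Bigl[1 - \sum_{a=1}^{b}(\theta_a-\theta_{a-1})\langle q_{12}\rangle_a - \Bigl(1-\!\!\!\sum_{a=1}^{K+1}(\theta_a-\theta_{a-1})\langle q_{12}\rangle_a\Bigr)\Bigr]
\end{equation*}
and its $\langle p_{12}\rangle$-analogue recombine exactly into the listed coefficients of $\partial_{x^{(b+1)}}\mathcal{A}_N$ and $\partial_{y^{(b+1)}}\mathcal{A}_N$; this is the place where the cases $K=1,2$ are recovered as a sanity check and where the correct sign of $-\tfrac{\beta^2 \alpha}{2}\bar{p}_{K+1}$ in $S(t,\bm r)$ emerges. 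Once this identity is established in general, assembling the pieces and collecting the deviation terms into $V_N$ and the equilibrium terms into $S$ yields the statement verbatim.
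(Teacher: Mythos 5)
Your proposal follows exactly the strategy of the paper's own proof: start from the derivative formulas obtained via the telescopic weight cascade, substitute the identities for $\langle(m-\mb)^2\rangle$ and $\langle\Delta p_{12}\Delta q_{12}\rangle_a$, then reassemble the linear pieces into the partial derivatives by Abel summation and read off the velocities. The only differences are cosmetic and in your favor: you make the Abel rearrangement explicit (the paper elides it), and your velocity $\dot z = -(1-\qb_{K+1})$ carries the correct sign that is dropped by typo in the paper's statement (consistent with $\dot z = -(1-\qb_2)$ and $\dot z = -(1-\qb_3)$ in the 1RSB and 2RSB cases).
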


\begin{proof}
The proof is similar to those provided to prove Proposition \ref{prop:9} for the 1RSB case and to prove Proposition \ref{prop13} for the 2RSB case. We start from
\begin{align}
\dt \mathcal{A}_N= \frac{1}{2} \langle m^2 \rangle + \frac{\alpha}{2} \left( \langle p_{11} \rangle - \sum_{a=0}^K (\theta_{a+1}-\theta_a)\langle p_{12} q_{12} \rangle_{a+1} \right)
\end{align}
and then we use
\begin{align}
\langle \Delta p_{12} \Delta q_{12} \rangle_a &= \langle p_{12} q_{12} \rangle_a - \bar{p}_a\qb_a + \qb_a\langle p_{12} \rangle_a + \bar{p}_a\langle q_{12}\rangle_a,  \forall a=1, ..., K+1 \label{quad_qapaHOP}\\
\langle (m - \mb)^2 \rangle &= \langle m^2 \rangle + \mb-2\mb\langle m \rangle \\
\frac{\alpha}{2} \langle p_{11} \rangle &=\frac{\alpha}{2} \langle p_{11} \rangle\left( 1-\qb_{K+1} + \sum_{b=0}^{K} (\qb_{b+1} -\qb_{b})\right) \notag \\
\frac{\alpha}{2} \bar{p}_K &= \frac{\alpha}{2} \sum_{b=0}^K (\bar{p}_{b+1} -\bar{p}_{b}).
\end{align}
In this way, placing
\begin{align}
\dot{y}^{(b)} &= - (\qb_{b} - \qb_{b-1}), \ b=1, ..., K+1 \\
\dot{x}^{(b)} &= - \alpha(\bar{p}_{b} - \bar{p}_{b-1}), \ b=1, ..., K+1 \\
\dot{z}&= (1-\qb_{K+1}) \\
\dot{w} &= - \mb,
\end{align}
we get the derivatives w.r.t. each $x^{(b)}$, $y^{(b)}$ and w.r.t. $w$.
\end{proof}

\begin{proposition}
The transport equation associated to the interpolating pressure function defined in (\ref{def:SK_KRSB}), in the thermodynamic limit and in the K-RSB scenario, reads as
\begin{align}
\dt \mathcal{A}_N + \sum_{b=0}^K \dot{x}^{(b+1)} \frac{\partial}{\partial x^{(b+1)}} \mathcal{A}_N &+ \sum_{b=0}^K \dot{y}^{(b+1)} \frac{\partial}{\partial y^{(b+1)}} \mathcal{A}_N + \dot{w}\dw \mathcal{A}_N + \dot{z} \partial_z \mathcal{A}_N = \notag \\
&=-\frac{\beta}{2} \mb^2 + \frac{\beta^2\alpha}{2}\sum_{a=0}^K (\theta_{a+1} - \theta_{a}) \bar{p}_a \qb_a - \frac{\beta^2\alpha}{2}\bar{p}_{K+1}
\label{eq:AN_KRSB_HOP}
\end{align}
whose solution is given by
\begin{align}
\mathcal{A}_N (t, \bm r) = \mathcal{A}_N(0, \bm r - \boldsymbol{ \dot{r}}t) +  t\left[\frac{\beta J_0}{2}\mb^2-\frac{\beta^2}{4} \left( \sum_{a=0}^K (\theta_{a+1} - \theta_a) \qb_{a+1}^2\right) + \frac{\beta^2}{2}(1-\qb_{K+1}) \right],
\end{align}
where
\begin{align}
\mathcal{A}_N(0, \bm r - \boldsymbol{ \dot{r}}t) = \frac{1}{\theta_1}\int D h^{(1)} \log \mathcal{N}_1 +& \frac{\alpha}{2} \sum_{z=1}^K \frac{1}{\theta_z} \log \left( 1+\beta \theta_z \frac{y_0^{(z+1)}}{1-\beta(z_0 + \sum_{b=1}^z \theta_b y_0^{(b+1)})} \right) - \notag \\
&- \frac{\alpha}{2}\log (1-\beta z_0) + \frac{\alpha}{2}\beta \frac{y_0^{(1)}}{1-\beta(z_0 + \sum_{b=1}^K \theta_b y_0^{(b+1)})}
\end{align}
with
\begin{align}
\mathcal{N}_a = \begin{cases}
\displaystyle{\int D h^{(a+1)} \left[\mathcal{N}_{a+1}\right]^{\theta_a/\theta_{a+1}} }\ \textnormal{for} \ a =1... K \\
2 \cosh \left( \beta (w +  \sum_{a=1}^{K+1} \sqrt{x_0^{(a)}} h^{(a)}) \right) \ \textnormal{for} \ a = K+1 \\
\end{cases}
\end{align}
\end{proposition}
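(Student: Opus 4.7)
The plan is to extend verbatim the two-step argument already carried out for $1$RSB and $2$RSB, iterating it $K$ times. I would start from the finite-$N$ streaming identity provided by the previous proposition,
\begin{equation*}
\dt \mathcal A_N + \sum_{b=0}^{K} \dot x^{(b+1)} \partial_{x^{(b+1)}} \mathcal A_N + \sum_{b=0}^{K} \dot y^{(b+1)} \partial_{y^{(b+1)}} \mathcal A_N + \dot z \, \partial_z \mathcal A_N + \dot w \, \partial_w \mathcal A_N = V_N(t,\bm r) + S(t,\bm r),
\end{equation*}
with the prescribed velocities $\dot x^{(b+1)} = -\alpha(\bar p_{b+1}-\bar p_b)$, $\dot y^{(b+1)} = -(\bar q_{b+1}-\bar q_b)$, $\dot z = 1-\bar q_{K+1}$, $\dot w = -\bar m$, and invoke Definition \ref{def:Htheta_KRSB}: the concentration of each overlap on its $K+1$ plateaux inside each stratum of the Parisi hierarchy, together with the self-averaging of $\bar m$, makes every summand of $V_N(t,\bm r)$ vanish in the thermodynamic limit, leaving the first-order PDE written in the statement.

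Next I would integrate this PDE by characteristics. Since neither $S$ nor the velocities depend on the ``space-time'' variables $(t,\bm r)$ (they are functions of the equilibrium parameters $\bar m, \bar q_a, \bar p_a$ only), the characteristics $\bm r(s) = \bm r_0 + \dot{\bm r}\, s$ are straight lines and the solution is simply
\begin{equation*}
\mathcal A_N(t,\bm r) = \mathcal A_N(0, \bm r - \dot{\bm r}\, t) + S(t,\bm r)\, t,
\end{equation*}
with the constant term in the statement obtained by rearranging $S\cdot t$. At this point only the Cauchy datum $\mathcal A_N(0,\bm r_0)$ remains to be computed.

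The last and heaviest step is therefore the explicit evaluation of the Cauchy datum, i.e.\ the nested one-body calculation producing the $\mathcal N_a$'s. At $t=0$ both the Mattis quadratic form and the $\sigma\tau$ coupling through the $\xi_i^\mu$'s vanish, so $\mathcal Z_{K+1}(0,\bm r_0)$ factorizes over the $N$ binary spins and the $P$ Gaussian soft variables. The $\tau_\mu$-integrals are then Gaussian and, performed level by level inside the tower $\mathbb E_{K+1}, \mathbb E_{K}, \dots, \mathbb E_1$, each one contributes by Gaussian conditioning a logarithmic piece of the form $\tfrac{\alpha}{2\theta_a}\log \bigl(1+\beta\theta_a y_0^{(a+1)}/D_a\bigr)$ together with the $-\tfrac{\alpha}{2}\log(1-\beta z_0)$ and the $\tfrac{\alpha\beta}{2}\, y_0^{(1)}/D$ pieces displayed in the statement. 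Simultaneously the $\sigma$-sums produce $2\cosh$ factors raised to the powers $\theta_a/\theta_{a+1}$, which telescope through the nested $\mathbb E_a$'s into the recursively defined objects $\mathcal N_a$.

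I expect the conceptual step to be light (it mirrors the $K=1,2$ cases) and the main obstacle to be purely combinatorial bookkeeping in this last step: organizing the $K$-fold nesting of Gaussian integrations and of the averages $\mathbb E_a$ so that each power $\theta_a/\theta_{a+1}$ lands in the right place and the logarithmic pieces correctly telescope. The cleanest way to handle this is by induction on $K$: the base cases $K=1$ and $K=2$ are exactly the calculations already performed in Appendices \ref{1body1rsb} and \ref{1body2rsb}, and the inductive step amounts to absorbing the $(K+1)$-th Gaussian integration into a redefinition of $y_0^{(K+1)}$ and $\theta_{K+1}$, leaving the rest of the tower unchanged.
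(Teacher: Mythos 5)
Your plan reproduces the route the paper follows implicitly (it gives no written proof of this proposition, relying on the reader to iterate the $1$RSB and $2$RSB arguments): kill the fluctuation potential by the $K$-RSB concentration assumption, solve the resulting constant-coefficient transport PDE by straight characteristics, and reduce the Cauchy datum to a nested one-body computation. The induction-on-$K$ scheme for the one-body step is a sensible way to organize the Gaussian conditioning and the telescoping powers $\theta_a/\theta_{a+1}$, with Appendices~\ref{1body1rsb} and \ref{1body2rsb} as base cases, so conceptually there is nothing missing.

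Two points of carefulness, however, both stemming from the paper's statement rather than from your reasoning. First, you copy $\dot z = 1 - \bar q_{K+1}$ from the preceding proposition, but in the Hopfield $1$RSB and $2$RSB cases the corresponding velocity is $\dot z = -(1-\bar q_2)$ and $\dot z = -(1-\bar q_3)$, so consistency (and the sign convention needed to cancel the $\langle p_{11}\rangle$-terms) requires $\dot z = -(1 - \bar q_{K+1})$; you should derive the sign rather than trust the typo. Second, you claim ``the constant term in the statement [is] obtained by rearranging $S\cdot t$,'' but the displayed term $t\bigl[\tfrac{\beta J_0}{2}\bar m^2 - \tfrac{\beta^2}{4}\sum_a(\theta_{a+1}-\theta_a)\bar q_{a+1}^2 + \tfrac{\beta^2}{2}(1-\bar q_{K+1})\bigr]$ is visibly the source term of the \emph{Sherrington--Kirkpatrick} $K$-RSB proposition (it contains $J_0$ and $\bar q^2$, neither of which belongs in the Hopfield source), whereas the characteristics method you apply necessarily yields $S(t,\bm r)\,t$ with the Hopfield $S = -\tfrac{\beta}{2}\bar m^2 + \tfrac{\beta^2\alpha}{2}\sum_{a=0}^K(\theta_{a+1}-\theta_a)\bar p_a\bar q_a - \tfrac{\beta^2\alpha}{2}\bar p_{K+1}$. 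A careful write-up of your proof would therefore end with the correct $S\,t$ and flag the displayed $t$-term as a carry-over from Section~\ref{ssec:SKKRSB}, rather than asserting agreement.
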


\begin{theorem}
The K-RSB quenched pressure for the Hopfield model in the thermodynamic limit, reads as
\begin{align}
\mathcal{A}_{KRSB} = &\frac{1}{\theta_1}\int D h^{(1)} \log \mathcal{N}_1 - \frac{\alpha}{2}\log (1-\beta z_0) + \frac{\alpha}{2}\beta \frac{y_0^{(1)}}{1-\beta(z_0 + \sum_{b=1}^K \theta_b y_0^{(b+1)})} \notag \\
&-\frac{\beta}{2} \mb^2 + \frac{\beta^2\alpha}{2}\sum_{a=1}^{K+1} (\theta_{a} - \theta_{a-1}) \bar{p}_a \qb_a - \frac{\beta^2\alpha}{2}\bar{p}_{K+1}
\label{eq:Afin_KRSB_HOP}
\end{align}
with
\begin{align}
\mathcal{N}_a = \begin{cases}
\displaystyle{\int D h^{(a+1)}\left[ \mathcal{N}_{a+1}\right]^{\theta_a/\theta_{a+1}} }\ \textnormal{for} \ a =1, ... K \\
2^N \cosh^N \left( \beta (\mb +  \sum_{a=1}^{K+1} \sqrt{\alpha(\bar{p}_{a}-\bar{p}_{a-1}}  h^{(a)}) \right) \ \textnormal{for} \ a = K+1 \\
\end{cases}
\label{eq:def_N_HOP}
\end{align}
\end{theorem}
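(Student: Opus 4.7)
The plan is to invoke the preceding Proposition, which expresses the $K$-RSB interpolating pressure as $\mathcal A_N(t, \bm r) = \mathcal A_N(0, \bm r - \dot{\bm r}t) + t\,S(t, \bm r)$, and to evaluate it at $t = \beta$, $\bm r = \bm 0$: at this point the interpolating partition function $\mathcal Z_N(t, \bm r)$ coincides with the original Hopfield partition function $Z_N(\beta, \bm \xi)$, so $\mathcal A_N(\beta, \bm 0) = A_N(\alpha, \beta)$ and, in the thermodynamic limit, $\mathcal A(\beta, \bm 0) = A(\alpha, \beta)$. The whole proof therefore reduces to verifying that the Cauchy datum $\mathcal A_N(0, \bm r_0)$, propagated along the characteristics back to $\bm r_0 = -\dot{\bm r}\beta$, plus the source term $\beta S$, reproduces formula (\ref{eq:Afin_KRSB_HOP}).

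First I would work out the Cauchy datum at $t=0$. Since the two-body couplings multiplying $t$ vanish there, $\mathcal Z_{K+1}(0, \bm r_0)$ factorizes into $N$ independent single-spin partition functions and $P$ independent Gaussian $\tau$-integrals. Applying the nested quenched averages $\mathbb E_a[\,\cdot\,]^{\theta_{a-1}/\theta_a}$ recursively on the spin factor collapses it into the tower $\mathcal N_1$ defined by (\ref{eq:def_N_HOP}), with cosh base at level $K+1$ of argument $\beta(w_0 + \sum_{a=1}^{K+1}\sqrt{x_0^{(a)}}\,h^{(a)})$. For the $\tau$ factor, each successive $\mathbb E_{K+1}, \mathbb E_K, \dots, \mathbb E_1$ simply renormalizes the effective Gaussian variance; iterating produces the telescoping logarithmic sum $\sum_{z=1}^{K} \tfrac{\alpha}{2\theta_z}\log[1 + \beta\theta_z y_0^{(z+1)}/(1 - \beta z_0 - \beta\sum_{b<z}\theta_b y_0^{(b+1)})]$ together with the boundary contributions $-\tfrac{\alpha}{2}\log(1-\beta z_0)$ and $\tfrac{\alpha\beta}{2}\,y_0^{(1)}/[1-\beta(z_0 + \sum_{b=1}^{K}\theta_b y_0^{(b+1)})]$. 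The $K=1$ and $K=2$ instances are carried out explicitly in Appendices \ref{1body1rsb} and \ref{1body2rsb}; the general $K$ follows by a straightforward induction on the number of breaks.

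Next I would propagate along the characteristics. The velocities $\dot x^{(b+1)} = -\alpha(\bar p_{b+1} - \bar p_b)$, $\dot y^{(b+1)} = -(\bar q_{b+1} - \bar q_b)$, $\dot z = -(1-\bar q_{K+1})$, $\dot w = -\bar m$ give, at $\bm r = \bm 0$ and $t = \beta$, the shifts $x_0^{(a)} = \alpha\beta(\bar p_a - \bar p_{a-1})$, $y_0^{(a)} = \beta(\bar q_a - \bar q_{a-1})$, $z_0 = \beta(1-\bar q_{K+1})$, $w_0 = \beta\bar m$. These substitutions turn $\mathcal N_{K+1}$ into the cosh with argument $\beta\bar m + \sum_a \sqrt{\alpha\beta(\bar p_a - \bar p_{a-1})}\,h^{(a)}$ (a factor $\beta$ being absorbed into the square root), and the denominators inside the logarithms telescope to $1 - \beta(1-\bar q_{K+1}) - \beta\sum_{b\geq z}\theta_b(\bar q_{b+1}-\bar q_b)$, matching (\ref{eq:Afin_KRSB_HOP}). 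Adding the linear-in-$t$ source contribution $\beta S$ delivers the remaining explicit pieces $-\tfrac{\beta}{2}\bar m^2 + \tfrac{\beta^2\alpha}{2}\sum_{a=1}^{K+1}(\theta_a - \theta_{a-1})\bar p_a \bar q_a - \tfrac{\beta^2\alpha}{2}\bar p_{K+1}$.

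The main obstacle is not this bookkeeping but the upstream ingredient, namely the vanishing of the potential $V_N$ at general $K$ in the thermodynamic limit. As in the $K=1,2$ cases treated above, this step rests on the \emph{ziqqurat ansatz}: simultaneous concentration on $K+1$ plateaus of both overlaps $q_{12}$ and $p_{12}$, sharing a single sequence of Parisi weights $\theta_a$, together with self-averaging of the Mattis magnetization. Under this ansatz all cross-expectations $\langle \Delta p_{12}\Delta q_{12}\rangle_a$ and $\langle(\Delta m)^2\rangle$ vanish as $N \to \infty$, the transport equation reduces to a first-order linear PDE whose solution is precisely the preceding Proposition, and the theorem then follows by the evaluation at $t=\beta$, $\bm r = \bm 0$ described above.
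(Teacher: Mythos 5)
Your proposal follows the same route as the paper: it invokes the transport-equation representation $\mathcal A_N(t,\bm r)=\mathcal A_N(0,\bm r-\dot{\bm r}t)+t\,S$, evaluates the factorized one-body Cauchy datum, and then substitutes the characteristic values at $t=\beta$, $\bm r=\bm 0$ to recover the Hopfield pressure — which is what the paper does, its one-line proof merely containing a typo ($t=0$ instead of $t=\beta$, the value used consistently in the 1-RSB and 2-RSB proofs). Your bookkeeping (the telescoping $\tau$-integrals, the shifts $x_0^{(a)}=\alpha\beta(\bar p_a-\bar p_{a-1})$, $y_0^{(a)}=\beta(\bar q_a-\bar q_{a-1})$, etc.) is the inductive extension of Appendices \ref{1body1rsb} and \ref{1body2rsb} to general $K$, and invoking the ziqqurat concentration ansatz to kill $V_N$ is exactly the hypothesis the paper relies upon.
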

\begin{proof}
If we put $t=0$ and $\bm x= \bm y = w=z=0$ we obtain the K-RSB quenched pressure.
\end{proof}
\begin{corollary}
\label{cor:selfKRSB_HOP}
The self-consistence equations are 
\begin{align}
\qb_1 &= \mathbb{E}_1 \left\{ \frac{1}{\mathcal{N}_1}\mathbb{E}_2 \left[ \mathcal{N}_2^{\frac{\theta_1}{\theta_2}-1} \cdots \mathbb{E}_{K+1}\left[ \cosh^{\theta_k} (g(\boldsymbol h, \bar m)) \tanh (g(\boldsymbol h, \bar m) ) \right] \right]\right\}^2 \\
\qb_2 &=  \mathbb{E}_1 \left\{ \frac{1}{\mathcal{N}_1}\left[\mathbb{E}_2 \left[ \mathcal{N}_2^{\frac{\theta_1}{\theta_2}-1} \cdots \mathbb{E}_{K+1}\left[ \cosh^{\theta_k} (g(\boldsymbol h, \bar m)) \tanh (g(\boldsymbol h, \bar m))  \right] \right]\right]^2\right\} \\
\nonumber
... \\
\qb_{K+1} &=  \mathbb{E}_1 \left\{\frac{1}{\mathcal{N}_1}\mathbb{E}_2 \left[ \mathcal{N}_2^{\frac{\theta_1}{\theta_2}-1} \cdots \mathbb{E}_{K+1}\left[ \cosh^{\theta_k} (g(\boldsymbol h, \bar m)) \tanh^2 (g(\boldsymbol h, \bar m)  )\right] \right]\right\} \\
\bar{p}_1 &= \frac{\beta \qb_1}{\mathcal{Q}_1^2} \\
\bar{p}_h &= \bar{p}_{h-1} + \frac{\beta (\qb_h - \qb_{h-1})}{\mathcal{Q}_h \mathcal{Q}_{h-1}}, \ \ \ \forall h=2, ..., K+1 \\
\bar{m} &= \mathbb{E}_1 \left\{ \frac{1}{\mathcal{N}_1}\mathbb{E}_2 \left[ \mathcal{N}_2^{\frac{\theta_1}{\theta_2}-1} \cdots \mathbb{E}_{k+1}\left[ \cosh^{\theta_k} (g(\boldsymbol h, \bar m)) \tanh( g(\boldsymbol h, \bar m) ) \right] \right]\right\}
\end{align}
where $\boldsymbol h = (h^{(1)}, \cdots, h^{(K+1)})$, $g(\boldsymbol h, \bar m) = \beta \mb + \beta \sum_{a=1}^{K+1} \sqrt{\alpha(\bar{p}_{a}-\bar{p}_{a-1})}h^{(a)}$, $\mathcal{N}_1, ... \mathcal{N}_{K+1}$ are defined in (\ref{eq:def_N_HOP}), and
\begin{align}
\begin{cases}
\mathcal{Q}_{K+1} &= 1-\beta(1-\qb_{K+1}) \\
\mathcal{Q}_K &= \mathcal{Q}_{K+1} - \beta \theta_k (\qb_{K+1} - \qb_K)  \\
... \\
\mathcal{Q}_1 &= \mathcal{Q}_2 - \beta \theta_1 (\qb_2 - \qb_1)
\end{cases}.
\end{align}
\end{corollary}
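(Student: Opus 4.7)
The plan is to mimic the strategy used in Corollary \ref{cor:selfKRSB_SK} for the Sherrington-Kirkpatrick model and generalize the derivation of Corollaries \ref{cor:SC_HOP_1RSB} and \ref{cor:SC_HOP2RSB}: compute the partial derivatives of the interpolating pressure $\mathcal{A}_N(t,\bm r)$ with respect to $x^{(a)}$, $y^{(a)}$, $z$, $w$ in two independent ways and then match the two expressions evaluated at $t=\beta$, $\bm r = \bm 0$. On one side, the Lemma for the $K$-RSB streaming (a natural extension of Lemma \ref{lemma:4} and its 2-RSB analog) gives these derivatives as convex combinations of the averages $\langle q_{12}\rangle_a$, $\langle p_{12}\rangle_a$, $\langle p_{11}\rangle$ and $\langle m\rangle$, weighted by $\theta_a-\theta_{a-1}$. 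On the other side, the explicit formula (\ref{eq:Afin_KRSB_HOP}) together with the recursive definition (\ref{eq:def_N_HOP}) of $\mathcal N_a$ can be differentiated directly.

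For the $\qb_a$ equations, I would first note that differentiating $(1/\theta_1)\int Dh^{(1)}\log \mathcal N_1$ with respect to $x_0^{(a)}$ produces, by repeated application of the chain rule through the tower of Gaussian expectations, nested ratios of the form $\mathbb E_{a+1}[\mathcal N_{a+1}^{\theta_a/\theta_{a+1}-1}\cdots \mathbb E_{K+1}(\cosh^{\theta_K}\tanh)]/\mathcal N_a$; the innermost $\tanh$ (or $\tanh^2$) appears because $\partial_{x_0^{(a)}}\cosh^{\theta_K}(g)=\theta_K \cosh^{\theta_K}(g)\tanh(g)\cdot \partial_{x_0^{(a)}} g$, and $g$ depends on $x_0^{(a)}$ only through $\sqrt{x_0^{(a)}}h^{(a)}$. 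Matching with the Lemma, and then taking the telescoping differences
\[
\partial_{x^{(b+1)}}\mathcal A_N - \partial_{x^{(b)}}\mathcal A_N = \tfrac{\beta}{2}(\theta_b-\theta_{b-1})\,\langle q_{12}\rangle_{b},
\]
isolates each $\qb_b$ individually in the thermodynamic limit (where $\langle q_{12}\rangle_b\to \qb_b$ by Definition \ref{def:Htheta_KRSB}); the $b=1$ case, coming from $\partial_{x^{(1)}}$ directly, yields $\qb_1^2$ (the square arising from the Gaussian integration by parts on $h^{(1)}$ which pulls the first factor outside $\mathbb E_1$). The equation for $\bar m$ is obtained analogously from $\partial_w$, producing a single $\tanh$ and no squaring.

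For the $\bar p_a$ equations, the plan is to exploit the derivatives with respect to $y^{(a)}$ and $z$: using Lemma (generalized to $K$-RSB) these equal $(\alpha\beta/2)(\langle p_{11}\rangle-\sum_{c\ge a+1}(\theta_c-\theta_{c-1})\langle p_{12}\rangle_c)$, whereas direct differentiation of the logarithmic and rational terms in (\ref{eq:Afin_KRSB_HOP}) produces expressions built from the denominators
\[
\mathcal Q_a = 1-\beta(1-\qb_{K+1}) - \beta\sum_{b\ge a}\theta_b(\qb_{b+1}-\qb_b).
\]
Telescoping consecutive $y^{(a)}$-derivatives, as done for $x^{(a)}$ in the SK case, will then produce the recursion $\bar p_h = \bar p_{h-1} + \beta(\qb_h-\qb_{h-1})/(\mathcal Q_h\mathcal Q_{h-1})$; the boundary identity $\bar p_1=\beta\qb_1/\mathcal Q_1^2$ comes from the $y^{(1)}$-derivative evaluated at the origin.

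The main obstacle I anticipate is purely bookkeeping: the nested Gaussian expectations $\mathbb E_1\cdots\mathbb E_{K+1}$ with the intertwined weights $\mathcal W_a = \mathcal Z_a^{\theta_{a-1}}/\mathbb E_a[\mathcal Z_a^{\theta_{a-1}}]$ must be differentiated while respecting the recursive structure of $\mathcal N_a$, and showing cleanly that $\partial_\rho \mathcal A_N$ factorizes across the tower requires the same identity used in Lemma \ref{lemma:4} (see eq.~(\ref{eqn:partialrA1})) iterated $K$ times. Beyond that, the Gaussian integration by parts used in each layer and the telescoping algebra are routine extensions of the 1-RSB and 2-RSB computations already carried out in detail; no genuinely new ingredient is needed.
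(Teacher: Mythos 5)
Your proposal matches the paper's own argument: the paper likewise refers the reader to Corollary~\ref{cor:selfKRSB_SK} (together with Corollaries~\ref{cor:SC_HOP_1RSB} and \ref{cor:SC_HOP2RSB}) for the $\mb$ and $\qb_j$ equations, and obtains the $\bar p_h$ recursion by explicitly computing $\partial_{y^{(G)}}\mathcal A_N$ and $\partial_z\mathcal A_N$ as sums of partial-fraction-type terms $B_j$ built from the $\mathcal Q_a$'s, then telescoping consecutive differences and equating with the streaming-lemma expressions in the thermodynamic limit. Your plan, including the two-sided computation of the partial derivatives, the telescoping, and the boundary identity for $\bar p_1$, is exactly the route the paper takes.
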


\begin{proof}
The proof is similar to that provided for Corollary (\ref{cor:selfKRSB_SK}) and for Corollaries (\ref{cor:SC_HOP_1RSB}) and (\ref{cor:SC_HOP2RSB}).
In fact, for the self consistence equations w.r.t. $\mb$ and $\qb_j$, $j=1, ...K+1$ the proof is equal to Corollary (\ref{cor:selfKRSB_SK}).
\newline
On the other hand, we have
\begin{align}
\partial_{y^{(G)}} \mathcal{A}_N &= \frac{\alpha}{2} \beta \theta_{G-1} \sum_{j=1}^{G} B_j, \\
\partial_z \mathcal{A}_N &= \frac{\alpha \beta }{2} \sum_{j=1}^{K+2} B_j
\end{align}
where
\begin{align}
\displaystyle{\begin{cases}
B_1&= \frac{\beta y_0^{(1)}}{\mathcal{Q}_1} \\
B_2 &= \frac{1}{\theta_1 \mathcal{Q}_1} \\
B_{j+1} &= \left( \frac{1}{\theta_j} - \frac{1}{\theta_{j-1}} \right)\frac{1}{\mathcal{Q}_j}, \ \ \ j=2, ... , K \\
B_{K+2} &= -\frac{1}{\mathcal{Q}_{K+2}} \left( 1- \frac{1}{\theta_K}\right)
\end{cases}}
\end{align}

so,
\begin{align}
&\partial_{y^{(G+1)}} \mathcal{A}_N - \partial_{y^{(G)}} \mathcal{A}_N = \frac{\alpha \beta}{2 }(\theta_G - \theta_{G-1}) \bar{p}_G = \frac{\alpha \beta}{2 }(\theta_G - \theta_{G-1})\sum_{j=1}^{G} B_j + \frac{\alpha \beta}{2} B_{G+1}  , \ \ \ G= 1, ... K \\
&\partial_z \mathcal{A}_N - \partial_{y^{(K+1)}} \mathcal{A}_N = \frac{\alpha \beta}{2} (1-\theta_K) \bar{p}_{K+1} = \frac{\alpha \beta}{2} (1-\theta_K) \sum_{j=1}^{K+1} B_j + \frac{\alpha \beta}{2} B_{K+2}
\end{align}
and we reach the thesis.
\end{proof}

\section{Conclusions} \label{conclusions}

\subsection{A remark on the standard formulation: deepening the ansatz}\label{comment}
The Edward-Anderson spin-glass overlap $q_{\gamma \lambda}$ (and similarly its continuous counterpart $p_{\gamma \lambda}$ introduced to tackle the Hopfield model, see  Sec.~\ref{HopfieldSection}) measures how similar two different replicas are: the absolute value of $q_{\gamma \lambda}$ is 1 if all the spins/neurons belonging the replica $\gamma$ are parallel (or antiparallel) to those pertaining to the replica $\lambda$ and, in general, the larger its value and the larger the similarity between the configurations of the two replicas. When computing expectations of this quantity we can factorize the Boltzmann average as
\begin{equation}
\omega(\sigma_i^{(\gamma)} \sigma_j^{(\lambda)}) =\omega(\sigma_i^{(\gamma)}) ~ \omega(\sigma_j^{(\lambda)}),
\end{equation}
while for the quenched average appropriate care is required: clearly, if we assume such an average to be replica-independent, we end up with a RS painting of the system (and, for the Hopfield model, we recover the Amit-Gutfreund-Sompolinsky representation \cite{Amit}).
Conversely, if we retain the replica indices, it could happen that each replica {\em sits} in a different pure state such that,  in the asymptotic limit $N \to \infty$,  beyond ergodicity breaking also replica symmetry breaking spontaneously appears.

In order to deepen this concept, let us consider the telescopic definition of the quenched averages provided in the RSB scheme we pursued and restrict to the 1RSB picture (just for the sake of simplicity as its generalization to several, but finite, steps of RSB is straightforward), see eqs.~(\ref{eq:unouno})-(\ref{eq:duedue}) and eqs.~(\ref{eq:unouno_a})-(\ref{eq:duedue_a}): the auxiliary fields acting on a spin are conceived to simulate the effect of the remaining spins and the existence of two classes of fields (with the related averages) mirrors the existence of two temporal scales for thermalization, that is, a {\em fast scale} 
and a {\em slow scale}. 

Denoting with $\mathbb E_{\textrm{fast}}$ and $\mathbb E_{\textrm{slow}}$ the average over the fields corresponding to, respectively, the fast and the slow time scale, we compose the global average as Coolen and Van Mourik did in their dynamical derivation of the Parisi scheme \cite{Ton1,Ton2} 
and write
\begin{equation}
\langle \cdot \rangle = \mathbb E ~ \mathbb E_{\textrm{slow}} ~ \mathbb E_{\textrm{fast}} ~ \omega (\cdot).
\end{equation}
Let us now evaluate the average of the overlap $q_{\lambda \gamma}$ in the possible resulting scenarios. 
If the two fields coincide, the two replicas are subjected to the same field 
%
\begin{equation}
\label{eqn:qa}
\langle q_{\gamma \lambda} \rangle_a = \mathbb E ~ \mathbb E_{\textrm{slow}} \mathbb E_{\textrm{fast}}\frac{1}{N}\sum_i\omega (\sigma_i^{(\gamma)})\omega (\sigma_i^{(\lambda)}) = \mathbb E ~ \mathbb E_{\textrm{slow}} ~\mathbb E_{\textrm{fast}}\frac{1}{N}\sum_i\omega (\sigma_i)^2.
\end{equation}
%
If the two fields are distinct, the two replicas may share the slow timescale but evolve differently on the fast timescale (the opposite is thermodynamically forbidden \cite{Ton1,Ton2}), therefore averages can be partially factorized as
\begin{equation}
\label{eqn:qb}
\langle q_{\gamma\lambda} \rangle_b = \mathbb E ~ \mathbb E_{\textrm{slow}}\frac{1}{N}\sum_i\mathbb E_{\textrm{fast}}\omega (\sigma_i^{(\gamma)})\mathbb E_{\textrm{fast}}\omega (\sigma_i^{(\lambda)}) = \mathbb E ~  \mathbb E_{\textrm{slow}}\frac{1}{N}\sum_i\bigg(\mathbb E_{\textrm{fast}}\omega (\sigma_i)\bigg)^2.
\end{equation}
The case where replicas evolve independently correspond to an egodic regime and we can factorize everything obtaining
\begin{equation}
\label{eqn:qc}
\langle q_{\gamma\lambda} \rangle_c =\mathbb E \frac{1}{N}\sum_i  \mathbb E_{\textrm{slow}}\mathbb E_{\textrm{fast}}\omega (\sigma_i^{(\gamma)}) \mathbb E_{\textrm{slow}}\mathbb E_{\textrm{fast}}\omega (\sigma_i^{(\lambda)}) = \mathbb E \frac{1}{N}\sum_i\bigg( \mathbb E_{\textrm{slow}}\mathbb E_{\textrm{fast}}\omega (\sigma_i)\bigg)^2.
\end{equation}
Notice that (\ref{eqn:qa}) returns (\ref{eqn:q122_a}), and (\ref{eqn:qb}) returns (\ref{eqn:q121_a}), as long as we pose 
%
\begin{align}
\mathbb E_{\textrm{slow}}  [\cdot]=& \mathbb E_1  , \\
\mathbb E_{\textrm{fast}}[\cdot]=& \mathbb E_2[ \mathcal{W}_2  ~ \cdot ].
\end{align}
Indeed, in the RS thermodynamic limit, $q_{\gamma\lambda}=\frac{1}{N}\sum_i \sigma^{(\gamma)}_i\sigma^{(\lambda)}_i$ self-averages to a unique value $q_0$, while in the 1RSB it self-averages to two different values $q_1,\ q_2$: one accounting for the case where the replicas behave the same on both the fast and the slow timescales, and the another to account for their different behavior on the fast scale, while keeping the same on the slow one. This is in agreement with Coolen's perspective  \cite{Coolen} on the block-decomposition of the Parisi matrix: if we assume a fraction $\theta$ of replica couples belonging to the second group and a fraction $(1-\theta)$ of replica couples beloning to the first group, then, in the large $N$ limit, the probability distribution $P(q)$ would read as $P(q)=\theta P_1(q-\bar q_1)+(1-\theta)P_2(q-\bar q_2)$ (where $P_{1,2}(0)\geq 1$, $\int P_{1,2}(q-\bar q_i)dq=1$, $P_1(q-\bar q_1)P_2(q-\bar q_2)\simeq 0$ $ \forall q$).
By this perspective, in the transport PDE approach we pursued here, setting the potential equal to zero in the 1RSB framework (see (\ref{potenzialeRS-SK}) and (\ref{potenziale-RS-Hopfield})) is consistent with the assumption that $P(q)$ is decomposed in a sum of delta-functions, centered on the mean values  ($\lim_{N\rightarrow \infty}P_i(q)=\delta(q - \bar{q}_i)$). 
The generalization to an arbitrary, but finite, number of steps K of RSB is straightforward (see (\ref{V}) and (\ref{eqn:compactV})).

However, when trying to face RSB in associative neural networks as a perturbation of AGS theory (the RS scenario of the Hopfield model) a possible  concern lies in considering the Mattis magnetization $m_{\mu}:=\boldsymbol{\sigma}\cdot\boldsymbol{\xi}^{\mu}$ as a good quantifier to measure the retrieved signal (i.e., a generic quenched pattern $\boldsymbol \xi^{\mu}$).
This is because this observable is intrinsically defined  within a single replica, but, as in an RSB scenario different replicas can be locked on different pure states, it is no longer guaranteed that the scalar product between the quenched pattern $\boldsymbol \xi^{\mu}$ and a configuration $\boldsymbol \sigma^{(\gamma)}$ for the replica labelled as $\gamma$, i.e. $m_{\mu}^{(\gamma)}$, equals the scalar product between the same quanched pattern $\boldsymbol \xi^{\mu}$ and a configuration $\boldsymbol \sigma^{(\lambda)}$ for the replica $\lambda$, i.e. $m_{\mu}^{(\lambda)}$. In other words, it is not obvious that $m_{\mu}^{(\gamma)} \equiv m_{\mu}^{(\lambda)}$, hence raising the question on the soundness of a signal quantifier defined within a single replica.
Alternatively, as Parisi ultrametricity can be (partially) summarized by considering three replicas $\alpha, \gamma, \lambda$ and forcing their relative overlaps $q_{\alpha,\lambda}$ and $q_{\lambda,\gamma}$ to fulfill \cite{MPV}
\begin{equation}
P(q_{\alpha,\lambda},q_{\lambda,\gamma})=\frac{1}{2}P(q_{\alpha,\lambda})\delta(q_{\alpha,\lambda}-q_{\lambda,\gamma})+\frac{1}{2}P(q_{\alpha,\lambda})P(q_{\lambda,\gamma}),
\end{equation}
if we now assume that the replica  $\lambda$ {\em sits} in a configuration that is actually a pattern (as expected for a neural network under suitable conditions), say $\xi^{\mu}$, then the above equation turns into a constraint for the Mattis magnetizations that reads as
\begin{equation}\label{tana}
P(m_{\mu}^{(\alpha)},m_{\mu}^{(\gamma)})=\frac{1}{2}P(m_{\mu}^{(\alpha)})\delta(m_{\mu}^{(\alpha)}-m_{\mu}^{(\gamma)})+\frac{1}{2}P(m_{\mu}^{(\alpha)})P(m_{\mu}^{(\gamma)})
\end{equation}
that disagrees with a self-averaging ansatz for the Mattis magnetization. Yet, as far as we could check via a standard replica trick calculation, the self-averaging ansatz for the Mattis magnetization is the solely reasonable as the fields affecting the signal turn out to be RS.
\newline
Interestingly, this argument would also contribute to explain why, in Monte Carlo simulations, where no ansatz on RS is made, the Mattis magnetization never saturates to one, see e.g. \cite{Crisanti}, in fact, in the r.h.s. in eq. (\ref{tana}) above, the first part accounts for the RS signal of AGS theory, while the latter seems to point to a zero magnetization by a symmetry argument and the factorization.
In the present paper we did not deepen how to generalize AGS theory to overcome this problem, as the work was dedicated to the development of proper mathematical approaches to work out RSB calculations for neural networks, and we plan to discuss this in a dedicated forthcoming paper.

\subsection{Outlooks and future developments}
Neural networks are nowadays playing a pivotal role in the social and scientific progress (see e.g. \cite{medico,finanza,fisica,biologia}), especially due to some advances in machine learning research overall termed {\em deep learning} \cite{DL1}.
As a natural consequences of these applied achievements, we are witnessing an intensive quest for mathematical techniques able to tackle the emerging properties of these networks and this is just the context for this work: our aim here is to supply well-grounded theoretical tools to frame the behavior of these {\em intelligent machines} into a solid mathematical theory. In particular, we consider associative neural networks performing pattern recognition (i.e., the celebrated Hopfield model), and, even more specifically, we focus on the complex phenomenon of RSB (expected to affect this model when pushed close to its maximal capabilities).

In this work we did not examine the physical implications of RSB, mainly because we do believe that -- in its formulation for the Hopfield model and the related variations on theme -- its role remains marginal if not ill-posed. In our opinion, a proper RSB theory for associative neural networks could hardly be afforded as a {\em perturbation} of the AGS picture that perfectly describes the Hopfield model properties under the RS assumption. Therefore, while continuing the investigation of a reformulation of this problem to be presented in forthcoming papers, in the present one we accepted the standard approach and we focused purely on developing rigorous mathematical techniques, alternative to the widely-known replica trick.
\newline
In particular, we have shown that it is possible to {\em graft} the broken replica techniques developed by Guerra in \cite{Guerra} within the transport PDE approach developed by some of the present authors in \cite{AABF-NN2020} and we have presented the solution for the broken-replica quenched free-energy of the model, up to the K-th step of RSB.
These solutions were partially known in the literature (see \cite{Crisanti} for the 1RSB and \cite{Kuhn} for the 2-RSB) from non-rigorous tools; remarkably our approach perfectly reproduces these heuristic hints conferring them a mathematical rigour.

\appendix

\section{Proof of Lemma \ref{lemma:2}} \label{app1}

We prove explicitely only the first-order derivative with respect to $t$, that is eq.~(\ref{eq:2app}); for the others, the computation is analogous.
\begin{align}
&\dt A_N= \frac{1}{N} \mathbb{E}_0 \log \mathcal Z_0 = \frac{1}{N} \mathbb{E}_0 \mathbb{E}_1 \left( \frac{1}{\mathcal Z_1}\frac{1}{\theta} \mathcal Z_2^\theta \frac{1}{\mathcal Z_2} \dt \mathcal Z_2 \right) = \notag \\
&=\frac{1}{N} \mathbb{E}_0  \mathbb{E}_1 \mathbb{E}_2 \left\{ \mathcal{W}_2 \frac{1}{\mathcal Z_2}\sums \exp \left[ \beta \left (\frac{\sqrt{t}}{2}\frac{J}{\sqrt{N}} \sumij \si \sj z_{ij} + \suma \sqrt{x^{(a)}}\sumi h^{(a)}_i  \si  + \right. \right. \right. \nonumber \\
&\left. \left. \left. + t\frac{J_0}{2}Nm^2(\boldsymbol{\sigma}) +w \frac{J_0}{2} N m(\boldsymbol{\sigma}) \right) \right] \bigg (\frac{\beta J}{2\sqrt{t N}} \sumij \si \sj z_{ij} + \frac{\beta J_0}{2}N m^2 \bigg)  \right\} = \notag \\
&= \frac{\beta J_0}{2} \langle m^2 \rangle + \frac{\beta}{2N\sqrt{tN}} \mathbb{E}_0 \mathbb{E}_1 \mathbb{E}_2  \left\{ \mathcal{W}_2 \frac{1}{\mathcal Z_2}\sums \exp\left [ \beta \left (\frac{\sqrt{t}}{2}\frac{J}{\sqrt{N}} \sumij \si \sj z_{ij} + \right. \right.\right.\notag \\
&\left. \left. \left. +\suma \sqrt{x^{(a)}} \sumi h^{(a)}_i \si + t\frac{J_0}{2}Nm^2(\boldsymbol{\sigma})+w \frac{J_0}{2} N m(\boldsymbol{\sigma}) \right )\right] \sumij \si \sj z_{ij}  \right\} = \notag \\
&= \frac{\beta J_0}{2} \langle m^2 \rangle + \frac{\beta}{2N\sqrt{tN}} \mathbb{E}_0  \mathbb{E}_1  \mathbb{E}_2 \left\{\sumij  \partial_{z_{ij}} \left[ \mathcal{W}_2 \frac{1}{\mathcal Z_2}\sums \exp \left (\beta  (\frac{\sqrt{t}}{2}\frac{J}{\sqrt{N}} \sumij \si \sj z_{ij} + \right. \right. \right. \notag \\
&\left. \left. \left. +\suma \sqrt{x^{(a)}} \sumi h^{(a)}_i \si + t\frac{J_0}{2}Nm^2(\boldsymbol{\sigma})+w \frac{J_0}{2} N m(\boldsymbol{\sigma} ) ) \right) \right] \si \sj \right\},
\end{align}
where in the first passage we used the definition of $\mathcal Z_2$, in the second passage we highlighted the average magnetization and in the third passage we used Wick's theorem. \\
Now, we focus on the the derivative with respect to $z_{ij}$ which is computed apart and gives rise to three contributes denoted as $A, B, C$, that is
\footnotesize
\begin{align}
\notag
&\partial_{z_{ij}} \left \{ \mathcal{W}_2 \frac{1}{\mathcal Z_2}\sums \exp \left [\beta \left(\frac{\sqrt{t}}{2}\frac{J}{\sqrt{N}} \si \sj z_{ij} +\suma \sqrt{x^{(a)}} \sumi h^{(a)}_i \si + t\frac{J_0}{2}Nm^2(\boldsymbol{\sigma})+w \frac{J_0}{2} N m(\boldsymbol{\sigma}) \right) \right] \si \sj \right\} = A+B+C,
\end{align}
\normalsize
where
\footnotesize
\begin{align*}
A=& (\partial_{z_{ij}} \mathcal{W}_2 )\frac{1}{\mathcal Z_2} \sums \exp \bigg \{\beta (\frac{\sqrt{t}}{2}\frac{J}{\sqrt{N}} \sumij \si \sj z_{ij} +\suma \sqrt{x^{(a)}} \sumi h^{(a)}_i \si + t\frac{J_0}{2}Nm^2(\boldsymbol{\sigma})+w \frac{J_0}{2} N m(\boldsymbol{\sigma})) \bigg\} \si \sj = \notag\\
=& \frac{\beta \sqrt{t} J}{\sqrt{N}} \bigg\{ \theta \mathcal{W}_2 \frac{1}{	\mathcal Z_2} \sums \exp \bigg (\beta (\frac{\sqrt{t}}{2}\frac{J}{\sqrt{N}} \si \sj z_{ij} +\suma \sqrt{x^{(a)}} \sumi h^{(a)}_i \si+ t\frac{J_0}{2}Nm^2(\boldsymbol{\sigma})+w \frac{J_0}{2} N m(\boldsymbol{\sigma})) \bigg) \si \sj -\notag \\
& - \theta \mathcal{W}_2 \mathbb{E}_2 \left[ \mathcal{W}_2 \frac{1}{\mathcal Z_2} \sums \exp \bigg (\beta (\frac{\sqrt{t}}{2}\frac{J}{\sqrt{N}} \si \sj z_{ij} +  \suma \sqrt{x^{(a)}} \sumi h^{(a)}_i \si + t\frac{J_0}{2}Nm^2(\boldsymbol{\sigma})+w \frac{J_0}{2} N m(\boldsymbol{\sigma})) \bigg) \si \sj \right] \bigg \} \cdot \notag \\
&\cdot\frac{1}{\mathcal Z_2} \sums \exp \bigg (\beta(\frac{\sqrt{t}}{2}\frac{J}{\sqrt{N}} \sumij \si \sj z_{ij} +\suma \sqrt{x^{(a)}} \sumi h^{(a)}_i \si + t\frac{J_0}{2}Nm^2(\boldsymbol{\sigma})+w \frac{J_0}{2} N m(\boldsymbol{\sigma})) \bigg) \si \sj  = \notag \\
=&\frac{\beta \sqrt{t} J}{\sqrt{N}} \bigg\{  \theta \mathcal{W}_2 \left[\frac{1}{\mathcal Z_2} \sums \exp \bigg (\beta (\frac{\sqrt{t}}{2}\frac{J}{\sqrt{N}} \sumij \si \sj z_{ij} +\suma \sqrt{x^{(a)}} \sumi h^{(a)}_i \si +t\frac{J_0}{2}Nm^2(\boldsymbol{\sigma})+w \frac{J_0}{2} N m(\boldsymbol{\sigma})) \bigg) \si \sj \right]^2-\notag \\
& - \theta \mathcal{W}_2\frac{1}{\mathcal Z_2}\sums \exp \bigg (\beta (\frac{\sqrt{t}}{2}\frac{J}{\sqrt{N}} \sumij \si \sj z_{ij} +\suma \sqrt{x^{(a)}}\sumi h^{(a)}_i \si + t\frac{J_0}{2}Nm^2(\boldsymbol{\sigma})+w \frac{J_0}{2} N m(\boldsymbol{\sigma})) \bigg) \si \sj \cdot 
\end{align*}
\begin{align}
&\cdot\mathbb{E}_2 \left[\tilde{W_2}\frac{1}{\mathcal Z_2} \sums \exp \bigg (\beta (\frac{\sqrt{t}}{2}\frac{J}{\sqrt{N}} \sumij \si \sj z_{ij} +\suma \sqrt{x^{(a)}} \sumi h^{(a)}_i \si + t\frac{J_0}{2}Nm^2(\boldsymbol{\sigma})+w \frac{J_0}{2} N m(\boldsymbol{\sigma})) \bigg) \si \sj \right] \bigg\}
\end{align}
\normalsize
and, similarly,
\footnotesize
\begin{align}
B=& \frac{1}{\mathcal Z_2^2} \mathcal{W}_2\sums \exp \bigg (\beta (\frac{\sqrt{t}}{2}\frac{J}{\sqrt{N}} \sumij \si \sj z_{ij} +\suma \sqrt{x^{(a)}}\sumi h^{(a)} _i \si + t\frac{J_0}{2}Nm^2(\boldsymbol{\sigma})+w \frac{J_0}{2} N m(\boldsymbol{\sigma})) \bigg) \si \sj \partial_{z_{ij}} \mathcal Z_2 = \notag \\
=& - \frac{\beta \sqrt{t} J}{\sqrt{N}} \bigg[ \frac{1}{\mathcal Z_2} \sums \exp \bigg (\beta (\frac{\sqrt{t}}{2}\frac{J}{\sqrt{N}} \sumij \si \sj z_{ij} +\suma \sqrt{x^{(a)}} \sumi h^{(a)}_i \si + t\frac{J_0}{2}Nm^2(\boldsymbol{\sigma})+w \frac{J_0}{2} N m(\boldsymbol{\sigma})) \bigg) \si \sj \bigg]^2 
\end{align}
\begin{align}
C=& \mathcal{W}_2\frac{1}{\mathcal Z_2}\partial_{z_{ij}}\bigg[\sums \exp \bigg (\beta (\frac{\sqrt{t}}{2}\frac{J}{\sqrt{N}} \sumij \si \sj z_{ij} +\suma \sqrt{x^{(a)}} \sumi h^{(a)}_i \si +  t\frac{J_0}{2}Nm^2(\boldsymbol{\sigma})+w \frac{J_0}{2} N m(\boldsymbol{\sigma})) \bigg) \si \sj \bigg]= \notag \\
=& \frac{\beta \sqrt{t} J}{\sqrt{N}} \mathcal{W}_2 \frac{1}{\mathcal Z_2} \sums \exp \bigg (\beta (\frac{\sqrt{t}}{2}\frac{J}{\sqrt{N}} \sumij \si \sj z_{ij} +\suma \sqrt{x^{(a)}}\sumi h^{(a)}_i \si +  t\frac{J_0}{2}Nm^2(\boldsymbol{\sigma})+w \frac{J_0}{2} N m(\boldsymbol{\sigma})) \bigg) (\si \sj )^2
\end{align}
\normalsize
To sum up, we have
\begin{align}
\dt \mathcal A_N = \frac{\beta J_0}{2} \langle m^2 \rangle + \frac{\beta^2 J}{4} [1-(1-\theta) \langle q_{12}^2 \rangle_2 - \theta \langle q_{12}^2 \rangle_1]. 
\end{align}


\section{Proof of Corollary \ref{cor:SC_SK}}  \label{app:SC_SK}
We recall the initial relations (\ref{selfx1})-(\ref{selfw}) for $\mb, \qb_1, \qb_2$ obtained from (\ref{eq:x1app})-(\ref{eq:wapp}) 
\begin{align}
\frac{\partial}{\partial x^{(2)}} A_{\textrm{1RSB}} - \frac{\partial}{\partial x^{(1)}} A_{\textrm{1RSB}} &= \frac{\beta^2}{2}\theta \qb_1 \label{selfq1_app}\\
\frac{\beta^2}{2}-\frac{\partial}{\partial x^{(2)}} A_{\textrm{1RSB}} &= \frac{\beta^2}{2}(1-\theta)\qb_2 \label{selfq2_app}\\
\mb &= \frac{2}{\beta J_0}\dw A_{\textrm{1RSB}}. \label{selfm_app}
\end{align}
Now, we evaluate the derivatives w.r.t. $x^{(1,2)}$ and $w$ of the function (\ref{eq:fina}) starting from (\ref{eq:fina}) and we will then plug the resulting expressions into (\ref{selfq1_app})-(\ref{selfm_app}).
Let us pose $g(\bm h, \mb)=\beta \suma \sqrt{x_0^{(a)}}h^{(a)} + \beta w_0 \frac{J_0}{2}$ and compute the derivative of (\ref{eq:fina}) w.r.t. $x^{(1)}$:
\begin{align}
\frac{\partial}{\partial x^{(1)}} \mathcal{A}_{1RSB} = \mathbb{E}_1 \left\{ \frac{1}{\theta} \frac{\mathbb{E}_2 \left[\theta \cosh^\theta (g(\bm h, \mb)) \tanh(g(\bm h, \mb)) \beta\displaystyle{\frac{1}{2\sqrt{x_0^{(1)}}}}h^{(1)}\right]}{\mathbb{E}_2 \left[\cosh^\theta (g(\bm h, \mb))\right]}\right\}.
\end{align}
Now we use (\ref{eqn:gaussianrelation2}) w.r.t. $h^{(1)}$:
\footnotesize
\begin{align}
&\frac{\partial}{\partial x^{(1)}} \mathcal{A}_{1RSB} =\frac{\beta}{2\sqrt{x_0^{(1)}}} \mathbb{E}_1 \left\{ \partial_{h^{(1)}} \left[ \frac{\mathbb{E}_2 \left[\cosh^\theta (g(\bm h, \mb)) \tanh(g(\bm h, \mb)) \right]}{\mathbb{E}_2 \left[\cosh^\theta (g(\bm h, \mb))\right]}\right]\right\} = \notag \\
&= \frac{\beta}{2\sqrt{x_0^{(1)}}} \mathbb{E}_1 \left\{ \beta \sqrt{x^{(1)}}\left[\frac{\mathbb{E}_2\left[\theta\cosh^\theta (g(\bm h, \mb)) \tanh^2 (g(\bm h, \mb))\right]}{\mathbb{E}_2 \left[\cosh^\theta (g(\bm h, \mb))\right]}\right]+\right. \notag \\
&+ \beta \sqrt{x_0^{(1)}}\left[\frac{\mathbb{E}_2\left[\cosh^\theta (g(\bm h, \mb))(1-\tanh^2(g(\bm h, \mb)))\right]}{\mathbb{E}_2 \left[\cosh^\theta (g(\bm h, \mb))\right]}\right] - \left. \beta \sqrt{x_0^{(1)}} \theta \left[\frac{\mathbb{E}_2\left[\cosh^\theta (g(\bm h, \mb))\tanh^2(g(\bm h, \mb))\right]}{\mathbb{E}_2 \left[\cosh^\theta (g(\bm h, \mb))\right]}\right]^2\right\}.
\label{ARSx1}
\end{align}
\normalsize
Rearranging (\ref{ARSx1}) we have

\begin{align}
\frac{\partial}{\partial x^{(1)}} \mathcal{A}_{1RSB} &= \frac{\beta^2}{2} \left\{  1 - (1-\theta) \mathbb{E}_1\left[ \frac{\mathbb{E}_2\left[\cosh^\theta (g(\bm h, \mb))\tanh^2(g(\bm h, \mb))\right]}{\mathbb{E}_2 \left[\cosh^\theta (g(\bm h, \mb))\right]} \right] - \right. \notag \\
&\left.+\theta \mathbb{E}_1\left[\frac{\mathbb{E}_2\left[\cosh^\theta (g(\bm h, \mb))\tanh^2(g(\bm h, \mb))\right]}{\mathbb{E}_2 \left[\cosh^\theta (g(\bm h, \mb))\right]}\right]^2\right\} \label{eq:ARSBx1fin}
\end{align}
In the same way we compute the derivative of (\ref{1rsbtranseq}) w.r.t. $x^{(2)}$:
\begin{align}
\frac{\partial}{\partial x^{(2)}} \mathcal{A}_{1RSB} = \mathbb{E}_1 \left\{ \frac{1}{\theta} \frac{\mathbb{E}_2 \left[\theta \cosh^\theta (g(\bm h, \mb)) \tanh(g(\bm h, \mb)) \beta\displaystyle{\frac{1}{2\sqrt{x_0^{(2)}}}}h^{(2)}\right]}{\mathbb{E}_2 \left[\cosh^\theta (g(\bm h, \mb))\right]}\right\}
\end{align}

Now we use (\ref{eqn:gaussianrelation2}) w.r.t. $h^{(2)}$:
\footnotesize
\begin{align}
&\frac{\partial}{\partial x^{(2)}} \mathcal{A}_{1RSB} = \frac{\beta}{2\sqrt{x_0^{(2)}}} \mathbb{E}_1 \left\{  \frac{\mathbb{E}_2 \left[ \partial_{h^{(2)}}(\cosh^\theta (g(\bm h, \mb)) \tanh(g(\bm h, \mb)) )\right]}{\mathbb{E}_2 \left[\cosh^\theta (g(\bm h, \mb))\right]}\right\} = \notag \\
&= \frac{\beta}{2\sqrt{x_0^{(2)}}} \mathbb{E}_1 \left\{ \beta \sqrt{x_0^{(2)}}\frac{\mathbb{E}_2\left[\theta\cosh^\theta (g(\bm h, \mb)) \tanh^2 (g(\bm h, \mb))\right]}{\mathbb{E}_2 \left[\cosh^\theta (g(\bm h, \mb))\right]} +  \beta \sqrt{x_0^{(2)}}\frac{\mathbb{E}_2\left[\cosh^\theta (g(\bm h, \mb))(1-\tanh^2(g(\bm h, \mb)))\right]}{\mathbb{E}_2 \left[\cosh^\theta (g(\bm h, \mb))\right]} \right\}. \label{ARSx2}
\end{align}
\normalsize
Rearranging (\ref{ARSx2}) we have
\begin{align}
\frac{\partial}{\partial x^{(2)}} \mathcal{A}_{1RSB} &= \frac{\beta^2}{2} - \frac{\beta^2}{2} (1-\theta) \mathbb{E}_1 \left\{ \frac{\mathbb{E}_2\left[\cosh^\theta (g(\bm h, \mb))\tanh^2(g(\bm h, \mb))\right]}{\mathbb{E}_2 \left[\cosh^\theta (g(\bm h, \mb))\right]}\right\}
\label{eq:ARSBx2fin}
\end{align}

In the end, we compute the derivative of (\ref{1rsbtranseq}) w.r.t. $w$ :
\begin{align}
\dw \mathcal{A}_{1RSB} = \mathbb{E}_1 \left\{ \frac{1}{\theta}\frac{\mathbb{E}_2\left[\theta \cosh(g(\bm h, \mb))\tanh(g(\bm h, \mb))\displaystyle{\beta\frac{J_0}{2}}\right]}{\mathbb{E}_2 \left[\cosh^\theta (g(\bm h, \mb))\right]}\right\}
\label{ARSBwfin}
\end{align}

Now we use (\ref{charw}), (\ref{charx1}) and (\ref{charx2}) and we pose $t=1$, $x^{(1)}$, $x^{(2)}$, $w=0$; in this way we have $g(\bm h, \mb)= \beta J \sqrt{\qb_1} h^{(1)} + \beta J \sqrt{\qb_2-\qb_1} h^{(2)} + \beta \mb J_0$.

Finally, we use (\ref{ARSBwfin}), (\ref{eq:ARSBx1fin}) and (\ref{eq:ARSBx2fin}) to get 
\small
\begin{align}
&\frac{\partial}{\partial x^{(2)}}\mathcal{A}_{1RSB} - \frac{\partial}{\partial x^{(1)}}\mathcal{A}_{1RSB} = \frac{\beta^2}{2}\theta \mathbb{E}_1\left\{\frac{\mathbb{E}_2\left[\cosh^\theta (g(\bm h, \mb))\tanh^2(g(\bm h, \mb))\right]}{\mathbb{E}_2 \left[\cosh^\theta (g(\bm h, \mb))\right]}\right\}^2 = \frac{\beta^2}{2}\theta \qb_1 \\
&\frac{\partial}{\partial x^{(2)}}\mathcal{A}_{1RSB} = \frac{\beta^2}{2} - \frac{\beta^2}{2}(1-\theta) \mathbb{E}_1\left\{\frac{\mathbb{E}_2\left[\cosh^\theta (g(\bm h, \mb))\tanh^2(g(\bm h, \mb))\right]}{\mathbb{E}_2 \left[\cosh^\theta (g(\bm h, \mb))\right]}\right\}= \frac{\beta^2}{2} - \frac{\beta^2}{2}(1-\theta) \qb_2  \\
&\dw \mathcal{A}_{1RSB} = \beta\frac{J_0}{2} \mathbb{E}_1 \left\{ \frac{1}{\theta}\frac{\mathbb{E}_2\left[\theta \cosh(g(\bm h, \mb))\tanh(g(\bm h, \mb))\right]}{\mathbb{E}_2 \left[\cosh^\theta (g(\bm h, \mb))\right]}\right\} = \beta\frac{J_0}{2} \mb.
\end{align}
\normalsize

Rearranging these equations we obtain the above self-consistencies.


\section{Proof of Lemma \ref{lemma:4}} \label{app2}
We will prove only (\ref{eqn:partialtA}), being the proofs for the others obtained in a similar way. First of all, using (\ref{eqn:partialrA}) we see that
\begin{align}
\label{eqn:partialtAproof1}
\partial_t \mathcal{A}_N =\frac{1}{2}\langle m^2 \rangle+\frac{1}{2N\sqrt{Nt}}\mathbb{E}_0\mathbb{E}_1\mathbb{E}_2 \left[ \mathcal{W}_2\sum_{i,\mu}\xi_i^\mu\omega( \sigma_i\tau_\mu) \right]
\end{align}
Now, using Wick's theorem (\ref{eqn:gaussianrelation2}), we may rewrite the second member of (\ref{eqn:partialtAproof1}) as
\begin{align}
\label{eqn:partialtAproof2}
\frac{1}{2N\sqrt{Nt}}\sum_{i,\mu}\mathbb{E}_0\mathbb{E}_1 \mathbb{E}_2 \left[\partial_{\xi_i^\mu}\bigg(\mathcal{W}_2\omega( \sigma_i\tau_\mu) \bigg)\right] =D_1+D_2+D_3
\end{align}
Let's investigate those three terms:
\footnotesize
\begin{align}
\label{eqn:D1}
D_1 =&\frac{1}{2N\sqrt{Nt}}\sum_{i,\mu}\mathbb{E}_0 \mathbb{E}_1  \mathbb{E}_2 \bigg[\mathcal{W}_2\partial_{\xi_i^\mu}\omega( \sigma_i\tau_\mu) \bigg] = \frac{1}{2N^2}\sum_{i,\mu}\mathbb{E}_0 \mathbb{E}_1  \mathbb{E}_2 \left[\mathcal{W}_2\omega( \sigma_i^2\tau_\mu^2) \right]  -\frac{1}{2N^2}\sum_{i,\mu}\mathbb{E}_0 \mathbb{E}_1 \mathbb{E}_2 \left[\mathcal{W}_2\omega( \sigma_i\tau_\mu)^2 \right]=\nonumber \\
=&\frac{\alpha}{2} \langle p_{11} \rangle-\frac{\alpha}{2}\langle p_{12}q_{12} \rangle_2 
\end{align}
\begin{align}
D_2=&\frac{1}{2N\sqrt{Nt}}\sum_{i,\mu}\mathbb{E}_0 \mathbb{E}_1  \mathbb{E}_2 \left[\omega( \sigma_i\tau_\mu)\frac{\partial_{\xi_i^\mu}\mathcal Z_2^\theta}{\mathbb{E}_2 \left(\mathcal Z_2^\theta\right)} \right]= \frac{\theta}{2N^2}\sum_{i,\mu}\mathbb{E}_0 \left\{ \mathbb{E}_1 \left[ \mathbb{E}_2 \left(\mathcal{W}_2\omega( \sigma_i\tau_\mu)^2 \right) \right] \right\}=  \frac{\alpha}{2}\theta \langle p_{12}q_{12} \rangle_2 
\label{eqn:D2}
\end{align}

\begin{align}
D_3=&\frac{1}{2N\sqrt{Nt}}\sum_{i,\mu}\mathbb{E}_0 \mathbb{E}_1 \mathbb{E}_2 \left[\omega( \sigma_i\tau_\mu)\mathcal Z_2^\theta\partial_{\xi_i^\mu}\frac{1}{\mathbb{E}_2 \left(\mathcal Z_2^\theta\right)} \right]= -\frac{\theta}{2N^2}\sum_{i,\mu}\mathbb{E}_0 \mathbb{E}_1 \mathbb{E}_2 \left[\omega( \sigma_i\tau_\mu)\mathcal{W}_2\mathbb{E}_2 \left(\mathcal{W}_2\frac{\partial_{\xi_i^\mu}\mathcal Z_2}{\mathcal Z_2}\right)\right] = \nonumber \\
=&-\frac{\theta}{2N^2}\sum_{i,\mu}\mathbb{E}_0\mathbb{E}_1 \mathbb{E}_2 \left[\omega( \sigma_i\tau_\mu)\mathcal{W}_2\mathbb{E}_2 \left(\omega( \sigma_i\tau_\mu)\mathcal{W}_2\right) \right]=-\frac{\alpha}{2}\theta\langle p_{12}q_{12} \rangle_1
\label{eqn:D3}
\end{align}
\normalsize
Putting (\ref{eqn:D1}),  (\ref{eqn:D2}) and (\ref{eqn:D3}) inside (\ref{eqn:partialtAproof2}), and (\ref{eqn:partialtAproof2}) inside (\ref{eqn:partialtAproof1}) we find (\ref{eqn:partialtA}).

\section{One-body calculations: 1RSB}\label{1body1rsb}
In this appendix we report explicitly the calculations for the one-body problem for the 1RSB quenched pressure of the Hopfield model.
\footnotesize
\begin{align}
&\mathcal Z_2(0, \bm r_0)=\sum_{\bm \sigma} \exp\left({\sum_{a=1}^2 \sqrt{x_0^{(a)}}\sum_i h_i^{(a)}	\sigma_i + w_0 \sum_i \xi_i\sigma_i  }\right)\int D \bm \tau \exp\left({\sum_{a=1}^2 \sqrt{y_0^{(a)}}\sum_\mu J_\mu^{(a)}\tau_\mu +z_0\sum_\mu \frac{\tau_\mu^2}{2}}\right)= \nonumber \\
&=\prod_i\sum_{ \sigma_i=\pm 1} \exp\left({\sigma_i \big(\sqrt{x_0^{(1)}} h_i^{(1)} +\sqrt{x_0^{(2)}} h_i^{(2)} + w_0 \xi_i\big)}\right)\prod_\mu \int\frac{d\tau_\mu}{\sqrt{2\pi}} \exp\left({\tau_\mu \big (\sqrt{y_0^{(1)}}J_\mu^{(1)} +\sqrt{y_0^{(2)}}J_\mu^{(2)}\big)-\frac{\tau_\mu^2}{2}(1-z_0)}\right)= \nonumber \\
&=2^N\prod_i \cosh \left [ \sqrt{x_0^{(1)}} h_i^{(1)} +\sqrt{x_0^{(2)}} h_i^{(2)} + w_0 \xi_i \right] \frac{1}{(1-z_0)^\frac{P}{2}}\prod_\mu \exp \left [ \frac{\left (\sqrt{y_0^{(1)}}J_\mu^{(1)} +\sqrt{y_0^{(2)}}J_\mu^{(2)}\right)^2}{2(1-z_0)}\right]
\end{align}
\normalsize
where, to write the last equality we used (\ref{eqn:gaussianrelation2}). We may now proceed further to write
\begin{align}
\mathcal Z_1(0, \bm r_0)^\theta=&2^{\theta N}\prod_i \int Dh^{(2)} \cosh^\theta \left( \sqrt{x_0^{(1)}} h^{(1)} +\sqrt{x_0^{(2)}} h_i^{(2)} + w_0 \xi_i \right) \cdot \nonumber \\
&\cdot \frac{1}{(1-z_0)^{\theta \frac{ P}{2}}}\prod_\mu \int DJ^{(2)} \exp \left [ \theta\frac{\left (\sqrt{y_0^{(1)}}J_\mu^{(1)} +\sqrt{y_0^{(2)}}J^{(2)}\right)^2}{2(1-z_0)}\right ]= \nonumber \\
=&\frac{2^{\theta N}}{(1-z_0)^{\theta \frac{P}{2}}}\prod_i \int Dh^{(2)} \cosh^\theta \left ( \sqrt{x_0^{(1)}} h^{(1)} +\sqrt{x_0^{(2)}} h_i^{(2)} + w_0 \xi_i \right)\cdot \nonumber \\
&\cdot \prod_\mu \int \frac{dJ^{(2)}}{\sqrt{2\pi}} \exp \left [ \frac{\theta y_0^{(1)}(J_\mu^{(1)})^2 +2J^{(2)}\theta\sqrt{y_0^{(1)}y_0^{(2)}}J_\mu^{(1)}-(1-z_0-\theta y_0^{(2)})(J^{(2)})^2}{2(1-z_0)}\right]=\nonumber \\
=&2^{\theta N}\prod_i \int Dh^{(2)}\cosh^\theta \left ( \sqrt{x_0^{(1)}} h^{(1)} +\sqrt{x_0^{(2)}} h_i^{(2)} + w_0 \xi_i\right) \cdot \nonumber \\
&\cdot \frac{(1-z_0)^{ \frac{P}{2}(1-\theta)}}{(1-z_0-\theta y_0^{(2)})^\frac{P}{2}} \prod_\mu\exp \left [ (J_\mu^{(1)})^2\frac{ \theta y_0^{(1)}}{2(1-z_0)}\bigg(1+\frac{\theta y_0^{(2)}}{1-z_0-\theta y_0^{(2)}}\bigg)\right].
\end{align}
Then,
\begin{align}
\mathcal Z_1(0, \bm r_0)=&2^{N}\prod_i \left[\int Dh^{(2)}\cosh^\theta \left( \sqrt{x_0^{(1)}} h^{(1)} +\sqrt{x_0^{(2)}} h_i^{(2)} + w_0 \xi_i\right)\right]^\frac{1}{\theta}\cdot \nonumber \\
&\cdot \frac{(1-z_0)^{ \frac{P}{2\theta}(1-\theta)}}{(1-z_0-\theta y_0^{(2)})^\frac{P}{2\theta}} \prod_\mu\exp 
\left[ (J_\mu^{(1)})^2\frac{ y_0^{(1)}}{2(1-z_0-\theta y_0^{(2)})}\right]
\end{align}
By proceeding in the computation we find
\begin{align}
&\log \mathcal Z_0(0, \bm r_0)= N\log 2+\sum_i\frac{1}{\theta}\int Dh^{(1)} \log  \int Dh^{(2)}\cosh^\theta \left( \sqrt{x_0^{(1)}} h^{(1)} +\sqrt{x_0^{(2)}} h_i^{(2)} + w_0 \xi_i\right) + \nonumber \\
&+\frac{P}{2\theta}\log\bigg(1+\frac{\theta y_0^{(2)}}{1-z_0-\theta y_0^{(2)}}\bigg)
-\frac{P}{2}\log(1-z_0)+ \sum_\mu  \frac{ y_0^{(1)}}{2(1-z_0-\theta y_0^{(2)})}\int DJ^{(1)}( J^{(1)})^2= \nonumber \\
&= N\log 2+\sum_i\frac{1}{\theta}\int Dh^{(1)} \log  \int Dh^{(2)}\cosh^\theta \left( \sqrt{x_0^{(1)}} h^{(1)} +\sqrt{x_0^{(2)}} h_i^{(2)} + w_0 \xi_i \right) + \nonumber \\
&+\frac{P}{2\theta}\log\bigg(1+\frac{\theta y_0^{(2)}}{1-z_0-\theta y_0^{(2)}}\bigg)
-\frac{P}{2}\log(1-z_0)
+\frac{P}{2}\frac{ y_0^{(1)}}{(1-z_0-\theta y_0^{(2)})}
\end{align}
Finally, we can easily write $\mathcal A_0(0, \bm r_0)$ by noting that there is no dependence on $\xi_i^\mu$'s, so that there is no need to make the average, and by noting that, due to the parity of $\cosh(\cdot)$ and $Dh^{(a)}$, the argument of $\cosh(\cdot)$ is indipendent from the sign of $\xi_i$. We can therefore put $N$ instead of the sum over $i$ and write
\begin{align}
\label{eqn:A0finApp}
\mathcal A_0(0, \bm r_0)=&\log 2+\frac{1}{\theta}\int Dh^{(1)} \log  \int Dh^{(2)}\cosh^\theta \left( h^{(1)} \sqrt{x_0^{(1)}}+h^{(2)} \sqrt{x_0^{(2)}} + w_0\right) + \nonumber \\
&+\frac{\alpha}{2\theta}\log\bigg(1+\frac{\theta y_0^{(2)}}{1-z_0-\theta y_0^{(2)}}\bigg)
-\frac{\alpha}{2}\log(1-z_0)
+\frac{\alpha}{2}\frac{ y_0^{(1)}}{(1-z_0-\theta y_0^{(2)})}
\end{align}

\section{One-body calculations: 2RSB}\label{1body2rsb}
\begin{align}
&\mathcal{A}_N (0, \bm r - \bm \dot{r} t)= \frac{1}{N\theta_1} \mathbb{E}_0  \mathbb{E}_1 \left[ \log \mathbb{E}_2 \left( \mathbb{E}_3 \mathcal Z_3^{\theta_2}\right)^{\frac{\theta_1}{\theta_2}} \right] = \notag \\
&=\frac{1}{N\theta_1} \mathbb{E}_0 \mathbb{E}_1 \left\{ \log \mathbb{E}_2 \left\{ \mathbb{E}_3 \left[ \sums \int D\tau \exp \left( \beta ( w_0 \sumi \xi_i \si + \sumanew \sqrt{x_0^{(a)}} \sumi h_i^{(a)} \si + \right. \right. \right. \right.  \notag \\
& \left. \left. \left. \left. + \sumanew \sqrt{y_0^{(a)}} \sum_{mu=1}^P J_\mu^{(a)} \tau_\mu + z_0 \sum_\mu \frac{\tau_\mu^2}{2} ) \right) \right]^{\theta_2}\right\}^{\frac{\theta_1}{\theta_2}} \right\} = B_1+B_2
\end{align}
In the last passage we write the initial condition as the sum of two terms, one dependent on $\bm \sigma$ and the other on $\bm \tau$. 

\begin{align}
B_1 &= \frac{1}{N\theta_1} \mathbb{E}_1 \left\{ \log \mathbb{E}_2 \left[ \mathbb{E}_3 \left( \sums \exp \left( \beta ( w_0 \sumi \xi_i \si + \sumanew \sqrt{x_0^{(a)}} \sumi h_i^{(a)} \si )\right)\right)^{\theta_2} \right]^{\frac{\theta_1}{\theta_2}}\right\} = \notag \\ 
&= \frac{1}{N\theta_1}\mathbb{E}_1 \left\{ \log \mathbb{E}_2 \left[ \mathbb{E}_3 \left( \prod_i 2\cosh \left( \beta ( w_0 + \sumanew \sqrt{x_0^{(a)}} h_i^{(a)})\right) \right)^{\theta_2} \right]^{\frac{\theta_1}{\theta_2}} \right\} = \notag \\
&= \log 2 + \frac{1}{\theta_1}\mathbb{E}_1 \left\{ \log \mathbb{E}_2 \left[ \mathbb{E}_3 \left( \cosh \left( \beta ( w_0 + \sumanew \sqrt{x_0^{(a)}} h^{(a)})\right) \right)^{\theta_2} \right]^{\frac{\theta_1}{\theta_1}} \right\}
\end{align}

\begin{align}
B_2 =& \frac{1}{N\theta_1} \mathbb{E}_1 \left\{ \log \mathbb{E}_2 \left[ \mathbb{E}_3 \left( \int D \tau \exp \left( \beta(\sumanew \sqrt{y_0^{(a)}} \sum_{mu=1}^P J_\mu^{(a)} \tau_\mu + z_0 \sum_\mu \frac{\tau_\mu^2}{2} )\right)\right)^{\theta_2} \right]^{\frac{\theta_1}{\theta_2}}\right\}
\label{eq:1body2RSB} 
\end{align}

We compute the integral separately and then we reinsert it into (\ref{eq:1body2RSB})
\footnotesize
\begin{align}
\int D \tau \exp \left[ \beta \left(\sumanew \sqrt{y_0^{(a)}} \sum_{mu=1}^P J_\mu^{(a)} \tau_\mu + z_0 \sum_\mu \frac{\tau_\mu^2}{2} \right)\right] = \prod_{\mu=1}^P \sqrt{\frac{1}{\beta(1-z_0)}}\exp \left [ \frac{\beta \left( \sumanew \sqrt{y_0^{(a)}} J_\mu^{(a)} \right)^2}{2(1-z_0)} \right]
\end{align}
\normalsize
\begin{align}
B_2 =& -\frac{\alpha}{2\theta_1}\log(\beta(1-z_0)) + \frac{1}{N\theta_1} \mathbb{E}_1 \left\{ \mathbb{E}_2 \left[ \mathbb{E}_3 \left( \prod_{\mu=1}^P \exp \left( \frac{\beta \left( \sumanew \sqrt{y_0^{(a)}} J_\mu^{(a)} \right)^2}{2(1-z_0)} \right)\right)^{\theta_2} \right]^{\frac{\theta_1}{\theta_2}}\right\}
\label{eq:1body2RSB_2}
\end{align}

We compute the average over $J_\mu^{(3)}$ separately and then we reinsert it into (\ref{eq:1body2RSB_2}):
\footnotesize
\begin{align}
\int DJ_\mu^{(3)}&\exp\left(  \frac{\beta \theta_2 \left( \sumanew \sqrt{y_0^{(a)}} J_\mu^{(a)} \right)^2}{2(1-z_0)} \right) = \sqrt{\frac{1-z_0}{1-z_0-\beta \theta_2 y_0^{(3)}}} \exp\left( \frac{\beta \theta_2}{2(1-z_0-\beta \theta_2 y_0^{(3)})} (\sqrt{y_0^{(1)}} J_\mu^{(1)} + \sqrt{y_0^{(2)}} J_\mu^{(2)})^2\right)
\end{align}
\normalsize
\begin{align}
B_2 =& -\frac{\alpha}{2\theta_1}\log[\beta(1-z_0)] + \frac{\alpha}{2\theta_1}\log \left( \frac{1-z_0}{1-z_0-\beta \theta_2 y_0^{(3)}}\right) + \notag \\
&+\frac{1}{N\theta_1}\mathbb{E}_1 \left\{ \log \mathbb{E}_2 \left[ \prod_\mu \exp \left(\frac{\beta \theta_2 \left(\sqrt{y_0^{(1)}} J_\mu^{(1)} + \sqrt{y_0^{(2)}} J_\mu^{(2)}\right)^2}{2(1-z_0 - \beta \theta_2 y_0^{(3)})} \right)\right]^{\frac{\theta_1}{\theta_2}}\right\}
\label{eq:1body2RSB_3}
\end{align}

Now we compute the average over $J_\mu^{(2)}$ separately and then we reinsert in (\ref{eq:1body2RSB_3}):

\begin{align}
\int DJ_\mu^{(2)} &\exp \left(\frac{\beta \theta_1 \left(\sqrt{y_0^{(1)}} J_\mu^{(1)} + \sqrt{y_0^{(2)}} J_\mu^{(2)}\right)^2}{2(1-z_0 - \beta \theta_2 y_0^{(3)})} \right)=\notag \\
&=\sqrt{\frac{1-z_0-\beta \theta_2 y_0^{(3)}}{1-z_0-\beta \theta_2 y_0^{(3)} - \beta \theta_1 y_0^{(2)}}} \exp \left( \frac{\beta \theta_1 y_0^{(1)} (J_\mu^{(1)})^2 }{2(1-z_0-\beta \theta_2 y_0^{(3)} - \beta \theta_1 y_0^{(2)})}\right)
\end{align}

\begin{align}
B_2 =& -\frac{\alpha}{2\theta_1}\log[\beta(1-z_0)] + \frac{\alpha}{2\theta_1}\log \left( \frac{1-z_0}{1-z_0-\beta \theta_2 y_0^{(3)}}\right) + \frac{\alpha}{2}\log \left( \frac{1-z_0-\beta \theta_2 y_0^{(3)}}{1-z_0-\beta \theta_2 y_0^{(3)} - \beta \theta_1 y_0^{(2)}}\right)+\notag \\
&+\frac{1}{N\theta_1}\mathbb{E}_1 \left\{ \log  \prod_\mu \exp \left(\frac{\beta \theta_1 y_0^{(1)} (J_\mu^{(1)})^2}{2(1-z_0 - \beta \theta_2 y_0^{(3)} - \beta \theta_1y_0^{(2)}} \right)\right\} = \notag \\
=&-\frac{\alpha}{2\theta_1}\log[\beta(1-z_0)] + \frac{\alpha}{2\theta_1}\log \left( \frac{1-z_0}{1-z_0-\beta \theta_2 y_0^{(3)}}\right) + \frac{\alpha}{2}\log \left( \frac{1-z_0-\beta \theta_2 y_0^{(3)}}{1-z_0-\beta \theta_2 y_0^{(3)} - \beta \theta_1 y_0^{(2)})}\right)+\notag \\
&+ \frac{\alpha}{\theta_1} \int D J_\mu^{(1)} \left( \frac{\beta \theta_1 y_0^{(1)} (J_\mu^{(1)})^2}{2(1-z_0 - \beta\theta_2 y_0^{(3)} - \beta \theta_1 y_0^{(2)})} \right) = \notag \\ 
=&-\frac{\alpha}{2\theta_1}\log[\beta(1-z_0)] + \frac{\alpha}{2\theta_1}\log \left( \frac{1-z_0}{1-z_0-\beta \theta_2 y_0^{(3)}}\right) + \frac{\alpha}{2}\log \left( \frac{1-z_0-\beta \theta_2 y_0^{(3)}}{1-z_0-\beta \theta_2 y_0^{(3)} - \beta \theta_1 y_0^{(2)}}\right) + \notag \\ 
&+\frac{\alpha \beta y_0^{(1)}}{2(1-z_0 - \beta \theta_2 y_0^{(3)}-\beta \theta_1 y_0^{(2)})}
\label{eq:1body2RSB_4}
\end{align}

Thus, we can rearrange the two terms together and we have 
	\begin{align}
	\label{eqn:A0finApp2RSB}
	&\mathcal A_0(0, \bm r_0)= \log 2 + \frac{1}{\theta_1}\mathbb{E}_1 \left\{ \log \mathbb{E}_2 \left[ \mathbb{E}_3 \left( \cosh \left( \beta ( w_0 + \sumanew \sqrt{x_0^{(a)}} h^{(a)})\right) \right)^{\theta_2} \right]^{\frac{\theta_1}{\theta_1}} \right\}-\notag \\ 
	&-\frac{\alpha}{2\theta_1}\log(\beta(1-z_0)) + \frac{\alpha}{2\theta_1}\log \left( \frac{1-z_0}{1-z_0-\beta \theta_2 y_0^{(3)}}\right) + \frac{\alpha}{2}\log \left( \frac{1-z_0-\beta \theta_2 y_0^{(3)}}{1-z_0-\beta \theta_2 y_0^{(3)} - \beta \theta_1 y_0^{(2)}}\right) + \notag \\ 
&+\frac{\alpha \beta y_0^{(1)}}{2(1-z_0 - \beta \theta_2 y_0^{(3)}-\beta \theta_1 y_0^{(2)})}
	\end{align}

\section*{Acknowledgments}
The authors are grateful to Francesco Alemanno for several fruitful discussions. \\
EA acknowledges Sapienza University of Rome for financial support (Progetto Ateneo RG11715C7CC31E3D).
AB acknowledges Unisalento and INFN for financial support.


\begin{thebibliography}{99}

\bibitem{Agliari-Barattolo} E. Agliari, A. Barra, C. Longo, D. Tantari, {\em Neural Networks retrieving binary patterns in a sea of real ones}, J. Stat. Phys. \textbf{168}, 1085, (2017).

\bibitem{ABT} E. Agliari, A. Barra, B. Tirozzi, {\em  Free energies of Boltzmann Machines: self-averaging, annealed and replica symmetric approximations in the thermodynamic limit}, J. Stat. Mech. 033301 (2019).

\bibitem{Marullo} E. Agliari, A. Fachechi, C. Marullo, {\em The ``relativistic'' Hopfield network with correlated patterns}, submitted (2020).

\bibitem{AMT-JSP2019}  E. Agliari, D. Migliozzi, and D. Tantari, {\em Non-convex multi-species Hopfield models}, J. Stat. Phys. \textbf{172}(5):1247, (2018).

\bibitem{Alemannation1} E. Agliari, F. Alemanno, A. Barra, A. Fachechi, {\em Dreaming neural networks: rigorous results},
J. Stat. Mech. 083503 (2019).

\bibitem{biologia} C. Angermueller, et al., {\em Deep learning for computational biology}, Molec. Sys. Biol. \textbf{12}, 7, (2016).

\bibitem{AABF-NN2020}
E. Agliari, F. Alemanno, A. Barra, A. Fachechi, {\em Generalized Guerra's interpolating techniques for dense associative memories}, Neur. Nets. in press (2020).

\bibitem{Agliari-Dantoni} E. Agliari, et al., {\em Parallel retrieval of correlated patterns: From Hopfield networks to Boltzmann machines}, Neural Networks \textbf{38}, 52, (2013).

\bibitem{Martino1} F. Alemanno, et al., {\em Neural networks with redundant representations: detecting the undetectable}, \textbf{124}, 028301, (2020).

\bibitem{Martino2} F. Alemanno, et al., {\em Interpolating between boolean and extremely high noisy patterns through minimal dense associative networks},  in press DOI: $10.1088/1751-8121/ab6943$, (2020).

\bibitem{Amit} D.J. Amit, {\em Modeling brain functions}, Cambridge Univ. Press (1989).

\bibitem{jean1} J. Barbier, et al., {\em Mutual information for symmetric rank-one matrix estimation: A proof of the replica formula}, Neural Inf. Proc. Sys. (NIPS), Barcelona, (2016).

\bibitem{jean2} J. Barbier, N. Macris, {\em The adaptive interpolation method: a simple scheme to prove replica formulas in Bayesian inference}, Prob. Th. and Rel. Fiel. \textbf{174}(3-4), 1133, (2019).

\bibitem{BGDiBiasio} A. Barra, A. Di Biasio, F. Guerra, {\em Replica symmetry breaking in mean field spin glasses trough Hamilton-Jacobi technique},
JSTAT P09006, (2010).

\bibitem{Barra-JSP2010} A. Barra, G. Genovese, F. Guerra, {\em The replica symmetric approximation of the analogical neural network}, J. Stat. Phys. \textbf{140}(4):784, (2010).

\bibitem{bipartiti} A. Barra, G. Genovese, F. Guerra, {\em Equilibrium statistical mechanics  of bipartite spin systems}, J. Phys. A \textbf{44}, 245002, (2011).

\bibitem{Albert1} A. Barra, M. Beccaria, A. Fachechi,{\em A new mechanical approach to handle generalized Hopfield neural networks}, Neural Networks (2018).

\bibitem{ZiqquratBarra} A. Barra, P. Contucci. E. Mingione, D. Tantari, {\em Multi-Species mean-field spin-glasses: Rigorous results}, Ann. H. Poincar\`e   \textbf{16}(3), 691, (2015).

\bibitem{BarraEquivalenceRBMeAHN} A. Barra, A. Bernacchia, E. Santucci, P. Contucci, {\em On the equivalence among Hopfield neural networks and restricted Boltzman machines}, Neural Networks \textbf{34}, 1-9, (2012).

\bibitem{Gauss-1} A. Barra, G. Genovese, F. Guerra, D. Tantari, {\em About a solvable mean field model of a Gaussian spin glass}, J. Phys. A \textbf{47}(15), 155002, (2014).

\bibitem{Barra-RBMsPriors1} A. Barra, G. Genovese, P. Sollich, D. Tantari, {\em Phase transitions of Restricted Boltzmann Machines with generic priors}, Phys. Rev. E \textbf{96}, 042156, (2017).

\bibitem{Barra-RBMsPriors2} A. Barra, G. Genovese, P. Sollich, D. Tantari, {\em Phase Diagram of Restricted Boltzmann Machines $\&$ Generalized Hopfield Models}, Phys. Rev. E \textbf{97}, 022310, (2018).

\bibitem{Gauss-2} G. Ben Arous, A. Dembo, A. Guionnet, {\em Aging of spherical spin glasses}, Prob. Theor. Related Fields \textbf{120}, 1, (2001).

\bibitem{Bovier1} A. Bovier, V. Gayrard, {\em Hopfield models as generalized random mean field models}, Mathematical aspects of spin glasses and neural networks, 3-89, Birkhauser, Boston (1998).

\bibitem{Bovier2} A. Bovier, V. Gayrard, P. Picco, {\em Gibbs states of the Hopfield model in the regime of perfect memory}, Prob. Theor. $\&$ Rel. Fields  \textbf{100}(3):329, (1994).

\bibitem{Bovier3} A. Bovier, V. Gayrard, P. Picco, {\em Gibbs states of the Hopfield model with extensively many patterns}, J. Stat. Phys. \textbf{79}(1-2):395, (1995).

\bibitem{Univ1} P. Carmona, Y. Hu, {\em Universality in Sherrington-Kirkpatrick's spin glass model}, Ann. Henri Poincar\`e \textbf{42}, 2, (2006).

\bibitem{Ton1} A.C.C. Coolen,  J. Van Mourik, {\em Cluster derivation of the Parisi scheme for disordered systems}, AIP Conference Proceedings \textbf{553}, 1, APS press, (2001).

\bibitem{Coolen} A.C.C. Coolen, R. Kuhn, P. Sollich, {\em Theory of neural information processing systems}, Oxford Press (2005).

\bibitem{Crisanti} A. Crisanti, D.J. Amit, H. Gutfreund, {\em Saturation Level of the Hopfield Model for Neural Network}, Europhys. Lett. 2(4), 337-341 (1986).

\bibitem{Viktor} V. Dotsenko, {\em An introduction to the theory of spin glasses and neural networks}, World Scientific, (1995).

\bibitem{Dotsenko2} V. Dotsenko, B. Tirozzi, {\em Replica symmetry breaking in neural networks with modified pseudo-inverse interactions}, J. Phys. A \textbf{24}:5163-5180, (1991).

\bibitem{DL1} Y. Le Cun, Y. Bengio, G. Hinton, {\em Deep learning}, Nature \textbf{521}:436-444, (2015).

\bibitem{angel-learning} A. Engel, C. Van den Broeck, {\em Statistical mechanics of learning}, Cambridge University Press (2001).

\bibitem{Albert2} A. Fachechi, E. Agliari, A. Barra, {\em Dreaming neural networks: forgetting spurious memories and reinforcing pure ones}, Neural Net. \textbf{112}, 24, (2019).

\bibitem{Genovese} G. Genovese, {\em Universality in bipartite mean field spin glasses}, J. Math. Phys. \textbf{53}(12):123304, (2012).

\bibitem{GuerraSum} F. Guerra, {\em Sum rules for the free energy in the mean field spin glass model}, Fiel. Inst. Comm. \textbf{30}, 11, (2001).

\bibitem{Guerra} F. Guerra, {\em Broken replica symmetry bounds in the mean field spin glass model}, Comm. Math. Phys. \textbf{233}(1), 1, (2003).

\bibitem{GuerraTon}  F. Guerra, F.L. Toninelli, {\em The thermodynamic limit in mean field spin glass models}, Comm. Math. Phys. \textbf{230}(1), 71-79, (2002).

\bibitem{finanza} J.B. Heaton, N.G. Polson, J. Hendrik Witte, {\em Deep learning for finance: deep portfolios}, Appl. Stoc. Mod. Busin. Ind. \textbf{33}(1), 3, (2017).

\bibitem{medico} G. Litjens, et al., {\em Deep learning as a tool for increased accuracy and efficiency of histopathological diagnosis}, Sci. Rep. \textbf{6}, 26286, (2016).

\bibitem{Mezard} M. Mezard, {\em Mean-field message-passing equations in the Hopfield model and its generalizations}, Phys. Rev. E \textbf{95}(2), 022117 (2017).

\bibitem{MPV} M. M\'ezard, G. Parisi, M.A. Virasoro {\em Spin Glass Theory and Beyond}, World Scientific, Singapore (1987).

\bibitem{Murrat1} J.C. Mourrat, {\em Parisi's formula is a Hamilton-Jacobi equation in Wasserstein space}, arXiv preprint arXiv:1906.08471, (2019).

\bibitem{MurratPanchenko} J.C. Mourrat, D. Panchenko, {\em Extending the Parisi formula along a Hamilton-Jacobi equation}, Electron. J. Probab. \textbf{25}(23), 1, (2020).

\bibitem{DmitryBook} D. Panchenko, {\em The Sherrington-Kirkpatrick model}, Springer Science $\&$ Business Media (2013).

\bibitem{ZiqquratPanchenko} D. Panchenko, {\em The free energy in a multi-species Sherrington–Kirkpatrick model}, Ann. of Prob. \textbf{43}(6), 3494, (2015).

\bibitem{Tirozzi} L. Pastur, M. Shcherbina, B. Tirozzi, {\em On the replica symmetric equations for the Hopfield model}, J. Math. Phys. \textbf{40}(8): 3930, (1999).

\bibitem{Pastur} L. Pastur, M. Shcherbina, B. Tirozzi, {\em The replica-symmetric solution without replica trick for the Hopfield model}, J. Stat. Phys. \textbf{74}(5-6):1161, (1994).

\bibitem{Hinton1} R. Salakhutdinov, G. Hinton, {\em Deep Boltzmann machines}, Artificial Intelligence and Statistics (2009).

\bibitem{sompo-learning} H.S. Seung, H. Sompolinsky, N. Tishby, {\em Statistical mechanics of learning from examples}, Phys. Rev. A \textbf{45}(8):6056, (1992).

\bibitem{Kuhn} H. Steffan, R. Kuhn, {\em Replica symmetry breaking in attractor neural network models}, Z. Phys. B \textbf{95}, 249, (1994).

\bibitem{Tala1} M. Talagrand, {\em Rigorous results for the Hopfield model with many patterns}, Prob. Theor. $\&$ Rel. Fiel. \textbf{110}(2):177, (1998).

\bibitem{Tala2} M. Talagrand, {\em Exponential inequalities and convergence of moments in the replica-symmetric regime of the Hopfield model}, Ann. Prob. 1393-1469, (2000).

\bibitem{TalaParisi} M. Talagrand, {\em The parisi formula}, Annals of Mathematics, 221-263, (2006).

\bibitem{Monasson} J. Tubiana, R. Monasson, {\em Emergence of compositional representations in restricted Boltzmann machines}, Phys. Rev. Lett. \textbf{118}(13), 138301, (2017).

\bibitem{fisica} P.T. Komiske, E.M. Metodiev, M.D. Schwartz, {\em Deep learning in color: towards automated quark-gluon jet discrimination}, J. High En. Phys. \textbf{2017}(1), 110, (2017).

\bibitem{Ton2} J. Van Mourik, A.C.C. Coolen, {\em Cluster derivation of Parisi's RSB solution for disordered systems}, J. Phys. A \textbf{34}(10),  L111, (2001).

\bibitem{Lenka} L. Zdeborova, F. Krzakala, {\em Statistical physics of inference: Thresholds and algorithms}, Adv. in Phys. \textbf{65}(5):453-552, (2016).

\end{thebibliography}
\end{document}